\documentclass[aps,pra,twocolumn,superscriptaddress,10pt,longbibliography]{revtex4-2}
\usepackage{graphicx,xcolor}
\usepackage{amssymb,amsmath,amsthm}
\usepackage{amsfonts,microtype,mathrsfs,bm,bbm,mathtools}
\usepackage{braket}
\usepackage[title,titletoc]{appendix}
\usepackage[english]{babel}

\usepackage{thmtools}
\usepackage[bookmarksnumbered,linktocpage,hypertexnames=false,colorlinks=true,linkcolor=blue,urlcolor=blue,citecolor=blue,anchorcolor=green,breaklinks=true,pdfusetitle]{hyperref}
\usepackage[capitalize,nameinlink]{cleveref}

\newtheorem{thm}{Theorem}
\newtheorem*{thm*}{Theorem}
\newtheorem{lem}[thm]{Lemma}
\newtheorem{prp}[thm]{Proposition}
\newtheorem{cor}[thm]{Corollary}

\newtheorem{hyp}[thm]{Hypothesis}

\newtheorem{dfn}[thm]{Definition}

\newtheorem{counterex}[thm]{Counterexample}

\crefname{lem}{Lemma}{Lemmas}
\crefname{prp}{Proposition}{Propositions}
\crefname{cor}{Corollary}{Corollaries}
\crefname{cnj}{Conjecture}{Conjectures}
\crefname{hyp}{Hypothesis}{Hypotheses}
\crefname{obs}{Observation}{Observations}
\crefname{dfn}{Definition}{Definitions}
\crefname{que}{Question}{Questions}
\crefname{exa}{Example}{Examples}
\crefname{counterex}{Counterexample}{Counterexamples}

\DeclareMathOperator{\sgn}{sign}
\DeclareMathOperator{\supp}{supp}
\DeclareMathOperator{\tr}{tr}

\DeclareMathOperator{\spn}{span}

\newcommand{\D}{\mathcal D}
\renewcommand{\L}{\mathcal L}

\newcommand{\proj}[1]{\ket{#1}\!\!\bra{#1}}
\newcommand{\eps}{\varepsilon}

\newcommand{\dagg}{^\dagger}

\def\superopnorm#1{ {\left|\hspace{-.015in}\left|\hspace{-.015in}\left|#1\right|\hspace{-.015in}\right|\hspace{-.015in}\right|} }

\DeclarePairedDelimiter{\norm}{\lVert}{\rVert}
\DeclarePairedDelimiter{\abs}{\lvert}{\rvert}

\begin{document}
\title{Contractivity of time-dependent driven-dissipative systems}

\author{Lasse H. Wolff}
\email{lhw@math.ku.dk}
\affiliation{Department of Mathematical Sciences, University of Copenhagen, Universitetsparken 5, 2100 Copenhagen, Denmark}

\author{Daniel Malz}
\email{daniel.malz@unibas.ch}
\affiliation{Department of Mathematical Sciences, University of Copenhagen, Universitetsparken 5, 2100 Copenhagen, Denmark}
\affiliation{Department of Physics, University of Basel, Klingelbergstrasse 82, CH-4056 Basel, Switzerland}
\author{Rahul Trivedi}
\email{rahul.trivedi@mpq.mpg.de}
\affiliation{Max Planck Institute of Quantum Optics, Hans Kopfermann Str. 1, 85748 Garching, Germany}
\affiliation{Munich Center for Quantum Science and Technology, Schellingstr. 4, 80799 Munich, Germany}

\date{\today}
\raggedbottom

\begin{abstract}
	In a number of physically relevant contexts, a quantum system interacting with a decohering environment is simultaneously subjected to time-dependent controls and its dynamics is thus described by a time-dependent Lindblad master equation. Of particular interest in such systems is to understand the circumstances in which, despite the ability to apply time-dependent controls, they lose information about their initial state exponentially with time i.e., their dynamics are exponentially contractive. While there exists an extensive framework to study contractivity for time-independent Lindbladians, their time-dependent counterparts are far less well understood. In this paper, we study the contractivity of Lindbladians, which have a fixed dissipator (describing the interaction with an environment), but with a time-dependent driving Hamiltonian. We establish exponential contractivity in the limit of sufficiently small or sufficiently slow drives together with explicit examples showing that, even when the fixed dissipator is exponentially contractive by itself, a sufficiently large or a sufficiently fast Hamiltonian can result in non-contractive dynamics. Furthermore, we provide a number of sufficient conditions on the fixed dissipator that imply exponential contractivity independently of the Hamiltonian. These sufficient conditions allow us to completely characterize Hamiltonian-independent contractivity for unital dissipators and for two-level systems.
\end{abstract}

\maketitle
\section{Introduction}
When quantum systems evolve in isolation, the resulting dynamics modeled by the Schr\"odinger equation is reversible, i.e., every distinct initial state evolves into a distinct final state.
However, experimentally realistic quantum systems invariably couple to an environment and, due to this coupling, the dynamics of the reduced system state is no longer described by a simple Schr\"odinger equation \cite{Breuer2007theory, rivas2012open, carmichael1993open}. Under the Born-Markov approximation, which physically corresponds to the environment having a much larger bandwidth compared to the system-environment interaction strengths \cite{Breuer2007theory}, the dynamics of the system state is modeled instead by the Lindblad master equation \cite{lindblad1976generators, gorini1976completely, wolf2008dividing}
\begin{align}\label{eq:lindblad}
\frac{d}{dt}\rho(t) = \mathcal{L}_t[\rho(t)]
    = -i[H(t), \rho] + \sum_{i = 1}^m \mathcal{D}_{L_i}(\rho).
\end{align}
Here, $H$ is the Hamiltonian describing the internal dynamics of the system, $L_1, L_2 \dots L_m$ are jump operators that physically capture the interaction of the system with its environment and $\mathcal{D}_{L}(\cdot) = L \cdot L^\dagger - \{L^\dagger L, \cdot\}/2$ is the dissipator associated to the jump operator $L$. The Lindblad master equation \eqref{eq:lindblad} is also of purely theoretical interest, due to its generality -- the constant Lindbladian is the most general generator for a continuous Markovian quantum time-evolution described by a $1$-parameter semi-group \cite{gorini1976completely}, whereas the time-dependent Lindbladian is the most general generator for such a $2$-parameter semi-group \cite{wolf2008dividing}. In general, the Lindblad master equation does not generate reversible system dynamics and exhibits markedly different properties than closed systems \cite{kessler2012dissipative, horstmann2013noise, benedict2018super, masson2022universality, asenjo2017exponential}.
A physically important consequence of the irreversibility of Lindbladian dynamics is that it can be \emph{contractive}, i.e., different initial states eventually converge to the same state and the system loses information of the initial state \cite{ruskai1994beyond}. For instance, this is the case for time-independent (or autonomous) Lindbladians where the Lindblad super-operator $\mathcal{L} $ in \cref{eq:lindblad} has a unique fixed point, i.e., a unique state $\sigma_\infty$ satisfying $\mathcal{L}(\sigma_\infty) = 0$.
For such Lindbladians, starting from \emph{any} initial state $\rho(0)$, the state at time $t$ eventually evolves to $\sigma_\infty$, i.e.,
$\rho(t) = \exp(\mathcal{L}t)\rho(0) \to\sigma_\infty$ as $t\to \infty$.
The time taken for $\rho(t)$ to be close to $\sigma_\infty$ typically depends on the spectral properties of $\mathcal{L}$ \cite{wolf2012quantum}.

Contractivity has important physical consequences.
In the context of quantum information processing, it sets limits on how well a quantum system with a Hamiltonian $H$ can act as an autonomous quantum memory in the presence of a decohering environment \cite{aharonov1996limitations, alicki2009thermalization, temme2015fast, temme2017thermalization}.
Furthermore, it sets a limit for long-time coherence of quantum optical light sources, which in turn sets a limit on possible quantum advantage in sensing tasks \cite{abbasgholinejad2025theory, yang2023efficient, gammelmark2014fisher}. Furthermore, decoherence that results in contractive Lindbladians is also expected to provide fundamental limitations on the controllability of the quantum system, a problem that remains open in the context of open quantum dynamics \cite{wu2007controllability, kurniawan2012controllability, koch2016controlling, kallush2022controlling}.

Consequently, there has gone a lot of effort into characterizing both when the dynamics generated by Lindbladians is contractive, as well as obtaining concrete bounds on the time-scales of such contractivity.
Most of this effort has been dedicated to time-independent Lindbladians, for which the uniqueness of the fixed point can often by checked by examining the algebraic properties of the Hamiltonian $H$ and the jump operators $\{L_1, L_2 \dots L_m\}$ \cite{yoshida2024uniqueness, frigerio1977quantum, frigerio1978stationary, spohn1976approach, Spohn1977, evans1977irreducible}.
If the Lindbladian has a unique fixed point, several tools have been developed to obtain bounds on its \emph{mixing time}: the characteristic time scale associated with the convergence to the fixed point. These range from using quantum $\chi^2$ divergences \cite{Temme2010}, (modified) log-Sobolev inequalities \cite{Kastoryano2013-ot, capel2018quantum, capel2020modified, muller2018sandwiched}, (reverse-)hypercontractivity \cite{temme2014hypercontractivity, montanaro2012some, beigi2020quantum, cubitt2015quantum} and hypercoercivity \cite{fang2025mixing}.
These tools have been successfully employed in several experimentally and physically relevant settings, such as providing mixing time bounds for certain 2D quantum memories \cite{alicki2009thermalization, temme2015fast, temme2017thermalization}, providing no-go results for quantum advantage with noisy circuits \cite{aharonov1996limitations, mishra2024classically, de2023limitations, stilck2021limitations} as well as understanding potential quantum advantage in spectroscopy and interferometry with time-independent light sources \cite{abbasgholinejad2025theory}.

In many physically relevant settings, while the jump operators describing the system-environment interaction are time-independent, the experimentalist can still tune the system Hamiltonian as a function of time, e.g., by applying a driving field. We call the resulting family of Lindbladians with fixed dissipator, but time-dependent Hamiltonian ``driven Lindbladians.''
Generally, driven Lindbladians do not drive the system to a fixed steady state, but their dynamics may still be strictly contractive.
While some previous results can be used to effectively address this question for the special case of depolarizing jump operators \cite{aharonov1996limitations}, the characterization of contractivity of time-dependent Lindbladians largely remains an open problem.

In this paper, we make progress on this question by establishing a number of conditions under which the evolution is contractive independent of the Hamiltonian.
First, in \cref{subsec:some surprises}, we present a number of ``surprises'' in which seemingly natural conjectures on the contractivity of driven Lindbladians turn out to be incorrect, thus illustrating the subtlety of the problem.
Moving on to our results, in \cref{subsec:small perturbations} we show that adding any sufficiently small time-dependent perturbation, such as a driving Hamiltonian, to a constant contractive Lindbladian still results in exponentially contractive dynamics.
Next, in \cref{subsec:Hamiltonian indpendent results}, we present multiple sufficient conditions on the dissipator that guarantee exponential contractivity independently of the driving Hamiltonian, and find bounds on the contraction rates.
These range from an eigenvalue condition on a superoperator constructed from the dissipators to conditions related to the span of the jump operators.
Then, in the case of unital dissipators, we show that dissipators guarantee contractivity for any Hamiltonian if and only if they are contractive on their own. The same is shown to be true for arbitrary dissipators acting on $2$-dimensional systems, and almost true for $3$-dimensional systems.
Finally, we make a case study in \cref{subsec:ladder lindblad calculations}, where we apply our results to ``ladder'' dissipators (such as damping in a harmonic oscillator and related ones), and show that these dissipators guarantee contractive driven Lindbladians for systems with few energy levels.
The proofs for these statements are collected in \cref{sec:proofs}.
We conclude with an outlook (\cref{sec:outlook}).

\section{Summary of results} \label{sec:summary of results}

\subsection{Preliminaries and setup} \label{subsec:preliminaries}
Throughout this paper, we will consider time-dependent or driven Lindbladians on a finite-dimensional Hilbert space $\mathcal{H} = \mathbb{C}^d$, where the jump operators $\{L_1, L_2 \dots L_m\}$ are time-independent, but the Hamiltonian $H(t)$ depends on time. The corresponding master equation would then be (restated from \cref{eq:lindblad})
\begin{align}\label{eq:time_dep_lindblad}
\frac{d}{dt}\rho(t) = \mathcal{L}_t \rho(t) = -i[H(t), \rho(t)] + \sum_{i = 1}^m \mathcal{D}_{L_i}\rho(t),
\end{align}
where $\mathcal{D}_L(\cdot) = L \cdot L^\dagger - \{L^\dagger L, \cdot\}/2$ is the dissipator associated with $L$. A solution $\rho(t)$ to \cref{eq:time_dep_lindblad} with initial condition $\rho(0) = \rho$ will describe how an arbitrary quantum state $\rho \in D_1 (\mathcal{H})$ evolves in the system, where  $D_1 ( \mathcal{H} )$ is the set of valid quantum states, i.e., the positive semi-definite linear operators on $\mathcal{H}$ with trace equal to $1$. In case $\mathcal{L}_t$ is locally integrable, i.e. $\int_s^t  \mathcal{L}_{\tau} d \tau$ is well-defined for all finite intervals $[s,t] \subset \mathbb{R}$, the unique solution to \cref{eq:time_dep_lindblad} with initial conditions $\rho(s) = \rho$ at time $s$ is given by $\rho(t) = \mathcal{E}_{t,s}(\rho)$, where the \emph{time-evolution channel} $\mathcal{E}_{t,s}$ is defined as $\mathcal{E}_{t,s} \coloneqq \mathcal{T} \exp (\int_s^t \mathcal{L}_\tau  d\tau)$ with $\mathcal{T}$ being the time-ordering operator. We shall always consider this to be the case, hence we only consider driven Lindbladians whose time-dependent Hamiltonians $H(t)$ has components that are locally integrable. Thus, when we later make statements about some result concerning $\mathcal{L}_t$ holding ``independently" of the Hamiltonian or holding for ``all" Hamiltonians $H(t)$ that can appear in \cref{eq:time_dep_lindblad}, what we really mean is that the statement holds for all such Hamiltonians with locally integrable components.

To avoid arbitrariness in the decomposition of the Lindbladian into Hamiltonian and dissipators, we assume that the jump operators are traceless, $\tr(L_i) = 0, \ \forall  i \in \{1, 2 \dots m\}$.
If the jump operators are not traceless, then we can always obtain traceless jumps by performing the transformation 
\begin{align*}
L_j  &\to L_j - \text{Tr}(L_j) \tau \ \text{  and} \\
H(t) &\to H(t) + \frac{i}{2d} \sum_j ( \tr (L_j)^{*} L_j - \tr (L_j) L_j^\dagger),
\end{align*}
where $\tau = I/d$ is the maximally mixed state.
We will be interested in analyzing the time-evolution channels $\mathcal{E}_{t,s}$ and remark that since $\mathcal{E}_{t, s}$ is a completely-positive trace preserving map, it is automatically contractive in the trace norm in the sense that for any two states $\rho, \sigma \in \text{D}_1(\mathcal{H})$
\begin{figure}
    \includegraphics[width=0.7\linewidth]{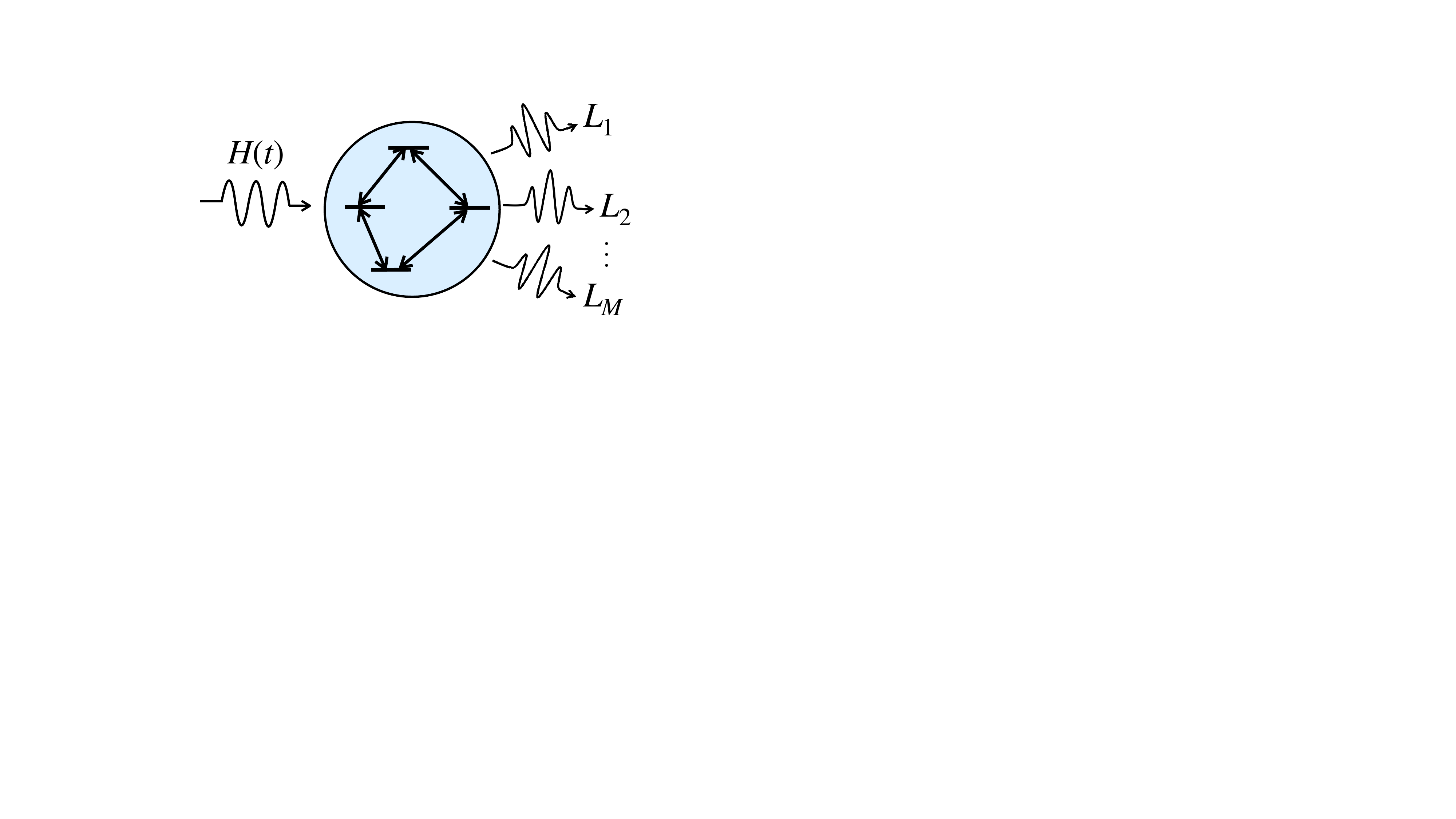}
    \caption{Schematic depiction of the setting considered in this paper: a quantum system with dissipation described by jump operators $L_1, L_2 \dots L_M$ is driven with a time-dependent Hamiltonian $H(t)$.}
    \label{fig:schematic}
\end{figure}
\begin{equation} \label{eq:contractivity of gen L}
	\norm{\mathcal{E}_{t, s} (\rho - \sigma)}_1 \leq \norm{\rho - \sigma}_1.
\end{equation}
We are here interested in characterizing a stronger notion of contractivity, stated precisely below, where the trace-norm distance between two states decreases exponentially with evolution time. 
\begin{dfn}[Exponentially contractive Lindbladian] \label{def:exp contraction}
	A time-dependent Lindbladian $\mathcal{L}_t$ on Hilbert space $\mathcal{H}$ is exponentially contractive if $\exists K, \gamma >0 : \forall \rho, \sigma \in \textnormal{D}_1(\mathcal{H})$ and $t \geq s \geq 0$,
	\begin{equation}
		\norm{\mathcal{E}_{t, s} (\rho - \sigma)}_1\leq K e^{-\gamma \abs{t - s}} \norm{\rho - \sigma}_1,
		\label{eq:contractivity}
	\end{equation}
	where $\mathcal{E}_{t, s} = \mathcal{T}\exp(\int_s^t \mathcal{L}_\tau d\tau)$ is the time-evolution channel for $\mathcal{L}_t$. The largest $\gamma$ such that \cref{eq:contractivity} holds is the contraction rate associated with $\mathcal{L}_t$.
\end{dfn}
We remark that while we define this notion of contractivity with respect to the 1-norm, for finite-dimensional Hilbert spaces, we could instead equivalently use any Schatten norm. The only effect of choosing a different norm would be a possible dimensionality-related factor in the constant $K$. 

When the Hamiltonian is time-independent i.e., $H(t) = H$, then necessary and sufficient conditions for the Lindbladian $\mathcal{L}$ to be exponentially contractive are well understood: such a Lindbladian is exponentially contractive if and only it has a unique fixed point---this fact follows straightforwardly from viewing the Lindblad master equation as a linear dynamical system (for completeness, we include a proof in \cref{app:contractive time ind lindb}). Whether or not a given Lindbladian $\mathcal{L}$ has a unique fixed point can always be explicitly checked by solving the fixed point equation $\mathcal{L}(\sigma_\infty) = 0$. Nevertheless, a subset of Lindbladians with unique fixed points, called primitive or irreducible Lindbladians, deserve a special mention.
\begin{dfn}[Irreducible Lindbladians]
    On a finite-dimensional Hilbert space $\mathcal{H}$, a time-independent Lindbladian with Hamiltonian $H$ and jump operators $\{L_1, L_2 \dots L_m\}$ is irreducible if the algebra generated by $\{L_1, L_2 \dots L_m, iH+\frac{1}{2}\sum_{i = 1}^m L_i^\dagger L_i\}$ under operator multiplication and addition is the full matrix algebra on $\mathcal{H}$.
\end{dfn}
It follows from the Perron-Frobenius theorem \cite{evans1977irreducible} that an irreducible Lindbladian has a unique fixed point $\sigma_\infty$ which is also full-rank i.e., $\sigma_\infty \succ 0$. Physically, irreducible Lindbladians describe open systems where the decohering environment effectively eventually applies all possible noise operators on the system thus driving it to a state with no kernel. A prominent example of an irreducible Lindbladian on a qubit is the depolarizing Lindbladian, which has jump operators $\{\sigma^x, \sigma^y, \sigma^z\}$ and which clearly generate the entire two-dimensional matrix algebra. Another example of an irreducible Lindbladian would be one with a Hamiltonian $H = \sigma^x$ and jump operator $L = \sigma^z$. In both of these examples, the fixed point of the Lindbladian is the maximally mixed state $I/2$, consistent with the Perron-Frobenius theorem. We re-emphasize that a Lindbladian can have a unique fixed point without being irreducible---as an explicit example, a Lindbladian with no Hamiltonian ($H = 0$) and jump operator $L = \ket{0}\!\bra{1}$ is not irreducible but has a unique fixed point $\sigma_\infty = \ket{0}\!\bra{0}$.

\subsection{Some surprises} \label{subsec:some surprises}
Consider now the following question: suppose the Lindbladian in \cref{eq:time_dep_lindblad} is exponentially contractive in the absence of a Hamiltonian ($H(t)=0$).
Then, is the dynamics still contractive for an arbitrary Hamiltonian $H(t)$?
As an example of a setting where this is indeed the case, consider a two-level system where the jump operators model depolarizing noise i.e., $L_1 = \sqrt{\gamma}\sigma^x, L_2 = \sqrt{\gamma} \sigma^y, L_3 = \sqrt{\gamma}\sigma^z$, with $\sigma^x$, $\sigma^y$ and $\sigma^z$ being the Pauli matrices. 
Then, using the fact that the single-qubit Hamiltonian evolution commutes with the depolarizing noise, it can be shown that for any states $\rho, \sigma \in \text{D}_1(\mathbb{C}^2)$,
\begin{equation*}
	\norm{\mathcal{E}_{t, s}(\rho - \sigma)}_1 \leq e^{-4\gamma\abs{t - s}} \norm{\rho - \sigma}_1,
	\label{eq:dorit}
\end{equation*}
holds irrespective of the Hamiltonian $H(t)$ (see also \cref{app:subsec:improved contraction rate} for a proof of this using self-contained results from this paper). We could then expect that such drive-independent contractivity holds more generally as long as the set of jump operators $\{L_1, L_2 \dots L_m\}$, modeling a set of noise processes on the system, generates strictly contractive dynamics. Such a thought might motivate the following seemingly natural hypothesis.
\begin{hyp}\label{hyp:contractivity_preservation}
    If the Lindbladian $\mathcal{D} = \sum_{i = 1}^m \mathcal{D}_{L_i}$ is exponentially contractive, then so is $\mathcal{L}_t = -i[H(t), \cdot]+\mathcal{D}$ for any time-dependent Hamiltonian $H(t)$.
\end{hyp}
Interestingly, this hypothesis holds true for classical systems when $\mathcal{D}$ is irreducible. More specifically, consider the case where the density matrix $\rho(t)$ remains diagonal at all times i.e., $\rho(t) = \text{diag}(p(t))$ for some time-dependent probability distribution $p(t)$, and the time-independent dissipator $\mathcal{D}$ generates an irreducible, and thus contractive, Markov process described by a rate matrix $A$ i.e., $\mathcal{D}\rho(t) = \text{diag}(Ap(t))$. Then, for any time-dependent rate matrix $\tilde{A}(t)$, $A + \tilde{A}(t)$ still generates a contractive Markov process. Nevertheless, as we show in the counterexample below, \cref{hyp:contractivity_preservation} is not true for quantum systems, even when $\mathcal{D}$ is restricted to be irreducible.

\begin{counterex}
\label{counterex:const_H}
For a 2-qubit system $(\mathcal{H} = \mathbb{C}^2\otimes \mathbb{C}^2)$, consider the jump operators $\{L_1, L_2, L_3\} = \{(\sigma^z + 2 \sigma^-)\otimes I, \proj{1}\otimes \sigma^-, \proj1\otimes \sigma^+ \}$
and the Hamiltonian $H = \sigma^y \otimes I$, with $\sigma^- , \sigma^+$ given by $\sigma^- \coloneqq \ket{0}\!\! \bra{1} , \sigma^+ \coloneqq \ket{1}\!\! \bra{0}$.
Then, $\mathcal{D} = \sum_i \mathcal{D}_{L_i}$ is an irreducible Lindbladian, but $\mathcal{L} = \mathcal{D} -i[H, \cdot]$ is not exponentially contractive.
\end{counterex}
It can be checked by explicit calculation (see \ \cref{app:const H counterexample}) that $\mathcal{D}$ has the following unique and full-rank fixed point
	\begin{equation*}
		\sigma_\infty=\frac{1}{14} \begin{pmatrix}
			6 & -2 \\
			-2 & 1 \end{pmatrix}
			\otimes I,
		\label{eq:fixed-point}
	\end{equation*}
while $\mathcal{L}(\proj0\otimes \rho)=0$ for any $\rho\in \text{D}_1(\mathbb{C}^2)$, which shows that $\mathcal{D}$ is contractive while $\mathcal{L}$ is not, as a time-independent Lindbladian is contractive if and only if it has a unique fixed point (see \cref{app:contractive time ind lindb}). The physical idea behind this construction can be understood from analyzing the single-qubit case.
On a single qubit, the Lindbladian $\L = \D_L$, where $L = \sigma^z + 2\sigma^-$, is irreducible.
This can be checked by noting that $L^2=I$ and
\begin{equation*}
	\begin{aligned}
		\sigma^x &=\frac{5}{4}L-\frac{1}{4}L^2+\frac{1}{4}L^\dagger L - \frac{1}{4}L(L^\dagger L),\\
		\sigma^y &= \frac{3i}{2}L - \frac{i}{2}L(L^\dagger L).
	\end{aligned}
	\label{eq:}
\end{equation*}
Consequently, the algebra generated by $\{L, L^\dagger L\}$ is the full matrix algebra $\mathcal{M}(\mathbb{C}^2)$.
However, since $L$ and $i\sigma^y + \frac{1}{2}L^\dagger L$ are both upper triangular, the Lindbladian $-i[\sigma^y, \cdot] + \mathcal{D}_L$ is not irreducible. It does have a unique fixed point, but it is $\proj0$, which is not full rank.

\begin{figure}
    \centering
    \includegraphics[width=1.0\linewidth]{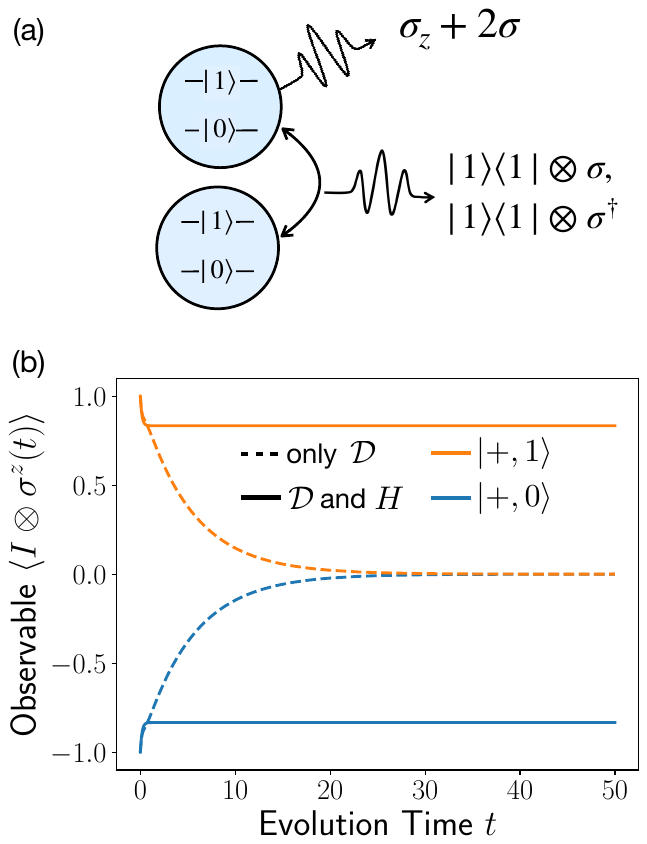}
    \caption{(a) Schematic representation of the dissipator from \cref{counterex:const_H}. 
    (b) Time-evolution of the expectation value of $I \otimes \sigma^z$ for different initial pure states $\ket{+,1}$ and $\ket{+,0}$, with and without the Hamiltonian added.}
    \label{fig:placeholder}
\end{figure}
We can build on this single-qubit example to construct a two-qubit irreducible Lindbladian, as depicted schematically in \Cref{fig:placeholder}(a). We apply the jump operator $\sigma^z + 2\sigma^-$ on the first qubit and, conditioned on the first qubit being in $\ket{1}$, apply the jumps $\sigma^-$ and $\sigma^+$ on the second qubit, i.e., $L_2=\proj1\otimes \sigma^+$, $L_3=\proj1\otimes\sigma^-$ (note that $\left( \sigma^- \right)^\dagger = \sigma^+$). Since $\sigma^z + 2\sigma^-$ will drive the first qubit to a full-rank fixed point, which has some probability weight in $\ket{1}$, the remaining jump operators will drive the second qubit to the maximally mixed state.
However, when additionally applying a Hamiltonian $\sigma^y$ on the first qubit, the first qubit will be driven to the state $\proj0$ and subsequently no jumps will be applied on the second qubit---the second qubit can thus be in any state and the Lindbladian, with the Hamiltonian perturbation, will not have a unique fixed point.
\Cref{fig:placeholder}(b) shows a numerical simulation of both $\mathcal{D}$ and $-i[H, \cdot] + \mathcal{D}$ starting from two different initial states. We observe that while the monitored observable ($I \otimes \sigma^z$ in this example) asymptotically converges to a initial-state-independent value when evolved under $\mathcal{D}$, it converges to an initial-state-\emph{dependent} value in the presence of the Hamiltonian $H$.

As discussed above, \cref{counterex:const_H} violates \cref{hyp:contractivity_preservation} by adding a perturbation to $\mathcal{D}$ that makes its fixed-point space degenerate.
We can avoid this counterexample by requiring that $\mathcal{L}_t$ should not have a fixed-point degeneracy at any time $t$.
This makes the evolution generated by $\L_{t_0}$ for any fixed $t_0$ exponentially contractive, so we may reasonably expect this to be true also for the evolution generated by the time-dependent Lindbladian $\L_t$:
\begin{hyp}
    Suppose the Lindbladian $\mathcal{L}_0 = \sum_{i = 1}^m \mathcal{D}_{L_i}$ is exponentially contractive and consider the time-dependent Lindbladian $\mathcal{L}_t =-i[H(t), \cdot] + \mathcal{L}_0 $ such that $\forall t, \mathcal{L}_t$ has a unique fixed point and $\exists \kappa > 0: \forall t, -\textnormal{Re}(\lambda_2(\mathcal{L}_t)) \geq \kappa$, where $\lambda_2(\mathcal{L}_t)$ is the non-zero eigenvalue of $\mathcal{L}_t$ with the largest real part, i.e. the instantaneous Lindbladian is at all times contractive with lower bounded contraction rate.
    Then, $\mathcal{L}_t$ is exponentially contractive with contraction rate $\gamma$ dependent only on $\kappa$.
\end{hyp}
This hypothesis is however also false, as seen in the following counterexample.

\begin{counterex} \label{counterex:smooth proposal}
For a 2-qubit system $(\mathcal{H} = \mathbb{C}^2\otimes \mathbb{C}^2)$, consider the Lindbladian $\mathcal{L}_t = -i[H(t), \cdot] +\sum_i \mathcal{D}_{L_i}$ where $\{L_1, L_2, L_3\} = \{(\sigma^z + 2 \sigma)\otimes I, \proj1\otimes \sigma, \proj1\otimes \sigma^\dagger\}$ and $H(t) =  (\sigma^y + \cos \phi(t)\ \sigma^x + \sin \phi(t) \ \sigma^y) \otimes I$ where
\begin{equation*}
\phi(t) = 2 \pi (1+c t)^r \  \text{ for } \ c,r \in \mathbb{R} .
\end{equation*}
Then, $\mathcal{L}_t$ has a unique fixed point with $-\textnormal{Re}(\lambda_2(\mathcal{L}_t)) \geq 0.05\ \forall t$, but $\mathcal{L}_t$ is not exponentially contractive provided the constants $c,r$ satisfy $r > 2$, $c > 0$ and $4 + 9/{c(r-2)}  < 2 \pi r c$. For example, if $\phi(t)$ is given by $\phi(t) = 2 \pi \left(1+{2t}/({r-2}) \right)^r$ for any $r>2$, $\mathcal{L}_t$ will not be exponentially contractive. 
\end{counterex}
From \Cref{fig:counter_ex_spec}(a), it can be seen that $-\textnormal{Re}(\lambda_2(\mathcal{L}_t)) \geq 0.05\ \forall t$ and thus $\mathcal{L}_t$ always has a unique fixed point. The fact that $\mathcal{L}_t$ does not generate exponentially contractive dynamics provided the constants $c,r$ satisfy $4 + 9/{c(r-2)}  < 2 \pi r c$ is the content of \cref{counterex:smooth proposal calc}, proven in \cref{app:time_dep}.
In fact, in this case, $\mathcal{L}_t$ will not even be asymptotically contractive, i.e., there exist states $\rho,\sigma$ for which $\lim_{t \rightarrow \infty} \norm{\mathcal{E}_{t,0}(\rho - \sigma)}_1 \geq c> 0$, as illustrated in \Cref{fig:counter_ex_spec}(b).

Hence, in addition to the fact that Hamiltonians can in some cases halt the contractive dynamics generated by contractive dissipators as seen in \cref{counterex:const_H}, \cref{counterex:smooth proposal} above shows that time-dependent Hamiltonians introduce another major complication -- it is not enough to investigate the instantaneous dynamics generated by the Lindbladian in order to answer questions about contractivity. One must in general take into account the history of how the Hamiltonian changes. These counterexamples together show some of the aspects that in general make it so difficult to answer whether a Lindbladian generates contractive dynamics, especially given incomplete knowledge of the Hamiltonian.

\begin{figure}
    \centering
    \includegraphics[width=1.0\linewidth]{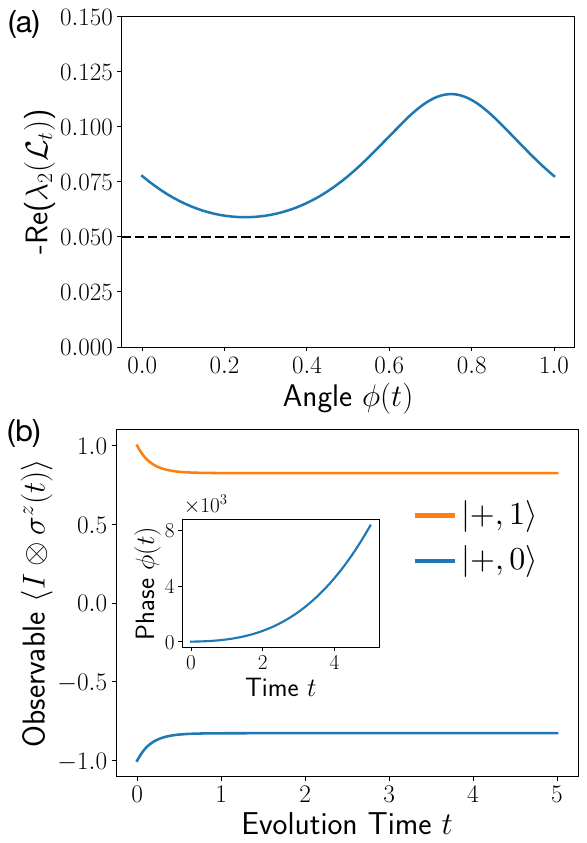}
    \caption{(a) Dependence of $-\text{Re}(\lambda_2(\mathcal{L}_t))$ on $\phi(t)$ --- it can be seen that $\forall t, -\text{Re}(\lambda_2(\mathcal{L}_t)) > 0.05$.
    (b) Time-evolution of expectation value of $I \otimes \sigma^z$ for different initial pure states $\ket{+,1}$ and $\ket{+,0}$ when $\phi(t)=2 \pi (1+2t)^3$.}
    \label{fig:counter_ex_spec}
\end{figure}

\subsection{Contractivity of driven Lindbladians with small or slow Hamiltonians} \label{subsec:small perturbations}

Given the counterexamples presented in the previous subsection, we turn to the following natural question: Given a contractive dissipator $\D$ and assuming that the Hamiltonian $H(t)$ remains sufficiently small, or varies sufficiently slowly, does the driven Lindbladian $\L_t = - i [H(t) , \ \cdot \ ] + \D$ remain contractive? It turns out that the answer is yes when ``sufficiently small" and ``sufficiently slowly" can depend on properties of $\mathcal{D}$ (\cref{thm:perturbation:small H} and \cref{thm:perturbation:slow H}).

\begin{thm} \label{thm:perturbation:small H}
Consider the driven Lindbladian $\mathcal{L}_t = - i[H(t), \ \cdot \ ] + \mathcal{D}$, where the Hamiltonian $H(t)$ has a time-dependent part $V(t)$ and
constant part $H_0$, i.e. $H(t) = H_0 + V(t)$. Suppose that the constant Lindbladian $- i[H_0, \ \cdot \ ] + \mathcal{D}$ is exponentially contractive with constant $K$ and contraction rate $\gamma$ as defined in \cref{def:exp contraction}.
Then, the full Lindbladian $\mathcal{L}_t$ will also be exponentially contractive provided
\begin{align} \label{eq:bounded H condition}
    \sup_{t \geq 0} \left\| V (t) \right\|_\infty < \frac{\gamma}{2 + 2\ln(K)}.
\end{align}
\end{thm}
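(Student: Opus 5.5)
The plan is a Duhamel (interaction-picture) expansion of $\mathcal{E}_{t,s}$ around the unperturbed semigroup, followed by a block-iteration argument that trades time for contraction. The condition \eqref{eq:bounded H condition} should emerge from the $K$-independent, $T$-independent integral $\int_0^\infty \min(1, K e^{-\gamma u})\, du = (1+\ln K)/\gamma$, which bounds the accumulated error over one block.

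\textbf{Setup.} Write $\mathcal{L}_0 = -i[H_0,\cdot\,] + \mathcal{D}$ and $\mathcal{V}_t = -i[V(t),\cdot\,]$. We work on the real vector space $\mathcal{X}$ of traceless Hermitian operators with the trace norm. Every element of $\mathcal{X}$ is a nonnegative multiple of a difference $\rho-\sigma$ of two states, namely $X = \tfrac{\|X\|_1}{2}(\rho-\sigma)$. Hence, by \cref{def:exp contraction} and \eqref{eq:contractivity of gen L}, the operator norms on $\mathcal{X}$ satisfy
\[
\|e^{\mathcal{L}_0 u}\|_{\mathcal{X}} \le \min\!\big(1, K e^{-\gamma u}\big), \qquad \|\mathcal{E}_{t,s}\|_{\mathcal{X}} \le 1 .
\]
Moreover, $\mathcal{V}_t$ maps $\mathcal{X}$ into itself, and $\|[V,X]\|_1 \le 2\|V\|_\infty \|X\|_1$. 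Set $v := \sup_t \|V(t)\|_\infty$.

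\textbf{Duhamel formula.} Since $\mathcal{L}_t = \mathcal{L}_0 + \mathcal{V}_t$ with locally integrable $\mathcal{V}_t$, differentiating $r \mapsto e^{\mathcal{L}_0(t-r)}\mathcal{E}_{r,s}$ gives
\[
\mathcal{E}_{t,s} = e^{\mathcal{L}_0 (t-s)} + \int_s^t e^{\mathcal{L}_0 (t-r)}\, \mathcal{V}_r\, \mathcal{E}_{r,s}\, dr .
\]
Restricting to $\mathcal{X}$ and using the bounds above, we get
\[
\|\mathcal{E}_{s+T,s}\|_{\mathcal{X}} \le K e^{-\gamma T} + 2v \int_0^T \min\!\big(1, K e^{-\gamma u}\big)\, du \le K e^{-\gamma T} + \frac{2v(1+\ln K)}{\gamma}.
\]
The last integral is computed by splitting at $u_0 = \ln K/\gamma$, using $K\ge 1$, which is forced by \cref{def:exp contraction} at $t=s$. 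By hypothesis $\delta := 2v(1+\ln K)/\gamma < 1$. We then fix $T$ large enough that $q := K e^{-\gamma T} + \delta < 1$; this bound holds uniformly in $s$.

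\textbf{Iteration.} The propagator is a two-parameter family, so $\mathcal{E}_{t,s} = \mathcal{E}_{t,s+nT}\,\mathcal{E}_{s+nT,s+(n-1)T}\cdots\mathcal{E}_{s+T,s}$ with $n = \lfloor (t-s)/T \rfloor$. The leftover factor has norm at most $1$, so
\[
\|\mathcal{E}_{t,s}\|_{\mathcal{X}} \le q^{n} \le q^{-1} e^{-\frac{|\ln q|}{T}(t-s)} .
\]
Applied to $X = \rho - \sigma$, this is exponential contractivity with constant $q^{-1}$ and rate $|\ln q|/T$.

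\textbf{Main obstacle.} The delicate point is getting the threshold $\gamma/(2+2\ln K)$ rather than the $\gamma/(2K)$ that naive Gronwall would give. A Gronwall argument propagates the factor $K$ through the integral term. Instead, one must bound $\mathcal{E}_{r,s}$ inside the integral only by $1$, using trace-norm contractivity of CPTP maps, and use $\min(1, K e^{-\gamma u})$ for the free evolution. One also has to check that the relevant norms can be taken on the traceless subspace $\mathcal{X}$, where the constant-$K$ bound of \cref{def:exp contraction} applies.
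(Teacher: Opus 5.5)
Your proof is correct and shares the paper's core idea. You bound only the reference evolution by $\min(1,Ke^{-\gamma u})$ and the driven propagator by $1$, using trace-norm contractivity of CPTP maps. The accumulated error over one block is then at most $2v(1+\ln K)/\gamma$, and you iterate over blocks of length $T$. The technical route differs, though. The paper proves a general perturbation lemma (\cref{thm:gen perturbation thm}) by computing the right derivative of $\norm{x(t)-\tilde{x}(t)}_1$ and integrating it with its fundamental theorem of calculus for right-differentiable functions (\cref{lem:fund thm calculus right diff functions}). In the paper's notation ($\mathcal{E}$ reference, $\tilde{\mathcal{E}}$ driven), this amounts to the mirrored Duhamel formula $\tilde{\mathcal{E}}_{t,s}=\mathcal{E}_{t,s}+\int_s^t\tilde{\mathcal{E}}_{t,r}(\tilde{\mathcal{L}}_r-\mathcal{L}_r)\mathcal{E}_{r,s}\,dr$. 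There the perturbation acts on the unperturbed trajectory, and the decaying weight depends on $r-s$ rather than on $t-r$ as in yours. Your operator-level variation-of-constants identity is more elementary: it needs only absolute continuity of $r\mapsto\mathcal{E}_{r,s}$, so it covers merely locally integrable $V(t)$ directly. The paper's expansion $x(t+\epsilon)=x(t)+\epsilon\mathcal{L}_t x(t)+o(\epsilon)$, and its requirement of a continuous bound on the right derivative, tacitly use more regularity of $t\mapsto\mathcal{L}_t$. What the paper's packaging buys is reuse: the same lemma, with an arbitrary time-dependent reference and an arbitrary Lindbladian perturbation, feeds directly into the slow-driving result (\cref{cor:small time derivative}) and the time-averaged variants. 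Your argument extends to those settings just as well if you replace $e^{\mathcal{L}_0(t-r)}$ by the reference propagator. The paper's one-block constant $\Delta\mathcal{L}(1+\ln K)/\gamma+K(1-\Delta\mathcal{L}/\gamma)e^{-\gamma T}$ is marginally sharper than your $Ke^{-\gamma T}+\delta$, but the threshold is identical.
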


\begin{thm} \label{thm:perturbation:slow H}
Consider the driven Lindbladian $\mathcal{L}_t = - i[H(t), \ \cdot \ ] + \mathcal{D}$, where $H(t)$ is assumed to be differentiable. Suppose that the constant instantaneous Lindbladian $\mathcal{L}_{t_0} = - i[H(t_0 ), \ \cdot \ ] + \mathcal{D}$ obtained from fixing $t=t_0$ in $\mathcal{L}_t$ is exponentially contractive for any choice of $t_0$ with constant $K_0$ and contraction rate at least $\gamma_0$ as defined in \cref{def:exp contraction}.
Then, the original time-dependent Lindbladian $\mathcal{L}_t$ will be contractive provided
\begin{align} \label{eq:bounded dH dt condition}
\sup_{t \geq 0} \left\| \frac{d}{dt} H(t) \right\|_\infty < \frac{2\gamma_0^2 }{3} \frac{1}{1 + \frac{2}{3} \ln(K_0) + \frac{1}{3} \ln(K_0)^2}.
\end{align}
\end{thm}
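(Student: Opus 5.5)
The plan is a ``freezing'' argument. I will cut the time axis into consecutive blocks of a fixed length $T$. On each block I compare the true evolution $\mathcal{E}_{t_k+T,t_k}$ with the frozen semigroup $e^{T\mathcal{L}_{t_k}}$. The frozen evolution contracts by $K_0e^{-\gamma_0 T}$ by hypothesis, and the slowness of $H$ makes the comparison error small. Chaining blocks then gives a per-block factor $q<1$, hence exponential decay at rate $-\ln(q)/T$.

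\emph{Setting.} All maps act on the real space $\mathcal{V}_0$ of traceless Hermitian operators, equipped with the induced $1\to1$ norm $\superopnorm{\cdot}$. Every $X\in\mathcal{V}_0$ can be written as $X=\tfrac{\norm{X}_1}{2}(\rho-\sigma)$ with $\rho,\sigma$ states. Hence \cref{def:exp contraction} for $\mathcal{L}_{t_0}$ gives $\superopnorm{e^{s\mathcal{L}_{t_0}}|_{\mathcal{V}_0}}\le K_0e^{-\gamma_0 s}$. Likewise, trace preservation and positivity give $\superopnorm{\mathcal{E}_{t,s}|_{\mathcal{V}_0}}\le 1$.

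\emph{Comparison on one block.} Fix $t_k$ and write $\mathcal{E}_{t,t_k}$ for the true propagator. Duhamel's formula gives
\begin{equation*}
\mathcal{E}_{t_k+T,t_k}-e^{T\mathcal{L}_{t_k}}=\int_{t_k}^{t_k+T} e^{(t_k+T-\tau)\mathcal{L}_{t_k}}\,(\mathcal{L}_\tau-\mathcal{L}_{t_k})\,\mathcal{E}_{\tau,t_k}\,d\tau .
\end{equation*}
The difference $\mathcal{L}_\tau-\mathcal{L}_{t_k}=-i[H(\tau)-H(t_k),\cdot\,]$ preserves $\mathcal{V}_0$. Its norm is at most $2\norm{H(\tau)-H(t_k)}_\infty\le 2h(\tau-t_k)$, where $h=\sup_t\norm{\tfrac{d}{dt}H(t)}_\infty$, by the mean value inequality.

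For the outer factor I use the crude bound $\superopnorm{e^{s\mathcal{L}_{t_k}}}\le 1$, since $e^{s\mathcal{L}_{t_k}}$ is itself a channel. Together with $\superopnorm{\mathcal{E}_{\tau,t_k}}\le 1$ this yields an error at most $hT^2$. If a sharper constant is wanted, I would instead use $\min(1,K_0e^{-\gamma_0 s})$ for the outer factor, which is likely where the specific $\ln(K_0)$-dependence in \cref{eq:bounded dH dt condition} comes from. Either way,
\begin{equation*}
\superopnorm{\mathcal{E}_{t_k+T,t_k}|_{\mathcal{V}_0}}\le q(T):=K_0e^{-\gamma_0T}+hT^2 .
\end{equation*}

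\emph{Chaining blocks.} By the composition property $\mathcal{E}_{t,s}=\mathcal{E}_{t,t_n}\cdots\mathcal{E}_{t_1,s}$ with $t_k=s+kT$, and bounding the final partial block by $1$, I obtain
\begin{equation*}
\norm{\mathcal{E}_{t,s}(\rho-\sigma)}_1\le q^{\lfloor (t-s)/T\rfloor}\norm{\rho-\sigma}_1\le q^{-1}e^{-\frac{\ln(1/q)}{T}(t-s)}\norm{\rho-\sigma}_1 .
\end{equation*}
This is exponential contractivity with $K=1/q$ and rate $\gamma=\ln(1/q)/T$, provided $q<1$.

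\emph{Optimizing $T$.} It remains to choose $T$ so that $q(T)<1$ whenever $h$ satisfies \cref{eq:bounded dH dt condition}. A natural choice is $T=(c+\ln K_0)/\gamma_0$ for a constant $c>0$. Then $K_0e^{-\gamma_0T}=e^{-c}$, and the requirement becomes
\begin{equation*}
h<\frac{\gamma_0^2\,(1-e^{-c})}{(c+\ln K_0)^2}.
\end{equation*}
This already has the shape $\gamma_0^2/(\text{quadratic in }\ln K_0)$. The main obstacle I expect is matching the precise constants $\tfrac{2}{3}$ and $1+\tfrac23\ln K_0+\tfrac13\ln^2K_0$. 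This probably requires the refined Duhamel estimate, in which the outer propagator is bounded by $\min(1,K_0e^{-\gamma_0(t_k+T-\tau)})$ and the resulting integral of $(\tau-t_k)$ against it is evaluated explicitly, followed by a careful choice of $c$. The qualitative theorem, however, follows from the crude version above.
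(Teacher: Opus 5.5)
\textbf{Genuine gap: the proposal does not reach the stated threshold.} The hypothesis of the theorem is the specific bound $\sup_t\norm{\tfrac{d}{dt}H(t)}_\infty<2\gamma_0^2/(3+2\ln K_0+\ln^2K_0)$. Neither your crude bound nor the refinement you propose reaches it once $K_0$ is moderately large, and the reason is structural.

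You freeze at the start $t_k$ of the block and put the frozen propagator on the \emph{outside} of Duhamel. This pairs the largest mismatch, $\norm{H(\tau)-H(t_k)}_\infty\approx hT$ near $\tau=t_k+T$, with an outer propagator that has had no time to decay. The comparison error therefore grows with $T$: like $hT^2$ in the crude version, and like $2hT(1+\ln K_0)/\gamma_0$ in the refined one. It must then be traded off against $K_0e^{-\gamma_0T}$. Optimizing over $T$ gives a threshold of order $\gamma_0^2/\ln^2K_0$ for large $K_0$, which is half of the stated $2\gamma_0^2/\ln^2K_0$. It is already weaker than the stated threshold for $\ln K_0\gtrsim 0.45$.

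Concretely, take $\gamma_0=1$, $K_0=e$, $h=0.3<1/3$, so the theorem applies. Yet $\min_T\bigl(e^{1-T}+0.3T^2\bigr)\approx 1.25$. With your refined outer bound $\min\{1,K_0e^{-\gamma_0(t_k+T-\tau)}\}$ and $T=1+c$, the per-block factor is $1.6e^{-c}+1.2c-0.3\geq 1.24$. So no ``careful choice of $c$'' can rescue the stated constant, and ``the qualitative theorem follows'' does not prove the statement as written.

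\textbf{Missing idea: let the frozen semigroup's decay damp the growing mismatch.} Then the error is bounded uniformly in the block length. The paper does this in the proof of \cref{cor:small time derivative}, using the estimate from the proof of \cref{thm:gen perturbation thm}. That estimate is Duhamel in the other ordering:
\[
\norm{\mathcal{E}_{t,s}(\rho-\sigma)}_1\le\norm{e^{(t-s)\mathcal{L}_{t_0}}(\rho-\sigma)}_1+\int_s^t\superopnorm{\mathcal{L}_\tau-\mathcal{L}_{t_0}}\,\norm{e^{(\tau-s)\mathcal{L}_{t_0}}(\rho-\sigma)}_1\,d\tau .
\]
Here the true propagator sits outside and is bounded by $1$. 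The frozen one sits inside and is bounded by $\min\{1,K_0e^{-\gamma_0(\tau-s)}\}$. The freezing time $t_0=s+z$ is placed inside the block.

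The error is then at most $2h\int_0^\infty|u-z|\min\{1,K_0e^{-\gamma_0u}\}\,du$, which is finite. At $z=(1+\ln K_0)/(2\gamma_0)$ it is at most $h(3+2\ln K_0+\ln^2K_0)/(2\gamma_0^2)$, and this is $<1$ exactly under the stated hypothesis. Taking $t-s$ large makes the $K_0e^{-\gamma_0(t-s)}$ term negligible, so the one-block factor is $<1$. Your chaining step then finishes the proof.

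Within your own ordering, freezing at $t_k+T-z$ instead of $t_k$ produces the same integral. Your crude estimate remains a valid contractivity criterion, and for $K_0$ close to $1$ it is a stronger one. It is not, however, a proof of the theorem with the stated threshold.
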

\Cref{thm:perturbation:small H} is proven in \cref{subsec:proofs of perturbation sensitivity} (page \pageref{proof:thm:perturbation:small H}) and \cref{thm:perturbation:slow H} is proven in \cref{subsec:proofs of perturbation sensitivity} (page \pageref{proof:thm:perturbation:slow H}). In case one of the conditions \cref{eq:bounded H condition} or \cref{eq:bounded dH dt condition} from the theorems above is satisfied, one can then obtain explicit constants $K$ and contraction rates $\gamma$, as defined in \cref{def:exp contraction}, governing the just established exponentially contractive dynamics generated by the driven Lindbladian $\mathcal{L}_t$. These contraction constants $K, \gamma $ will depend on the parameters of the problem, and are given explicitly in \cref{prop:app:more general contractivity} from \cref{app:sec:sensitivity results for lindbladians}. It is also worth mentioning that one can obtain more general conditions, less sensitive to rapid but brief changes in $H(t)$, for exponential contractivity than the ones given in \cref{thm:perturbation:small H} and \cref{thm:perturbation:slow H} above, by simply requiring that certain sufficiently long time-averages of $\norm{V (t)}_{\infty}$ or $ \left\| {d}H(t)/{dt}  \right\|_{\infty}$ are sufficiently bounded.
Such conditions are also stated precisely in \cref{prop:app:more general contractivity} from \cref{app:sec:sensitivity results for lindbladians}, which also provides explicit contraction rates and constants $K , \gamma$ in case these conditions are satisfied.
We can also combine \cref{thm:perturbation:small H} and \cref{thm:perturbation:slow H} to show that if we add both a small Hamiltonian and a slowly varying Hamiltonian to an exponentially contractive Lindbladian, it will remain exponentially contractive. 
All of these results follow from more general results about the time evolution of perturbed Lindbladians (especially \cref{thm:gen perturbation thm} below), that we state and explore more thoroughly in \cref{app:sec:sensitivity results for lindbladians}.
Note lastly that we can w.l.o.g. assume $\norm{X}_\infty = ({\lambda_\text{max}(X)-\lambda_\text{min}(X)})/{2}$ for $X=V(t)$ or $X=dH(t)/dt$ above, by adding multiples of the identity to $H(t)$

\cref{thm:perturbation:small H} and \cref{thm:perturbation:slow H} above show that the surprising effects seen in \cref{counterex:const_H} and \cref{counterex:smooth proposal} could at least not have happened if we had assumed $H(t)$ to be sufficiently small or slowly varying.

\subsection{Sufficient conditions for Hamiltonian-independent contractivity} \label{subsec:Hamiltonian indpendent results}

In the previous subsection, we encountered results showing that certain driven Lindbladians are contractive for all sufficiently small or slowly varying Hamiltonians. We now focus on a stronger notion of guaranteed contractivity that is completely independent of the Hamiltonian $H(t)$.

\begin{dfn} \label{def:Hamiltonian independent contractivity}
A dissipator $\mathcal{D}$ is called \emph{Hamiltonian-independently contractive} when all driven Lindbladians $\mathcal{L}_t$ of the form $ \mathcal{L}_t = - i \left[ H(t) , \ \cdot \ \right] + \mathcal{D}$, i.e. with completely arbitrary $H(t)$, generate exponentially contractive dynamics.  
\end{dfn}

Below, we present two conditions that guarantee that any Lindbladian constructed from a given dissipator $\D$ is strictly contractive in trace distance (\cref{thm:thm contr frm orthog vects}) or strictly contractive in Hilbert-Schmidt distance (\cref{thm:thm eigen condition}).
Both imply exponential contractivity, but are inequivalent, and neither is necessary, as illustrated in \cref{prop:independence of contractivity conditions} from \cref{app:subsec:dissipators showing independent conditions}.

\begin{thm} \label{thm:thm contr frm orthog vects}
Consider a dissipator $\mathcal{D} = \sum_j \mathcal{D}_{L_j}$. Define $R(\mathcal{D})$ via
\begin{align} \label{eq:r1 def}
R (\mathcal{D}) \coloneqq  \min_{\begin{matrix}
\| u \| , \| v \| = 1 \\
\braket{u | v} = 0 \\
\end{matrix}} \sum_j \left( |\bra{v} L_j \ket{u}|^2 + |\bra{v} L_j^\dagger \ket{u}|^2 \right)
\end{align}
If $R(\mathcal{D}) > 0$, then $\mathcal{D}$ is Hamiltonian-independently contractive with a contraction rate $\gamma \geq R(\mathcal{D})$. The contraction rate of $R(\mathcal{D})$ can potentially be improved by a dimensional factor if the stronger condition $r(\D)>0$ holds, where
\begin{align} \label{eq:r def thm}
r (\mathcal{D}) \coloneqq \min_{\begin{matrix}
\| u \| , \| v \| = 1 \\
\braket{u | v} = 0 \\
\end{matrix}} \sum_j  |\bra{v} L_j \ket{u}|^2.
\end{align}
The corresponding contraction rate is at least $\gamma \geq r(\mathcal{D}) d$, where $d=\dim(\mathcal{H})$ is the Hilbert space dimension. The constant $K$ from \cref{def:exp contraction} can in all cases be chosen as $K=1$. 
\end{thm}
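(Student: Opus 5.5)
The plan is to track the trace distance directly. Let $X(t) = \mathcal{E}_{t,s}(\rho-\sigma)$, which is Hermitian and traceless, and show that for almost every $t$
\begin{equation*}
\frac{d}{dt}\norm{X(t)}_1 \le -\gamma \norm{X(t)}_1
\end{equation*}
with $\gamma = R(\mathcal{D})$, respectively $\gamma = d\, r(\mathcal{D})$. Gr\"onwall's inequality then gives \cref{eq:contractivity} with $K=1$, independently of $H(t)$.

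\emph{Derivative of the trace norm.} Since $X(t)$ is absolutely continuous (we assume $H$ locally integrable), $t\mapsto\norm{X(t)}_1$ is absolutely continuous. I would argue that at any point of differentiability, writing $S=\sgn(X(t))$ for the sign of $X(t)$ (so $SX = XS = |X|$), we have
\begin{equation*}
\frac{d}{dt}\norm{X}_1 \le \tr\!\big(S\,\dot X\big).
\end{equation*}
This follows from $\norm{Y}_1 \ge \tr(SY)$ for all Hermitian $Y$, with equality at $Y=X$, applied to $Y=X(t+h)$ with $h\downarrow 0$. The Hamiltonian part drops out, because $\tr(S[H,X]) = \tr(H[X,S]) = 0$ as $S$ commutes with $X$. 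This is exactly why the bound is Hamiltonian-independent.

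\emph{Computing $\tr(S\,\mathcal{D}_L(X))$.} Diagonalise $X=\sum_k \lambda_k \proj{k}$ and let $s_k=\sgn(\lambda_k)$. Using $\tr(S\{L^\dagger L, X\})/2 = \tr(L^\dagger L |X|)$, one finds
\begin{equation*}
\tr\!\big(S\,\mathcal{D}_L(X)\big)=\sum_{k,l}|\lambda_k|\,(s_ls_k-1)\,|\bra{l}L\ket{k}|^2 .
\end{equation*}
Split the eigenbasis into the positive, negative and kernel index sets $P$, $N$, $Z$. The factor $s_ls_k-1$ equals $-2$ when $k,l$ lie in opposite sets $P,N$, equals $-1$ when $k\in P\cup N$ and $l\in Z$, and vanishes otherwise. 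Tracelessness gives $\sum_{P}\lambda_k = \sum_N|\lambda_l| = M = \norm{X}_1/2$.

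\emph{The $r(\mathcal{D})$ bound.} Each term $|\lambda_k|\,|\bra{l}L_j\ket{k}|^2$ with $k\neq l$ is bounded below, after summing over $j$, by $|\lambda_k|\, r(\mathcal{D})$. The total is then at least
\begin{equation*}
r\big[\,M(2|N|+|Z|)+M(2|P|+|Z|)\,\big] = 2dMr ,
\end{equation*}
using $|P|+|N|+|Z| = d$. Hence $\frac{d}{dt}\norm{X}_1\le -d\,r(\mathcal{D})\norm{X}_1$.

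\emph{The $R(\mathcal{D})$ bound.} Here I drop the $Z$ terms and pair each $(k,l)\in P\times N$. The pair contributes
\begin{equation*}
\lambda_k|\bra{l}L_j\ket{k}|^2+|\lambda_l|\,|\bra{k}L_j\ket{l}|^2 \;\ge\; \min(\lambda_k,|\lambda_l|)\big(|\bra{l}L_j\ket{k}|^2+|\bra{l}L_j^\dagger\ket{k}|^2\big),
\end{equation*}
which after summing over $j$ is at least $\min(\lambda_k,|\lambda_l|)\,R(\mathcal{D})$.

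The main obstacle is the combinatorial step of showing $\sum_{k\in P,\,l\in N}\min(\lambda_k,|\lambda_l|)\ge M$, since the two weights can be very unbalanced. I would use superadditivity of $b\mapsto\min(a,b)$ in the form $\sum_l\min(a,b_l)\ge\min(a,\sum_l b_l)$. This gives
\begin{equation*}
\sum_{k,l}\min(\lambda_k,|\lambda_l|) \ge \sum_k \min(\lambda_k, M) = M ,
\end{equation*}
where the last equality holds because each $\lambda_k \le M$. With the prefactor $2$ this yields $\frac{d}{dt}\norm{X}_1\le -2MR = -R(\mathcal{D})\norm{X}_1$.

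The remaining care is the a.e. differentiability and the one-sided derivative argument near eigenvalue crossings. The inequality $\norm{Y}_1\ge\tr(SY)$ handles this without needing smooth dependence of the eigenvectors on $t$.
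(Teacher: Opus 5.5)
Your argument is correct. For the $R(\mathcal{D})$ bound it is essentially the paper's proof. You use the same formula $\sum_{k,l}|\lambda_k|(s_ls_k-1)\sum_j|\langle l|L_j|k\rangle|^2$ and the same pairing of $(k,l)\in P\times N$ with $(l,k)$ to produce $R(\mathcal{D})$. You also need the same bound $\sum_{P\times N}\min(\lambda_k,|\lambda_l|)\ge M$. The paper gets it by restricting to the single eigenvalue of largest modulus; your superadditivity argument is an equally short alternative.

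The genuine differences are in the analytic framework and in the $r(\mathcal{D})$ bound.
\begin{itemize}
\item \emph{Analytic framework.} The paper computes the exact right derivative $\partial_{t,+}\norm{x(t)}_1$ at every $t$ (\cref{prop:right time derivative form}). This forces it to use $S=\sgn(x+y)$ with the kernel correction $y=P_{\ker x}\mathcal{D}x P_{\ker x}$, degenerate first-order perturbation theory, and the comparison lemma \cref{lem:full analysis result} for continuous right-differentiable functions. You instead use absolute continuity, a.e.\ differentiability and the subgradient inequality. This lets you take $S=\sgn(X)$ and finish with ordinary Gr\"onwall. It is lighter and fits merely locally integrable $H(t)$ better.
\item \emph{The $r(\mathcal{D})$ bound.} The paper (\cref{prop:improved contraction rate}) uses a separate argument. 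It splits $\rho-\sigma$ into orthogonal states $\rho',\sigma'$, shows $\rho'(\delta t),\sigma'(\delta t)\succeq r(\mathcal{D})\,\delta t\, I+o(\delta t)$ by eigenvalue perturbation, and applies the triangle inequality. Your counting $M(2|N|+|Z|)+M(2|P|+|Z|)=2dM$ gets the same rate $d\,r(\mathcal{D})$ from the same derivative formula, which is more unified.
\end{itemize}

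One slip needs fixing. With $h\downarrow 0$, the inequality $\norm{X(t+h)}_1-\norm{X(t)}_1\ge \tr\bigl(S(X(t+h)-X(t))\bigr)$ gives $\partial_{t,+}\norm{X}_1\ge \tr(S\dot X)$, which is the wrong direction. You need $h\uparrow 0$: dividing by $h<0$ flips the inequality. At points where both $X$ and $\norm{X}_1$ are differentiable, this gives $\frac{d}{dt}\norm{X}_1\le\tr(S\dot X)$ (in fact equality, by combining both one-sided limits). Consequently, the subgradient inequality alone does not control the right derivative at every $t$. The a.e.\ differentiability plus absolute-continuity route is therefore essential to your argument, not just a technical convenience.
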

\Cref{thm:thm contr frm orthog vects} is proved in \cref{subsec:proof orthogonal vectors} (page \pageref{proof:thm:thm contr frm orthog vects}). While \cref{thm:thm contr frm orthog vects} provides a sufficient condition for Hamiltonian-independent exponential contractivity, it implies a more easily checkable condition on the jump operators as shown in the corollary below.
\begin{cor}\label{cor:off diagonal matrices}
Consider any dissipator $\mathcal{D} = \sum_j \mathcal{D}_{L_j}$. Suppose that the set $\{ L_j , L_j^\dagger \}_j$ spans the entire space of anti-Hermitian operators as a complex vector space, i.e. $\{ X \in \mathcal{B}(\mathcal{H}) \ | \ X = - X^\dagger \} \subseteq \spn ( \{ L_j , L_j^\dagger \}_j )$. Then, $\mathcal{D}$ must be Hamiltonian-independently contractive. 
\end{cor}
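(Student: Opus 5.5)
The plan is to reduce the corollary to \cref{thm:thm contr frm orthog vects} by showing that the spanning hypothesis forces $R(\mathcal{D})>0$.

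Since the minimization in \cref{eq:r1 def} runs over a compact set, namely pairs of unit vectors $(u,v)$ with $\braket{u|v}=0$, and the objective is continuous, the minimum is attained. It therefore suffices to show that the objective is strictly positive at every such pair.

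Suppose, for contradiction, that for some orthonormal pair $u,v$ we had $\bra{v}L_j\ket{u}=0$ and $\bra{v}L_j^\dagger\ket{u}=0$ for all $j$. The map $X\mapsto \bra{v}X\ket{u}$ is complex-linear, so it would vanish on all of $\spn(\{L_j,L_j^\dagger\}_j)$. By hypothesis this span contains every anti-Hermitian operator.

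Now take the anti-Hermitian operator $X = \ket{v}\!\bra{u}-\ket{u}\!\bra{v}$. Because $\braket{u|v}=0$ and both vectors are normalized,
\begin{equation*}
\bra{v}X\ket{u} = \braket{v|v}\braket{u|u} - \braket{v|u}\braket{v|u} = 1 \neq 0,
\end{equation*}
which is the desired contradiction. Hence the objective is strictly positive on the whole feasible set, and so $R(\mathcal{D})>0$.

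\Cref{thm:thm contr frm orthog vects} then shows that $\mathcal{D}$ is Hamiltonian-independently contractive, with rate $\gamma\geq R(\mathcal{D})$ and $K=1$.

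The only step needing any care is the choice of a witness operator $X$ that is anti-Hermitian and also has a nonzero $(v,u)$ matrix element. The choice above works because orthogonality kills the cross term.

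If one wanted to avoid the compactness argument, one could instead bound $R(\mathcal{D})$ from below quantitatively. To do so, expand $X=\sum_j (a_jL_j+b_jL_j^\dagger)$ and apply Cauchy--Schwarz, which gives
\begin{equation*}
1\le \Big(\sum_j |a_j|^2+|b_j|^2\Big)\sum_j\left(|\bra{v}L_j\ket{u}|^2+|\bra{v}L_j^\dagger\ket{u}|^2\right).
\end{equation*}
The first factor is controlled by the norm of a fixed linear right-inverse on the finite-dimensional space of anti-Hermitian operators, and this yields an explicit positive lower bound. Positivity alone, however, is all the corollary requires.
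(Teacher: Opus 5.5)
Your proof is correct and follows essentially the same route as the paper: argue by contradiction from $R(\mathcal{D})=0$ and use the anti-Hermitian witness $\ket{u}\!\bra{v}-\ket{v}\!\bra{u}$ (up to sign). The only cosmetic difference is that you evaluate the single linear functional $X\mapsto\bra{v}X\ket{u}$ on this witness, whereas the paper first conjugates to obtain $\bra{u}L_j\ket{v}=0$ as well and then concludes that the witness is Hilbert--Schmidt orthogonal to $\spn(\{L_j,L_j^\dagger\}_j)$.
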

\cref{cor:off diagonal matrices} is proved in \cref{subsubsec:proof:cor:off diagonal} (page \pageref{proof:cor:off diagonal matrices}), and it follows from \cref{thm:thm contr frm orthog vects}. 
\begin{thm} \label{thm:thm eigen condition}
Let $\Delta$ be the linear map $\Delta (x) \coloneqq x - \tau \tr(x)$, $x \in \mathcal{B}(\mathcal{H})$, where $\tau = I / d$ is the maximally mixed state, and let $\mathcal{D}^{\dagger}$ denote the adjoint of the dissipator $\mathcal{D}$, with respect to the Hilbert-Schmidt inner product. Then, the super-operator $\tilde{\mathcal{D}} \coloneqq \Delta \circ ( \mathcal{D}+\mathcal{D}^{\dagger}) / 2 \circ \Delta$, acting on the space of Hermitian operators, is self-adjoint and has real eigenvalues. If the second largest eigenvalue $\mu_2$ of $\tilde{\mathcal{D}}$ is strictly negative, the dissipator $\mathcal{D}$ will be Hamiltonian-independently contractive, with a contraction rate of at least $\gamma \geq |\mu_2| $, independently of $H(t)$ (and the constant $K$ from \cref{def:exp contraction} can be chosen as $K=\sqrt{d}$).
\end{thm}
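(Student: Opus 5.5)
The plan is to control the Hilbert--Schmidt norm of the difference $X(t) = \mathcal{E}_{t,s}(\rho-\sigma)$ with a differential inequality. Since $\mathcal{E}_{t,s}$ is trace preserving and Hermiticity preserving, $X(t)$ is Hermitian and traceless for all $t$. Hence $\Delta X(t) = X(t)$. It satisfies $\dot X = -i[H(t),X] + \mathcal{D}(X)$.

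First I will verify the stated structural properties of $\tilde{\mathcal{D}}$. The map $\Delta$ is the orthogonal projection (in Hilbert--Schmidt inner product) onto traceless operators: $\Delta(x) = x - \tfrac{1}{d}\langle I, x\rangle I$, and $I/\sqrt d$ is a unit vector. So $\Delta$ is self-adjoint and idempotent. $(\mathcal{D}+\mathcal{D}^\dagger)/2$ is self-adjoint, and both maps preserve Hermiticity; the latter holds because $\mathcal{D}$ and $\mathcal{D}^\dagger$ do. Thus $\tilde{\mathcal{D}}$ is a self-adjoint operator on the real inner-product space of Hermitian matrices, with real spectrum. I also need to see why the \emph{second} eigenvalue is the relevant one. $I$ is in the kernel of $\Delta$, so $\tilde{\mathcal{D}}(I) = 0$, and $I$ spans the orthogonal complement of the invariant traceless subspace. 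Hence the restriction of $\tilde{\mathcal{D}}$ to traceless Hermitian operators has eigenvalues equal to those of $\tilde{\mathcal{D}}$ with one zero removed. If $\mu_2<0$, then the removed zero must be the top eigenvalue $\mu_1 = 0$ (this is where I should be slightly careful: all other eigenvalues are $\le \mu_2 < 0$, so the zero from $I$ is the largest). Consequently, for every traceless Hermitian $X$, $\langle X, \tilde{\mathcal{D}} X\rangle \le \mu_2 \|X\|_2^2$.

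Next comes the energy estimate:
\[
\frac{d}{dt}\|X\|_2^2 = 2\,\mathrm{Re}\langle X, \dot X\rangle = 2\,\mathrm{Re}\langle X, -i[H,X]\rangle + 2\,\mathrm{Re}\langle X,\mathcal{D}X\rangle .
\]
The Hamiltonian term vanishes: $\langle X, -i[H,X]\rangle = -i\,\tr(X[H,X]) = -i\,\tr(XHX - XXH) = 0$ by cyclicity. This is the key point making the bound Hamiltonian-independent. For the dissipative term, $2\,\mathrm{Re}\langle X,\mathcal{D}X\rangle = \langle X,(\mathcal{D}+\mathcal{D}^\dagger)X\rangle = 2\langle X,\tilde{\mathcal{D}}X\rangle$, using $\Delta X = X$. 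Hence $\frac{d}{dt}\|X\|_2^2 \le 2\mu_2\|X\|_2^2 = -2|\mu_2|\,\|X\|_2^2$. By Grönwall this gives $\|X(t)\|_2 \le e^{-|\mu_2|(t-s)}\|X(s)\|_2$. Only local integrability of $H$ is assumed, so the derivative exists almost everywhere and $X$ is absolutely continuous; Grönwall in integral form suffices.

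Finally I will convert norms: $\|X(t)\|_1 \le \sqrt d\,\|X(t)\|_2$ and $\|X(s)\|_2 \le \|X(s)\|_1$. This yields the bound with $K=\sqrt d$ and $\gamma \ge |\mu_2|$. The main subtlety I expect is the spectral bookkeeping in the first step. One must make sure that $\mu_2$ (counting the trivial zero eigenvalue from $I$) really bounds the quadratic form on the traceless sector, which requires the invariance argument above. The rest is routine.
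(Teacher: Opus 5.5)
Your proposal is correct and matches the paper's proof step for step. Both differentiate $\|X(t)\|_2^2$, cancel the Hamiltonian term by cyclicity of the trace, use $\Delta X = X$ to rewrite the dissipative term as $2\langle X,\tilde{\mathcal{D}}X\rangle \le 2\mu_2\|X\|_2^2$ on the traceless sector, integrate, and convert norms via $\|\cdot\|_2\le\|\cdot\|_1\le\sqrt{d}\,\|\cdot\|_2$ to get $K=\sqrt d$. Your extra care on two points — that the zero eigenvalue from $I$ is the simple top eigenvalue, and that only a.e. differentiability is available for merely locally integrable $H(t)$ — makes explicit details the paper passes over quickly.
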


\Cref{thm:thm eigen condition} is proved in \cref{subsec:eigen results} (page \pageref{proof:thm:thm eigen condition}). \Cref{thm:thm contr frm orthog vects} and \cref{thm:thm eigen condition} imply several results that we explore in the rest of this and the following sections.

Note lastly in relation to \cref{thm:thm eigen condition} that a large framework has been constructed linking spectral gaps of various operators to the contraction rates~\cite{Temme2010, Kastoryano2013-ot, capel2020modified}, also called \emph{decay rates} or \emph{mixing times}, of irreducible Lindbladians (i.e. Lindbladians with unique full-rank fixed points).
The operator $\tilde{\mathcal{D}}$ from \cref{thm:thm eigen condition} can be seen as a simple instance of such an operator, which raises the question whether some of these powerful operator eigenvalue-based contraction results could yield more general conditions for Hamiltonian-independent contractivity.
The issue is that these results do not seem to apply in our case.
Specifically, the more complicated operators in those works depend on the fixed point, which in our time-dependent setting will in general fail to commute with the time-dependent Hamiltonian, which makes it difficult to say anything about convergence. 
In the special case when the maximally mixed state is the unique fixed point of the dissipator, \cref{thm:thm eigen condition} already provides the necessary and sufficient condition for Hamiltonian-independent contractivity, see \cref{subsec:case study:unital} below. And finally, a state evolving under a driven Lindbladian may never get arbitrarily close to a fixed point of the dissipator, since a sufficiently strong time-dependent Hamiltonian can continuously "push away" the state from the fixed point. This shows that the mentioned results, aiming to prove convergence to a fixed point, are simply too strong to be generalized to arbitrary driven Lindbladians.

\subsection{Applying \texorpdfstring{\cref{thm:thm contr frm orthog vects}}{thm orthog} and \texorpdfstring{\cref{thm:thm eigen condition}}{thm eigen} to specific dissipators} \label{subsec:case studies}

In this section we apply \cref{thm:thm contr frm orthog vects} and \cref{thm:thm eigen condition} to common families of dissipators and obtain sufficient conditions for when these dissipators are Hamiltonian-independently contractive. In some cases, specifically for general unital dissipators and general dissipators with Hilbert space dimension $d=2$, these sufficient conditions will also be necessary, thus completely characterizing Hamiltonian-independent contractivity in these cases. Besides unital and low dimensional dissipators, we also consider what we denote ``ladder dissipators" which frequently arise in physically relevant scenarios. The results are summarized in \cref{table:case study results}, and are discussed in more detail in the following subsections.

\begin{table*}[ht!]
\centering
\begin{tabular}{|| p{4cm} | p{2.7cm} | p{3.8cm} | p{2.3cm} | p{3.3cm} ||} 
 \hline
Type of the dissipator $\mathcal{D} = \sum_j \mathcal{D}_{L_j}$ & Hilbert space dimension $d $ & Sufficient condition for Hamiltonian-independent contractivity & Is the condition also necessary? & For further details, see \\ [0.5ex] 
\hline\hline
Unital dissipator, i.e. $\mathcal{D}(I)=0$ & Arbitrary $d \in \mathbb{N}$ & $\{ L_j , L_j^\dagger \}_j$ generates $M_d (\mathbb{C})$ & Yes & \cref{cor:unital dissipator} \\ [1ex]
\hline
Arbitrary dissipator $\mathcal{D}$ & $d=2$ & $\{ L_j , L_j^\dagger \}_j$ generates $M_2 (\mathbb{C})$ & Yes & \cref{cor:d=2} \\ [0.5ex]
\hline
Arbitrary dissipator $\mathcal{D}$ & $d=3$ & $\{ L_j \}_j$ generates $M_3 (\mathbb{C})$ & No & \cref{cor:d3} \\ [0.5ex]
\hline
Harmonic oscillator (HO), angular momentum (AM) and uniform ladder (UL) dissipators & $d \leq 5$ for AM and UL dissipators, \linebreak $d \leq 3$ for HO dissipator & Not needed & Not applicable  & \Cref{fig:mu2}, for explicit contraction rates \\ [0.5ex]
\hline
general $3$-level ladder dissipator, i.e. $\mathcal{D}= \eta \mathcal{D}_{L_\alpha}$ with $L_{\alpha} = \ket{0} \bra{1} + \alpha \ket{1} \bra{2}$ and $\eta > 0$, $\alpha \in \mathbb{C} $ & $d=3$ & $0.318 < |\alpha| <  3.146 $ & Likely not & \cref{sec:continuous family of 3 dim ladders}, which includes explicit contraction rates \\ [0.5ex] 
\hline
\end{tabular}
\caption{Overview of the results from case studies in applying \cref{thm:thm contr frm orthog vects} and \cref{thm:thm eigen condition} to various families of dissipators. for the different cases, we provide sufficient conditions for Hamiltonian-independent contractivity, if this is not already guaranteed, and specify whether they are also necessary conditions and thus fully characterize Hamiltonian-independently contractivity in these cases. $M_d (\mathbb{C})$ here denotes the full complex matrix algebra in dimension $d$.
Finally, the harmonic oscillator-, angular momentum- and uniform ladder dissipators denote the dissipator $\mathcal{D}_L$ with a single jump operator $L= \alpha_1 \ket{0} \bra{1} + ... + \alpha_{d-1} \ket{d-2} \bra{d-1}$ with coefficient given by $\alpha_j = \sqrt{\gamma j}$, $\alpha_j = \sqrt{\gamma j (d-j)}$, $\alpha_j = \sqrt{\gamma}$ for some $\gamma \geq 0$ respectively.}
\label{table:case study results}
\end{table*}

\subsubsection{Unital dissipators} \label{subsec:case study:unital}

When we apply \cref{thm:thm eigen condition} to unital dissipators, we get the following result, which in this case completely classifies which dissipators are Hamiltonian-independently contractive in terms of algebraic conditions on the associated jump operators.  

\begin{prp}\label{cor:unital dissipator}
Assume that a dissipator $\mathcal{D} = \sum_j \mathcal{D}_{L_j}$ is unital, i.e. $\mathcal{D}(I) = 0$. Then, the following are equivalent:
\begin{itemize}
    \item $\mathcal{D}$ is Hamiltonian-independently contractive.
    \item $\tau = I / d $ is the unique fixed point of $\mathcal{D}$.
    \item The set $\{ L_j , L_j^\dagger \}_j$ generates the entire complex matrix algebra.
\end{itemize}
\end{prp}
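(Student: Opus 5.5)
I will prove the cycle of implications (i)$\Rightarrow$(ii)$\Rightarrow$(iii)$\Rightarrow$(i), where (i), (ii), (iii) are the three bullet points in the order stated.

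\emph{(i)$\Rightarrow$(ii).} Take $H(t)=0$. Then $\mathcal{D}$ itself is exponentially contractive. Since $\mathcal{D}$ is unital, $\tau$ is a fixed point. Exponential contractivity forces every fixed state $\sigma$ to satisfy $\norm{\sigma-\tau}_1=\norm{e^{t\mathcal{D}}(\sigma-\tau)}_1\to0$, so $\tau$ is the unique fixed point.

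\emph{(ii)$\Rightarrow$(iii).} I argue by contraposition. Suppose the set $\{L_j,L_j^\dagger\}_j$ generates a proper unital $*$-subalgebra $\mathcal{A}\subsetneq M_d(\mathbb{C})$ (the algebra generated by the identity and the $L_j, L_j^\dagger$, which is closed under $\dagger$). Its commutant $\mathcal{A}'$ is a $*$-algebra that is strictly larger than $\mathbb{C}I$, by the double commutant theorem. Hence it contains a Hermitian $X\notin\mathbb{C}I$. For such $X$ we have $[L_j,X]=0=[L_j^\dagger,X]$, so $\mathcal{D}_{L_j}(X)=L_jXL_j^\dagger-\tfrac12\{L_j^\dagger L_j,X\}=XL_jL_j^\dagger-XL_j^\dagger L_j$. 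Summing over $j$ gives $X\sum_j(L_jL_j^\dagger-L_j^\dagger L_j)=X\,\mathcal{D}(I)=0$. Thus $\tau+\epsilon X$, rescaled to have trace one and with $\epsilon$ small enough to keep it positive, is a second fixed state, contradicting (ii).

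\emph{(iii)$\Rightarrow$(i).} Here I use \cref{thm:thm eigen condition} and show $\mu_2<0$. First I compute $\mathcal{D}+\mathcal{D}^\dagger$ using unitality. We have $\mathcal{D}^\dagger(X)=\sum_j L_j^\dagger XL_j-\tfrac12\{L_j^\dagger L_j,X\}$. A direct computation gives
\begin{equation*}
\mathcal{D}_{L}(X)+\mathcal{D}_{L}^\dagger(X)=-\tfrac12\bigl([L,[L^\dagger,X]]+[L^\dagger,[L,X]]\bigr)+\tfrac12\bigl(\{X,LL^\dagger-L^\dagger L\}\bigr)\cdot\tfrac12\cdot 0 + \ldots
\end{equation*}
Rather than tracking every term, I use the quadratic form. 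Since $\mathcal{D}(I)=0$ implies $\sum_j L_jL_j^\dagger=\sum_jL_j^\dagger L_j$, one gets
\begin{equation*}
\langle X,\mathcal{D}(X)\rangle_{HS}=-\tfrac12\sum_j\bigl(\norm{[L_j,X]}_2^2+\norm{[L_j^\dagger,X]}_2^2\bigr)\cdot\tfrac12
\end{equation*}
for Hermitian $X$, up to a positive constant factor. Verifying this identity, with the correct constant, is the one genuine computation, and it is the step I expect to be the main obstacle. It amounts to expanding $\tr(XL_jXL_j^\dagger)-\tr(L_j^\dagger L_jX^2)$ and symmetrizing with the unitality relation.

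Granting the identity, $\tilde{\mathcal{D}}$ is negative semidefinite. The map $\Delta$ is the orthogonal projection onto traceless operators, and $\tilde{\mathcal{D}}$ annihilates $\tau$, which accounts for the top eigenvalue $0$. So $\mu_2$ is the maximum of the quadratic form over traceless Hermitian $X$ with $\norm{X}_2=1$. By compactness, $\mu_2=0$ would require a traceless nonzero Hermitian $X$ that commutes with every $L_j$ and every $L_j^\dagger$. Such an $X$ would commute with the whole algebra they generate, which by (iii) is $M_d(\mathbb{C})$. Then $X\in\mathbb{C}I$, and tracelessness forces $X=0$, a contradiction. Hence $\mu_2<0$, and \cref{thm:thm eigen condition} yields Hamiltonian-independent contractivity with rate at least $|\mu_2|$.
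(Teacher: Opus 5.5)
Your argument is correct, and the one step you left open holds with exactly the constant you wrote. For (i)$\Rightarrow$(ii) and (iii)$\Rightarrow$(i) you follow the paper. The paper proves $\langle x,\tilde{\mathcal{D}}x\rangle_{\text{HS}}=-\tfrac12\sum_j\norm{[\Delta(x),L_j]}_2^2$. This is your identity, because for Hermitian $X$ one has $[L^\dagger,X]=-[L,X]^\dagger$ and hence $\norm{[L^\dagger,X]}_2=\norm{[L,X]}_2$. The paper then concludes via Schur's lemma just as you do.

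To verify the step you granted, expand to get $\norm{[L,X]}_2^2=\tr(L^\dagger LX^2)+\tr(LL^\dagger X^2)-2\tr(XLXL^\dagger)$. Hence
\[
\langle X,\mathcal{D}_L(X)\rangle_{\text{HS}}=\tr(XLXL^\dagger)-\tr(L^\dagger LX^2)=-\tfrac12\norm{[L,X]}_2^2+\tfrac12\tr\left([L,L^\dagger]X^2\right).
\]
Summing over $j$, the last term becomes $\tfrac12\tr(\mathcal{D}(I)X^2)=0$.

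Your superoperator display is correct if you delete everything after its first two terms:
\[
\mathcal{D}_L(X)+\mathcal{D}_L^\dagger(X)=-\tfrac12\left([L,[L^\dagger,X]]+[L^\dagger,[L,X]]\right)+\tfrac12\{X,[L,L^\dagger]\}.
\]
The anticommutators then sum to $\tfrac12\{X,\mathcal{D}(I)\}=0$. Your sign on this term is the right one. The paper's intermediate line has the opposite sign, which is harmless because the term vanishes under unitality.

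Where you genuinely depart from the paper is (ii)$\Rightarrow$(iii). The paper invokes the general \cref{prop:necessary alg condition contrac}. There, Burnside's theorem gives a common invariant subspace $\mathcal{H}_1$ of the $*$-closed set $\{L_j,L_j^\dagger\}_j$, and its orthogonal complement $\mathcal{H}_2$ is invariant too. Then $e^{t\mathcal{D}}$ restricts to a channel on each of $\mathcal{B}(\mathcal{H}_1)$ and $\mathcal{B}(\mathcal{H}_2)$, and each restriction has its own fixed point.

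You instead note that every $X$ in the commutant satisfies $\mathcal{D}(X)=X\,\mathcal{D}(I)$. So under unitality, any non-scalar Hermitian element of the commutant is itself a fixed point. The projector onto $\mathcal{H}_1$ would do, so Burnside can replace the double commutant theorem if you prefer.

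Your route is shorter, exhibits the second fixed state explicitly, and contradicts uniqueness of the fixed point directly rather than contractivity. The price is that it uses unitality essentially: for non-unital $\mathcal{D}$, in general $X\mathcal{D}(I)\neq0$, e.g.\ for the block projector. The paper's lemma holds for arbitrary dissipators and is reused in \cref{cor:d=2} and \cref{cor:d3}.
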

\cref{cor:unital dissipator} is proven in \cref{subsubsec:proof:lem:unital} (page \pageref{proof:cor:unital dissipator}).
Note that by \emph{Burnside's theorem} for finite-dimensional complex matrix algebras \cite{Burnside1905} (cf. \cite{Halperin1980, Lomonosov2004}), the condition that $\{ L_j , L_j^\dagger \}_j$ generates the entire complex matrix algebra, which by \cref{cor:d=2} also completely characterizes Hamiltonian-independent contractivity in case $\dim(\mathcal{H})=2$, is equivalent to $\{ L_j , L_j^\dagger \}_j$ having no non-trivial mutually invariant subspace. Given that this condition completely characterizes Hamiltonian-independent contractivity in several special cases, it might a priori be hypothesized that it is equivalent to Hamiltonian-independent contractivity in general. However, this is not the case, as is evident from \cref{counterex:const_H} where it can be checked that $\{ L_j , L_j^\dagger \}_j$ does generate the entire matrix algebra. Meanwhile, the condition of $\{ L_j , L_j^\dagger \}_j$ generating the entire complex matrix algebra is however always a necessary condition for Hamiltonian-independent contractivity, as shown by \cref{prop:necessary alg condition contrac} in \cref{app:gen results}.

\subsubsection{Low-dimensional Lindbladians} \label{subsec:case study:low dim}

We now consider cases where the Hilbert space dimension $d = \dim(\mathcal{H})$ is fixed and sufficiently small. In \cref{subsec:case study:unital}, we saw how the new results are strong enough to completely characterize which unital dissipators are Hamiltonian-independently contractive, in terms of algebraic conditions on the jump operators. It turns out that the same is true for all dissipators acting on $2$-dimensional systems, and to some extend almost true for dissipators acting on $3$-dimensional systems.

\begin{prp} \label{cor:d=2}
Assume $\dim (\mathcal{H}) = 2$, and consider an arbitrary dissipator $\mathcal{D} = \sum_j \mathcal{D}_{L_j}$. Then, the following are equivalent:
\begin{itemize}
    \item $\mathcal{D}$ is Hamiltonian-independently contractive.
    \item The Lindbladian $\mathcal{L} = \mathcal{D}$ generates exponentially contractive dynamics.
    \item $\{L_j , L_j^\dagger \}_j$ generates the full complex matrix algebra.
    \item At least one of the jump operators $L_j$ is a non-normal matrix, or, at least two jump operators $L_j$ and $L_k$ do not commute. 
\end{itemize}
\end{prp}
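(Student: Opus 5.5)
The plan is to prove the cycle of implications (i)$\Rightarrow$(ii)$\Rightarrow$(iii)$\Rightarrow$(iv)$\Rightarrow$(i), where (i)–(iv) are the four bullets in the order stated. The key tool for closing the cycle is \cref{thm:thm contr frm orthog vects}. In dimension two it becomes very concrete, because a unit vector $u$ determines its orthogonal partner $v$ up to a phase.

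\emph{(i)$\Rightarrow$(ii)} is immediate: take $H(t)=0$ in \cref{def:Hamiltonian independent contractivity}.

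\emph{(ii)$\Rightarrow$(iii)}: argue by contraposition, or directly invoke \cref{prop:necessary alg condition contrac}. Suppose $\{L_j,L_j^\dagger\}_j$ does not generate $M_2(\mathbb{C})$. By Burnside's theorem these operators have a common nontrivial invariant subspace, spanned by some unit vector $\ket{0}$. The set is closed under adjoints, so the orthogonal complement $\spn\{\ket{1}\}$ is also invariant. Hence every $L_j$ is diagonal in the basis $\{\ket0,\ket1\}$. Then $\D(\proj0)=\D(\proj1)=0$, so $\D$ has two distinct fixed states. By the fact recalled in \cref{subsec:preliminaries}, $\D$ is then not exponentially contractive.

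\emph{(iii)$\Rightarrow$(iv)}: again use contraposition. If every $L_j$ is normal and the $L_j$ pairwise commute, they are simultaneously unitarily diagonalizable. The $L_j^\dagger$ are then diagonal in the same basis. The algebra generated by $\{L_j,L_j^\dagger\}_j$ is therefore commutative, and so it is not $M_2(\mathbb{C})$.

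\emph{(iv)$\Rightarrow$(i)} is the main step, and I would prove it by showing $R(\D)>0$ in \cref{thm:thm contr frm orthog vects}. For $d=2$, any orthonormal pair $u,v$ is an orthonormal basis $\{\ket{u},\ket{u^\perp}\}$ up to phases. In this basis the summand $|\bra v L_j\ket u|^2+|\bra v L_j^\dagger\ket u|^2$ equals $|(L_j)_{10}|^2+|(L_j)_{01}|^2$, i.e.\ the squared modulus of both off-diagonal entries of $L_j$. The minimization set in \cref{eq:r1 def} is compact and the objective is continuous, so the minimum is attained. Hence $R(\D)=0$ exactly when there is an orthonormal basis in which every $L_j$ is diagonal. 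That would make all $L_j$ normal and pairwise commuting, contradicting (iv). So $R(\D)>0$, and \cref{thm:thm contr frm orthog vects} gives Hamiltonian-independent contractivity with rate $\gamma\ge R(\D)$ and $K=1$.

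I do not expect a serious obstacle. The only point needing care is the observation that, in dimension two, vanishing of the two off-diagonal entries in one orthonormal basis is exactly simultaneous unitary diagonalizability. The rest is standard linear algebra together with Burnside's theorem.
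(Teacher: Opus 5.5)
Your proposal is correct and follows the paper's proof essentially step for step. You use the same cycle (i)$\Rightarrow$(ii)$\Rightarrow$(iii)$\Rightarrow$(iv)$\Rightarrow$(i). For (ii)$\Rightarrow$(iii) you use Burnside's theorem, specialized explicitly to $d=2$ via the two fixed states $\proj0,\proj1$, where the paper cites its general lemma. For (iv)$\Rightarrow$(i) you show that $R(\mathcal{D})=0$ would force all $L_j$ to be diagonal in a common orthonormal basis; your direct proof that $R(\mathcal{D})>0$, using attainment of the minimum, is if anything slightly cleaner than the paper's contradiction argument.
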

\cref{cor:d=2} is proved in \cref{subsubsec:proof:cor:d2} (page \pageref{proof:cor:d2}), and it follows from \cref{thm:thm contr frm orthog vects}. It turns out that we can prove something similar to \cref{cor:d=2} even in dimension $d=3$. However, the sufficient condition for Hamiltonian-independent contractivity is here that $\{ L_j \}_j$ alone should generate the entire complex matrix algebra, which is slightly stronger and no longer a necessary condition as well.
This result does not technically follow by applying \cref{thm:thm contr frm orthog vects}, but it is closely related.

\begin{prp} \label{cor:d3}
Assume $\dim (\mathcal{H}) = 3$, and consider an arbitrary dissipator $\mathcal{D} = \sum_j \mathcal{D}_{L_j}$. Then, a sufficient condition for $\mathcal{D}$ being Hamiltonian-independently contractive is that $\{ L_j \}_j$ generates the entire complex matrix algebra, while it is a necessary condition that $\{ L_j , L_j^\dagger \}_j$ generates the entire complex matrix algebra. 
\end{prp}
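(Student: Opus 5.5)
The sufficient direction can be proved by a direct trace-norm Lyapunov argument, in the same spirit as \cref{thm:thm contr frm orthog vects}. The point specific to $d=3$ is that the positive and negative parts of a traceless Hermitian matrix always split as $1$ versus $\leq 2$ dimensions. The necessary direction is just \cref{prop:necessary alg condition contrac} from \cref{app:gen results}, so the work is in the sufficient direction.

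\emph{Step 1: derivative of the trace norm.} Let $X(t)=\mathcal{E}_{t,s}(\rho-\sigma)$, which is Hermitian and traceless. Write $X=\sum_k \lambda_k \proj{k}$ in an eigenbasis. Where $\norm{X}_1$ is differentiable, we have $\frac{d}{dt}\norm{X}_1=\tr(\sgn(X)\,\mathcal{L}_t(X))$. The Hamiltonian term drops out, since $\tr(\sgn(X)[H,X])=0$ because $\sgn(X)$ commutes with $X$. For the dissipator, a short computation gives
\begin{equation*}
\tr\bigl(\sgn(X)\mathcal{D}_L(X)\bigr)=-\sum_{k,l}\abs{\lambda_k}\bigl(1-\sgn\lambda_k\sgn\lambda_l\bigr)\abs{\bra{l}L\ket{k}}^2 .
\end{equation*}
Hence, writing $k\nsim l$ for the condition that $\sgn\lambda_l\neq \sgn\lambda_k$ (which includes $\lambda_l=0$), we get $\frac{d}{dt}\norm{X}_1\le -\sum_j\sum_{k}\abs{\lambda_k}\sum_{l\nsim k}\abs{\bra{l}L_j\ket{k}}^2$. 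Here the zero eigenvalues are counted on the "opposite" side of every nonzero eigenvector.

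\emph{Step 2: the $d=3$ structure.} Since $\tr X=0$ and $X\neq 0$, there is at least one strictly positive and one strictly negative eigenvalue. In $d=3$, one of these two signs is therefore carried by a single eigenvector $u$. Tracelessness then forces $\abs{\lambda_u}=\norm{X}_1/2$. All other basis vectors are on the opposite side of $u$ or in the kernel, so for $k=u$ the sum over $l$ runs over an orthonormal basis of $u^\perp$. Dropping all other (nonnegative) terms gives
\begin{equation*}
\frac{d}{dt}\norm{X}_1\le -\frac{\norm{X}_1}{2}\sum_j \bigl\|(I-\proj{u})L_j u\bigr\|^2\le -\frac{c}{2}\norm{X}_1,
\end{equation*}
where $c\coloneqq\min_{\norm{u}=1}\sum_j\norm{(I-\proj{u})L_ju}^2$.

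\emph{Step 3: positivity of $c$.} The minimum exists by compactness of the unit sphere. If $c=0$, then some unit $u$ is a common eigenvector of all $L_j$. Then $\spn\{u\}$ is a nontrivial invariant subspace for the algebra generated by $\{L_j\}_j$, which contradicts the assumption that this algebra is $M_3(\mathbb{C})$. This uses only invariance under the $L_j$, not under the $L_j^\dagger$, which explains why the hypothesis involves $\{L_j\}_j$ alone. So $c>0$, and integrating gives $\norm{X(t)}_1\le e^{-c(t-s)/2}\norm{X(s)}_1$, i.e.\ $K=1$, uniformly in $H(t)$.

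\emph{Main obstacle.} The main technical issue is justifying Step 1 when $\norm{X(t)}_1$ is not differentiable, e.g.\ at eigenvalue crossings through zero, or when $H(t)$ is only locally integrable. I would handle this by working with the upper Dini derivative. For $P$ the projector onto the positive part of $X(t)$ (and symmetrically for the negative part), one has $\norm{X(t+h)}_1\ge$ and $\le$ comparisons via $\tr(S X(t+h))$ with $S=2P-I$ chosen at time $t$, using $\norm{Y}_1=\max_{\norm{S}_\infty\le1}\tr(SY)$. This gives the one-sided bound for almost every $t$, using absolute continuity of $X$. Kernel vectors are allowed on either side, and the estimate in Step 2 only used the single eigenvector $u$, so the choice of $S$ on the kernel does not matter. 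A Grönwall argument for absolutely continuous functions then concludes.
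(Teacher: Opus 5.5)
Your proof is correct and follows the paper's argument. The shared steps are: the trace-norm derivative formula with the Hamiltonian term dropping out; the $d=3$ observation that one sign is carried by a single eigenvector; restricting the sum to that eigenvector to obtain $\min_{\norm{u}=1}\sum_j\|(I-\proj{u})L_j u\|^2$; positivity via the common-eigenvector contradiction; a Grönwall step; and necessity from \cref{prop:necessary alg condition contrac}.

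The differences are minor. You handle non-differentiability through almost-everywhere differentiability of $\norm{X(t)}_1$, using that at such points the derivative equals $\tr(S\,\mathcal{L}_t X)$ for every norming $S$, so the kernel assignment is irrelevant. The paper instead uses the everywhere-defined right derivative with $S_x=\sgn(x+y)$, a case split on whether $f_{x(t)}(1,2)$ or $f_{x(t)}(3,2)$ is at least $1$, and its Rolle-type lemma. Your identity $\abs{\lambda_u}=\norm{X}_1/2$ gives rate $c/2$, twice the paper's $c/4$, which comes from its cruder bound $\abs{\lambda_1},\abs{\lambda_3}\geq\norm{x}_1/4$.
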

\cref{cor:d3} is proven in \cref{subsubsec:proof:cor:d3} (page \pageref{proof:cor:d3}). Note that the condition that $\{ L_j \}_j$ generates the entire matrix algebra entails that any Lindbladian $- [H(t) , \ \cdot \ ] + \sum_j \mathcal{D}_{L_j}$ is at all times $t$ instantaneously an irreducible Lindbladian, independently of $H(t)$.

\subsubsection{Ladder dissipators} \label{subsec:ladder lindblad calculations}

We here use \cref{thm:thm eigen condition} to establish Hamiltonian-independent contractivity for several low-dimensional types of ``ladder" dissipators that arise frequently in quantum optics, for example to model harmonic oscillators~\cite{carmichael1993open}, multi-level atoms~\cite{drake2007springer} or collective spin-models~\cite{andreev1980collective, dicke1954coherence}.
Specifically, in a $d$-dimensional Hilbert space with ortho-normal basis $\{ \ket{j} \}_{j=0}^{d-1}$, a ladder dissipator takes the form
\begin{align} \label{eq:ladder def}
\mathcal{D} = \mathcal{D}_L \quad \text{with} \ L = \sum_{j=1}^{d-1} \alpha_j \ket{j-1} \bra{j} ,
\end{align}
for some set of real coefficients $\alpha_j>0$. Note that if $\{\alpha_j\}_j$ are not real, we can always find a diagonal unitary $U$ such that $U L U^\dagger = \sum_{j = 1}^{d - 1}\abs{\alpha_j} \ket{j - 1}\!\bra{j}$, which when inserted in the Lindblad master equation shows that all following result about Hamiltonian-independent contractivity also hold for ladder dissipators with complex coefficients with $|\alpha_j|$ in place of $\alpha_j$.

\begin{figure}
    \centering
    \includegraphics[width=1.0\linewidth]{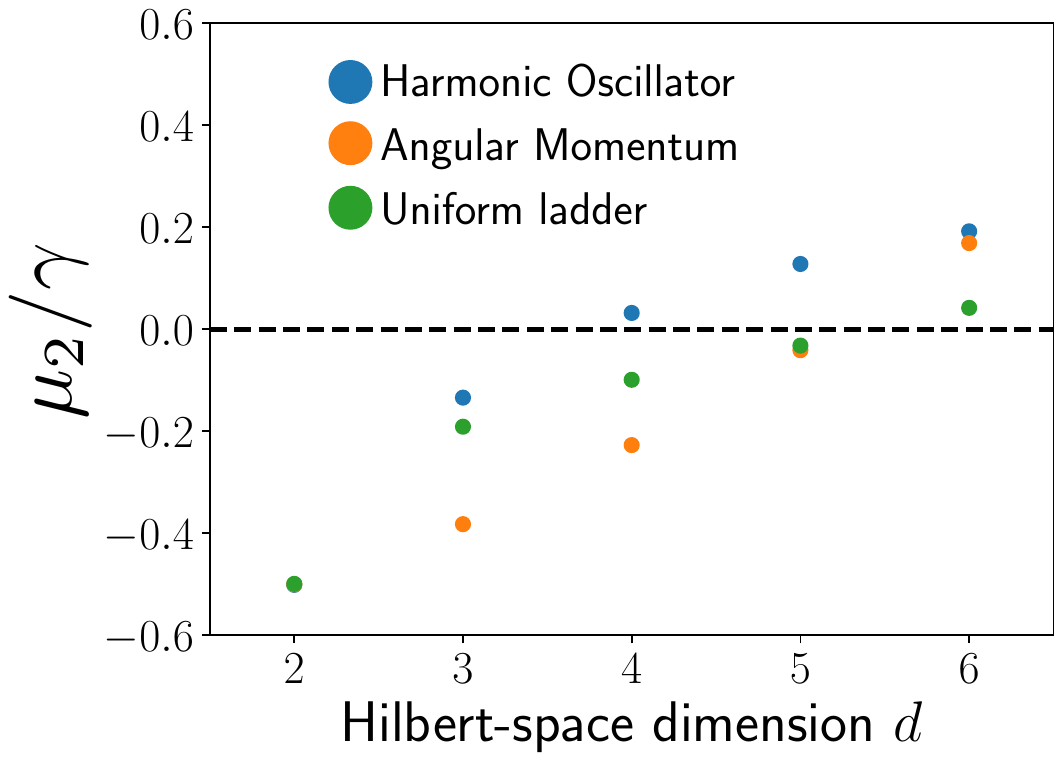}
    \caption{$\mu_2$ as a function of the Hilbert space dimension $d$ for the harmonic oscillator, angular momentum and uniform ladder dissipator. When $\mu_2 < 0$, then the dissipator is Hamiltonian-independently contractive by \cref{thm:thm eigen condition}.}
    \label{fig:mu2}
\end{figure}

To check whether a general ladder dissipator \cref{eq:ladder def} satisfy the condition from \cref{thm:thm eigen condition}, we compute the eigenvalues of $\tilde{\mathcal{D}}$, which is on a relatively simple block-matrix form in the standard Hermitian basis $\{ \ket{j}\bra{k}+\ket{k} \bra{j} , i (\ket{j}\bra{k}-\ket{k} \bra{j}) \}_{j,k}$. To this end, define $B_l$ for any $0 \leq l \leq d-1$ to be the following $(d-l) \times (d-l)$ tri-diagonal matrix
\begin{align} \label{eq:B matrix}
 B_l \coloneqq -\frac{1}{2}
\begin{pmatrix}
\alpha_l^2 & - \alpha_1 \alpha_{l+1}  & 0 & \ddots  \\
- \alpha_1  \alpha_{l+1} &  \alpha_1^2 + \alpha_{l+1}^2 & - \alpha_2 \alpha_{l+2} & \ddots  \\
0 & - \alpha_2 \alpha_{l+2} & \alpha_2^2 + \alpha_{l+2}^2 & \ddots  \\
\ddots & \ddots & \ddots & \ddots \\
\end{pmatrix} , 
\end{align}
where we use the convention $\alpha_0 = 0$. 
The eigenvalues of $\tilde{\mathcal{D}}$ can then be obtained as
\begin{itemize}
   \item $\tilde{\mathcal{D}}$ contains $1$ eigenvalue for every eigenvalue of $P B_0 P$, where $P$ is the $d \times d$ projection matrix
   \begin{align} 
P = \frac{1}{d} 
\begin{pmatrix}
d - 1 & - 1 & -1 & -1 &  \dots  \\
-1 &  d - 1 & - 1 & - 1 & \dots  \\
-1 & -1 & d - 1 & - 1& \dots  \\
-1 & -1 & -1 & d - 1& \dots \\
\vdots & \vdots & \vdots & \vdots & \ddots \\
\end{pmatrix} .  \nonumber
\end{align} 
\item $\tilde{\mathcal{D}}$ further contains $2$ equal eigenvalues for every eigenvalues of $B_l$, for all $ 1 \leq l \leq d-1$. 
\end{itemize}

We specifically analyze the following ladder dissipators:
\begin{itemize}
    \item \emph{Harmonic oscillator} dissipator, with $\alpha_j = \sqrt{ \gamma j}$ for some $\gamma > 0$.
    \item \emph{Angular momentum} dissipator, with $\alpha_j = \sqrt{\gamma j (d-j)}$ for some $\gamma > 0$ (with $d$ being the Hilbert space dimension).
    \item \emph{Uniform} ladder dissipator, with $\alpha_j = \sqrt{\gamma}$ for some $\gamma > 0$. 
\end{itemize}
\Cref{fig:mu2} shows the eigenvalue $\mu_2$ as a function of $d$ for these three different cases: we obtain that angular momentum ladder dissipators and uniform ladder dissipators are provably Hamiltonian-independently contractive when $ d \leq 5$, and the $d$-dimensional harmonic oscillator dissipator is Hamiltonian-independently contractive when $ d \leq 3$.

Let us now focus on dimension $d=3$, and consider here the most general ladder dissipator (with non-zero coefficients) $\mathcal{D}$, which can be parameterized as
\begin{align} \label{def:general d=3 diss}
\mathcal{D}_{\eta , \alpha} \coloneqq \eta \ \mathcal{D}_{L_\alpha} \qquad \text{with} \quad L_\alpha \coloneqq \ket{0} \bra{1} + \alpha \ket{1} \bra{2} ,
\end{align}
with $\eta , \alpha > 0$ positive.
When we apply the condition from \cref{thm:thm eigen condition} to $\mathcal{D}_{\eta , \alpha}$, we obtain the following result by utilizing the reduction of this procedure given in the preceding paragraph.
\begin{prp} \label{sec:continuous family of 3 dim ladders}
If $ \alpha \in ( \sqrt{3} - \sqrt{2}, \sqrt{3}+\sqrt{2} ) = ( 0.318 ,\ 3.146)$, the general $3$-dimensional dissipator $\mathcal{D}_{\eta , \alpha}$ from \cref{def:general d=3 diss} is Hamiltonian-independently contractive. The Hamiltonian-independent contraction rate can further be lower bounded by $\gamma \geq  c_{\alpha} \eta$, with $c_\alpha$ equalling
\begin{equation} \label{eq:d=3 ladder alpha def}
\begin{aligned} 
c_\alpha \coloneqq
    \min& \left\{\frac{2+\alpha^2}{4}-\frac{|\alpha|}{4}\sqrt{4+\alpha^2},\right.  \\
    &\ \  \left.\frac{1+|\alpha|^2}2-\sqrt{\frac{1-\alpha^2+\alpha^4}3} \right\},
\end{aligned}
\end{equation}
and the constant $K$ from \cref{def:exp contraction} can be chosen as $K=\sqrt{d}$. 
\end{prp}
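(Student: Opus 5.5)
The plan is to apply \cref{thm:thm eigen condition} directly. By that theorem it suffices to show that the second largest eigenvalue $\mu_2$ of $\tilde{\mathcal{D}} = \Delta \circ (\mathcal{D}+\mathcal{D}^\dagger)/2 \circ \Delta$ satisfies $\mu_2 = -c_\alpha \eta$ with $c_\alpha > 0$ exactly when $\alpha \in (\sqrt{3}-\sqrt{2}, \sqrt{3}+\sqrt{2})$. The theorem then gives both the rate $\gamma \geq c_\alpha \eta$ and $K = \sqrt{d}$. Since $\tilde{\mathcal{D}}$ is linear in $\mathcal{D}$, the factor $\eta$ simply scales all eigenvalues, so I set $\eta = 1$. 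To compute the spectrum I will use the block reduction stated just before the proposition, with $d = 3$, $\alpha_1 = 1$ and $\alpha_2 = \alpha$ (and $\alpha_0 = 0$). The eigenvalue $0$ of $\tilde{\mathcal{D}}$ coming from the identity direction, which $\Delta$ annihilates, is the top eigenvalue. All other eigenvalues come from the blocks below, and I need the largest of these.

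\textbf{Blocks $B_1$ and $B_2$.} Reading off \cref{eq:B matrix}, $B_2$ is $1\times 1$ and equals $-\alpha^2/2$, which is strictly negative for $\alpha>0$. $B_1$ is the $2\times 2$ matrix $-\tfrac12\begin{pmatrix} 1 & -\alpha \\ -\alpha & 1+\alpha^2\end{pmatrix}$. The inner matrix has trace $2+\alpha^2$ and determinant $1$. Its eigenvalues are therefore $\tfrac12\bigl(2+\alpha^2 \pm \alpha\sqrt{4+\alpha^2}\bigr)$, and both are positive. The largest eigenvalue of $B_1$ is thus $-\bigl(\tfrac{2+\alpha^2}{4} - \tfrac{\alpha}{4}\sqrt{4+\alpha^2}\bigr)$, which is the first entry of \cref{eq:d=3 ladder alpha def} with a minus sign. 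This entry is always strictly positive, because $(2+\alpha^2)^2 - \alpha^2(4+\alpha^2) = 4 > 0$. It also dominates $-\alpha^2/2$, so $B_2$ never determines $\mu_2$.

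\textbf{Block $PB_0P$.} $B_0$ is a $3\times 3$ tridiagonal matrix, the "diagonal" (populations) block, whose rows sum to zero, since trace preservation maps the all-ones vector to $0$. $P$ projects onto the sum-zero subspace. Hence $PB_0P$ has one eigenvalue $0$ on the all-ones vector, corresponding to the direction removed by $\Delta$ and not counted as $\mu_2$. Its other two eigenvalues are those of the $2\times 2$ compression of $B_0$ to an orthonormal basis of $\{x : \sum_i x_i = 0\}$, for example $(1,-1,0)/\sqrt2$ and $(1,1,-2)/\sqrt6$. I will compute the trace and determinant of this compression and expect to obtain eigenvalues $-\tfrac{1+\alpha^2}{2} \pm \sqrt{\tfrac{1-\alpha^2+\alpha^4}{3}}$. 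This explicit computation is the main place where errors can creep in, so I will double-check it, for instance by verifying the case $\alpha = 1$ numerically.

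\textbf{Combining.} The larger eigenvalue of this compression is negative iff $\tfrac{(1+\alpha^2)^2}{4} > \tfrac{1-\alpha^2+\alpha^4}{3}$. This is equivalent to $\alpha^4 - 10\alpha^2 + 1 < 0$, i.e. $\alpha^2 \in (5-2\sqrt6, 5+2\sqrt6)$, i.e. $\alpha \in (\sqrt3-\sqrt2, \sqrt3+\sqrt2)$. Taking the maximum over all blocks gives $\mu_2 = -\eta c_\alpha$ with $c_\alpha$ as in \cref{eq:d=3 ladder alpha def}. On the stated interval $c_\alpha > 0$, and \cref{thm:thm eigen condition} concludes the proof.
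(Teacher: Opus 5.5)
Your strategy is the same as the paper's. Its appendix writes $\tilde{\mathcal{D}}$ as a $9\times 9$ matrix that is block diagonal with exactly the blocks $PB_0P$, $B_1$ (twice) and $B_2$ (twice) that you use. Your formulas for the eigenvalues of $B_1$ and of the compression of $B_0$ to the sum-zero subspace are correct, and so is the reduction to $\alpha^4-10\alpha^2+1<0$.

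The one step that fails is the claim that the top eigenvalue of $B_1$ dominates $-\alpha^2/2$. The inequality $\frac{2+\alpha^2}{4}-\frac{\alpha}{4}\sqrt{4+\alpha^2}\le\frac{\alpha^2}{2}$ is equivalent to $2-\alpha^2\le\alpha\sqrt{4+\alpha^2}$, which holds only for $\alpha^2\ge 1/2$. For instance, at $\alpha=1/2$ (with $\eta=1$), which lies in the stated interval, the top eigenvalue of $B_1$ is about $-0.305$, whereas $-\alpha^2/2=-0.125$. So on $(\sqrt3-\sqrt2,\,1/\sqrt2)$ your argument does not exclude $\mu_2=-\eta\alpha^2/2>-c_\alpha\eta$. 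Hamiltonian-independent contractivity still follows there, since that eigenvalue is negative, but your argument does not establish the rate bound $\gamma\ge c_\alpha\eta$.

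The correct comparison is with the population block. The inequality $-\frac{1+\alpha^2}{2}+\sqrt{\frac{1-\alpha^2+\alpha^4}{3}}\ge-\frac{\alpha^2}{2}$ is equivalent to $\frac{1-\alpha^2+\alpha^4}{3}\ge\frac14$, i.e. $(2\alpha^2-1)^2\ge 0$, which always holds. Hence $B_2$ never determines $\mu_2$, but because of $PB_0P$ rather than $B_1$. This is also what the paper's formula for $c_\alpha$ implicitly uses, since $-\eta\alpha^2/2$ appears in its list of eigenvalues but not in $c_\alpha$. With this replacement your proof is complete.

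A minor side remark: your statement that the rows of $B_0$ sum to zero is false. $B_0=(A+A^T)/2$, where $A$ is the matrix of $\mathcal{D}$ acting on diagonal matrices. Trace preservation kills the column sums of $A$, so $B_0$ maps the all-ones vector to half the diagonal of $\mathcal{D}(I)$, namely $\frac12(1,\alpha^2-1,-\alpha^2)$. This is nonzero because the ladder dissipator is not unital. The error is harmless here: the zero eigenvalue of $PB_0P$ comes from $P$ annihilating the all-ones vector, and you correctly compute the compression rather than the spectrum of $B_0$ itself.
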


The computation of $\mu_2$ for the puposes of applying \cref{thm:thm eigen condition} to the dissipator \cref{def:general d=3 diss} is given in \cref{app:subsec:3 dim ladder dissipator}.

\section{Proofs of the Main results} \label{sec:proofs}
In this section, we provide proofs of the main results stated in \cref{sec:summary of results}. We first briefly review the notation that is used in this section for the benefit of the reader before proceeding to the detailed proofs.
\subsection{Notation} \label{subsec:notation}
For a Hermitian operator $X$, and a function $f:\mathbb{R}\to \mathbb{R}$, we will denote by $f(X)$ the Hermitian operator $f(X) = \sum_i f(\lambda_i) \ket{e_i}\!\bra{e_i}$, where $\lambda_i, \ket{e_i}$ is the $i^\text{th}$ eigenvalue-eigenvector pair of $X$. Of particular interest in \cref{subsec:proof orthogonal vectors} is the \emph{sign function} $\sgn(X)$ of a Hermitian operator $X$, where the ordinary $\mathbb{R} \rightarrow \mathbb{R}$ sign function is defined as $\sgn(x) \coloneqq x / |x|$ if $x \neq 0$ and $\sgn(x)=0$ if $x=0$.
Furthermore, for an operator $A \in \mathcal{B}(\mathcal{H})$, $\norm{A}_p$ will denote the usual Schatten norms of $A$ i.e., $\norm{A}_p = (\sum_i \abs{s_i}^p)^{1/p}$ where $s_i$ is the $i^\text{th}$ singular value of $A$. For a general linear operator $A \in \mathcal{B}(\mathcal{H})$ and subspace $\mathcal{S} \subseteq \mathcal{H}$ let $A |_{\mathcal{S}}$ denote the projection of $A$ onto $\mathcal{S}$, i.e. $A |_{\mathcal{S}} \coloneqq P_{\mathcal{S}} A P_{\mathcal{S}}$ with $P_{\mathcal{S}}$ being a projector onto $\mathcal S$.
Furthermore, we will occasionally use the ($ 1 \rightarrow 1$) \emph{super-operator norm} $\superopnorm{ \ \cdot \ }$, which for a linear superoperator $T : \mathcal{B}(\mathcal{H}) \rightarrow \mathcal{B}(\mathcal{H})$ is defined as
\begin{align} \label{def:super 1 1 norm}
\superopnorm{T} \ \coloneqq \ \sup_{x \in \mathcal{B}(\mathcal{H}) , x \neq 0} \frac{\norm{T(x)}_1}{\norm{x}_1}.
\end{align}

\subsection{Proof of \texorpdfstring{\cref{thm:perturbation:small H}}{thm small H} and \texorpdfstring{\cref{thm:perturbation:slow H}}{thm slow H}} \label{subsec:proofs of perturbation sensitivity}

Before proving \cref{thm:perturbation:small H} and \cref{thm:perturbation:slow H}, we shall briefly make some general observations and remarks. The theorems follow from some more general results about the stability of contractive behavior in Lindbladian dynamics under small Lindbladian perturbations, a topic which is discussed in detail in \cref{app:sec:sensitivity results for lindbladians}, and in which the following essential results are proven.

\begin{lem} \label{thm:gen perturbation thm}

Let $\mathcal{L}_t$ be any (time-dependent) exponentially contractive Lindbladian with universal constants $K , \gamma > 0$ from \cref{def:exp contraction}, i.e. $\norm{\mathcal{E}_{t, s} (\rho - \sigma)}_1\leq K e^{-\gamma \abs{t - s}} \norm{\rho - \sigma}_1$ holds for all $t \geq s \geq 0$, where $\mathcal{E}_{t,s} = \mathcal{T}\exp ( \int_s^t \mathcal{L}_\tau d\tau )$ is the time-evolution channel generated by $\mathcal{L}_t$. Let now $\tilde{\mathcal{L}}_t$ be any other time-dependent Lindbladian satisfying 
\begin{align} \label{eq:proof thm small H:cond main sec}
 \Delta \mathcal{L} \coloneqq\sup_{t \geq 0} \superopnorm{\tilde{\mathcal{L}}_t - \mathcal{L}_t} < \frac{\gamma}{1 + \ln(K)}.
\end{align}
Then $\tilde{\mathcal{L}}_t$ will also generate exponentially contractive dynamics with universal constants $\tilde{K} , \tilde{\gamma}$ that can be chosen as 
\begin{align*} 
     \tilde{\gamma} &= -\min_{x \geq 0} \frac{\gamma}{x} \ln \left( \frac{  1+\ln(K) }{\gamma} \Delta \mathcal{L} + \left( 1 - \frac{\Delta \mathcal{L}}{\gamma} \right) e^{-x}  \right), \\
     \tilde{K} &= e^{  x \tilde{\gamma} / \gamma},
\end{align*}
where $x$ in the expression for $\tilde{K} = e^{x \tilde{\gamma} / \gamma }$ is the value of $x \geq 0$ maximizing the expression for $\tilde{\gamma}$ above. In particular, if $K=1$, we get $\tilde{\gamma} = \gamma - \Delta \mathcal{L}$ and $\tilde{K} = 1$. 
\end{lem}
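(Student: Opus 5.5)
The plan is to compare the two evolutions through a Duhamel (variation-of-constants) formula, turn this into a one-step contraction bound over a time window of fixed length, and then iterate that bound using the two-parameter semigroup property. Write $\mathcal{E}_{t,s}$ and $\tilde{\mathcal{E}}_{t,s}$ for the evolutions generated by $\mathcal{L}_t$ and $\tilde{\mathcal{L}}_t$. Differentiating $u\mapsto \mathcal{E}_{t,u}\tilde{\mathcal{E}}_{u,s}$ and integrating gives
\begin{align*}
\tilde{\mathcal{E}}_{t,s} = \mathcal{E}_{t,s} + \int_s^t \mathcal{E}_{t,u}\,(\tilde{\mathcal{L}}_u-\mathcal{L}_u)\,\tilde{\mathcal{E}}_{u,s}\,du .
\end{align*}
We apply this to a traceless $X=\rho-\sigma$. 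Every Lindbladian annihilates the trace, so $\tilde{\mathcal{E}}_{u,s}X$ and $(\tilde{\mathcal{L}}_u-\mathcal{L}_u)\tilde{\mathcal{E}}_{u,s}X$ are again traceless and Hermitian. A traceless Hermitian $Y$ is a multiple of a difference of two states, $Y=\tfrac{\|Y\|_1}{2}(\rho'-\sigma')$. Hence on such operators
\begin{align*}
\|\mathcal{E}_{t,u}Y\|_1\le f(t-u)\|Y\|_1,\qquad f(\tau)\coloneqq\min\{1,Ke^{-\gamma\tau}\},
\end{align*}
where the bound by $1$ is \cref{eq:contractivity of gen L}. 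The same argument gives $\|\tilde{\mathcal{E}}_{u,s}X\|_1\le\|X\|_1$.

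Setting $g(t)=\|\tilde{\mathcal{E}}_{t,s}X\|_1/\|X\|_1$, we obtain the renewal-type inequality
\begin{align*}
g(t)\le f(t-s)+\Delta\mathcal{L}\int_s^t f(t-u)\,g(u)\,du .
\end{align*}
Using $g\le 1$ inside the integral together with $\int_0^\infty f=(1+\ln K)/\gamma$ already yields a bound of the form $g(t)\le f(t-s)+\Delta\mathcal{L}(1+\ln K)/\gamma$. Here the threshold $\gamma/(1+\ln K)$ in \cref{eq:proof thm small H:cond main sec} appears naturally.

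To obtain the precise constant in the statement, we fix a window length $T=x/\gamma$ and bound $g(s+T)$ more carefully. We split the integral at the point where $f$ switches from $1$ to $Ke^{-\gamma\tau}$, namely $\tau_0=\ln K/\gamma$, and feed the crude bound back once, Grönwall-style, on the region where $f=1$. The target is a one-step estimate
\begin{align*}
g(s+T)\le q(x)\coloneqq \frac{1+\ln K}{\gamma}\Delta\mathcal{L}+\Bigl(1-\frac{\Delta\mathcal{L}}{\gamma}\Bigr)e^{-x}.
\end{align*}
Heuristically, the weight $(1-\Delta\mathcal{L}/\gamma)e^{-x}$ is what survives of the decaying part of $f$ after the perturbation is accounted for. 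This matching of constants is the step I expect to be the main obstacle, and I would first check it in the case $K=1$. There $f=e^{-\gamma\tau}$, and differentiating the integral inequality gives $g'\le-(\gamma-\Delta\mathcal{L})g$. This should reproduce the claimed $\tilde\gamma=\gamma-\Delta\mathcal{L}$, $\tilde K=1$ directly, and the general-$K$ computation should reduce to it.

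Once the one-step estimate holds, the condition on $\Delta\mathcal{L}$ makes the constant term of $q(x)$ strictly less than $1$, so $q(x)<1$ for $x$ large. We then write $t-s=nT+r$ with $0\le r<T$ and factor $\tilde{\mathcal{E}}_{t,s}$ into $n$ windows of length $T$ plus one remainder window. Since each window contracts traceless operators by at least $q(x)$ (the estimate holds for every starting time $s$) and the remainder by $1$, we get $g(t)\le q(x)^n\le q(x)^{-1}e^{(\ln q(x)/T)(t-s)}$. Thus $\tilde\gamma=-\gamma\ln q(x)/x$ and $\tilde K=1/q(x)=e^{x\tilde\gamma/\gamma}$. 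Finally, optimizing over $x\ge0$ gives the stated formulas.
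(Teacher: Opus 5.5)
The step that fails is the refinement you flag as the main obstacle. The one-window estimate you target, with $(1-\Delta\mathcal{L}/\gamma)e^{-x}$ and no factor $K$, is false whenever $K>1$. No Gr\"onwall-type feedback can produce it. For $\Delta\mathcal{L}=0$ it would say $\norm{\mathcal{E}_{s+T,s}X}_1\le e^{-\gamma T}\norm{X}_1$ for all $T$, which is the hypothesis with $K=1$.

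For a concrete counterexample, let $\mathcal{L}_t$ alternate between $0$ on intervals of length $L$ and full depolarization $\rho\mapsto\lambda(\tr(\rho)\tau-\rho)$. When $s$ is the start of an idle interval, $g(s+L)=1$. Hence every admissible pair $(K,\gamma)$ satisfies $K\ge e^{\gamma L}$, while your estimate gives $g(s+L)\le e^{-\gamma L}<1$. Adding a tiny Hamiltonian gives $\Delta\mathcal{L}>0$ while keeping the flow unitary on idle intervals. This shows that, for large $K$ and small $\Delta\mathcal{L}$, even the lemma's displayed formula fails. The paper's proof actually establishes the bound with constant $B=K(1-\Delta\mathcal{L}/\gamma)$, so the factor $K$ has been dropped from the display.

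The correct bound needs no refinement. In your crude estimate, evaluate $\int_0^T f$ exactly instead of bounding it by $\int_0^\infty f$. For $x=\gamma T\ge\ln K$ one has $\int_0^T f=(1+\ln K-Ke^{-x})/\gamma$, hence
\begin{align*}
g(s+T)\le Ke^{-x}+\frac{\Delta\mathcal{L}}{\gamma}\bigl(1+\ln K-Ke^{-x}\bigr)=\frac{1+\ln K}{\gamma}\Delta\mathcal{L}+K\Bigl(1-\frac{\Delta\mathcal{L}}{\gamma}\Bigr)e^{-x},
\end{align*}
which is exactly the paper's estimate. For $x<\ln K$ the right-hand side exceeds $1$, so the maximizing $x$ automatically lies in the valid range. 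Your iteration then finishes the proof just as the paper does, with $\tilde\gamma=-\gamma\ln q/x$ and $\tilde K=1/q$.

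The paper uses the reverse Duhamel ordering: it bounds $\partial_{t,+}\norm{x(t)-\tilde x(t)}_1\le\Delta\mathcal{L}\,\norm{x(t)}_1$, so the perturbed flow enters only as a trace-norm contraction. Once you use $g\le 1$, the two orderings give the same estimate.

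For $K=1$ the missing factor is invisible, which is why your sanity check came out right. However, you cannot differentiate an integral inequality. Instead apply Gr\"onwall to $h(t)=e^{\gamma(t-s)}g(t)\le 1+\Delta\mathcal{L}\int_s^t h$, which gives $g(t)\le e^{-(\gamma-\Delta\mathcal{L})(t-s)}$ directly.
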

\cref{thm:gen perturbation thm} immediately implies the following corollary.
\begin{cor} \label{cor:small time derivative}
Let $\mathcal{L}_t$ be a time-dependent Lindbladian that is always instantaneously contractive in the sense of \cref{def:exp contraction} with constants $K_0 , \gamma_0$.
Suppose now that the time-derivative $\frac{d}{dt} \mathcal{L}_t$ is upper bounded as follows
\begin{align} \label{eq:proof thm slow H:cond main sec}
\sup_{t \geq 0} \superopnorm{\frac{d}{dt} \mathcal{L}_t } < \frac{4}{3} \gamma_0^2 \frac{1}{1 + \frac{2}{3} \ln(K_0) + \frac{1}{3} \ln(K_0)^2}.
\end{align}
Then, the time dependent Lindbladian $\mathcal{L}_t$ will generate exponentially contractive dynamics with constants
\begin{align*} 
\gamma = -\gamma_0 \ \min_{x \geq 0} \frac{1}{x} \ln \left( A + B e^{-x} - C x e^{-x} \right), \ K = e^{ x \gamma / \gamma_0 },
\end{align*}
where $x$ in the expression for $K$ is the value maximizing the expression for $\gamma$, and the constants $A, B, C$ are given by
\begin{align*} 
A &\coloneqq \frac{\delta}{\gamma_0^2} \left( \frac{3}{4}+\frac{1}{2}\ln(K_0)+\frac{1}{4}\ln(K_0)^2 \right), \\
B &\coloneqq K_0 \left( 1 - \delta \frac{1- \ln(K_0)}{2 \gamma_0^2} \right),
\quad C\coloneqq \frac{\delta K_0}{\gamma_0^2},
\end{align*}
with $\delta\coloneqq \sup_{t \geq 0} \superopnorm{ d \mathcal{L}_t / dt }$.
\end{cor}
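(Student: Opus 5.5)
The idea is to compare $\mathcal{L}_t$ on a time window $[s, s+T]$ with the frozen Lindbladian $\mathcal{L}_s$, and then feed this comparison into \cref{thm:gen perturbation thm} piecewise. A single global application of the lemma is impossible, because $\mathcal{L}_t$ need not stay uniformly close to any one fixed generator.

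\textbf{Step 1: one window.} Fix $s$ and write $\mathcal{E}^{(s)}_{t,u} = e^{(t-u)\mathcal{L}_s}$. By assumption,
\[
\norm{\mathcal{E}^{(s)}_{t,u}(\rho-\sigma)}_1 \le K_0 e^{-\gamma_0 (t-u)} \norm{\rho-\sigma}_1 .
\]
On $[s,s+T]$ the true generator satisfies
\[
\superopnorm{\mathcal{L}_\tau-\mathcal{L}_s} \le \delta(\tau-s).
\]
The Duhamel formula gives
\[
\mathcal{E}_{t,s}-\mathcal{E}^{(s)}_{t,s} = \int_s^t \mathcal{E}_{t,\tau}\,(\mathcal{L}_\tau-\mathcal{L}_s)\,\mathcal{E}^{(s)}_{\tau,s}\,d\tau .
\]
Two observations are needed to apply it to $\rho-\sigma$. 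First, $\mathcal{L}_\tau-\mathcal{L}_s$ maps traceless operators to traceless operators. Second, $\mathcal{E}_{t,\tau}$ is a contraction on traceless Hermitian operators; this follows from \cref{eq:contractivity of gen L} applied to the positive and negative parts. Hence
\[
\norm{\mathcal{E}_{t,s}(\rho-\sigma)}_1 \le \Big(K_0 e^{-\gamma_0 (t-s)} + K_0\,\delta\!\int_s^t (\tau-s)e^{-\gamma_0(\tau-s)}d\tau\Big)\norm{\rho-\sigma}_1 .
\]
In rescaled time $x=\gamma_0 T$ this produces terms of the shape $Be^{-x}$, $-Cxe^{-x}$, and a constant, which is precisely the structure $A+Be^{-x}-Cxe^{-x}$ in the statement. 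To get the stated constants, I would first symmetrize: compare with the frozen generator at the midpoint $s+T/2$ rather than the endpoint. This halves the effective deviation and explains the factors $\tfrac34,\tfrac12,\tfrac14$ and the $(1-\ln K_0)/2$ correction.

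\textbf{Step 2: sharpen the Duhamel term.} The crude bound above does not see the $\ln K_0$ structure. I would instead invoke \cref{thm:gen perturbation thm} (or its proof mechanism) on each window with
\[
\Delta\mathcal{L}=\sup_{\tau\in\text{window}}\superopnorm{\mathcal{L}_\tau-\mathcal{L}_{\text{mid}}}\le \delta T/2 .
\]
Its proof splits time into blocks of length $\ln(K)/\gamma$-type, on which $K e^{-\gamma t}$ is below $1$. This gives a one-window contraction factor
\[
q(x)=A+Be^{-x}-Cxe^{-x}.
\]
Optimizing the block length against $\delta$ is what generates the quadratic $1+\tfrac23\ln K_0+\tfrac13\ln(K_0)^2$.

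\textbf{Step 3: concatenate.} Using the semigroup property $\mathcal{E}_{t,s}=\mathcal{E}_{t,t_n}\cdots\mathcal{E}_{t_1,s}$ with windows of length $T=x/\gamma_0$, we obtain a decay of $q(x)^{n}$. Absorbing the final partial window gives
\[
\gamma=-\frac{\gamma_0}{x}\ln q(x), \qquad K=e^{x\gamma/\gamma_0}.
\]
We then minimize over $x$. Contractivity holds exactly when $\inf_x q(x)<1$. As $x\to\infty$ we have $q\to A$, and an expansion for optimized $x$ shows this is equivalent to the stated bound \cref{eq:proof thm slow H:cond main sec} on $\delta$.

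The main obstacle is Step 2. It requires getting the exact constants $A,B,C$, in particular making the midpoint comparison and the block optimization of \cref{thm:gen perturbation thm} interact correctly. The qualitative statement follows easily already from Steps 1 and 3; matching the explicit threshold is the delicate bookkeeping part.
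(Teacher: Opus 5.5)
Your Step~1 is the right mechanism, and it is what the paper uses: the integral inequality from the proof of \cref{thm:gen perturbation thm}, applied with a frozen generator. As written, however, it does not prove the statement, and the repairs you propose in Steps~1--2 go the wrong way.

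With the generator frozen at $t_0=s$ and weight $K_0e^{-\gamma_0(\tau-s)}$, the non-decaying term is $K_0\delta/\gamma_0^2$. That gives contractivity only for $\delta<\gamma_0^2/K_0$, and only $\delta<\gamma_0^2$ even when $K_0=1$, not the stated threshold. Since the corollary is a quantitative claim, this is a genuine gap.

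Freezing at the window midpoint $s+T/2$ makes it worse. The frozen evolution $\norm{e^{(\tau-s)\mathcal{L}_{t_0}}(\rho-\sigma)}_1$ has almost all its weight at $\tau-s\lesssim(1+\ln K_0)/\gamma_0$. There $|\tau-(s+T/2)|\approx T/2$, so the non-decaying term grows linearly in $x=\gamma_0T$ instead of tending to a constant $A$.

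Step~2 has the same problem. \cref{thm:gen perturbation thm} with the uniform bound $\Delta\mathcal{L}\le\delta T/2$ gives the factor $\frac{(1+\ln K_0)\delta}{2\gamma_0^2}x+K_0e^{-x}-\frac{K_0\delta}{2\gamma_0^2}xe^{-x}$. This is not of the form $A+Be^{-x}-Cxe^{-x}$. Separately, $A<1$ is only sufficient for $\inf_x q(x)<1$, not equivalent to it.

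Two ingredients are missing.
\begin{itemize}
\item Inside your Duhamel integral, use the trivial improvement $\norm{e^{(\tau-s)\mathcal{L}_{t_0}}(\rho-\sigma)}_1\le\min\{1,K_0e^{-\gamma_0(\tau-s)}\}\norm{\rho-\sigma}_1$. This is where $\ln K_0$ replaces $K_0$, because this weight has total mass $(1+\ln K_0)/\gamma_0$. The $\ln K$ in \cref{thm:gen perturbation thm} enters the same way, not through block-splitting.
\item Freeze at a fixed offset $t_0=s+z$ that does not depend on $t$, namely $z=(1+\ln K_0)/(2\gamma_0)$, the weighted median of that weight. Then use $\superopnorm{\mathcal{L}_\tau-\mathcal{L}_{t_0}}\le\delta|\tau-t_0|$.
\end{itemize}
For $t-s\ge\ln(K_0)/\gamma_0$ and $x=\gamma_0(t-s)$ the integral can be evaluated in closed form:
\[
\int_0^{t-s}\min\{1,K_0e^{-\gamma_0u}\}\,|u-z|\,du\le\frac{1}{\gamma_0^2}\Big(\frac34+\frac{\ln K_0}{2}+\frac{\ln(K_0)^2}{4}\Big)-\frac{K_0}{\gamma_0^2}\Big(\frac{1-\ln K_0}{2}+x\Big)e^{-x}.
\]
This holds with equality when $\ln K_0\ge1$. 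When $\ln K_0<1$ the right-hand side overestimates the integral, which is harmless.

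Multiplying by $\delta$ and adding $K_0e^{-x}$ gives exactly $A+Be^{-x}-Cxe^{-x}$. This expression tends to $A$ as $x\to\infty$, so the stated threshold is literally the condition $A<1$. Your Step~3 concatenation then finishes the argument as in the paper.
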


\cref{thm:gen perturbation thm} and \cref{cor:small time derivative} are proven in \cref{app:sec:sensitivity results for lindbladians} (page \pageref{proof:thm:gen perturbation thm} and \pageref{proof:cor:small time derivative} respectively). The proofs of \cref{thm:perturbation:small H} and \cref{thm:perturbation:slow H} follow immediately from \cref{thm:gen perturbation thm} and \cref{cor:small time derivative}. 
\begin{proof}[Proof of \cref{thm:perturbation:small H}] \label{proof:thm:perturbation:small H}

Consider the time-dependent Lindbladian $\mathcal{L}_t = - i[H(t), \ \cdot \ ] + \mathcal{D} = - i[V(t) +H_0, \ \cdot \ ] + \mathcal{D}$. Suppose that the constant Lindbladian $ - i[H_0, \ \cdot \ ] + \mathcal{D}$ is exponentially contractive with constant $K$ and contraction rate $\gamma$ as defined in \cref{def:exp contraction}, and suppose also that $\sup_{t \geq 0} \norm{V (t)}_\infty < {\gamma}/({2 + 2\ln K})$. Then, using that for any matrix $M$, $\superopnorm{[M , \ \cdot \ ]} \leq 2 \norm{M}_\infty$ (as e.g. follows from \cite[eq. 1.175]{Watrous2018}), we get
\begin{align*}
&\sup_{t \geq 0} \superopnorm{\mathcal{L}_t - \left( - i[H_0, \ \cdot \ ] + \mathcal{D} \vphantom{2^2} \right)} 
= \sup_{t \geq 0} \superopnorm{  - i[V(t), \ \cdot \ ] } \nonumber\\
&\qquad \leq \sup_{t \geq 0} 2 \norm{V(t)}_\infty  < \frac{\gamma}{1 + \ln K},
\end{align*}
which proves that $\mathcal{L}_t$ is in this case exponentially contractive by use of \cref{thm:gen perturbation thm}, since the condition in \cref{eq:proof thm small H:cond main sec} is satisfied when we identify $\tilde{\mathcal{L}}_t$ with $- i[V(t) +H_0, \ \cdot \ ] + \mathcal{D}$ and $\mathcal{L}_t$ with the exponentially contractive Lindbladian $- i[H_0, \ \cdot \ ] + \mathcal{D}$. 

\end{proof}

\begin{proof}[Proof of \cref{thm:perturbation:slow H}] \label{proof:thm:perturbation:slow H}

Consider the time-dependent Lindbladian $\mathcal{L}_t = - i[H(t), \ \cdot \ ] + \mathcal{D}$. Suppose that the constant instantaneous Lindbladian $\mathcal{L}_{t_0}  = - i[H(t_0), \ \cdot \ ] + \mathcal{D}$ obtained from fixing $t=t_0$ in $\mathcal{L}_t$ is exponentially contractive for all choices of $t_0$ with constant $K_0$ and contraction rate $\gamma_0$ as defined in \cref{def:exp contraction}. Suppose also that the operator $dH(t)/dt$ is bounded by $\sup_{t \geq 0} \left\| d H(t)/dt \right\|_\infty < {2\gamma_0^2 } / ({3 + 2 \ln K_0 +  \ln^2 K_0})$. Then, by again using $\superopnorm{[M , \ \cdot \ ]} \leq 2 \norm{M}_\infty$, we get
\begin{align*}
\sup_{t \geq 0} \superopnorm{ \frac{d}{dt} \mathcal{L}_t} &= \sup_{t \geq 0} \superopnorm{ - i \left[ \frac{d}{dt} H(t), \ \cdot \ \right]} \nonumber\\
&\leq \sup_{t \geq 0} 2 \left\| \frac{d}{dt} H(t) \right\|_\infty \nonumber\\
&< \frac{4\gamma_0^2 }{3} \frac{1}{1 + \frac{2}{3} \ln(K_0) + \frac{1}{3} \ln(K_0)^2},
\end{align*}
which by using \cref{cor:small time derivative} proves that the time-dependent Lindbladian $\mathcal{L}_t$ is in this case exponentially contractive, since $\mathcal{L}_t$ satisfies the condition \cref{eq:proof thm slow H:cond main sec}.

\end{proof}

\subsection{Proof of \texorpdfstring{\cref{thm:thm contr frm orthog vects}}{thm orthog vects} and related results} \label{subsec:proof orthogonal vectors}

To prove \cref{thm:thm contr frm orthog vects} and related results, we want to show that $\norm{\mathcal{E}_{t,s}(\rho - \sigma)}_1$ is an exponentially decreasing function given some assumptions on the Lindbladian $\mathcal{L}_t$ generating $\mathcal{E}_{t,s}$.
Unfortunately, we cannot simply upper bound the time-derivative of $\norm{\mathcal{E}_{t,s}(\rho - \sigma)}_1$, since the $1$-norm is not an everywhere differentiable function on the space of Hermitian matrices, which means that $\norm{\mathcal{E}_{t,s}(\rho - \sigma)}_1$ will not be differentiable at all times in general.
This is why we work with the \emph{right derivative} $\partial_+$, which for any function $f:[a,b] \rightarrow \mathbb{R}$ is defined by
\begin{align} \label{eq:def:right derivative}
\partial_+ f(x) \coloneqq \lim_{y \rightarrow x : y \in(x,b] } \frac{f(y)-f(x)}{y-x} \ \ \ \text{for $x \in [a,b)$} ,
\end{align}
where $y \rightarrow x : y \in(x,b] $ denotes a sequence confined to the interval $(x,b]$ which converges to $x$. If $f$ is a function of multiple variables, we let $\partial_{x,+} f$ denote the right derivative of $f$ w.r.t. to the variable $x$. When the right derivative $\partial_+ f(x)$ given in \cref{eq:def:right derivative} is well-defined at all points $x \in [a,b)$, we call $f$ \emph{right-differentiable}.

Luckily, $\norm{\mathcal{E}_{t,s}(\rho - \sigma)}_1$ is a right-differentiable function of time as follows from \cref{prop:right time derivative form} below, and it is possible to upper bound this right-derivative. In order to rigorously use this result, we need to establish some slightly complicated but arguably very intuitive analysis results, as is done in \cref{app:subsec:analysis result long}, which basically tell us that since $\norm{\mathcal{E}_{t,s}(\rho - \sigma)}_1$ is also time-continuous, we are allowed to use the fundamental theorem of calculus or something similar. 

Before proving \cref{thm:thm contr frm orthog vects} and the related results, it will be useful to once and for all compute the right time-derivative $\partial_{t,+} \norm{\mathcal{E}_{t,s}(\rho - \sigma)}_1$ of $\norm{\mathcal{E}_{t,s}(\rho - \sigma)}_1$, since this calculation will be used repeatedly in the following proofs.
\begin{prp}[Right time-derivative of the trace norm under Lindblad evolution]
\label{prop:right time derivative form}
	Let $\L_t = - i [H(t),\,\cdot\,] + \sum_j \D_{L_j}$ be a driven Lindbladian with corresponding time-evolution channel $\mathcal{E}_{t,s}$.
	For any two states $\rho,\sigma$, let $x(t)\coloneqq\mathcal{E}_{t,s}(\rho - \sigma)$.
	Then $\norm{x(t)}_1$ is right–differentiable for all $t\ge s$.
	To compute its right derivative at time $t$, let $y(t)\coloneqq P_{\ker(x(t))}\D x(t)P_{\ker(x(t))}$ be the projection of $\D x(t)$ onto the kernel of $x(t)$ and choose a basis $\{\ket{e_k}\}$ that diagonalizes $x(t)$ and $y(t)$ simultaneously, such that $x(t)=\sum_k \lambda_k\proj{e_k}$.
	Finally, let $S_{x(t)}=\sgn(x(t)+y(t))=\sum_ks_k\proj{e_k}$ and $f_{x(t)}(k,l)=1-s_ks_l$.
	Then,
	\begin{equation}\label{eq:form of right time derivative}
		\partial_{t,+}\norm{x(t)}_1=-\sum_{k,l}|\lambda_k|f_{x(t)}(k,l)\sum_j|\langle e_l|L_j|e_k\rangle|^2.
	\end{equation}
\end{prp}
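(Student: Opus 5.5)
We will prove \cref{prop:right time derivative form} by expanding the trace norm to first order in a small time step $h>0$. By the evolution equation, $x(t+h) = x(t) + h\,\L_t x(t) + o(h)$. At least when $\L_\tau$ is continuous at $t$, this holds because $\mathcal{E}_{t+h,t} = \mathrm{id} + \int_t^{t+h}\L_\tau d\tau + O(h^2)$. For merely locally integrable $H$, we would restrict to Lebesgue points, or note that the Hamiltonian part contributes nothing regardless and handle it via the interaction picture. The task is therefore to compute the one-sided directional derivative
\[
\lim_{h\downarrow 0}\frac{\norm{x+hA}_1-\norm{x}_1}{h},\qquad A=\L_t x,
\]
for Hermitian $x$ and $A$.

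\textbf{Step 1: directional derivative of the trace norm.} This is a standard convex-analysis fact. Write $P_0=P_{\ker x}$. By first-order perturbation theory of eigenvalues, the nonzero eigenvalues $\lambda_k$ move by $h\bra{e_k}A\ket{e_k}+o(h)$. The zero eigenvalues split into $h\cdot\mathrm{eig}(P_0AP_0)+o(h)$. Hence
\[
\partial_+\norm{x}_1=\tr(\sgn(x)A)+\norm{P_0AP_0}_1 = \tr\big(\sgn(x+P_0AP_0)\,A\big).
\]
Since $P_0AP_0=y$, this gives $\partial_+ = \tr(S_x A)$ with $S_x=\sgn(x+y)$, and $S_x$ is diagonal in the common eigenbasis $\{e_k\}$. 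We can justify this either by the variational formula $\norm{z}_1=\max_{-I\le S\le I}\tr(Sz)$ together with Danskin's theorem (the maximizing set at $x$ is $\{\sgn(x)+P_0 S' P_0\}$, and we maximize $\tr(SA)$ over it), or by eigenvalue perturbation (Weyl/Kato). I expect the main technical point to be this step, in particular choosing the right tool for the degenerate kernel block. Danskin's formula handles it cleanly, so I would use that route.

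\textbf{Step 2: the Hamiltonian drops out.} The commutator term satisfies $\tr(S_x[H,x])=\tr(x[S_x,H])$. Because $S_x$ and $x$ are simultaneously diagonal, they commute, and so this term vanishes. Thus $\partial_+=\tr(S_x\D x)$.

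\textbf{Step 3: evaluate the dissipator term in the basis.} Let $s_k$ be the eigenvalue of $S_x$ on $e_k$ and $x=\sum_k\lambda_k\proj{e_k}$. Then
\[
\tr(S_x\D_L x)=\sum_k\lambda_k\Big(\sum_l s_l|\bra{e_l}L\ket{e_k}|^2-\tfrac{s_k}{2}\bra{e_k}L^\dagger L\ket{e_k}\cdot 2\Big).
\]
Here the anticommutator gives $\tfrac12\tr(S\{L^\dagger L,x\})=\sum_k s_k\lambda_k\bra{e_k}L^\dagger L\ket{e_k}$. Next use $\bra{e_k}L^\dagger L\ket{e_k}=\sum_l|\bra{e_l}L\ket{e_k}|^2$ to obtain
\[
\sum_{k,l}\lambda_k(s_l-s_k)|\bra{e_l}L\ket{e_k}|^2 .
\]
Terms with $\lambda_k=0$ vanish. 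For $\lambda_k\ne0$ we have $s_k=\sgn\lambda_k$, so $\lambda_k=|\lambda_k|s_k$. Since $s_k^2=1$, it follows that $\lambda_k(s_l-s_k)=|\lambda_k|(s_ks_l-1)=-|\lambda_k|f_x(k,l)$. Summing over $j$ yields \cref{eq:form of right time derivative}. Right-differentiability for all $t\ge s$ follows from Step 1, since the one-sided limit exists.
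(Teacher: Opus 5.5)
Your proof is correct and shares the paper's skeleton: a first-order expansion in the time step, the formula $\partial_{t,+}\norm{x}_1=\tr(S_x A)$ with $S_x=\sgn(x)+\sgn(y)$, cancellation of the Hamiltonian term via $[S_x,x]=0$, and the same eigenbasis bookkeeping, which you carry out correctly. You differ in the two analytic steps.

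\textbf{Trace-norm directional derivative.} The paper uses Kato's first-order eigenvalue perturbation theory. That requires an eigenbasis adapted to the perturbation on degenerate eigenspaces. It also only yields a remainder that is $o(\eps)$ for a fixed direction, whereas the paper's $\Delta(t)$ contains $\frac1\eps\int_t^{t+\eps}H(\tau)\,d\tau$ and so depends on $\eps$. Your Danskin argument on $\norm{z}_1=\max_{-I\le S\le I}\tr(Sz)$ handles the degenerate kernel block at once, since the maximizers at $x$ are exactly $\sgn(x)+P_0S'P_0$ with $-I\le S'\le I$. Moreover, $\norm{\cdot}_1$ is convex and $1$-Lipschitz, so the difference quotients converge uniformly over compact sets of directions. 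This also absorbs the $o(h)$ error along the trajectory.

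\textbf{Time dependence.} The paper's averaged-Hamiltonian device is fine when $H$ is bounded near $t$. If $H$ is only locally integrable and unbounded near $t$, however, $\int_t^{t+\eps}H$ need not be $O(\eps)$, and $x(t+\eps)=x(t)+\eps\Delta+o(\eps)$ can fail. Your interaction-picture remark is the right repair and should be the actual argument. With $\dot U=-iH(t)U$, the operator $\tilde x(t)=U(t)^\dagger x(t)U(t)$ satisfies $\dot{\tilde x}=U^\dagger(\D x)U$. The right-hand side is continuous, so $\tilde x$ is $C^1$. Since $\norm{\tilde x(t)}_1=\norm{x(t)}_1$, unitary covariance of $\sgn$ and of the kernel projection returns exactly $\tr(S_x\D x)$ at every $t\ge s$, and your Step 2 becomes unnecessary. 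The Lebesgue-point alternative should be dropped: it only gives right-differentiability almost everywhere. That is weaker than what is claimed and not enough for \cref{lem:full analysis result}.

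\textbf{Small omission.} The equality $P_0AP_0=y$, which you use silently, requires $P_0[H,x]P_0=0$. This holds because $xP_0=P_0x=0$; the paper states it explicitly.
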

\cref{prop:right time derivative form} is proven in \cref{app:1 norm right derivative calc} (page \pageref{proof:of right derivative form}).

\begin{proof}[Proof of \cref{thm:thm contr frm orthog vects}] \label{proof:thm:thm contr frm orthog vects} 

We here only prove that the more general condition $R(\mathcal{D}) > 0$ leads to Hamiltonian-independent contractivity with a rate of at least $\gamma \geq R(\mathcal{D})$. The proof of the potentially improved rate of $\gamma \geq r(\mathcal{D}) d$ is moved to \cref{app:sec:improved contraction rate} due to its lesser relevance for this paper and is the content of \cref{prop:improved contraction rate}.

To begin, we note that the minimum from the definitions of $R (\mathcal{D})$ \cref{eq:r1 def} always exists, because the relevant function is continuous in $\ket{v}$ and $\ket{u}$ and the set $\left\{ \ket{v}, \ket{u} \in \mathbb{C}^d \ \middle| \ \| v \| , \| u \| = 1 , \ \braket{v | u} = 0 \right\}$ is compact, as it is the intersection of a product of compact sets and a closed set. 

Next, we show that $R (\mathcal{D}) > 0$ implies that $\mathcal{D} = \sum_j \mathcal{D}_{L_j}$ is Hamiltonian-independently contractive with a contraction rate of at least $\gamma \geq R (\mathcal{D})$.
By definition of $R (\mathcal{D})$ in \cref{eq:r1 def}, for all orthogonal unit vectors $\ket{v}$, $\ket{u}$, we have 
\begin{align} \label{eq:R formula w o dagger:1norm proof}
\sum_j \left( \left| \bra{v} L_j \ket{u} \right|^2 + \left| \bra{u} L_j \ket{v} \right|^2 \vphantom{\frac{1}{1}} \right) \geq R (\mathcal{D}) > 0.
\end{align}
Then, returning to the expression in \cref{eq:form of right time derivative} (equivalently \cref{eq:result-deriv}),
if we consider terms with $k\neq l$ that correspond to non-zero eigenvalues $\lambda_k , \lambda_l \neq 0$ with opposite signs, we can upper bound their mutual contribution to the sum by $-2\min(|\lambda_k|,|\lambda_l|)R(\D)$.
Dropping the other terms in the sum, we thus obtainThus,
\begin{equation} \label{eq:proof:last long der orthonorm proof}
\begin{aligned}
 \partial_{t,+} & \norm{x(t)}_1 
= - \sum_{j, k,l} |\lambda_k|  f_{x(t)}(k,l) |\!\bra{e_l} L_j \ket{e_k}\!|^2\\
& \qquad \quad \leq - 2 \sum_{\lambda_k > 0, \lambda_l < 0}\!\! \min(|\lambda_k| , |\lambda_l|) R (\mathcal{D})   \\ 
& \qquad \quad \leq - R (\mathcal{D}) \left\| x(t) \right\|_1 .
\end{aligned}
\end{equation}
The first inequality above follows from restricting the sum to non-zero eigenvalues.
To obtain the second inequality we first note that $\norm{x(t)}_1 = \sum_k |\lambda_k| = 2 \sum_{\lambda_k > 0} |\lambda_k| = 2 \sum_{\lambda_k < 0} |\lambda_k|$ since $x(t)$ is traceless.
Now, we let $\lambda_M$ be the eigenvalue with largest magnitude, i.e.\ $|\lambda_M| = \max_k |\lambda_k|$, and we can w.l.o.g.\ assume $\lambda_M > 0$ (if $\lambda_M < 0$ we simply change the signs in the following argument). Then, we have $\sum_{\lambda_k > 0, \lambda_l < 0} \min(|\lambda_k| , |\lambda_l|) \geq \sum_{ \lambda_l < 0} \min(|\lambda_M| , |\lambda_l|) = \sum_{\lambda_l < 0} |\lambda_l| = {\norm{x(t)}_1}/{2}$. We also used \cref{eq:R formula w o dagger:1norm proof} with $\ket{v} = \ket{e_l} , \ket{u} = \ket{e_k}$ and the definition of $f_{x(t)}(k,l)$ from \cref{prop:right time derivative form} which entails $f_{x(t)}(k,l) = f_{x(t)}(l,k) = 2$ from the second line above.

We have now in \cref{eq:proof:last long der orthonorm proof} proven the inequality $ \partial_{t,+} \left\| x(t) \right\|_1 \leq - R (\mathcal{D}) \left\| x(t) \right\|_1$, which holds for all times $t \geq s \geq 0$. If this had been an ordinary differential inequality, we could integrate to obtain the desired answer. But the fundamental theorem of calculus does not generally hold for non-differentiable functions. However, $\norm{x(t)}_1$ is a positive, real, and right-differentiable function. It is further continuous in time, since $x(t)$ is continuous as a solution to the Lindbladian master equation \eqref{eq:time_dep_lindblad}, and $\norm{ \ \cdot \ }_1$ is continuous (albeit not differentiable) on the space of Hermitian matrices. Thus, $\norm{x(t)}_1$ satisfies all the conditions in \cref{lem:full analysis result} from \cref{app:subsec:analysis result long} with $K = R (\mathcal{D})$, which then entails that we have
\begin{align*}
    \left\| x(t) \right\|_1 \leq \left\| x(s) \right\|_1 e^{-R (\mathcal{D}) (t - s)}.
\end{align*}
Since $x(t)=\rho(t)-\sigma(t)$ is the difference of two arbitrary initial states, this proves the theorem.
\end{proof}

\subsubsection{Proof of \texorpdfstring{\cref{cor:off diagonal matrices}}{cor off diagonal}} \label{subsubsec:proof:cor:off diagonal}

\begin{proof}[Proof of \cref{cor:off diagonal matrices}] \label{proof:cor:off diagonal matrices}

Suppose that the set of jump operators $\{ L_j , L_j^\dagger \}_j$ appearing in $\mathcal{D}$ spans the set of all anti-Hermitian operators on the Hilbert space $\mathcal{H}$. We shall then show that this implies $R (\mathcal{D}) > 0$, with $R(\mathcal{D})$ defined in \cref{eq:r1 def}, which by \cref{thm:thm contr frm orthog vects} will imply that $\mathcal{D}$ will be Hamiltonian-independently contractive. 

It is immediately clear from \cref{eq:r1 def} that $R (\mathcal{D}) \geq 0$ is nonnegative, so the corollary will follow from the argument above if we show $R (\mathcal{D}) \neq 0$, assuming that $\{ L_j , L_j^\dagger \}_j$ spans all anti-Hermitian operators. Assume thus for contradiction that we have $R (\mathcal{D}) = 0$.
This means that there exists orthogonal unit-vectors $\ket{v}, \ket{u}$ such that
\begin{align} \label{eq:proof:spanning cor s1}
0 = R (\mathcal{D}) = \sum_j \left( |\bra{v} L_j \ket{u}|^2 + |\bra{v} L_j^\dagger \ket{u}|^2 \right) .
\end{align}
Since all terms in the sum on the RHS of \cref{eq:proof:spanning cor s1} are non-negative, we must have $\bra{v} L_j \ket{u} = \bra{v} L_j^\dagger \ket{u} = 0$ for all $j$ for the total expression to equal $0$. Taking the complex conjugates of these equalities, and using $\braket{v | X | u}^{*} = \braket{u | X^\dagger | v}$, further gives us $\bra{u} L_j^\dagger \ket{v} = \bra{u} L_j \ket{v} = 0$. Collecting these results now gives us
\begin{align*}
\bra{v} L_j \ket{u} ,  \bra{u} L_j \ket{v} = 0 \ & \Rightarrow  \tr \left[ \vphantom{L_j^\dagger} L_j (\ket{u}\!\! \bra{v} - \ket{v}\!\! \bra{u}) \right] = 0 , \\
\bra{v} L_j^\dagger \ket{u} , \bra{u} L_j^\dagger \ket{v} = 0 \ & \Rightarrow  \tr \left[ L_j^\dagger (\ket{u}\!\! \bra{v} - \ket{v}\!\! \bra{u}) \right] = 0 ,
\end{align*}
for all $j$. However, this means that we have $\braket{L_j^\dagger , A}_{ \text{HS} } = \braket{L_j , A}_{ \text{HS} } = 0$ for all $j$, where $\braket{X, Y}_{ \text{HS} } \coloneqq \tr (X^\dagger Y)$ is the Hilbert-Schmidt inner product, and $A$ is the following non-zero anti-Hermitian matrix $A \coloneqq \ket{u} \bra{v} - \ket{v} \bra{u}$. $A$ is non-zero since $\ket{v}$ and $\ket{u}$ are orthogonal. Thus, $A$ cannot be in the span of $\{ L_j , L_j^\dagger \}_j$, which contradicts the assumption that $\{ L_j , L_j^\dagger \}_j$ spans all anti-Hermitian matrices. 
\end{proof}

\subsubsection{Proof of \texorpdfstring{\cref{cor:d=2}}{cor d2}} \label{subsubsec:proof:cor:d2}

\begin{proof}[Proof of \cref{cor:d=2}] \label{proof:cor:d2}
Assume $\dim(\mathcal{H}) = 2$. We shall now prove the proposition by showing the first of the $4$ conditions of the proposition to imply the second, which in turn implies the third, which in turn implies the fourth, which in turn implies the first. 

\emph{$\mathcal{D}$ is Hamiltonian-independently contractive $\Rightarrow$ The Lindbladian $\mathcal{L} = \mathcal{D}$ generates exponentially contractive dynamics} : If $\mathcal{D}$ is Hamiltonian-independently-contractive, then clearly, it must be (exponentially) contractive on its own, i.e. with $H(t) = 0$. 

\emph{The Lindbladian $\mathcal{L} = \mathcal{D}$ generates exponentially contractive dynamics $\Rightarrow$ $\{L_j , L_j^\dagger \}_j$ generates the full complex matrix algebra} : This is the statement of \cref{prop:necessary alg condition contrac} proven in \cref{app:necessary alg condition contrac}.

\emph{$\{L_j , L_j^\dagger \}_j$ generates the full complex matrix algebra $\Rightarrow$ At least one of the jump operators $L_j$ is a non-normal matrix, or, at least two jump operators $L_j$ and $L_k$ do not commute} : If $\{L_j , L_j^\dagger \}_j$ generates the entire complex matrix algebra, then at least one pair of elements in the set $\{ L_j , L_j^\dagger \}_j$ must not commute. Otherwise, all $L_j$ must be normal (since $[L_j , L_j^\dagger] = 0$) and hence diagonalizable. But $[L_j , L_k] = 0$ holding for all $j,k$ would now entail that all elements in $\{L_j , L_j^\dagger \}_j$ must be diagonalizable in the same basis, as must all polynomials in $\{L_j , L_j^\dagger \}_j$, contradicting that the set generates the entire complex matrix algebra. Thus, at least one pair of elements in the set $\{ L_j , L_j^\dagger \}_j$ cannot commute, entailing either $[L_j , L_j^\dagger] \neq 0$ or $[L_j , L_k] \neq 0$ for some $j , k$. 

\emph{At least one of the jump operators $L_j$ is a non-normal matrix, or, at least two jump operators $L_j$ and $L_k$ do not commute $\Rightarrow$ $\mathcal{D}$ is Hamiltonian-independently contractive} : Finally, assume now that either, one $L_j$ is not normal, or $[L_j , L_k] \neq 0$ for some distinct $j,k$. Assume for contradiction that $\mathcal{D}$ was then not Hamiltonian-independently contractive. Then, by \cref{thm:thm contr frm orthog vects}, we must have $R (\mathcal{D}) = 0$, which by definition of $R (\mathcal{D})$ in \cref{eq:r1 def} means that there exists orthogonal unit-vectors $\ket{v} , \ket{u} \in \mathbb{C}^2$ such that $|\bra{v}  L_j \ket{u}|^2 = |\bra{v}  L_j^\dagger \ket{u}|^2 = 0$ for all $j$. However, this entails that we have $\bra{v}  L_j \ket{u} = \bra{u}  L_j \ket{v} = 0$ for all $j$, which means that all $L_j$'s are diagonal in the same basis for $\mathcal{H} = \mathbb{C}^2$ - namely the ortho-normal basis $\{ \ket{v} , \ket{u} \}$. Thus, all $L_j$ are normal and commute with each other, contradicting the earlier assumption. 
\end{proof}

\subsubsection{Proof of \texorpdfstring{\cref{cor:d3}}{cor d3}} \label{subsubsec:proof:cor:d3}

\begin{proof}[Proof of \cref{cor:d3}] \label{proof:cor:d3}

Assume $\dim(\mathcal{H})=3$, and consider the dissipator $\mathcal{D} = \sum_j \mathcal{D}_{L_j}$. The fact that $\{ L_j , L_j^\dagger \}_j$ generating the entire matrix algebra is a necesarry condition for Hamiltonian-independent contractivity follows from \cref{prop:necessary alg condition contrac}. We now establish the sufficient condition, so assume that $\{ L_j \}_j$ generates the entire complex matrix algebra $M_3 (\mathbb{C})$. Define now $x(t) \coloneqq \mathcal{E}_{t,s}(\rho - \sigma) $ for arbitrary quantum states $\rho , \sigma$ and times $t \geq s \geq 0$. Like in the proof of \cref{thm:thm contr frm orthog vects}, we shall upper bound the right time-derivative of $\left\| x(t) \right\|_1$, which by \cref{prop:right time derivative form} is guaranteed to exist at all times, and is further given by
\begin{align} \label{eq:form of right time derivative:proof2}
\partial_{t,+} \norm{x(t)}_1  = - \sum_{k,l} |\lambda_k| f_{x(t)} (k,l) \sum_j \left| \braket{e_l | L_j | e_k} \right|^2  ,
\end{align}
We want to show that the RHS of \cref{eq:form of right time derivative:proof2} above is upper bounded by $ - \delta \norm{x(t)}_1$ for some positive constant $\delta > 0$.

Note first that since $x(t)$ is by definition Hermitian, traceless and non-zero, we can w.l.o.g. order its eigenvalues as $\lambda_1 \geq \lambda_2 \geq \lambda_3$ with $\lambda_1 > 0$ and $\lambda_3 < 0$. We must then have $|\lambda_1| , |\lambda_3| \geq \norm{x(t)}_1 / 4$, since otherwise, we would get the contradiction $\norm{x(t)}_1 = |\lambda_1|+|\lambda_2|+|\lambda_3| < (1+1+2) \norm{x(t)}_1 / 4$, by using $|\lambda_2| \leq |\lambda_1| , |\lambda_3|$ and $0 = \tr (x(t)) = \lambda_1 + \lambda_2 + \lambda_3$. Using this inequality in \cref{eq:form of right time derivative:proof2} above, we obtain the following upper bound by only summing over $k=1$ and $k=3$.
\begin{equation} \label{eq:proof:right t derivative form2}
\begin{aligned} 
\partial_{t,+} \norm{x(t)}_1  \leq - \sum_{k \in \{ 1,3 \} } \sum_{l,j} |\lambda_k| f_{x(t)} (k,l) \left| \braket{e_l | L_j | e_k} \right|^2 \\
\leq - \frac{\norm{x(t)}_1}{4} \sum_{k \in \{ 1,3 \} } \sum_{l,j} f_{x(t)} (k,l) \left| \braket{e_l | L_j | e_k} \right|^2  .
\end{aligned}
\end{equation}
By definition of $f_{x(t)}(k,l)$ in \cref{prop:right time derivative form}, the fact that $\lambda_1 > 0$ and $\lambda_3 < 0$ entails that we must have $f_{x(t)} (1,3) = f_{x(t)} (3,1) = 2$ and we either have $f_{x(t)}(1,2) = f_{x(t)}(2,1) \geq 1$ or $f_{x(t)}(3,2) = f_{x(t)}(2,3) \geq 1$. We now assume that the option $f_{x(t)}(1,2) = f_{x(t)}(2,1) \geq 1$ occurs (if $f_{x(t)}(3,2) = f_{x(t)}(2,3) \geq 1$ had occurred instead, we simply interchange $1 \leftrightarrow 3$ in the following paragraph).

Using $f_{x(t)}(1,2) , f_{x(t)}(1,3) \geq 1$ in \cref{eq:proof:right t derivative form2} we get a further upper bound by restricting the sum to $k=1$.
\begin{equation} \label{eq:proof:right t derivative form3}
\begin{aligned} 
\partial_{t,+} \norm{x(t)}_1 &\leq - \frac{\norm{x(t)}_1}{4} \sum_{j} \sum_{l \in \{ 2 , 3 \} } \left| \braket{e_l | L_j | e_1} \right|^2  \\
&\leq - \frac{\norm{x(t)}_1}{4} \inf_{\norm{v}=1} \sum_j  \braket{ v | L_j^\dagger ( I - \ket{v} \bra{v} ) L_j | v} \\
& \ \ = - \norm{x(t)}_1 \min_{\norm{v}=1} g(\ket{v}) = - \delta \norm{x(t)}_1 ,
\end{aligned}
\end{equation}
where we have used $I = \ket{e_1} \bra{e_1} + \ket{e_2} \bra{e_2} + \ket{e_3} \bra{e_3} $ and replaced $\ket{e_1}$ with an optimization over unit-vectors. In the last equality above, we have also introduced the function $g( \ket{v} ) \coloneqq \sum_j  \braket{ v | L_j^\dagger ( I - \ket{v} \bra{v} ) L_j | v} / 4$ which is continuous on the compact space of unit-vectors, which is why we could turn the infimum into a minimum. We have further defined the number $\delta \coloneqq \min_{\norm{v}=1} g(\ket{v})$.
Since $g$ is a non-negative function, we must have $\delta \geq 0$. Thus, if we show that $\min_{\norm{v}=1} g(\ket{v}) \neq 0$, we must have $\delta > 0$.

Assume for contradiction that $\min_{\norm{v}=1} g(\ket{v}) = 0$. Then, there must exist some unit-vector $\ket{u}$ such that $4 g( \ket{u} ) = \sum_j  \braket{ u | L_j^\dagger ( I - \ket{u} \bra{u} ) L_j | u} = 0$. Since all terms in this sum are non-negative, we must then have $\braket{ u | L_j^\dagger ( I - \ket{u} \bra{u} ) L_j | u} = 0$ for all $j$. However, this entails $\braket{w | L_j | u}=0$ for any unit-vector $\ket{w}$ orthogonal to $\ket{u}$ for all $j$, as can be seen by writing $I-\ket{u} \bra{u}$ as $\sum_k \ket{x_k} \bra{x_k}$ where $\{ \ket{x_k} \}_k$ is any ortho-normal basis for the subspace orthogonal to $\ket{u}$. But then, in particular, we have $\braket{x_k | L_j | u}=0$ for all $j,k $, and since $\{ \ket{u} \} \cup \{ \ket{x_k} \}_k$ is a basis for the Hilbert space, we must have $L_j \ket{u} \in \spn \left( \{ \ket{u} \} \right)$ for all $j$, i.e. $\ket{u}$ is an eigenvector, and hence spans an invariant subspace, of $L_j$ for all $j$. However, this is a contradiction, since by assumption $\{ L_j \}_j$ generates the entire complex matrix algebra, which by \emph{Burnside's theorem} for finite-dimensional complex matrix algebras \cite{Burnside1905} (cf. e.g. \cite{Halperin1980, Lomonosov2004}) means that there exists no non-trivial invariant subspace of all elements in the set $\{ L_j \}_j$.

We have thus now proven $\delta = \min_{\norm{v}=1} g(\ket{v}) > 0$. Returning to the expression \cref{eq:proof:right t derivative form3}, we now have
\begin{align} \label{eq:proof:right t derivative form5}
\partial_{t,+} \norm{x(t)}_1 \leq - \delta \norm{x(t)}_1  ,
\end{align}
\cref{lem:full analysis result} from \cref{app:subsec:analysis result long} now implies that since $\norm{x(t)}_1$ is a real positive function, which is also continuous, since $x(t)$ as a solution to the Lindbladian master equation \eqref{eq:time_dep_lindblad} is continuous and $\norm{ \ \cdot \ }_1$ is continuous on Hermitian matrices, and $\norm{x(t)}_1$ is further right-differentiable and satisfies \cref{eq:proof:right t derivative form5}, we must have
\begin{align*}
\norm{x(t)}_1 \leq \norm{x(0)}_1 e^{- \delta t}  ,
\end{align*}
which proves the proposition by definition of $x(t)$ and the above established result $\delta > 0$.
\end{proof}

\subsection{Proof of \texorpdfstring{\cref{thm:thm eigen condition}}{thm eigen condition} and related results} \label{subsec:eigen results}

\cref{thm:thm eigen condition} can be proven using ideas from the extensive \emph{log-Sobolev} framework developed for proving powerful mixing time bounds for irreducible Lindbladians. A proof of the theorem, at least for unital dissipators, is e.g. almost implicit in Ref.~\cite{Kastoryano2013-ot} (Definition 10 \& Lemma 21) by defining the \emph{Dirichlet forms} with respect to the maximally mixed state instead of any fixed point, and noting that one can then restrict to traceless inputs when computing the spectral gap, see also Ref.~\cite[Lemma 12]{Temme2010}. However, we provide a self-contained proof below, which emphasizes the simple techniques used here, which are substantially simpler than thosed used in the proof of \cref{thm:thm contr frm orthog vects} due to $\norm{ \ \cdot \ }_2^2$ being differentiable on the space of Hermitian matrices (unlike $\norm{ \ \cdot \ }_1$).

\begin{proof}[Proof of \cref{thm:thm eigen condition}] \label{proof:thm:thm eigen condition}

Suppose that the second largest eigenvalue $\mu_2$ of $\tilde{\mathcal{D}}$ is strictly negative, where $\tilde{\mathcal{D}}$ is defined from $\mathcal{D}$ as in \cref{thm:thm eigen condition}. Consider now any Lindbladian $\mathcal{L}_t = - i [H(t) , \ \cdot \ ]  + \mathcal{D}$ where $H(t)$ is time-dependent and define $x(t) \coloneqq \mathcal{E}_{t,s}(\rho - \sigma)$ for arbitrary quantum states $\rho , \sigma$ and times $t \geq s \geq 0$. By definition and linearity of the time-evolution channel, we have $ d x(t) / dt = \mathcal{L}_t (x(t))$.

We shall now prove Hamiltonian independent contractivity by showing that the $2$-norm of $x(t)$ strictly decreases at all times. Differentiating $\left\| x(t) \right\|_2^2$ w.r.t. time gives 
\begin{equation} \label{eq:proof HS eigen:s1}
\begin{aligned}
\frac{d}{dt} \left\| x(t) \right\|_2^2 &= 2 \tr \left[ - i x(t) \left[ H(t) , x(t) \right] + x(t) \mathcal{D} x(t) \vphantom{2^2} \right] \\ 
&=  2 \left\langle x(t) , \frac{\mathcal{D} + \mathcal{D}^{\dagger}}{2} x(t) \right\rangle_{ \text{HS} }  ,
\end{aligned}
\end{equation}
where $\langle A , B \rangle_{ \text{HS} } \coloneqq \tr \left(A^\dagger B \right)$ is the \emph{Hilbert-Schmidt inner product}. We now note, that since $x(t)$ by construction is traceless, we have $\Delta (x(t)) = x(t)$, where $\Delta$ is the superoperator defined as $\Delta (y) \coloneqq y - \tau \tr (y)$ with $\tau = I / d$ being the maximally mixed state. Using that $\Delta = \Delta^\dagger$ is self-adjoint w.r.t. the Hilbert-Schmidt inner product, as can be checked relatively easily, the expression \cref{eq:proof HS eigen:s1} can be further rewritten into
\begin{equation} \label{eq:proof HS eigen:s2}
\begin{aligned}
 \frac{d}{dt} \left\| x(t) \right\|_2^2 
&= \left\langle \Delta (x(t)) , (\mathcal{D} + \mathcal{D}^{\dagger}) \circ \Delta (x(t)) \right\rangle_{ \text{HS} } \\
&=  2\langle x(t) , \tilde{\mathcal{D}} x(t) \rangle_{ \text{HS} }  ,
\end{aligned}
\end{equation}
where we have used the definition $\tilde{\mathcal{D}} \coloneqq \Delta \circ ( \mathcal{D} + \mathcal{D}^{\dagger} ) / 2 \circ \Delta$ of $\tilde{\mathcal{D}}$ from the theorem, and the fact that $\Delta$ is self-adjoint, which also entails that $\tilde{\mathcal{D}}$ is self-adjoint and has real eigenvalues. By assumption, the second-largest eigenvalue $\mu_2 < 0$ of $\tilde{\mathcal{D}}$ is strictly negative. Since $0$ is a trivial eigenvalue of $\tilde{\mathcal{D}}$ with eigenvector $\tau$, this means that all eigenvectors of $\tilde{\mathcal{D}}$ perpendicular to $\tau$ has eigenvalue at most $\mu_2 < 0$. Note that $x(t)$ must be perpendicular to $\tau$ since it is traceless $\langle x(t) , \tau \rangle_{ \text{HS} }  = \tr (x(t)) / d = 0$, which means that $x(t)$ can be written as a linear combination of the eigenvectors of $\tilde{\mathcal{D}}$ orthogonal to $\tau$. Since all these relevant eigenvectors has eigenvalue at most $\mu_2$, we must have $\langle x(t) , \tilde{\mathcal{D}} x(t) \rangle_{ \text{HS} } \leq \mu_2 \norm{x(t)}_2^2$, which when plugged into the expression \cref{eq:proof HS eigen:s2} above gives
\begin{align*}
 \frac{d}{dt} \left\| x(t) \right\|_2^2 
\leq 2 \mu_2 \left\| x(t) \right\|_2^2.
\end{align*}
Integrating the above differential inequality between times $s$ and $t$ then gives us
\begin{align*}
\norm{x(t)}_2^2 \leq \norm{x(s)}_2^2 e^{2 \mu_2(t - s)} = \norm{x(s)}_2^2 e^{-2 \abs{\mu_2}({t - s})} ,
\end{align*}
which proves the Hamiltonian-independent contraction rate of at least $|\mu_2|$ for $\mathcal{D}$ by definition of $x(t) \coloneqq \mathcal{E}_{t,s}(\rho - \sigma)$, once we take the square root of both sides of the inequality above. The fact that we can choose the constant from \cref{def:exp contraction} to equal $K = \sqrt{d}$ follows the inequalities $\left\| A \right\|_2 \leq \left\| A \right\|_1 \leq \sqrt{d} \left\| A \right\|_2$ holding for all $A \in \mathcal{B}(\mathcal{H})$ \cite[Sec. 1.1.3.]{Watrous2018}.
\end{proof}

\subsubsection{\texorpdfstring{Proof of \cref{cor:unital dissipator}}{cor unital dissipator}} \label{subsubsec:proof:lem:unital}

Like for the case of \cref{thm:thm eigen condition}, a version of \cref{cor:unital dissipator} can also be proven within the \emph{log-Sobolev} framework, and the most difficult part of the proof is almost implicit in Ref. \cite{Kastoryano2013-ot} (n.b. Definition 10), where it is noted that the spectral gap (defined with respect to the maximally mixed fixed point) is always strictly positive, at least if one puts the a priori stronger condition on $\mathcal{D}$ that it is irreducible. However, we again provide an as simple as possible self-contained proof below.

\begin{proof}[Proof of \cref{cor:unital dissipator}] \label{proof:cor:unital dissipator}

To prove that all three condition from the proposition are equivalent under the assumption that $\mathcal{D}$ is unital, we shall prove that the first condition implies the second, which in turn implies the third, which in turn implies the first. 
\\
\\
\emph{$\mathcal{D}$ is Hamiltonian-independently contractive $ \ \Rightarrow \ $ $\mathcal{L} = \mathcal{D}$ is (exponentially) contractive on its own with $\tau$ as its unique fixed point} : If $\mathcal{D}$ is Hamiltonian-independently contractive, then specifically $\mathcal{L} = - i [H(t), \ \cdot \ ] + \mathcal{D}$ with $H(t) = 0$ must be (exponentially) contractive. Further, since $\mathcal{D}$ by assumption is unital, the maximally mixed state $\tau$ is a fixed point of $\mathcal{D}$, which must be unique. 
\\
\\
\emph{$\mathcal{L} = \mathcal{D}$ is contractive on its own with $\tau$ is its unique fixed point $ \ \Rightarrow \ $  $\{ L_j , L_j^\dagger \}_j$ generates the entire complex matrix algebra} : This is the content of \cref{prop:necessary alg condition contrac} proven in \cref{app:necessary alg condition contrac}. 
\\
\\
\emph{$\{ L_j , L_j^\dagger \}_j$ generates the entire complex matrix algebra $\ \Rightarrow \ $ $\mathcal{D} = \sum_j \mathcal{D}_{L_j}$ is Hamiltonian-independently contractive} : Assume that $\{ L_j , L_j^\dagger \}_j$ generates the entire complex matrix algebra. We shall now show that the second largest eigenvalue $\mu_2$ of $\tilde{\mathcal{D}}$ is strictly negative, which proves Hamiltonian-independent contractivity by \cref{thm:thm eigen condition}.

By noting that $\mathcal{D}^\dagger = \sum_j L_j^\dagger \cdot L_j - \{L_j^\dagger L_j, \cdot\}/2$, we get that for any Hermitian $y \in \mathcal{B}(\mathcal{H})$
\begin{equation} \label{eq:proof unital: l1}
\begin{aligned}
&\tr \left( y \left( \mathcal{D}+\mathcal{D}^{\dagger} \right) y \right) \\
&= \tr \bigg( y \sum_j \left( L_j y L_j^\dagger + L_j^\dagger y L_j   -L_j^\dagger L_j y - y L_j^\dagger L_j \right) \bigg) \\
&= \sum_j  \tr \left( (y L_j - L_j y ) ( y L_j^\dagger - L_j^\dagger y ) - y^2 [L_j , L_j^\dagger ] \right) \\
&= - \sum_j  \left\Vert \left[ y , L_j \right] \right\Vert_2^2  -  \tr \bigg( y^2 \sum_j [L_j , L_j^\dagger ] \bigg)  .
\end{aligned}
\end{equation}
By the assumption of $\mathcal{D}$ being unital, i.e. $\mathcal{D}(I)=0$, we have $0 = \mathcal{D}(I) = \sum_j [L_j , L_j^\dagger ]$, which when plugged into \cref{eq:proof unital: l1} above while using the definition $\tilde{\mathcal{D}} = \Delta \circ ( \mathcal{D} + \mathcal{D}^\dagger ) /  2  \circ \Delta$ of the operator $\tilde{\mathcal{D}}$ from \cref{thm:thm eigen condition}, gives us for any $x \in \mathcal{B}(\mathcal{H})$
\begin{equation} \label{eq:proof unital: eigen eq}
\begin{aligned}
\langle x , \tilde{\mathcal{D}} x \rangle_{ \text{HS} } &= \tr \left( \Delta(x) \frac{\mathcal{D} + \mathcal{D}^\dagger}{2} \Delta (x) \right) \\
&= - \frac{1}{2} \sum_j  \left\Vert \left[ \Delta (x) , L_j \right] \right\Vert_2^2 ,
\end{aligned}
\end{equation}
where we recall that $\Delta (x) \coloneqq x - \tr (x) \tau$ is self-adjoint (w.r.t. the Hilbert-Schmidt inner product). It follows from \cref{eq:proof unital: eigen eq} that we have $\langle x , \tilde{\mathcal{D}} x \rangle_{ \text{HS} } \leq 0$ for any $x$, and $\langle x , \tilde{\mathcal{D}} x \rangle_{ \text{HS} } = 0$ entails $\left[ \Delta (x) , L_j \right] = 0$ for all $j$. Now, since $\tau =  I / d  $ is an eigenvector of $\tilde{\mathcal{D}}$ with eigenvalue $0$ and since all other eigenvectors of $\tilde{\mathcal{D}}$ are orthogonal to $\tau$, the eigenvector $x_2$ of $\tilde{\mathcal{D}}$ with eigenvalue $\mu_2$ must satisfy $\langle \tau , x_2 \rangle_{ \text{HS} } = \tr (x_2 ) / d = 0$ and hence $\Delta (x_2) = x_2$. Thus, plugging the eigenvector $\mathcal{D}x_2 = \mu_2 x_2$ into \cref{eq:proof unital: eigen eq} above gives us
\begin{align} \label{eq:proof unital:mu2}
\mu_2 \left\| x_2 \right\|_2^2 & = \langle x_2 , \tilde{\mathcal{D}} x_2 \rangle_{ \text{HS} } = - \frac{1}{2} \sum_j  \left\Vert \left[ x_2 , L_j \right] \right\Vert_2^2 .
\end{align}
Thus, we see from \cref{eq:proof unital:mu2} that we must have $\mu_2 < 0$ unless $ \left[ x_2 , L_j \right] = 0$ for all $j$. 
So, assume for contradiction that we had $ \left[ x_2 , L_j \right] = 0$ for all $j$. Taking the complex conjugate of these relations further gives us $ [ x_2 , L_j^\dagger ] = 0$ for all $j$, since $x_2$ is Hermitian. Thus, $x_2$ must commute with all the $L_j$'s and $L_j^\dagger$'s. This entails that $x_2$ must also commute with all complex polynomials of elements from $\{ L_j , L_j^\dagger \}_j$. But since $\{ L_j , L_j^\dagger \}_j$ by assumption generates the entire complex matrix algebra, this means that $x_2$ must commute with all matrices. But this can only happen if $x_2$ is a multiple of the identity (e.g. by \emph{Schur's lemma} \cite{Schur1905_in1973} cf. \cite[thm. 4.29]{Hall2015}), i.e. if $x_2 = \alpha I$ for some $\alpha \in \mathbb{C}$, which is a contradiction since $x_2$ as an eigenvector of $\tilde{\mathcal{D}}$ is non-zero, and it is traceless as argued above. Hence, we must have  $\mu_2  <  0$. 
\end{proof}

\section{Outlook and open questions}
\label{sec:outlook}
We have established a number of different ways show Hamiltonian-independent contractivity (\cref{subsec:small perturbations} and \cref{subsec:Hamiltonian indpendent results}), either for all time-dependent Hamiltonians, or for sufficiently small or slow Hamiltonians.
While these results do not completely characterize Hamiltonian-independent contractivity, they still provide such a characterization in restricted settings such as the case of unital dissipators and low dimensional dissipators as seen in \cref{subsec:case studies}, where we saw a connection between Hamiltonian-independent contractivity and the jump operators generating the entire matrix algebra. 

An obvious open question that remains is whether, and if so how, one can improve upon \cref{thm:thm eigen condition} and \cref{thm:thm contr frm orthog vects}. In all the sufficient conditions for contractivity proven so far, we actually prove sufficient conditions for a stronger notion of contractivity than exponential contractivity, as we give sufficient conditions for either the $1$-norm or the $2$-norm of the difference of any two quantum states to strictly decrease at all times. One direction for future research could be to develop techniques for proving exponential contractivity that does not need to rely on proving such a strict notion of contractivity. However, this task is obviously very difficult for many reasons, including several complications already mentioned that arise when $H(t)$ is allowed to be arbitrary. 

A more specific open question, motivated by the results in \cref{subsec:case studies}, is whether one can derive a sufficiently strong algebraic condition guaranteeing a dissipator to be Hamiltonian-independently contractive, since e.g. the property that jump operators generate the entire matrix algebra seems to be intricately connected with questions of contractivity. Specifically, we conjecture that if the jump operators $\{ L_j \}_j$ generate the entire complex matrix algebra, then the associated dissipator is Hamiltonian-independently contractive. Note that the condition that $\{ L_j \}_j$ generates the entire complex matrix algebra guarantees that any driven Lindbladian with the given dissipator is irreducible at all times. By \cref{cor:unital dissipator}, \cref{cor:d=2} and \cref{cor:d3}, we know that this conjecture is at least true for all unital dissipators $\mathcal{D}$ and true for all dissipators when $\dim(\mathcal{H}) \leq 3$.

\section*{Acknowledgments}
L.H.W.\ and D.M.\ acknowledge financial support from VILLUM FONDEN via the QMATH Centre of Excellence (Grant No.10059).
D.M.\ acknowledges support from the Novo Nordisk Foundation under grant numbers NNF22OC0071934 and NNF20OC0059939.
R.T.\ acknowledges funding from the European Union's Horizon Europe research and innovation program under grant agreement number 101221560 (ToNQS).

\onecolumngrid

\clearpage

\begin{centering}
\section*{Appendices}
\end{centering}

\begin{appendices}
\appendix

\section{General results} \label{app:gen results}

In this appendix, for completeness, we provide proofs of some important general results that are used throughout this paper.

\subsection{A constant Lindbladian is contractive if and only if it has a unique fixed point}\label{app:contractive time ind lindb}

While it is obvious that a contractive constant Lindbladian can have at most one fixed point, the proposition below establishes that a constant finite-dimensional Lindbladian is in fact (exponentially) contractive if and only if it has a unique fixed point. 

\begin{lem} \label{lem:unique fixed point cont contractivity}
A finite-dimensional and time-independent Lindbladian $\mathcal{L}$ generates (exponentially) contractive dynamics if and only if it has a unique fixed point. Specifically, if $\mathcal{L}$ has a unique fixed point $\sigma$, then we are guaranteed the following asymptotic behavior as $t \rightarrow \infty$
\begin{align} \label{eq:app:fixed point contract asympt}
\left\| e^{t \mathcal{L}} (\rho) - \sigma \right\|_1 = \left\| \rho - \sigma \right\|_1 O \left( t^{d^2 - 2} e^{ - | \text{Re} (\lambda_2) | t}  \right) ,
\end{align}
for any quantum state $\rho$, where $\lambda_2 $ is the eigenvalue of $\mathcal{L}$ (restricted to acting on Hermitian matrices) with second largest real part $ \text{Re}(\lambda_2) < 0$ where the strict negativity must necessarily hold, and $O ( t^{d^2 - 2} e^{ - | \text{Re} (\lambda_2) | t}  )$ denotes a function independent of $\rho$ and upper bounded by $ K t^{d^2 - 2} e^{ - | \text{Re} (\lambda_2) | t}  $ for some $K > 0$, for all sufficiently large $t$. 
\end{lem}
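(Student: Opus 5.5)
I will treat $\mathcal{L}$ as a linear operator on the finite-dimensional real vector space of Hermitian matrices and use its Jordan decomposition; the statement then becomes linear algebra plus the fact that $e^{t\mathcal{L}}$ is CPTP.

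\emph{Only if.} If $\mathcal{L}$ has two distinct fixed states $\sigma_1 \neq \sigma_2$, then $e^{t\mathcal{L}}(\sigma_1-\sigma_2)=\sigma_1-\sigma_2$ for all $t$. This contradicts \cref{def:exp contraction} once $Ke^{-\gamma t}<1$. A fixed state exists at all, e.g.\ as a limit point of the Cesàro averages $\frac1T\int_0^T e^{t\mathcal{L}}\rho\,dt$ over the compact set of states. So exponential contractivity forces exactly one fixed point.

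\emph{If.} Let $\sigma$ be the unique fixed state. The first step is to show that $\ker\mathcal{L}$ (on Hermitian matrices) is one-dimensional. Any Hermitian $X$ with $\mathcal{L}X=0$ has $\mathcal{L}(X_\pm)$ compatible with the following standard argument: since $e^{t\mathcal{L}}$ is trace-norm contractive and $e^{t\mathcal{L}}X=X$, we have $\|e^{t\mathcal{L}}X_+\|_1+\|e^{t\mathcal{L}}X_-\|_1 \le \|X_+\|_1+\|X_-\|_1=\|X\|_1=\|e^{t\mathcal{L}}X_+-e^{t\mathcal{L}}X_-\|_1$. Positivity and trace preservation then force $e^{t\mathcal{L}}X_\pm = X_\pm$. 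Hence $X_\pm$ are multiples of $\sigma$, and $X\in\mathbb{R}\sigma$.

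The second step is to exclude other spectrum on the imaginary axis. Boundedness of $\|e^{t\mathcal{L}}\|$ (trace-norm contractivity, up to equivalence of norms) excludes eigenvalues with positive real part. It also rules out nontrivial Jordan blocks at purely imaginary eigenvalues, since these would give polynomial growth. For an eigenvalue $i\omega$ with $\omega\neq0$ and eigenvector $Y$ (now complexified), the time averages $\frac1T\int_0^T e^{t\mathcal{L}}\,dt$ converge to the spectral projection onto $\ker\mathcal{L}$. I would then argue that the peripheral spectrum of a CPTP semigroup forms a group whose eigenvectors generate, together with the fixed point, further fixed points. The concrete route is this: $Y^\dagger Y$-type constructions are not linear, so I would instead use the known result (Wolf's notes) that for a positive trace-preserving semigroup with a unique fixed point the peripheral eigenvalues are trivial. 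This is the main obstacle. A cleaner alternative I would try first: since $\tr\,\mathcal{L}(Y)=0$ for $\omega\neq0$ means $Y$ is traceless, consider $\sigma+\epsilon(Y+Y^\dagger)\geq 0$ for small $\epsilon$ when $\sigma$ is full rank, or restrict to the support of $\sigma$ in general. The evolution $\sigma+\epsilon\,\mathrm{Re}(e^{i\omega t}Y)$ is then a non-convergent periodic orbit of states. Its time-average is $\sigma$, but the Cesàro limit of $e^{t\mathcal{L}}$ applied to it must also be invariant, which gives no contradiction directly. I would therefore rely on the spectral/Perron–Frobenius theory for the case $\sigma$ not full rank.

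Granting that all nonzero eigenvalues have $\mathrm{Re}\lambda<0$ and that $0$ is simple and semisimple, write $e^{t\mathcal{L}}=\mathcal{P}+\sum_{\lambda\ne0}e^{\lambda t}p_\lambda(t)\mathcal{P}_\lambda$. Here $\mathcal{P}(\cdot)=\sigma\,\tr(\cdot)$ and the $p_\lambda$ are polynomials of degree at most (block size $-1$) $\le d^2-2$. Applied to $\rho-\sigma$, which is traceless, the $\mathcal{P}$ term vanishes. Bounding the remaining finite sum in operator norm, and converting to the $1$-norm with dimensional constants, yields \cref{eq:app:fixed point contract asympt}. Absorbing $t^{d^2-2}$ into a slightly smaller rate gives \cref{def:exp contraction} for all $t\ge0$, using $\|e^{t\mathcal{L}}\|\le1$ on traceless inputs for small $t$.
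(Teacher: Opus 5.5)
Your second step has a genuine gap, and it is the crux of the ``if'' direction. Boundedness of $e^{t\mathcal{L}}$ gives only $\mathrm{Re}\,\lambda\le 0$ and semisimplicity on the imaginary axis. Lindbladians do have non-real eigenvalues: $-i[\sigma^z,\cdot\,]+\mathcal{D}_{\sigma^-}$ has the unique fixed point $\proj{0}$ and eigenvalues $0,-1,-\tfrac12\pm 2i$, the last two with the non-Hermitian eigenvectors $\ket{0}\!\bra{1}$ and $\ket{1}\!\bra{0}$. Nothing in your first step or in the final Jordan-form estimate excludes an eigenvalue $i\omega$ with $\omega\neq0$. Such an eigenvalue would by itself destroy contractivity, since its eigenvector is traceless.

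You do not prove it is absent. The ``known result'' is invoked without a precise statement or proof, and it is exactly what has to be shown. Your alternative admittedly yields no contradiction. Perron--Frobenius theory alone cannot suffice, because the discrete-time analogue is false: the channel $\rho\mapsto\sum_{i}\bra{i}\rho\ket{i}\,\proj{i+1}$ on $\mathbb{C}^3$ (indices mod~$3$) has the unique fixed point $I/3$ but eigenvalues $e^{\pm 2\pi i/3}$. So the argument must use the continuous-time structure.

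The paper's proof does not supply this step either. It asserts that every eigenvalue of $\mathcal{L}$ on Hermitian matrices is real, because ``there exists some non-zero Hermitian operator $h$ such that $\mathcal{L}(h)=\lambda h$''. That fails for non-real $\lambda$, as the example above shows, and it is exactly this claim that lets the paper write ``all $\lambda_j'$ are strictly negative''. Your first step, by contrast, fills a hole the paper passes over. It shows that a second kernel vector really yields a second fixed \emph{state} when $\sigma$ is not full rank. The reason is that the equality case of the triangle inequality forces orthogonal supports, and the decomposition $X=X_+-X_-$ is unique.

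One way to close the gap uses $\ker\mathcal{L}=\mathbb{C}\sigma$ from your first step, hence $\ker\mathcal{L}^\dagger=\mathbb{C}I$.

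First let $\sigma$ be full rank, and suppose $\mathcal{L}^\dagger X=i\omega X$ with $X\neq0$. The unital CP maps $\Phi_t=e^{t\mathcal{L}^\dagger}$ satisfy the Kadison--Schwarz inequality $\Phi_t(X^\dagger X)\ge\Phi_t(X)^\dagger\Phi_t(X)=X^\dagger X$. At the same time $\tr[\sigma(\Phi_t(X^\dagger X)-X^\dagger X)]=0$ by invariance of $\sigma$. Faithfulness of $\sigma$ then gives $\Phi_t(X^\dagger X)=X^\dagger X$, so $X^\dagger X\propto I$ and $X$ is a multiple of a unitary $U$. Both Schwarz inequalities for $U$ are equalities, so $U$ lies in the multiplicative domain of every $\Phi_t$. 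Hence $\Phi_t(U^k)=e^{ik\omega t}U^k$, and $ik\omega$ is an eigenvalue of $\mathcal{L}^\dagger$ for every $k\ge1$, which forces $\omega=0$.

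For general $\sigma$, let $P$ be its support projector and $Q=I-P$.
\begin{itemize}
\item $\mathcal{B}(P\mathcal{H})$ is invariant, since $\rho\le c\sigma$ implies $e^{t\mathcal{L}}\rho\le c\sigma$. Hence $0\le\Phi_t(Q)\le Q$.
\item So $\Phi_t(Q)$ is non-increasing in $t$. It converges to an element of $\ker\mathcal{L}^\dagger=\mathbb{C}I$ that vanishes on $P$, i.e.\ to $0$.
\item Thus $Qe^{t\mathcal{L}}(\rho)Q\to0$ for every state, and by positivity $Pe^{t\mathcal{L}}(\rho)Q\to0$ as well.
\item Any eigenvector $Y$ with eigenvalue $i\omega$ therefore satisfies $Y=PYP$. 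It is an eigenvector of the restriction of $\mathcal{L}$ to $\mathcal{B}(P\mathcal{H})$, a Lindbladian whose unique fixed state $\sigma$ is faithful there, so the first case applies.
\end{itemize}
With this in place, your Jordan-form estimate goes through as written.
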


\begin{proof}
If $\mathcal{L}$ is a finite-dimensional time-independent Lindbladian which generates exponentially contractive dynamics, then $\mathcal{L}$ can obviously have at most $1$ fixed point. Meanwhile, $\mathcal{L}$ must have at least one fixed point, since $e^{t \mathcal{L}}$ is a quantum channel for all $t$, and all finite-dimensional quantum channels whith equal input and output spaces have at least one fixed point (see e.g. \cite[thm. 4.24]{Watrous2018}). Alternatively, any Lindbladian $\mathcal{L}$ must have a fixed point, since the adjoint $\mathcal{L}^\dagger$ of $\mathcal{L}$ (w.r.t. the Hilbert-Schmidt inner product) can be shown to annihilate $I$, i.e. $\mathcal{L}^\dagger (I) = 0$, and since $\mathcal{L}$ is a linear automorphism (on the space of matrices), it must have the same eigenvalues as $\mathcal{L}^\dagger$, hence, $\mathcal{L}$ has at least one zero eigenvalue. Thus, if $\mathcal{L}$ is (exponentially) contractive, it must have a unique fixed point.
\\
\\
We shall now prove the converse, namely that if a constant finite-dimensional Lindbladian $\mathcal{L}$ has a unique fixed point $\sigma$, then $\mathcal{L}$ generates exponentially contractive dynamics, where the convergence to the fixed point is bounded by \cref{eq:app:fixed point contract asympt}, i.e. the contraction ratio is an $O(t^{d^2 - 2} e^{- | \text{Re}(\lambda_2) |t})$ function, independently of the initial state. In this proof, we consider the Lindbladian $\mathcal{L}$ with a unique fixed point to be acting on the complex vector space of Hermitian matrices, which is invariant under the action of any Lindbladian $\mathcal{L}$, since $\mathcal{L}$ must send quantum states to states. Thus, $\mathcal{L}$ can be represented as a $d^2 \times d^2$ matrix. The desired result now follows from the general solution to the vector differential equation $d \vec{y} / dt = A \vec{y}$, with $A$ being a constant matrix, once we establish that $\mathcal{L}$ has a single simple $0$-eigenvalue associated with $\sigma$, and all other eigenvalues must have strictly negative real part. The fact that $\mathcal{L}$ is a constant square matrix, once a basis has been chosen, means that any solution $\rho(t)$ to the Lindbladian master equation $\dot{\rho}(t) = \mathcal{L} \rho (t)$ must be of the form 
\begin{align} \label{eq:proof:jordan solution}
\rho(t) = \sum_{j} e^{\lambda_j t} \sum_{k=0}^{\nu_j-1} t^{k} \xi_{j,k}, 
\end{align}
where $\lambda_j$ ranges over all eigenvalues of $\mathcal{L}$, $\nu_j$ is the algebraic multiplicity of $\lambda_j$ and $\xi_{j,k}$ is for any $k$ a vector in the generalized eigenspace associated with $\lambda_j$, i.e. a vector in the subspace on which $\mathcal{L}-\lambda_j I$ is nilpotent. The general solution to $\dot{\rho}(t) = \mathcal{L} \rho (t)$ with initial conditions $\rho(0)=\rho$ at $t=0$ is given by $e^{t \mathcal{L}} \rho$, which can be rewritten to be on the form given in \cref{eq:proof:jordan solution} (see e.g. \cite[chap. 25]{Arnold1973} or \cite[sec. 3.2]{Teschl2012}). This follows from the fact that $\mathcal{L}$, like any matrix, can always be transformed into Jordan normal form \cite[app. B]{Strang2005}, which allows one to compute $e^{t \mathcal{L}}$. We shall now prove the results about the eigenvalues of $\mathcal{L}$ mentioned above.

Firstly, the eigenvalues of $\mathcal{L}$ must be real and non-positive. This is true, since for any eigenvalue $\lambda$, there exists some non-zero Hermitian operator $h$ such that $\mathcal{L}(h) = \lambda h$, hence $e^{t \mathcal{L}}(h) = e^{\lambda t} h$. If $\lambda$ is not real, then $e^{t \mathcal{L}}(h) = e^{\lambda t} h$ would not be Hermitian for some $t$, which contradicts $\mathcal{L}$ sending Hermitian operators to Hermitian operators. $\lambda \leq 0$ then follows since any Hermitian operator $h$ can be written as a combination of two perpendicular quantum states $\rho_- , \rho_+$ as $h = \alpha \rho_+ - \beta \rho_-$ for some $\alpha , \beta \in \mathbb{R}$. By linearity, this gives $ e^{\lambda t} h = e^{t \mathcal{L}} (h) = \alpha e^{t \mathcal{L}} (\rho_+) - \beta e^{t \mathcal{L}} (\rho_-)$. If $\lambda$ was positive, then $ e^{\lambda t} h$ would have unbounded norm as $t \rightarrow \infty$ while $\alpha e^{t \mathcal{L}} (\rho_+) - \beta e^{t \mathcal{L}} (\rho_-)$ has bounded norm since $\mathcal{L}$ sends quantum states to quantum states, hence $ e^{t \mathcal{L}} (\rho_+) $ and $ e^{t \mathcal{L}} (\rho_-)$ are quantum states with bounded norm.

Secondly, $\mathcal{L}$ must have a single simple $0$-eigenvalue. By assumption, $\sigma$ is a fixed point of $\mathcal{L}$ and hence an eigenvector with eigenvalue $0$. In the Jordan-form decomposition of $\mathcal{L}$, $\mathcal{L}$ can only contain a single Jordan-block with eigenvalue $0$, since otherwise, $\mathcal{L}$ would have multiple eigenvectors with eigenvalue $0$, and thus multiple fixed points. Furthermore, this single Jordan block $J_0$ must have dimension $1$, since otherwise $e^{t \mathcal{L}}$ would have unbounded operator norm as $t \rightarrow \infty$, which follows from the following form \cite[eq. 3.42]{Teschl2012} for $e^{t J_0}$
\begin{align*} 
J_0 = \begin{pmatrix}
0 & 1 & & \\
& \ddots & \ddots & \\
& & \ddots  & 1 \\ 
& & & 0 \\
\end{pmatrix} \quad \Rightarrow
\quad e^{ t J_0 } = \begin{pmatrix}
\ 1 \ \ & \ t \ & \ \frac{t^2}{2!} \ & \hdots & \frac{t^{n-1}}{(n-1)!} \\
& 1 & \ddots & \ddots & \vdots \\
& & \ddots & \ddots & \frac{t^2}{2!}   \\
& & & \ 1 \ & \ t \ \\ 
& & & & \ \ 1 \ \\
\end{pmatrix} .
\end{align*}
$e^{t \mathcal{L}}$ cannot have unbounded operator norm for the reasons already explained above, namely that any Hermitian matrix can be decomposed into perpendicular quantum states and $e^{t \mathcal{L}}$ sends quantum states to quantum states.  

Returning to \cref{eq:proof:jordan solution}, given that we know that $0$ occurs as a single simple eigenvalue of $\mathcal{L}$ whose eigenspace is spanned by $\sigma$, the solution $\rho(t) = e^{t \mathcal{L}} \rho$ to the Lindblad master equation must be given by  
\begin{align} \label{eq:proof:jordan solution 2}
\rho(t) = \alpha \sigma + \sum_{j} e^{\lambda_j' t} \sum_{k=0}^{\nu_j-1} t^{k} \xi_{j,k}, 
\end{align}
for some constant $\alpha$, and where $\lambda_j'$ ranges over all the non-zero eigenvalues of $\mathcal{L}$. We see from \cref{eq:proof:jordan solution 2} that we must have $\rho(t) \rightarrow \alpha \sigma$ as $t \rightarrow \infty$, which entails $\alpha = 1$, since the Lindbladian is trace-preserving and $\rho(0)=\rho$ is a quantum state. Thus, from \cref{eq:proof:jordan solution 2}, we get
\begin{align} \label{eq:proof:jordan solution 3}
\rho(t) - \sigma =e^{t \mathcal{L}} (\rho) - \sigma= e^{t \mathcal{L}} (\rho - \sigma ) =\sum_{j} e^{\lambda_j' t} \sum_{k=0}^{\nu_j-1} t^{k} \xi_{j,k}. 
\end{align}
Now, it follows from \cref{eq:proof:jordan solution 3} that all operators $\xi_{j,k}$ must be upper bounded by $O(\norm{\rho - \sigma}_1)$, since $\mathcal{L}$ has bounded operator norm, and the time-derivatives of all orders of the RHS of \cref{eq:proof:jordan solution 3} are given by some $\mathcal{L}^{(n)} (\rho - \sigma)$, which are then upper bounded by $O(\norm{\rho - \sigma}_1)$. Since all $\lambda_j'$ are strictly negative, we define $0 > \lambda_2 \coloneqq \max_j \lambda_j'$. From this, it follows that the RHS of \cref{eq:proof:jordan solution 3} must be an $O( t^{d^2 - 2} e^{-| \lambda_2| t} )$ function, since the maximum algebraic multiplicity of a non-zero eigenvalue of $\mathcal{L}$ is $d^2-1$. This proves \cref{eq:app:fixed point contract asympt} and thus completes the proof of the lemma. Alternatively, one can in the last step show that the RHS equals $\rho - \sigma$ when acted upon by $e^{t \mathcal{L}}$ restricted to the subspace perpendiculr to $\sigma$, which has matrix-components, and hence matrix norm, that is of order $O( t^{d^2 - 2} e^{-| \lambda_2| t} )$. 

\end{proof}

\subsection{The jump operators of contractive dissipators must along with their adjoints generate the entire complex matrix algebra} \label{app:necessary alg condition contrac}

\begin{lem} \label{prop:necessary alg condition contrac}
Consider a Lindbladian consisting only of a constant dissipator $\mathcal{L} =  \sum_j \mathcal{D}_{L_j}$. If $\mathcal{L}$ generates (exponentially) contractive dynamics, then $\{ L_j , L_j^\dagger \}_j$ must generate the entire complex matrix algebra.  
\end{lem}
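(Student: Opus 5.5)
We argue by contraposition: if $\{L_j, L_j^\dagger\}_j$ does not generate $M_d(\mathbb{C})$, we exhibit two distinct fixed points of $\mathcal{L} = \sum_j \mathcal{D}_{L_j}$. By \cref{lem:unique fixed point cont contractivity}, this shows that $\mathcal{L}$ is not (exponentially) contractive.

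Let $\mathcal{A}$ be the unital $*$-algebra generated by $\{L_j, L_j^\dagger\}_j$, i.e.\ the algebra generated by this set together with $I$. Adjoining the identity does no harm: if the non-unital algebra were all of $M_d(\mathbb{C})$, so would be the unital one, and conversely a proper unital subalgebra is handled below. Since $\mathcal{A}$ is closed under $\dagger$, it is a finite-dimensional $C^*$-algebra. Its invariant subspaces are therefore reducing, because the orthogonal complement of an invariant subspace is invariant under the adjoints, and these lie in $\mathcal{A}$.

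By Burnside's theorem, $\mathcal{A} \neq M_d(\mathbb{C})$ implies that $\mathcal{A}$ has a non-trivial invariant subspace $\mathcal{S}$. Let $P$ be the orthogonal projector onto $\mathcal{S}$. By the reducing property, $P$ commutes with every $L_j$ and every $L_j^\dagger$.

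We now show that $\mathcal{L}$ preserves the block structure defined by $P$. For any $X$ with $X = PXP$,
\[
\mathcal{D}_{L_j}(X) = L_j X L_j^\dagger - \tfrac{1}{2}\{L_j^\dagger L_j, X\},
\]
and this again satisfies $P(\cdot)P = (\cdot)$, since $P$ commutes with $L_j$ and $L_j^\dagger$. The same holds for $P^\perp = I - P$. Consequently $e^{t\mathcal{L}}$ maps states supported on $\mathcal{S}$ to states supported on $\mathcal{S}$, and likewise for $\mathcal{S}^\perp$.

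Restrict attention to the compact convex set of states $\rho$ with $P\rho P = \rho$. This set is nonempty and invariant under the channel $e^{\mathcal{L}}$, so the channel has a fixed point $\rho_1$ in it. Equivalently, $\mathcal{L}$ restricted to operators on $\mathcal{S}$ is itself a Lindbladian with jump operators $PL_jP$, and any Lindbladian has a fixed point, as argued in \cref{lem:unique fixed point cont contractivity}. The same reasoning on $\mathcal{S}^\perp$ gives a fixed point $\rho_2$.

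The states $\rho_1$ and $\rho_2$ have orthogonal supports, so they are distinct, and $\norm{\rho_1 - \rho_2}_1 = 2$ for all times. This contradicts exponential contractivity directly, without even needing \cref{lem:unique fixed point cont contractivity}.

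The only step that needs care is the passage from invariant to reducing subspaces. This is exactly why both $L_j$ and $L_j^\dagger$ must be included: with only $\{L_j\}$, an invariant subspace need not give a decoupled block, because jumps could still transfer population from $\mathcal{S}^\perp$ into $\mathcal{S}$. One should also confirm that the anticommutator term preserves the block. This is immediate, since $L_j^\dagger L_j \in \mathcal{A}$ commutes with $P$.
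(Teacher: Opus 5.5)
Your proof is correct and is essentially the paper's argument. Burnside's theorem gives an invariant subspace, and closure of $\{L_j, L_j^\dagger\}_j$ under $\dagger$ makes it reducing, so $\mathcal{B}(\mathcal{S})$ and $\mathcal{B}(\mathcal{S}^\perp)$ are both $\mathcal{D}$-invariant and each contains a fixed point; as you note, orthogonality of the two blocks alone already rules out contraction.

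The one slip is your justification for adjoining $I$: it argues the trivial inclusion, whereas you need the implication ``non-unital algebra proper $\Rightarrow$ unital algebra proper''. The detour is unnecessary anyway, since adjoining $I$ does not change the invariant subspaces and Burnside's theorem applies directly to non-unital algebras when $d \geq 2$.
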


\begin{proof}
Assume that the Lindbladian $\mathcal{L} = \sum_j \mathcal{D}_{L_j}$ generates contractive dynamics, and assume for contradiction that $\{ L_j , L_j^\dagger \}_j$ does not generate the entire complex matrix algebra. Then by \emph{Burnside's theorem} for finite-dimensional complex matrix algebras \cite{Burnside1905} (cf. e.g. \cite{Halperin1980, Lomonosov2004}), there must exist a non-trivial invariant subspace of all elements in $\{ L_j , L_j^\dagger \}_j$. Since $\{ L_j , L_j^\dagger \}_j$ is also closed under Hermitian conjugation, i.e. it contains the adjoints of all its elements, this means that the orthogonal complement to this non-trivial subspace is also invariant under $\{ L_j , L_j^\dagger \}_j$, hence we have $\mathcal{H} = \mathcal{H}_1 \oplus \mathcal{H}_2$ where $\mathcal{H}_1$ and $\mathcal{H}_2$ are both non-trivial subspaces, both invariant under $\{ L_j , L_j^\dagger \}_j$. Thus, all $L_j$ must decompose into direct sums of matrices acting on $\mathcal{H}_1$ and $\mathcal{H}_2$ respectively. To see this, let $\mathcal{S} \subset \mathcal{H}$ be any non-trivial subspace of $\mathcal{H}$, which is invariant under all matrices in a set $\mathcal{M}$ that is closed under Hermitian conjugation. denote by $\mathcal{S}^\perp \coloneqq \{ v \in \mathcal{H} \ | \ \forall_{u \in \mathcal{S}}  \ \langle v , u \rangle = 0 \}$ the orthogonal complement of $\mathcal{S}$, which is also non-trivial, since $\mathcal{S}$ is non-trivial. Now, let $v \in \mathcal{S}^\perp$ be arbitrary. Then, for any $M \in \mathcal{M}$, and for all $u \in \mathcal{S}$, we have $\langle M v , u \rangle = \langle v , M^\dagger u \rangle = 0$, which follows by definition of the orthogonal complement, and the fact that since $\mathcal{S}$ was invariant under $\mathcal{M}$, and $M \in \mathcal{M} \Rightarrow M^\dagger \in \mathcal{M}$, we have $M^\dagger u \in \mathcal{S}$ for any $\mathcal{S}$. Thus, for all $v \in \mathcal{S}^\perp$, we have $M v \in \mathcal{S}^\perp$ for any $M \in \mathcal{M}$, proving that $\mathcal{S}^\perp$ is also invariant under $\mathcal{M}$.

The fact that we have $\mathcal{H} = \mathcal{H}_1 \oplus \mathcal{H}_2$ with $\mathcal{H}_1$ and $\mathcal{H}_2$ invariant under all elements of $\{ L_j , L_j^\dagger \}_j$ entails that both $\mathcal{B}(\mathcal{H}_1)$ and $\mathcal{B}(\mathcal{H}_2)$, i.e. the subspaces of matrices that have support only in $\mathcal{H}_1$ and in $\mathcal{H}_2$ respectively, must be invariant under $\mathcal{D}$, whose terms are polynomials in $\{ L_j , L_j^\dagger \}$. Thus $\mathcal{D}$ restricted to $\mathcal{H}_1$, i.e. $\mathcal{D} |_{\mathcal{H}_1}$, and $\mathcal{D}$ restricted to $\mathcal{H}_2$, i.e. $\mathcal{D} |_{\mathcal{H}_2}$, are both dissipators (acting on $\mathcal{H}_1$ and $\mathcal{H}_2$). Since every quantum channel from a finite-dimensional space to itself has at least one fixed point \cite[thm. 4.24]{Watrous2018}, and since $\mathcal{D}$, and hence $e^{t \mathcal{D}}$, maps both $\mathcal{B}(\mathcal{H}_1)$ and $\mathcal{B}(\mathcal{H}_2)$ to themselves, the quantum channel $e^{t \mathcal{D}}$ must have at least one fixed point contained in $\mathcal{B}(\mathcal{H}_1)$ and one fixed point contained in $\mathcal{B}(\mathcal{H}_2)$, contradicting the assumption that $\mathcal{L}$ was contractive. Alternatively, one can note that $\mathcal{D}^\dagger$ (where the adjoint is taken w.r.t. the Hilbert-Schmidt inner product) is also composed of polynomial terms in $\{ L_j , L_j^\dagger \}$, which all commute with $I|_{\mathcal{H}_1}$ and $I|_{\mathcal{H}_2}$ -- the identity matrices on $\mathcal{H}_1$ and $\mathcal{H}_2$. From this, one can show that $\mathcal{D}^\dagger$ must annihilate both $I|_{\mathcal{H}_1}$ and $I|_{\mathcal{H}_2}$. Since $\mathcal{L}=\mathcal{D}$ and $\mathcal{D}^\dagger$ must have the same eigenvalues (with the same algebraic multiplicity), $\mathcal{L}$ must thus have at least two distinct fixed points.
\end{proof}

\subsection{Analysis results for continuous and right-differentiable functions} \label{app:subsec:analysis result long}

In this section, we shall prove \cref{lem:full analysis result} and \cref{lem:fund thm calculus right diff functions}, both stated below, which are crucial for rigorously proving many of the main results in \cref{sec:proofs}. These results essentially state that under certain conditions we are allowed to work with the right derivative of $\norm{x(t)}_1$ instead of the ordinary derivative for the purposes of proving exponential decay bounds or using the fundamental theorem of calculus, even when the ordinary derivative is not well-defined. Such results, or something similar, have likely already been established, but we provide rigorous proofs here for completeness. Recall from \cref{eq:def:right derivative} that the right derivative of a function $f : [a,b] \rightarrow \mathbb{R}$ at a point $x \in [a,b)$ is denoted $\partial_+ f (x)$ and it is (supposing it is well-defined) defined as $\partial_+ f(x ) \coloneqq \lim_{y \rightarrow x \ : \ y \in (x,b] } \frac{f(y)-f(x)}{y-x}$, where $y \rightarrow x \ : \ y \in (x,b]$ denotes a sequence converging to $x$ that is confined to the interval $(x,b]$. Recall also that we say that $f : [a,b] \rightarrow \mathbb{R}$ is (pointwise) right-differentiable, if $\partial_+ f (x)$ is well-defined for all $x \in [a,b)$, even though $\partial_+ f (x)$ may not be continuous.

\begin{lem} \label{lem:full analysis result}
Let $f : [a,b] \rightarrow \mathbb{R}$ be a continuous, right-differentiable, real-valued, non-negative function on the finite interval $[a,b] \subset \mathbb{R}$. Suppose further that $\partial_+ f$ is upper bounded by $\partial_+ f(x) \leq - K f(x)$ for all $x \in [a,b)$, where $K \in \mathbb{R}$, $K \geq 0$ is some positive constant. Then, for all $ x , y \in [a,b]$ with $y \geq x$, we have
\begin{align*}
f(y) \leq f(x) e^{ - K (y-x) } .
\end{align*}
\end{lem}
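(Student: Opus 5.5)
The plan is to reduce the claim to a monotonicity statement and then prove monotonicity by a continuity (``first bad point'') argument, since only right derivatives are available and the mean value theorem cannot be used directly. Define $g(x) \coloneqq e^{K x} f(x)$ on $[a,b]$. Then $g$ is continuous, and by the product rule for one-sided derivatives (valid because $e^{Kx}$ is differentiable and $f$ is right-differentiable) we have
\begin{align*}
\partial_+ g(x) = e^{Kx}\left( \partial_+ f(x) + K f(x) \right) \leq 0 \quad \text{for all } x \in [a,b).
\end{align*}
The desired inequality $f(y) \leq f(x) e^{-K(y-x)}$ is then equivalent to $g(y) \leq g(x)$ for $y \geq x$. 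So it suffices to prove that a continuous function on $[a,b]$ with nonpositive right derivative everywhere on $[a,b)$ is non-increasing. Non-negativity of $f$ is not needed for this step.

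To prove this monotonicity, I will perturb to obtain a strict inequality. Fix $\varepsilon > 0$ and set $h_\varepsilon(z) \coloneqq g(z) - g(x) - \varepsilon (z - x)$ on $[x,b]$, so that $\partial_+ h_\varepsilon \leq -\varepsilon < 0$ and $h_\varepsilon(x) = 0$. Consider $S \coloneqq \{ z \in [x,b] : h_\varepsilon(w) \leq 0 \ \forall w \in [x,z] \}$ and let $c \coloneqq \sup S$. By continuity of $h_\varepsilon$, $c \in S$, so $h_\varepsilon(c) \leq 0$.

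If $c < b$, then $\partial_+ h_\varepsilon(c) < 0$ gives some $\eta > 0$ such that $h_\varepsilon(w) < h_\varepsilon(c) \leq 0$ for all $w \in (c, c+\eta)$. This contradicts the maximality of $c$. Hence $c = b$, which means $g(y) \leq g(x) + \varepsilon(y-x)$ for all $y \in [x,b]$. Letting $\varepsilon \to 0$ yields $g(y) \leq g(x)$, and multiplying by $e^{-Ky}$ gives the claim.

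The main delicate point is this supremum argument. One must check that $c \in S$, which follows because $S$ is an interval and $h_\varepsilon$ is continuous, so the closed condition passes to the limit. One must also use the strictness $\partial_+ h_\varepsilon < 0$, rather than just $\leq 0$, to step strictly past $c$; this is exactly why the $\varepsilon$-perturbation is introduced. The product rule for right derivatives is routine: write the difference quotient of $e^{Kz}f(z)$ and use continuity of $e^{Kz}$.
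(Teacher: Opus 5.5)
Your proof is correct. Its outer structure matches the paper's: multiply by the integrating factor $e^{Kx}$, use the product rule for right derivatives to get $\partial_+(e^{Kx}f)\le 0$, and reduce to the fact that a continuous function with non-positive right derivative on $[a,b)$ is non-increasing. As you note, non-negativity of $f$ is never used, in either argument.

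The two proofs differ in how they establish that monotonicity fact. The paper argues by contradiction through a Rolle-type lemma. If $h(c)<h(d)$ for some $c<d$, it subtracts the secant line to get a continuous function with equal values at $c$ and $d$. It then takes a maximum point in $[c,d)$ by the extreme value theorem, and reads off $\partial_+ h\ge (h(d)-h(c))/(d-c)>0$ there. The strictness needed for the contradiction comes for free from the positive secant slope, so no perturbation is required.

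You instead run a ``continuous induction'' on $\sup S$ via the least-upper-bound property. Stepping strictly past $c$ requires $\partial_+ h_\varepsilon(c)<0$, which is why you need the $\varepsilon$-slack and the final limit $\varepsilon\to 0$. Both arguments use continuity at the same essential point: left-continuity at $c$ to get $c\in S$ in yours, and attainment of the maximum in the paper's.

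One small difference in robustness. Your step ``$h_\varepsilon(w)<h_\varepsilon(c)$ for all $w\in(c,c+\eta)$'' uses that the difference quotients are eventually negative, which is an upper-limit condition. The extremum argument would already contradict the weaker hypothesis that only the $\liminf$ of the right difference quotients of $h$ is $\le 0$. Since the lemma assumes $\partial_+ f$ exists, this distinction is immaterial here.
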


Note that the condition from \cref{lem:full analysis result} that $f$ should be continuous is crucial, as the step function defined by $f(x)=0$ if $x < 0$ and $f(x)=1$ if $x \geq 0$ is everywhere right-differentiable and has non-positive right derivative everywhere, but it is strictly increasing in any interval containing $0$.

\begin{lem} \label{lem:fund thm calculus right diff functions}
Let $f : [a,b] \rightarrow \mathbb{R}$ be a continuous, right-differentiable, real-valued function on the finite interval $[a,b]$. Suppose that $\partial_+ f(x)$ is upper bounded by some continuous function $B$ on $[a,b)$, i.e. for all $s \in [a,b)$, $\partial_+ f(s) \leq B(s)$. Then, for any $  x, y \in (a,b)$ with $x \leq y$, we have
\begin{align*}
f(y) \leq f(x) + \int_x^y dt \ B(t) .
\end{align*}
\end{lem}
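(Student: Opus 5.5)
The plan is to prove \cref{lem:fund thm calculus right diff functions} by an auxiliary-function argument: we add a slack $\eps$ and use a "first failure point" to contradict continuity. Fix $x<y$ in $(a,b)$ and $\eps>0$, and set
\begin{align*}
g_\eps(s) \coloneqq f(s) - f(x) - \int_x^s B(t)\,dt - \eps (s-x), \qquad s\in[x,y].
\end{align*}
Since $B$ is continuous on $[a,b)$, it is continuous on $[x,y]$, so the integral is well defined. By the ordinary fundamental theorem of calculus, $s\mapsto \int_x^s B$ is differentiable with derivative $B(s)$. Hence $g_\eps$ is continuous on $[x,y]$, right-differentiable on $[x,y)$, and satisfies $\partial_+ g_\eps(s) = \partial_+ f(s) - B(s) - \eps \le -\eps <0$. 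It then suffices to show $g_\eps(y)\le 0$ for every $\eps>0$ and let $\eps\to 0$.

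To show $g_\eps\le 0$ on $[x,y]$, I will argue by contradiction via a supremum. Let $S\coloneqq\{s\in[x,y] : g_\eps(r)\le 0 \ \forall r\in[x,s]\}$. Then $x\in S$ because $g_\eps(x)=0$. Let $s^*\coloneqq\sup S$; continuity gives $g_\eps(s^*)\le 0$ and $s^*\in S$. Suppose $s^*<y$. Because $\partial_+ g_\eps(s^*)<0$, there is $\delta>0$ with $g_\eps(r) < g_\eps(s^*)\le 0$ for all $r\in(s^*,s^*+\delta)$. Thus $s^*+\delta/2\in S$, contradicting maximality. Therefore $s^*=y$ and $g_\eps(y)\le0$, which gives $f(y)\le f(x)+\int_x^y B + \eps(y-x)$ for every $\eps$, and the claim follows.

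The only delicate point is the step at $s^*$. Strict negativity of the right derivative is what yields a right neighbourhood where $g_\eps$ stays below $g_\eps(s^*)$; this is why the slack $\eps$ is needed, since a mere $\le 0$ right derivative would not suffice. Continuity is what ensures $g_\eps(s^*)\le0$ at the supremum. The step-function example after \cref{lem:full analysis result} shows that this continuity hypothesis cannot be dropped. Everything else is routine.
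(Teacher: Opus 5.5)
Your proof is correct, but it is organized differently from the paper's. The paper goes through two auxiliary propositions:
\begin{itemize}
\item a Rolle-type statement: a continuous, right-differentiable $g$ with equal endpoint values has $\partial_+ g\le 0$ at a point where its maximum is attained, and such a point exists by the extreme value theorem;
\item the fact that a continuous function with $\partial_+ h\le 0$ everywhere is non-increasing. This is proved by contradiction: tilt $h$ by the secant slope $(h(d)-h(c))/(d-c)>0$ of a hypothetical increase, then apply the Rolle-type statement.
\end{itemize}
The lemma then follows by applying this to $h(t)=f(t)-\int_x^t B$.

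You instead tilt by an artificial slope $\varepsilon$, run a ``first failure point'' supremum argument directly on $g_\varepsilon$, and let $\varepsilon\to 0$. Both routes use continuity in the same essential place: attainment of the maximum in the paper, and $g_\varepsilon(s^*)\le 0$ at the supremum in yours. Both also need a strict margin in the right derivative; the paper's comes from the contradiction hypothesis $h(c)<h(d)$, while yours is inserted by hand.

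The paper's modular route pays off because the monotonicity proposition is reused for \cref{lem:full analysis result}. Your route is self-contained. By working only on $[x,y]\subset(a,b)$, where $B$ is continuous on a compact interval, it also avoids a small imprecision in the paper. The paper defines $h$ on all of $[a,b]$, even though $\int_x^b B$ need not exist when $B$ is only assumed continuous on $[a,b)$.
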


The proofs of \cref{lem:full analysis result} and \cref{lem:fund thm calculus right diff functions} are given below (page \pageref{proof:of full analysis result} and \pageref{proof:of fundumentla thm calc result} respectively). Both proofs however rely on several intermediary result, which we shall first establish. The following result can be seen as a variant of \emph{Rolle's theorem} (see e.g. \cite[pp. 260-266]{Anton1998} or \cite[p. 184]{Apostol1967}), just for right-differentiable functions instead of ordinary differentiable functions.

\begin{prp} \label{prop:rolle generalization}
Let $g : [a,b] \rightarrow \mathbb{R}$ be a continuous, right-differentiable, real-valued function on the finite interval $[a,b] \subset \mathbb{R}$ satisfying $f(a) = f(b)$. Then, there will exist at least one point $y \in [a,b)$ for which $\partial_+ g(y) \leq 0$.
\end{prp}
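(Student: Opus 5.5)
The plan is to mimic the classical proof of Rolle's theorem via the extreme value theorem. (The statement writes $f(a)=f(b)$; we read this as $g(a)=g(b)$.) Since $g$ is continuous on the compact interval $[a,b]$, it attains a maximum $M$ and a minimum $m$ there. We then split into cases according to where these extrema are located, and in each case produce a point $y\in[a,b)$ with $\partial_+ g(y)\le 0$.

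\emph{First case: the maximum is attained at some point of $[a,b)$.} Let $y \in [a,b)$ with $g(y)=M$. For every $z\in(y,b]$ the difference quotient $\frac{g(z)-g(y)}{z-y}$ is $\le 0$, since the numerator is $\le 0$ and the denominator is $>0$. Because $g$ is right-differentiable at $y$, the limit of these quotients exists, and it must therefore be $\le 0$. So $\partial_+ g(y)\le 0$.

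\emph{Second case: the maximum is attained only at $b$.} Then $g(b)=M$, and $g(a)=g(b)=M$, so $a\in[a,b)$ is also a maximizer, contradicting the case assumption. Hence the first case always applies. Notably, this means we never need the minimum or the interior structure at all: the hypothesis $g(a)=g(b)$ alone guarantees that a maximizer lies in $[a,b)$.

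The only point requiring a little care is the limit step. The right derivative is defined as the limit along sequences confined to $(y,b]$. A limit of non-positive numbers is non-positive, so the inequality survives; this also requires $y<b$, which the case analysis guarantees. I therefore expect no real obstacle. The main subtlety is simply to observe that a maximizer in $[a,b)$ always exists, thanks to $g(a)=g(b)$, rather than attempting an interior-extremum argument that would need two-sided derivatives.
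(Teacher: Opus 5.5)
Your proposal is correct and follows the paper's proof essentially verbatim: take a maximizer of $g$ on $[a,b]$, use $g(a)=g(b)$ to move it into $[a,b)$ if it sits at $b$, and note that the right difference quotients there are non-positive, so their limit $\partial_+ g$ is too. Your reading of the typo $f(a)=f(b)$ as $g(a)=g(b)$ matches the paper's intent; the mention of the minimum is superfluous, as you observe yourself.
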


\begin{proof}
Since $g : [a,b] \rightarrow \mathbb{R}$ by assumption is continuous, and since $[a,b]$ is a finite interval and hence a compact set, the maximum $\max_{x \in [a,b]} g(x)$ exists and is achieved at some $x_{max} \in [a,b]$, i.e.  $\max_{x \in [a,b]} g(x) = g(x_{max})$. Note now that we can w.l.o.g. assume $x_{max} \in [a,b)$, since if $x_{max} = b$, we have by the assumption $f(a) = f(b)$ that $\max_{x \in [a,b]} g(x) =  g(a) = g(b)$, hence $a$ also achieves the maximum. Again by an assumption of the proposition, $\partial_+ g(x_{max})$ is well-defined (since now $x_{max} \in [a,b)$). Now, we must have $\partial_+ g(x_{max}) \leq 0$, since by assumption of $g(x_{max})$ being a maximum, we have $g(y)-g(x_{max}) \leq 0$ for all $y \geq x_{max}$, which gives us 
\begin{align*}
\partial_+ g(x_{max} ) = \lim_{y \rightarrow x \ : \ y \in (x_{max},b] } \frac{f(y)-f(x_{max})}{y-x_{max}} \leq \lim_{y \rightarrow x \ : \ y \in (x_{max},b] } \frac{0}{y-x_{max}} \leq 0 ,
\end{align*}
where we have used that the limit must converge by assumption. This proves the proposition. 
\end{proof}

\cref{prop:rolle generalization} above allows us to prove the following further result. 

\begin{prp} \label{prop:right derivative negative implies decreasing}
Let $h : [a,b] \rightarrow \mathbb{R}$ be a continuous, right-differentiable, real-valued function on the finite interval $[a,b]$. Suppose that $\partial_+ h(x)$ non-positive for all $x \in [a,b)$. Then $h$ must be a non-increasing function, i.e. for all $a \leq x \leq y \leq b$, we have $h(x) \geq h(y)$. 
\end{prp}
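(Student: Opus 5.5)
The plan is to argue by contradiction, reducing the claim to \cref{prop:rolle generalization} applied to a suitably perturbed function. Suppose there exist $a \leq x < y \leq b$ with $h(x) < h(y)$. Since $\partial_+ h \leq 0$ only gives non-strict information, applying \cref{prop:rolle generalization} to $h$ itself cannot produce a contradiction. We therefore tilt $h$ by a linear term so that its right derivative becomes strictly negative everywhere. Concretely, set $\eps \coloneqq (h(y)-h(x))/(2(y-x)) > 0$ and define $g(z) \coloneqq h(z) - \eps (z - x)$ on $[x,y]$. Then $g$ is continuous and right-differentiable on $[x,y)$, with $\partial_+ g(z) = \partial_+ h(z) - \eps \leq -\eps < 0$ for every $z \in [x,y)$. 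Moreover, $g(y) - g(x) = h(y)-h(x) - \eps(y-x) = (h(y)-h(x))/2 > 0$.

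Next we need equal endpoint values in order to invoke \cref{prop:rolle generalization}. By continuity of $g$ and the intermediate value theorem, there is a point $c \in [x,y)$ with $g(c) = g(x)$ and $g(z) > g(x)$ for all $z \in (c,y]$. To see this, take $c$ to be the last zero of $g - g(x)$ on $[x,y]$; this zero lies strictly before $y$ because $g(y) > g(x)$. Alternatively, and more simply, we can choose a point $w \in (x,y]$ and argue directly.

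The cleanest route avoids Rolle entirely. Let $c \coloneqq \sup\{ z \in [x,y] : g(z) \leq g(x)\}$. By continuity $g(c) \leq g(x)$, and in fact $g(c)=g(x)$ when $c>x$, again by continuity and the definition of the supremum. Since $g(y) > g(x)$, we have $c < y$, and $g(z) > g(x) \geq g(c)$ for all $z \in (c,y]$. Hence every difference quotient $(g(z)-g(c))/(z-c)$ with $z \in (c,y]$ is positive, so the limit satisfies $\partial_+ g(c) \geq 0$. This contradicts $\partial_+ g(c) \leq -\eps < 0$. If one prefers to use \cref{prop:rolle generalization} as stated, the same contradiction follows by applying it to $g$ on an interval where $g$ takes equal endpoint values: it yields a point with $\partial_+ g \leq 0$, which is consistent with our bound. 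This is why the strict perturbation and the supremum argument are needed, rather than Rolle alone.

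The main obstacle is exactly this point. A bound of the form $\partial_+ h \leq 0$ is not strong enough to contradict a Rolle-type conclusion, so the linear tilt by $\eps$ is essential, and the choice of $c$ as a supremum, together with continuity, is what makes the sign of the difference quotients controlled. Throughout, the continuity of $h$ is indispensable: the step-function example noted after \cref{lem:full analysis result} shows the statement fails without it.
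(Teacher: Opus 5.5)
Your supremum argument is correct, and it follows the same idea as the paper's proof. Both tilt $h$ by a linear function built from the secant slope, and then pick a point from which the tilted function never drops below its value there, so that every right difference quotient at that point has a definite sign. The two proofs differ in how that point is chosen:
\begin{itemize}
\item The paper uses the full slope $m=\frac{h(d)-h(c)}{d-c}$ with a sign flip, $g(z)=h(d)-h(z)+m(z-d)$, so that $g(c)=g(d)=0$. It then applies \cref{prop:rolle generalization}, whose proof takes a global maximum of $g$ on the compact interval. This gives a point with $\partial_+ g\le 0$, i.e.\ $\partial_+ h\ge m>0$.
\item You tilt by $m/2$, so that $\partial_+ g\le -\eps<0$ and $g(y)>g(x)$. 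You then take the last point of the closed sublevel set $\{z: g(z)\le g(x)\}$.
\end{itemize}
Your version is self-contained and avoids the intermediate proposition. The paper's version reuses its Rolle-type lemma and needs no halving of the slope.

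One correction: your remark that \cref{prop:rolle generalization} ``alone'' cannot give the contradiction is wrong. It fails only in the orientation you tried. Applied to the negated tilted function, as the paper does, its conclusion $\partial_+ g\le 0$ becomes $\partial_+ h\ge m>0$, which directly contradicts the hypothesis. Your sentence saying ``the same contradiction follows'' and then calling the result ``consistent with our bound'' contradicts itself and should be removed. The vague aside about choosing a point $w$ and arguing directly adds nothing and can go too.
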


\begin{proof}
Consider any $c,d \in [a,b]$ such that $c < d$, and suppose for contradiction that we had $h(c) < h(d)$. Define now the new function $g : [a,b] \rightarrow \mathbb{R}$ by
\begin{align} \label{eq:app:def of G function}
g(x) = h(d)-h(x)+\frac{h(d)-h(c)}{d-c} (x-d) .
\end{align}
It follows from $h(x)$ being continuous, real-valued and right-differentiable, that $g(x)$ is also continuous, real-valued and right-differentiable. Note further that we have $g(c)=g(d) = 0$, which means that $g$ satisfies all the conditions from \cref{prop:rolle generalization}. Thus, by \cref{prop:rolle generalization}, there exists a point $y \in [c,d) \subseteq [a,b)$ at which we have $\partial_+ g(y) \leq 0$. But by also computing $\partial_+ g(y)$ from the definition \cref{eq:app:def of G function} of $g$, we get
\begin{equation} \label{eq:app:step prop decreasing}
\begin{aligned}
0 \geq \partial_+ g(y) = \partial_+ \left( h(d)-h(y)+\frac{h(d)-h(c)}{d-c} (y-d)\right) = - \partial_+ h(y) + \frac{h(d)-h(c)}{d-c}  \\ 
\Rightarrow \partial_+ h (y) \geq \frac{h(d)-h(c)}{d-c} > 0 ,
\end{aligned}
\end{equation}
where we have lastly used $h(d) > h(c)$ which we assumed for contradiction, and we have also used that $\partial_+$ must be linear and must by definition act like the normal derivative on differentiable functions. However, \cref{eq:app:step prop decreasing} implies $\partial_+ h(y) > 0$ for $y \in [a,b)$, which contradicts the assumption from the proposition. 
\end{proof}

We are now finally ready to prove \cref{lem:full analysis result}

\begin{proof}[Proof of \cref{lem:full analysis result}] \label{proof:of full analysis result}

Let $f : [a,b] \rightarrow \mathbb{R}$ be a continuous, right-differentiable, real-valued, non-negative function on the finite interval $[a,b] \subset \mathbb{R}$, and suppose that its right derivative is bounded by $\partial_+ f(x) \leq - K f(x)$ for all $x \in [a,b)$, with $K \geq 0$ being some non-negative constant. Consider now the new function $F(x) \coloneqq f(x) e^{Kx}$ for any $x \in [a,b]$. $F(x)$ is clearly real-valued, continuous and non-negative since it is a product of functions with these properties. It is also not difficult to prove that $\partial_+$ obeys the product rule in a similar way to the ordinary derivative, i.e. if $g,h$ are right-differentiable at $x$, then so is $gh$, and we specifically have $\partial_+ (gh)(x) = g(x) \partial_+ h(x) + h(x) \partial_+ g(x)$. This establishes that $F(x) = f(x) e^{Kx}$ is also right-differentiable, and we further have for any $x \in [a,b)$
\begin{align} \label{eq:app:proof final analysis s1}
\partial_+ F(x) = \partial_+ \left( f(x) e^{Kx} \right) = f(x) \partial_+ e^{Kx} + e^{Kx} \partial_+ f(x) = \left( K f(x) + \partial_+ f(x) \right) e^{Kx} .
\end{align} 
We now use the assumption $\partial_+ f(x) \leq - K f(x)$ in \cref{eq:app:proof final analysis s1} above, which entails $\partial_+ F(x) \leq 0$ for any $x \in [a,b)$. This, along with the earlier established properties of $F(x)$ shows that $F(x)$ satisfies all the conditions of \cref{prop:right derivative negative implies decreasing}, which then proves that $F(x)$ is a non-increasing function on $[a,b]$. Hence, for any $x,y \in [a,b]$ with $x \leq y$, we have 
\begin{align*}
F(x) \geq F(y) \ \Rightarrow \ f(x) e^{Kx} \geq f(y) e^{K y} \ \Rightarrow \ f(y) \leq f(x) e^{- K(y-x)} .
\end{align*}
\end{proof}

\cref{prop:right derivative negative implies decreasing} also allows us to prove \cref{lem:fund thm calculus right diff functions}.

\begin{proof}[Proof of \cref{lem:fund thm calculus right diff functions}] \label{proof:of fundumentla thm calc result}
Consider the function $h : [a,b] \rightarrow \mathbb{R}$ given by $h(t) := f(t) - \int_x^t dt \ B(t)$ for all $t \in [a,b]$. Since $B$ is by assumption continuous, we get by the (ordinary) fundamental theorem of calculus that for any $x\in[a,b]$, $\int_x^t dt \ B(t)$ is differentiable, and
\begin{align} \label{eq:int diff fund}
\partial_{t ,+} \int_x^t dt' \ B(t') = \frac{d}{dt} \int_x^t dt' \ B(t') = B(t)  .
\end{align}
Since $\partial_+ f(t)$ was by assumption everywhere defined on $[a,b]$ and upper bounded by $\partial_+ f(t) \leq B(t)$, we get by \cref{eq:int diff fund} that $\partial_+ h(t)$ is everywhere defined on $[a,b]$ and equals $\partial_+ h(t) = \partial_+ f(t) - B(t) \leq 0$. Further, since $f$ and $B$ were by assumption continuous, $h$ will also be continuous. Hence, $h : [a,b] \rightarrow \mathbb{R}$ is a continuous, real, right-differentiable function whose right derivative is non-positive, which by \cref{prop:right derivative negative implies decreasing} implies that $h$ is non-increasing in the relevant interval. Thus, for any $a \leq x \leq y \leq b$ we have
\begin{align*}
h(y) \leq h(x) \ \Rightarrow \ f(y) - \int_x^y dt \ B(t) \leq f(x) - \int_x^x dt B(t) \ \Rightarrow \ f(y) \leq f(x) + \int_x^y dt \ B(t) .
\end{align*}
\end{proof}

\section{Calculations related to the proofs of the main results}

\subsection{Computing the right time-derivative of the 1-norm distance between two time-evolving states} \label{app:1 norm right derivative calc}

We here prove \cref{prop:right time derivative form} (restated below), which gives us an exact expression for the right time-derivative $\partial_{t,+} \norm{\mathcal{E}_{t,s}(\rho - \sigma)}_1$ of the quantity $\norm{\mathcal{E}_{t,s}(\rho - \sigma)}_1$ (see \cref{eq:def:right derivative} for the definition of the right derivative). This expression is everywhere well-defined, which thus shows that $\norm{\mathcal{E}_{t,s}(\rho - \sigma)}_1$ is a right-differentiable function of time.
\\
\\
\textbf{
\begin{NoHyper}
\cref{prop:right time derivative form}
\end{NoHyper}
(restated).} 
\emph{
	Let $\L_t = - i [H(t),\,\cdot\,] + \sum_j \D_{L_j}$ be a driven Lindbladian with corresponding time-evolution channel $\mathcal{E}_{t,s}$.
	For any two states $\rho,\sigma$, let $x(t)\coloneqq\mathcal{E}_{t,s}(\rho - \sigma)$.
	Then $\norm{x(t)}_1$ is right–differentiable for all $t\ge s$.
	To compute its right derivative at time $t$, let $y(t)\coloneqq P_{\ker(x(t))}\D x(t)P_{\ker(x(t))}$ be the projection of $\D x(t)$ onto the kernel of $x(t)$ and choose a basis $\{\ket{e_k}\}$ that diagonalizes $x(t)$ and $y(t)$ simultaneously, such that $x(t)=\sum_k \lambda_k\proj{e_k}$.
	Finally, let $S_{x(t)}=\sgn(x(t)+y(t))=\sum_ks_k\proj{e_k}$ and $f_{x(t)}(k,l)=1-s_ks_l$.
	Then,
	\begin{equation} \label{eq:result-deriv}
		\partial_{t,+}\norm{x(t)}_1=-\sum_{k,l}|\lambda_k|f_{x(t)}(k,l)\sum_j|\langle e_l|L_j|e_k\rangle|^2.
	\end{equation}
}

\begin{proof} \label{proof:of right derivative form}
Let $\eps>0$ and write
\begin{equation} \label{eq:proof:d+ 1eq}
		x(t+\eps)-x(t)=\eps\Delta(t)+o(\eps) ,
\end{equation}
with
\begin{equation}
		\Delta(t)=\frac{1}{\eps}\int_t^{t+\eps}-i[H(\tau),x(t)]+\mathcal{D} x(t) \,d\tau,
		\label{eq:Delta}
	\end{equation}
as follows from the Lindblad master equation \eqref{eq:time_dep_lindblad}. Next, we use the result that for a finite Hermitian operator $x$ (dropping the $t$ argument in the following) with sufficiently small perturbation $\eps\Delta$, first-order perturbation theory \cite[sec. 2.6]{Kato1995} gives
	\begin{equation} \label{eq:proof:1st order perturb theory result}
		\norm{x+\eps\Delta}_1=\norm x_1+\tr(\sgn(x)\eps\Delta)+\norm{\eps\Delta|_{\ker x}}_1+R_{x, \Delta}(\epsilon),
	\end{equation}
where $R_{x, \Delta}(\epsilon)$ is a term which for fixed $x , \Delta$ is an $o(\epsilon)$ function. To see this, note that if we write $x=\sum_k \lambda_k \ket{e_k} \bra{e_k}$, the eigenvalues of $x + \epsilon \Delta$ can be approximated by $\lambda_k + \braket{e_k | \eps \Delta | e_k} + r_k(\epsilon, x  , \Delta)$ (in an appropriately chosen eigenbasis for $x$), giving us $\norm{x+\eps\Delta}_1 - \norm x_1 = \sum_k |\lambda_k + \braket{e_k | \eps \Delta | e_k} + r_k(\epsilon, x  , \Delta)| - |\lambda_k|$. Here, $r_k(\epsilon, x  , \Delta)$ is for all $k$ again some term that for fixed $x , \Delta$ is an $o(\epsilon)$ function. For sufficiently small $\epsilon$ (dependent on $x$), the result \cref{eq:proof:1st order perturb theory result} follows by identifying to first order the term $\tr(\sgn(x)\eps\Delta)$ with the sum over those $k$'s for which $\lambda_k \neq 0$, and the term $\norm{\eps\Delta|_{\ker x}}_1$ with the sum over those $k$'s for which $\lambda_k = 0$. \cref{eq:proof:1st order perturb theory result} can be simplified by introducing the sign matrix
	\begin{equation}
		S_{x}\coloneqq \sgn (x) + \sgn \left( \mathcal{D}x |_{\ker x } \right) = \sgn(x) + \sgn(\Delta|_{\ker x})  ,
		\label{eq:sign-matrix}
	\end{equation}
	such that
	\begin{equation}
		\norm{x+\eps\Delta}_1-\norm x_1=\eps \tr(S_x \Delta)+R_{x, \Delta}(\epsilon),
		\label{eq:simplified-deriv}
	\end{equation}
where the second equality in \cref{eq:sign-matrix} follows from the fact that the Hamiltonian term drops out of $\sgn(\Delta|_{\ker x})$, due to $( -i [H(\tau) , x])|_{\ker x} = P_{\ker x} [H(\tau) , x] P_{\ker x} = 0$ as $x P_{\ker x} = P_{\ker x } x = 0$. We wish to bound the right-hand side of \cref{eq:simplified-deriv} with $x$ given in \cref{eq:proof:d+ 1eq} for $\eps\to0$. To this end, we write $\Delta$ as the sum of two terms, one containing the Hamiltonian part, and the other the dissipator.
	First, we note that the Hamiltonian part vanishes, since $[S_x,x]=0$,
	\begin{equation}
		\tr[-iS_x[H(\tau),x(t)]]=\tr[-i[x(t),S_x]H(\tau)]=0,
		\label{eq:hamiltonian-part-vanishes}
	\end{equation}
	hence $\tr(S_x \Delta) = \tr(S_x \mathcal{D}x)$. Since this dissipator part is independent of $\eps$, we can now take the limit $\eps\to0$ in \cref{eq:simplified-deriv}, getting rid of all $o(\epsilon)$ terms, and obtain via \cref{eq:proof:d+ 1eq}
	\begin{equation}
    \partial_{t,+}\norm{x(t)}_1 = \lim_{\epsilon \rightarrow 0^+} \frac{\norm{x(t + \epsilon)}_1-\norm{x(t)}_1}{\epsilon} = \lim_{\epsilon \rightarrow 0^+} \frac{\norm{x(t)+\eps\Delta(t)}_1-\norm{x(t)}_1}{\epsilon}  =\tr(S_{x(t)}\D x(t) ).
		\label{eq:right-deriv2}
	\end{equation}
The remaining steps are writing \cref{eq:right-deriv2} in the eigenbasis that diagonalizes $x(t)$ and $S_{x(t)}$ (which exists because $[S_{x(t)},x(t)]=0$),
	\begin{equation*}
	x(t)=\sum_k\lambda_k\proj{e_k},\quad S_{x(t)}=\sum_k s_k\proj{e_k}.
		\label{eq:eigenbasis}
	\end{equation*}
	Using $S_{x(t)}x(t)=|x(t)|$ and cycling terms inside the trace gives
	\begin{equation} \label{eq:proof:normalizing term}
	\tr\left( S_{x(t)} \sum_j \frac12 \left\{ L_j\dagg L_j,x(t)\right\} \right)=\sum_{k,j}|\lambda_k|\bra{e_k}L_j\dagg L_j \ket{e_k} = \sum_{j,k,l} |\lambda_k| |\langle e_l|L_j|e_k\rangle|^2,
	\end{equation}
	whereas the jump term is
	\begin{equation} \label{eq:proof:jump term}
\tr\left(S_{x(t)} \sum_j L_jx(t)L_j\dagg \right)=\sum_{j,k,l}s_l\lambda_k|\langle e_l|L_j|e_k\rangle|^2 = \sum_{j,k,l} s_l s_k |\lambda_k| |\langle e_l|L_j|e_k\rangle|^2,
	\end{equation}
where we have used hence $s_k \lambda_k = | \lambda_k |$. Subtracting the RHS of \cref{eq:proof:normalizing term} from \cref{eq:proof:jump term}, and using the result in \cref{eq:right-deriv2} gives
	\begin{equation*}
\partial_{t,+}\norm{x(t)}_1=\sum_{j,k,l}|\lambda_k|\left(s_ls_k-1\right)
			|\langle e_l|L_j|e_k\rangle|^2,
	\end{equation*}
	which establishes the result.
\end{proof}

\subsection{Improving the contraction rate in \texorpdfstring{\cref{thm:thm contr frm orthog vects}}{thm orthogonal vectors}} \label{app:sec:improved contraction rate}

We here complete the proof of \cref{thm:thm contr frm orthog vects}, by showing that the condition $r(\mathcal{D}) > 0$ leads to a contraction rate of at least $\gamma \geq r(\mathcal{D}) d$, thus potentially improving upen the contraction rate proven in \cref{subsec:proof orthogonal vectors}. 

\begin{prp} \label{prop:improved contraction rate}
Consider a dissipator $\mathcal{D} = \sum_j \mathcal{D}_{L_j}$. Recall from \cref{eq:r def thm} that we define
\begin{align} \label{eq:r2 def}
r (\mathcal{D}) \coloneqq \min_{\begin{matrix}
\| u \| , \| v \| = 1 \\
\braket{u | v} = 0 \\
\end{matrix}} \sum_j  |\bra{v} L_j \ket{u}|^2 .
\end{align}
Suppose $r(\mathcal{D}) > 0$. Then $\mathcal{D}$ will be Hamiltonian-independently contractive, with a contraction rate of at least $\gamma \geq  d \ r(\mathcal{D})$, with $d=\dim(\mathcal{H})$ being the Hilbert space dimension, and the constant $K$ from \cref{def:exp contraction} can again be chosen as $K=1$.
\end{prp}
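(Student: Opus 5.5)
We follow the proof of \cref{thm:thm contr frm orthog vects}, starting again from the exact right-derivative formula of \cref{prop:right time derivative form}. With $x(t)=\mathcal{E}_{t,s}(\rho-\sigma)=\sum_k\lambda_k\proj{e_k}$ and signs $s_k$, we have
\begin{align*}
\partial_{t,+}\norm{x(t)}_1=-\sum_{k,l}|\lambda_k|f_{x(t)}(k,l)\sum_j|\langle e_l|L_j|e_k\rangle|^2 .
\end{align*}
Diagonal terms $k=l$ contribute nothing when $\lambda_k\neq0$ (there $s_k^2=1$), and all terms are non-positive, so we may drop whatever we like. The point is to use every pair $(k,l)$ with $k\neq l$ individually, rather than only the minimal opposite-sign pairs as in the main proof.

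\emph{Step 1.} Let $P=\{k:\lambda_k>0\}$ and $N=\{k:\lambda_k<0\}$. Keep only the terms with $k\in P$, $l\notin P$, and the terms with $k\in N$, $l\notin N$. For each kept term, $\lambda_k\neq0$ gives $s_k=\sgn(\lambda_k)$, and $l$ lies outside the sign class of $k$. Hence $s_l\neq s_k$, so $s_ks_l\le0$ and $f_{x(t)}(k,l)\ge1$. This is true whatever $s_l$ is for kernel vectors, which is why kernel eigenvectors can be used as partners $l$; that is the key observation.

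\emph{Step 2.} Bound each $\sum_j|\langle e_l|L_j|e_k\rangle|^2\ge r(\mathcal D)$ using the definition \cref{eq:r def thm} with $u=e_k$, $v=e_l$, which are orthonormal. This gives
\begin{align*}
\partial_{t,+}\norm{x(t)}_1\le -r(\mathcal D)\Big(\sum_{k\in P}|\lambda_k|\,(d-|P|)+\sum_{k\in N}|\lambda_k|\,(d-|N|)\Big).
\end{align*}
Since $x(t)$ is traceless, $\sum_{k\in P}|\lambda_k|=\sum_{k\in N}|\lambda_k|=\norm{x(t)}_1/2$. The bracket therefore equals $\frac{\norm{x(t)}_1}{2}\,(2d-|P|-|N|)\ge \frac{d}{2}\norm{x(t)}_1$, because $|P|+|N|\le d$. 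This only yields rate $d\,r/2$.

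\emph{Step 3 (main obstacle).} Recovering the full factor $d$ requires a better count. Kernel-vs-kernel pairs were dropped, but those have $|\lambda_k|=0$ and are useless. The loss is in the $f$-values: for opposite-sign pairs $k\in P$, $l\in N$ we have $f=2$, not $1$. Counting this properly, each $k\in P$ gets weight $2|N|+|Z|$ where $Z$ is the kernel index set, and each $k\in N$ gets $2|P|+|Z|$. This gives
\begin{align*}
\tfrac{\norm{x}_1}{2}\big(2|N|+2|P|+2|Z|\big)=d\,\norm{x(t)}_1 .
\end{align*}
So I would carefully redo Step 1 keeping $f=2$ for opposite-sign pairs. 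Care is needed with the kernel signs $s_l\in\{-1,0,1\}$: a kernel vector with $s_l=+1$ gives $f=0$ against $k\in P$, which breaks the count above. I would handle this by symmetrizing. The pair $(k,l)$ contributes with $|\lambda_k|$, and the pair $(l,k)$ contributes nothing when $l$ is in the kernel. So the right fix is to choose, for each kernel $l$, which class it pairs against: with $s_l=+1$ it still gives $f=2$ against every $k\in N$. The worst case is then where the kernel signs cause a deficit. If this cannot be closed, I would fall back to asserting the rate $d\,r(\mathcal D)$ using the bound $\min(\text{weights})$ over the sign assignments, checking that $2|N|+|Z_-|\cdot 2$ versus $2|P|+|Z_+|\cdot2$ still averages to $d$ against the equal masses $\norm{x}_1/2$.

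\emph{Step 4.} Having $\partial_{t,+}\norm{x(t)}_1\le -d\,r(\mathcal D)\norm{x(t)}_1$, apply \cref{lem:full analysis result}, using the continuity and non-negativity of $\norm{x(t)}_1$ as in the main proof. This gives $\norm{x(t)}_1\le e^{-d\,r(\mathcal D)(t-s)}\norm{x(s)}_1$, so $K=1$.
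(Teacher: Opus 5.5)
Your argument can be completed, but not as written. The justification in Step 1 is false. For a kernel index $l$, $s_l$ is the sign of the corresponding eigenvalue of $y=P_{\ker x}\mathcal{D}(x)P_{\ker x}$. This sign can equal $s_k$, and then $f(k,l)=0$. So Steps 1--2 should simply be dropped, as you notice yourself in Step 3.

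The check you propose at the end of Step 3 does succeed, but it is the sum over the two sign classes that works, not a minimum. The per-class weights can individually be far below $d$: if $|N|=1$ and every kernel sign is $+1$, the bound for each $k\in P$ is only $2r(\mathcal{D})$. Write $Z_+,Z_-,Z_0$ for the kernel indices with $s_l=+1,-1,0$, and $w_{lk}=\sum_j|\bra{e_l}L_j\ket{e_k}|^2\ge r(\mathcal{D})$ for $l\neq k$. Then each $k\in P$ satisfies $\sum_l f(k,l)w_{lk}\ge r(\mathcal{D})(2|N|+2|Z_-|+|Z_0|)$, and each $k\in N$ satisfies $\sum_l f(k,l)w_{lk}\ge r(\mathcal{D})(2|P|+2|Z_+|+|Z_0|)$. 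Since $\sum_{k\in P}|\lambda_k|=\sum_{k\in N}|\lambda_k|=\norm{x(t)}_1/2$, the total is at least $\frac{r(\mathcal{D})\norm{x(t)}_1}{2}(2|P|+2|N|+2|Z_+|+2|Z_-|+2|Z_0|)=d\,r(\mathcal{D})\norm{x(t)}_1$. Equivalently, group by the target index $l$. Each of the $d$ indices receives weight at least $r(\mathcal{D})\norm{x(t)}_1$: from the opposite class with $f=2$ if $s_l=\pm1$, or from both classes with $f=1$ if $s_l=0$. With this inequality, Step 4 goes through as you say.

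Once repaired, this is a genuinely different route from the paper's. The paper does not use the exact formula of \cref{prop:right time derivative form}. Instead it writes $x$ as a multiple of $\rho'-\sigma'$, with $\rho',\sigma'$ the normalized, mutually orthogonal positive and negative parts. It then uses first-order eigenvalue perturbation to show that each part is lifted to $\rho'(\delta t)\succeq r(\mathcal{D})\,\delta t\,I+o(\delta t)$. The reason is that a zero eigenvalue of $\rho'$ moves at first order by $\sum_k p_k\sum_j|\bra{e_i}L_j\ket{e_k}|^2\ge r(\mathcal{D})$, where the eigenvalues $p_k$ of $\rho'$ sum to one. Finally, the triangle inequality gives $\norm{\rho'(\delta t)-\sigma'(\delta t)}_1\le\tr(\rho'(\delta t)-r\,\delta t\,I)+\tr(\sigma'(\delta t)-r\,\delta t\,I)+o(\delta t)=2-2d\,r\,\delta t+o(\delta t)$.

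In the paper's route, the common floor $r\,\delta t\,I$ absorbs all sign questions, so the kernel sign matrix $\sgn(y)$ never has to be examined. Your route instead stays inside the framework of \cref{thm:thm contr frm orthog vects}. It obtains both the $R(\mathcal{D})$ and the $d\,r(\mathcal{D})$ rates from one exact expression and makes explicit which terms are discarded. The price is precisely the kernel-sign bookkeeping that tripped up your Step 1.
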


\begin{proof}
Similarly to the argument from the proof of the contraction rate of $\gamma \geq R(\mathcal{D})$ from \cref{thm:thm contr frm orthog vects}, the minimum in the definition \eqref{eq:r def thm} of $r(\mathcal{D})$ (restated in \cref{eq:r2 def}) is well-defined since the relevant function of $\ket{v}, \ket{u}$ is continuous, and the domain is compact. We shall now prove that $r (\mathcal{D}) > 0$ implies a Hamiltonian-independent contraction rate of at least $\gamma \geq r (\mathcal{D}) d$. Consider any two quantum states $\rho$ and $\sigma$ at time $t \geq 0$. In case $\rho$ and $\sigma$ are not orthogonal, we now define new quantum states $\rho'$ and $\sigma'$, corresponding to the positive and negative part of the Hermitian operator $2(\rho - \sigma)/ \norm{\rho - \sigma}_1$.  
Now, consider the time-evolved state $\rho'(\tau) =  \mathcal{E}_{t+\tau , t} ( \rho') $, with $\mathcal{E}_{t, s} = \mathcal{T}\exp(\int_s^t \mathcal{L}_\tau d\tau)$: Due to the (at least right-) differentiability of $\rho' (t)$ as a solution to \cref{eq:time_dep_lindblad}
\begin{equation} \label{eq:proof:r2 s1}
\rho' ( \delta t ) =  
\rho' - i \int_t^{t+\delta t} d \tau [ H(\tau) , \rho' ] + \delta t \sum_j \mathcal{D}_{L_j}(\rho') + o  ( \delta t ),
\end{equation}
and similarly for $\sigma'(\tau) =\mathcal{E}_{t+s,t}(\sigma')$. The reason for keeping the integral over $H(\tau)$, is in case $H(t)$ is not time-continuous.

We now want to compute the eigenvalues of $\rho'(\delta t)$ and show that $\rho' (\delta t) \succeq k \delta t I $ holds for some constant $k > 0$ and sufficiently small $\delta t$ (and that a similar result holds for $\sigma (\delta t)$). By first-order perturbation theory for finite Hermitian matrices (see e.g. \cite[sec. 2.6]{Kato1995}), it follows that the eigenvalues $\{ \lambda_j (\delta t) \}_{j=1}^d$ of $\rho'(\delta t)$ can be approximated as
\begin{align} \label{eq:proof:r2 s2}
\lambda_j(\delta t) = \lambda_j + \bra{e_j} \left(\rho'(\delta t) - \rho' \right) \ket{e_j}  + o \left( \delta t \right),
\end{align}
where $\{ \lambda_j \}_{j=1}^d$ and $\{ \ket{e_j} \}_{j=1}^d$ are the eigenvalues and corresponding eigenvectors of $\rho'$ (in case $\rho'$ has degenerate eigenspaces, the eigenvectors $\ket{e_j}$ are chosen so as to diagonalize the perturbation $\rho'(\delta t) - \rho'$ projected onto the degenerate eigenspaces). Next, using the expression for $\rho'(\delta t)$ from \cref{eq:proof:r2 s1} together with \cref{eq:proof:r2 s2} yields
\begin{equation} \label{eq:proof:r2 s3}
\begin{aligned}
\lambda_j (\delta t) - \lambda_j & = \bra{e_j} \left( - i \int_t^{t+\delta t} d \tau \left[ H(\tau) , \rho' \right] + \delta t  \sum_l \mathcal{D}_{L_l}( \rho' ) \right) \ket{e_j} + o(\delta t ) \\
& =  - \delta t \bra{e_j}  \sum_l \left( \lambda_j L_l^\dagger L_l  -  L_l \rho' L_l^\dagger  \right) \ket{e_j} + o(\delta t),
\end{aligned}
\end{equation}
where we have used above that for any $\tau$, $\bra{e_j} [H(\tau) , \rho'] \ket{e_j} = \bra{e_j} H(\tau) \ket{e_j} \lambda_j - \lambda_j \bra{e_j} H(\tau) \ket{e_j} = 0$. Note now that if $\lambda_j > 0$, i.e. $\ket{e_j} \in \supp(\rho')$, then by \cref{eq:proof:r2 s3} above, we have $\lambda_j(\delta t) = \lambda_j + O(\delta t )$, which entails that we have $\lambda_j (\delta t) \geq r (\mathcal{D}) \delta t$ satisfied for all sufficiently small $\delta t$. Meanwhile, by the definition of $r (\mathcal{D})$ in \cref{eq:r2 def}, if $\lambda_j = 0$, i.e. $\ket{e_j} \in \ker(\rho')$, we then get the following using \cref{eq:proof:r2 s3}
\begin{equation} \label{eq:perturb zero eigen:proof r2}
\begin{aligned}
\lambda_j (\delta t) & = \delta t \sum_l \bra{e_j} L_l \rho' L_l^\dagger \ket{e_j} + o(\delta t) = \delta t \sum_k \sum_l \lambda_k \bra{e_j} L_l \ket{e_k} \bra{e_k} L_l^\dagger \ket{e_j} + o(\delta t)  \\ 
&= \delta t \sum_{k, \ k \neq j} \lambda_k \sum_l \left| \bra{e_j} L_l \ket{e_k} \right|^2+ o(\delta t)  \geq \delta t\sum_{k } \lambda_k r (\mathcal{D}) + o(\delta t) =  r (\mathcal{D}) \delta t + o (\delta t),
\end{aligned}
\end{equation}
where we have also used that $\ket{e_j}$ and $\ket{e_k}$ are ortho-normal when $j \neq k$, $\sum_k \lambda_k=1$ since $\rho'$ is a quantum state, and the definition \eqref{eq:r2 def} of $r(\mathcal{D})$.

\cref{eq:perturb zero eigen:proof r2} above and the preceding discussion show that for all sufficiently small $\delta t$, we must have $\lambda_j(\delta t) \geq r (\mathcal{D}) \delta t + o(\delta t) $ for all $j$, which entails $\rho'(\delta t) \geq r (\mathcal{D}) \delta t I + o(\delta t)$ in semi-definite order, where $o(\delta t)$ is some matrix (dependent on $\rho'$) whose norm is bounded by a $o(\delta t)$ function. By going through the same argument as above for $\sigma' (\delta t)$ instead of $\rho' (\delta t)$, we get that both $\rho'(\delta t) - r (\mathcal{D}) \delta t I$ and $\sigma'(\delta t) - r (\mathcal{D}) \delta t I$ are positive semi-definite, up to $o(\delta t)$ terms, giving us for sufficiently small $\delta t$
\begin{equation} \label{eq:change r2:proof r2}
\begin{aligned}
\norm{\rho'(\delta t) - \sigma' (\delta t)}_1  &= \norm{\left( \rho'(\delta t) - r (\mathcal{D}) \delta t I \right)  - \left( \sigma' (\delta t) - r (\mathcal{D}) \delta t I \right) + o(\delta t)}_1  \\
&\leq \norm{\rho'(\delta t) - r (\mathcal{D}) \delta t I}_1 + \norm{\sigma'(\delta t) - r (\mathcal{D}) \delta t I}_1 + o(\delta t)  \\ 
& \ \ = \tr \left[ \rho'(\delta t) - r (\mathcal{D}) \delta t I \vphantom{1^2} \right] + \tr \left[ \sigma' (\delta t) - r (\mathcal{D}) \delta t I \vphantom{1^2} \right] + o(\delta t) = 2 - 2 d r (\mathcal{D}) \delta t + o(\delta t),
\end{aligned}
\end{equation}
where we have used the triangle inequality and the fact that $\norm{A}_1 = \tr (A)$ if $A$ is positive semi-definite. Using linearity and the defining equation $\rho - \sigma= \frac{\norm{\rho - \sigma }_1}{2} \left( \rho' - \sigma'\right)$ for $\rho'$ and $\sigma'$ in \cref{eq:change r2:proof r2}, we get that for sufficiently small $\delta t$ 
\begin{equation} \label{eq:change r2 2:proof r2}
\begin{aligned}
\norm{\mathcal{E}_{t+\delta t , t}(\rho - \sigma)}_1 &= \frac{\norm{\rho - \sigma }_1}{2} \norm{\mathcal{E}_{t+\delta t , t}(\rho' - \sigma')}_1 = \frac{\norm{\rho - \sigma }_1}{2} \norm{ \rho' (\delta t) - \sigma'(\delta t))}_1  \\ 
& \leq  (1 - r (\mathcal{D}) d \delta t) \norm{\rho - \sigma}_1 + o(\delta t).
\end{aligned}
\end{equation}
Using linearity and the \cref{eq:change r2 2:proof r2} which holds for all quantum states, we can upper bound the right time-derivative (defined in \cref{eq:def:right derivative}) of the $1$-norm between the two states
\begin{align*}
\partial_{t,+} \norm{\mathcal{E}_{t,s}(\rho - \sigma)}_1 & =\lim_{\delta t \rightarrow 0^+} \frac{\norm{\mathcal{E}_{t+\delta t , s}(\rho - \sigma)}_1 - \norm{\mathcal{E}_{t, s}(\rho - \sigma)}_1}{\delta t} = \lim_{\delta t \rightarrow 0^+} \frac{\norm{ \mathcal{E}_{\delta t+t , t} \circ  \mathcal{E}_{t , s}(\rho - \sigma) \vphantom{2^2} }_1 - \norm{\mathcal{E}_{t , s}(\rho - \sigma)}_1}{\delta t}
\\
&\leq \lim_{\delta t \rightarrow 0^+} \frac{ (1-r (\mathcal{D}) d \delta t ) \norm{\mathcal{E}_{t, s}(\rho - \sigma)}_1 - \norm{\mathcal{E}_{t, s}(\rho - \sigma)}_1 + o(\delta t ) }{\delta t} = - r (\mathcal{D}) d \norm{ \mathcal{E}_{t, s} (\rho - \sigma) }_1 .
\end{align*}
Hence, $\partial_{t,+} \norm{\mathcal{E}_{t,s}(\rho - \sigma)}_1 \leq - r (\mathcal{D}) d  \norm{\mathcal{E}_{t,s}(\rho - \sigma)}_1$ holds for all times $t \geq s\geq 0$. As with the proof of the rate $\gamma \geq R(\mathcal{D})$ in \cref{subsec:proof orthogonal vectors}, we note that $\mathcal{E}_{t,s}(\rho - \sigma)$ must be a continuous function of time $t$, as a solution to the Lindblad master equation \eqref{eq:time_dep_lindblad}, and so must the positive real function $\norm{\mathcal{E}_{t,s}(\rho - \sigma)}_1$, since the $1$-norm is continuous on the space of Hermitian matrices. Therefore, we can again use the analysis result \cref{lem:full analysis result} from \cref{app:subsec:analysis result long} with $K=r (\mathcal{D})$, which now entails that for all $t \geq s$, we have
\begin{align*}
\norm{\mathcal{E}_{t,s}(\rho - \sigma)}_1 \leq \norm{\rho - \sigma}_1 e^{ - d r (\mathcal{D})  \abs{t - s}}.
\end{align*}
\end{proof}

\section{Proofs of essential details regarding the counterexamples from \texorpdfstring{\cref{subsec:some surprises}}{sec some surprises}}

We here prove essential details for the validity of the counterexamples presented in \cref{subsec:some surprises}, which were glossed over in the main text to improve readability. 

\subsection{Proof of unique fixed point for the dissipator appearing in \texorpdfstring{\cref{counterex:const_H}}{const counterex app}} \label{app:const H counterexample}

We here prove that the dissipator appearing in \cref{counterex:const_H} has a unique fixed point. 

\begin{prp}
The Lindbladian $\mathcal{L}_0 = \sum_{a=1}^3 \mathcal{D}_{L_a}$ acting on states $\rho \in D_1(\mathbb{C}^4)$ and with jump operators $\{L_1, L_2, L_3\} = \{(\sigma^z + 2 \sigma^-)\otimes I, \proj1\otimes \sigma^-, \proj1\otimes \sigma^+ \}$ has the unique fixed point 
\begin{align} \label{eq:explicit fixed point}
\sigma_\infty = \frac{1}{7} \begin{pmatrix}
6 & -2 \\
-2 & 1 \\
\end{pmatrix} \otimes \frac{1}{2} I .
\end{align}
\end{prp}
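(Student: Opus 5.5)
The plan is to exploit the product structure: the first jump operator acts only on the first qubit, and the pair $L_2,L_3$ acts on the second qubit conditioned on the first. Write $L\coloneqq\sigma^z+2\sigma^-$ and $P\coloneqq\proj1$, so that $\mathcal{D}_{L_1}=\mathcal{D}_L\otimes\mathrm{id}$. Expanding an arbitrary operator as $\rho=\sum_{a\in\{0,x,y,z\}}X_a\otimes\sigma^a/2$ with $\sigma^0=I$ and $X_a\in\mathcal{B}(\mathbb{C}^2)$, one computes directly that $\mathcal{L}_0$ preserves each sector $\mathcal{B}(\mathbb{C}^2)\otimes\spn\{\sigma^a\}$. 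It acts there as $X_a\mapsto\mathcal{M}_a(X_a)$, where
\begin{align*}
\mathcal{M}_a(X)=\mathcal{D}_L(X)+c_a\,PXP-\tfrac12\{P,X\},
\end{align*}
with $c_0=1$, $c_z=-1$ and $c_x=c_y=0$. These coefficients come from $\sigma^-\sigma^a\sigma^++\sigma^+\sigma^a\sigma^-=c_a\sigma^a$. The anticommutator term is the same in every sector because $\sigma^+\sigma^-+\sigma^-\sigma^+=I$. The fixed points of $\mathcal{L}_0$ are therefore exactly the tuples $(X_0,X_x,X_y,X_z)$ with $\mathcal{M}_a(X_a)=0$ for each $a$.

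For the sector $a=0$ we have $\mathcal{M}_0=\mathcal{D}_L+\mathcal{D}_P$, a genuine single-qubit Lindbladian. First I will verify directly that $\omega\coloneqq\frac17\begin{pmatrix}6&-2\\-2&1\end{pmatrix}$ satisfies $\mathcal{D}_L(\omega)+\mathcal{D}_P(\omega)=0$. This is a short $2\times2$ computation, and one should note that $\omega$ is \emph{not} a fixed point of $\mathcal{D}_L$ alone. Next I will show the kernel of $\mathcal{M}_0$ is one-dimensional. For this I write $\mathcal{M}_0$ in the Bloch parametrization $X=\frac12(\tr X\,I+\vec r\cdot\vec\sigma)$. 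Trace preservation makes it affine on $\vec r$, so it suffices to check that the $3\times3$ linear part is invertible, which amounts to a single determinant. Alternatively, one can show that $\{L,P\}$ generates $M_2(\mathbb{C})$, as in the discussion of \cref{counterex:const_H}, and invoke the irreducibility/Perron--Frobenius uniqueness. Together with trace normalization this gives $X_0=\omega$.

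For the sectors $a\in\{x,y,z\}$ I must show that $\mathcal{M}_a$ is injective. I will show this in either of two ways. The first is by brute force: write $\mathcal{M}_a$ as a $4\times4$ matrix on $\mathcal{B}(\mathbb{C}^2)$ in the basis $\{\ket i\!\bra j\}$ and check that its determinant is nonzero for $c=0$ and for $c=-1$. The second is more structural: $\mathcal{M}_a$ with $c_a\in\{0,-1\}$ generates a completely positive, strictly trace-\emph{decreasing} semigroup. The trace loss rate is $\tr(\mathcal{M}_aX)=(c_a-1)\tr(PX)$, and irreducibility of $\{L,P\}$ ensures that no state avoids $\ket1$ forever. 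This yields spectral radius of $e^{t\mathcal{M}_a}$ strictly less than $1$, and hence no zero eigenvalue.

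The main obstacle is the $c_z=-1$ sector, since $-PXP$ is not completely positive. There I will use the brute-force determinant, or note that $\mathcal{M}_z$ is similar to a sector where the first structural argument applies. Combining the sectors, every fixed point equals $\omega\otimes I/2$, which is $\sigma_\infty$ in \cref{eq:explicit fixed point}.
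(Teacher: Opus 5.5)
Your proof is correct, but it is organized differently from the paper's. The paper writes out $\mathcal{L}_0(X)$ for a general $4\times4$ matrix $X$ in the lexicographic basis and solves the sixteen resulting linear equations directly, obtaining the family $\alpha\begin{pmatrix}6&-2\\-2&1\end{pmatrix}\otimes I$. You instead block-diagonalize $\mathcal{L}_0$ over the Pauli sectors of the second qubit. This uses $L_2^\dagger L_2+L_3^\dagger L_3=P\otimes I$ and $\sigma^-Y\sigma^++\sigma^+Y\sigma^-=Y_{11}\proj{0}+Y_{00}\proj{1}$, and your values $c_0=1$, $c_x=c_y=0$, $c_z=-1$ are right. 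Each sector is then a $4\times4$ problem. For $X=\begin{pmatrix}a&b\\e&d\end{pmatrix}$ one finds $\mathcal{M}_c(X)=\begin{pmatrix}b+e+4d & -a-\tfrac92 b-3d\\ -a-\tfrac92 e-3d & -b-e+(c-5)d\end{pmatrix}$, which is consistent with the paper's displayed action. Its determinant is $9(1-c)$, so the kernel is trivial for $c\in\{0,-1\}$ and is spanned by your $\omega$ for $c=1$. This confirms that your brute-force route closes every sector. Your decomposition also explains the answer, which the paper's mechanical computation does not. The fixed point is forced to be a product with $I/2$ on the second qubit. Its first-qubit factor is the fixed point of $\mathcal{D}_L+\mathcal{D}_P$, not of $\mathcal{D}_L$ alone, as you correctly note.

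Two of your optional shortcuts should be dropped or repaired. First, the algebra generated by $\{L,P\}$ is not $M_2(\mathbb{C})$: $L=\begin{pmatrix}1&2\\0&-1\end{pmatrix}$ and $P$ are both upper triangular with common eigenvector $\ket{0}$. What is irreducible in Evans' sense is $\{L,P,\tfrac12(L^\dagger L+P)\}$. It is the anticommutator term that pushes weight out of $\proj{0}$, since $\mathcal{M}_c(\proj{0})=-\sigma^x$ for every $c$. This is the same mechanism as in the paper's discussion of \cref{counterex:const_H}. Both your Perron--Frobenius argument for $a=0$ and your trace-leakage argument for $c=0$ must use that set, not $\{L,P\}$. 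Second, $\mathcal{M}_z$ cannot be similar to any of the other sectors, because $\det\mathcal{M}_z=18$ while $\det\mathcal{M}_x=\det\mathcal{M}_y=9$ and $\det\mathcal{M}_0=0$. Rely on the determinant there, as you planned.
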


\begin{proof}
Working in the lexicographical basis, i.e. where
\begin{align*}\begin{pmatrix}
a_{11} & a_{12} \\
a_{21} & a_{22} \\
\end{pmatrix} \otimes \begin{pmatrix}
b_{11} & b_{12} \\
b_{21} & b_{22} \\
\end{pmatrix} \coloneqq \begin{pmatrix}
a_{11} b_{11} & a_{11} b_{12} & a_{12} b_{11} & a_{12} b_{12} \\
a_{11} b_{21} & a_{11} b_{22} & a_{12} b_{21} & a_{12} b_{22} \\
a_{21} b_{11} & a_{21} b_{12} & a_{22} b_{11} & a_{22} b_{12} \\
a_{21} b_{21} & a_{21} b_{22} & a_{22} b_{21} & a_{22} b_{22} \\
\end{pmatrix}   ,
\end{align*}
the jump operators $\{ L_1 , L_2 , L_3 \}$ take the following form
\begin{align*}
L_1 = (\sigma^z + 2 \sigma^{-} ) \otimes I = \begin{pmatrix}
1 & 0 & 2 & 0 \\
0 & 1 & 0 & 2 \\
0 & 0 & -1 & 0 \\
0 & 0 & 0 & -1 \\
\end{pmatrix}  , \quad L_2 = \ket{1} \bra{1} \otimes \sigma^{-} = \begin{pmatrix}
0 & 0 & 0 & 0 \\
0 & 0 & 0 & 0 \\
0 & 0 & 0 & 1 \\
0 & 0 & 0 & 0 \\
\end{pmatrix}  , \quad
L_3 = \ket{1} \bra{1} \otimes \sigma^{+} = \begin{pmatrix}
0 & 0 & 0 & 0 \\
0 & 0 & 0 & 0 \\
0 & 0 & 0 & 0 \\
0 & 0 & 1 & 0 \\
\end{pmatrix}  .
\end{align*}
It can now be checked by direct calculation that the Lindbladian $\mathcal{L}_0 = \sum_{j=1}^3 \mathcal{D}_{L_j} $ has the following action on arbitrary $4 \times 4$ matrices:
\begin{equation} \label{eq:counterexample d4 stat action}
\begin{aligned}
 \mathcal{L}_0 & \left( \begin{pmatrix}
x_{11} & x_{12} & x_{13} & x_{14} \\
x_{21} & x_{22} & x_{23} & x_{24} \\
x_{31} & x_{32} & x_{33} & x_{34} \\
x_{41} & x_{42} & x_{43} & x_{44} \\
\end{pmatrix} \right) \\
& = \begin{pmatrix}
x_{13}+x_{31}+4x_{33} & x_{14}+x_{32}+4x_{34} & -\frac{2x_{11}+9x_{13}+6x_{33}}{2} & -\frac{2x_{12}+9x_{14}+6x_{34}}{2} \\
x_{41}+x_{23}+4x_{43} & x_{24}+x_{42}+4x_{44} & -\frac{2x_{21}+9x_{23}+6x_{43}}{2} & -\frac{2x_{22}+9x_{24}+6x_{44}}{2} \\
-\frac{2x_{11}+9x_{31}+6x_{33}}{2} & -\frac{2x_{12}+9x_{32}+6x_{34}}{2} & -x_{13}-x_{31}-5x_{33}+x_{44} & -x_{14}-x_{32}-5x_{34} \\
-\frac{2x_{21}+9x_{41}+6x_{43}}{2} & -\frac{2x_{22}+9x_{42}+6x_{44}}{2} & -x_{41}-x_{23}-5x_{43} & -x_{24}-x_{42}-5x_{44}+x_{33} \\
\end{pmatrix} .
\end{aligned}
\end{equation}
It can further be checked that all the linear equations entailed by setting $\mathcal{L}_0 (X) = 0$, i.e. setting all entries in the matrix on the right hand side of \cref{eq:counterexample d4 stat action} above equal to $0$, can only be satisfied if $X$ is given by 
\begin{align*}
X = \begin{pmatrix}
x_{11} & x_{12} & x_{13} & x_{14} \\
x_{21} & x_{22} & x_{23} & x_{24} \\
x_{31} & x_{32} & x_{33} & x_{34} \\
x_{41} & x_{42} & x_{43} & x_{44} \\
\end{pmatrix} =  \begin{pmatrix}
6 \alpha & 0 & -2 \alpha & 0 \\
0 & 6 \alpha & 0 & -2 \alpha \\
-2 \alpha & 0 & \alpha & 0 \\
0 & -2 \alpha & 0 & \alpha \\
\end{pmatrix} = \alpha \begin{pmatrix}
6 & -2 \\
-2 & 1 \\
\end{pmatrix} \otimes I  ,
\end{align*}
for some $\alpha \in \mathbb{C}$, thus showing that $\sigma_\infty$ from \cref{eq:explicit fixed point} is indeed the unique fixed point of $\mathcal{L}_0$ (where the condition $\alpha = 1 / 14$ is necessary if $X$ must also be a quantum state).
\end{proof}

\subsection{Proof of non-contractivity in \texorpdfstring{\cref{counterex:smooth proposal}}{smooth counterex app}} \label{app:time_dep}

We here prove the statement from \cref{counterex:smooth proposal} that the relevant Lindbladian with the specified parameters is not exponentially contractive -- it is not even asymptotically contractive.

\begin{prp} \label{counterex:smooth proposal calc}
For a 2-qubit system $(\mathcal{H} = \mathbb{C}^2\otimes \mathbb{C}^2)$, consider the Lindbladian $\mathcal{L}_t = -i[H(t), \cdot] +\sum_{j=1}^3 \mathcal{D}_{L_j}$ where $\{L_1, L_2, L_3\} = \{(\sigma^z + 2 \sigma)\otimes I, \proj1\otimes \sigma, \proj1\otimes \sigma^\dagger\}$ and $H(t) =  (\sigma^y + \cos \phi(t) \sigma^x + \sin \phi(t) \sigma^y) \otimes I$ where
\begin{equation*}
\phi(t) = 2 \pi (1+c t)^r \  \text{ for }  \ c,r \in \mathbb{R}  .
\end{equation*}
Then, $\mathcal{L}_t$ does not generate asymptotically contractive dynamics, and therefore does not generate exponentially contractive dynamics, provided the constants $c,r$ satisfy $r > 2$, $c > 0$ and $ 4 + 9 / (c(r-2)) < 2 \pi r c $. 
\end{prp}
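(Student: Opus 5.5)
Our plan is to exploit the structure already used in \cref{counterex:const_H}. The second qubit is only acted on by $L_2,L_3$, and these are weighted by the population of the first qubit in $\ket{1}$. We will therefore show that this population is integrable in time. Then the second qubit's state cannot fully relax, so two initial states differing only on the second qubit stay a finite distance apart.

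\emph{Step 1: reduction to the first qubit.} Since $H(t)$ and $L_1$ act as $A\otimes I$, and $L_2,L_3$ act trivially on the first-qubit marginal after tracing out the second qubit, the reduced state $\omega(t)=\tr_2\rho(t)$ obeys a closed single-qubit master equation
\begin{equation*}
\dot\omega = -i[h(t),\omega]+\mathcal{D}_{\sigma^z+2\sigma^-}\omega,\qquad h(t)=\sigma^y+\cos\phi(t)\,\sigma^x+\sin\phi(t)\,\sigma^y .
\end{equation*}
Let $p(t)=\bra1\omega(t)\ket1$. Take initial states $\ket{+}\!\bra{+}\otimes\ket0\!\bra0$ and $\ket{+}\!\bra{+}\otimes\ket1\!\bra1$. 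Their difference is $\omega(t)\otimes\tfrac12 z(t)\sigma^z$ plus coherences, where $z$ tracks $\langle I\otimes\sigma^z\rangle$. Computing the evolution of $x(t)=\rho(t)-\sigma(t)$, whose first-qubit part is common, we get $\tfrac{d}{dt}\langle\sigma^z_2\rangle$-difference $=-2\,\bra1\cdot\ket1$-weighted damping. So $\norm{x(t)}_1\ge |\tr((I\otimes\sigma^z)x(t))|$, and the latter decays at rate proportional to the $\ket1$-population of the first-qubit factor. This gives a lower bound of the form $2\exp(-2\int_0^\infty p(\tau)d\tau)$, possibly with a different constant. It therefore suffices to show $\int_0^\infty p<\infty$.

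\emph{Step 2: rotating-drive averaging.} The static part $\sigma^y$ together with $\mathcal{D}_L$ has fixed point $\ket0\!\bra0$, as shown in \cref{counterex:const_H}. So without the rotating term, $p$ decays. We write the Bloch equations for $\omega$: $\dot{\mathbf r}=M\mathbf r+\mathbf b+\cos\phi\,A_x\mathbf r+\sin\phi\,A_y\mathbf r$. For $\dot p$ we have an identity $\dot p=-\kappa\, p+(\text{terms linear in coherences})+\cos\phi(t)\,u(t)+\sin\phi(t)\,v(t)$, with $u,v$ bounded components of $\mathbf r$. The key is that $\dot\phi(t)=2\pi rc(1+ct)^{r-1}\to\infty$. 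We integrate the oscillatory terms by parts, e.g.
\begin{equation*}
\int_0^T\cos\phi\,u\,dt=\Big[\tfrac{\sin\phi}{\dot\phi}u\Big]_0^T-\int_0^T\sin\phi\Big(\tfrac{\dot u}{\dot\phi}-\tfrac{\ddot\phi\,u}{\dot\phi^2}\Big)dt .
\end{equation*}
Since $|\dot u|$ is bounded by a constant (using $\norm{\mathcal L_t}$ bounded, here about $4+\ldots$), and $\int 1/\dot\phi<\infty$ exactly when $r>2$, these contributions are finite. Using $\int_0^\infty dt/\dot\phi=1/(2\pi r c^2(r-2))$, which supplies the $c(r-2)$ in the hypothesis, we obtain a Gronwall-type integral inequality $\int_0^T p\le C_0+\big(\tfrac{4+9/(c(r-2))}{2\pi rc}\big)\int_0^T p$, or a similar linear bound. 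The hypothesis $4+9/(c(r-2))<2\pi rc$ makes the coefficient less than $1$, which yields $\int_0^\infty p<\infty$ uniformly in $T$. Here $4$ and $9$ should come out of bounding $\dot u,\ddot\phi/\dot\phi$ and the dissipator matrix entries (the entries $9/2$ in \cref{eq:counterexample d4 stat action}).

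\emph{Main obstacle.} The hard step is Step 2: getting constants that match the stated inequality and controlling the coherence terms that couple into $\dot p$, since the static Bloch dynamics is not symmetric. We would first try working with a Lyapunov-type combination $q=p+\alpha\,\mathrm{Re}\,\omega_{01}$ chosen so that $\dot q\le-\kappa q+\text{oscillatory}$. We would then apply the integration by parts above, and bound all $\mathbf r$-components by $1$.
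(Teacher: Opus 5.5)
\paragraph{Step 1 is where the proof breaks.} You need a lower bound on $\norm{x(t)}_1$, and your reduction does not give the one you claim.

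First, $L_2,L_3$ do not act trivially on the first-qubit marginal. One has $\tr_2[(\D_{L_2}+\D_{L_3})\rho]=\D_{\proj{1}}(\omega)$, so the correct reduced generator is $\mathcal{G}_t=-i[h(t),\cdot\,]+\D_{\sigma^z+2\sigma^-}+\D_{\proj{1}}$. This extra dephasing leaves populations unchanged, so it is harmless for $p$.

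More importantly, the subspaces $\{A\otimes I\}$ and $\{A\otimes\sigma^z\}$ are invariant under $\L_t$. Hence $\rho(t)=\tfrac12(\omega\otimes I+Y\otimes\sigma^z)$ and $x(t)=Y(t)\otimes\sigma^z$ exactly, with no coherences. Here $Y(0)=\omega(0)$, but $\dot Y=\mathcal{G}_t(Y)-2\bra{1}Y\ket{1}\proj{1}$. So $Y$ is not a multiple of $\omega$, and $\frac{d}{dt}\tr Y=-2\bra{1}Y\ket{1}$, not $-2p\,\tr Y$.

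The bound $2e^{-2\int p}$ would be right if the second qubit were flipped by a clock of deterministic rate $p(t)$. Here, however, every flip comes with a projection of the first qubit onto $\ket{1}$. The flips are therefore correlated with the first qubit, and the multiplicative bound does not follow.

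What positivity of $\rho(t)$ actually gives is $-\omega\le Y\le\omega$. Hence $\abs{\bra{1}Y\ket{1}}\le p$ and $\norm{x(t)}_1=2\norm{Y}_1\ge2\tr Y(t)\ge 2-4\int_0^t p$. So finiteness of $\int_0^\infty p$ does not close your argument; you need the quantitative estimate $\int_0^\infty p<\tfrac12$.

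\paragraph{Step 2 does not supply that estimate.} The Gronwall inequality with coefficient $\frac{4+9/(c(r-2))}{2\pi rc}$ is fitted to the hypothesis, not derived. The coherence coupling you worry about is in fact absent. With $b=\bra{0}\omega\ket{1}$ one finds
$\dot p=-4p+2\,\mathrm{Im}(e^{i\phi}b)$, because the $\mathrm{Re}\,b$ terms from $-i[\sigma^y,\cdot\,]$ and from $\D_{\sigma^z+2\sigma^-}$ cancel. One also finds $\dot b=-4p-\tfrac92 b+ie^{-i\phi}(1-2p)$, so $\abs{\dot b}<6$.

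A single integration by parts then suffices, using $\int_0^\infty\dot\phi^{-1}dt=\frac{1}{2\pi rc^2(r-2)}$ and $\ddot\phi>0$. For $\omega(0)=\proj{0}$ it gives $4\int_0^\infty p\le\frac{1}{\pi rc}\big(\frac12+\frac{6}{c(r-2)}\big)$. Combined with the additive bound above, this would repair your route under a condition implied by the stated one.

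\paragraph{Comparison with the paper.} The paper never reduces to the first qubit. It takes $\rho=\proj{0}\otimes\proj{0}$ and $\sigma=\proj{0}\otimes\proj{1}$, with the explicit ansatz $\tilde x(t)=(\proj{0}-\dot\phi^{-1}X(t))\otimes\sigma^z$, where $X=\cos\phi\,\sigma^x+\sin\phi\,\sigma^y$. It uses two facts: $\proj{0}\otimes M$ is annihilated by the static part, and $\L_t(X\otimes\sigma^z)=-\tfrac92X\otimes\sigma^z$. Together these make the residual $O((1+ct)^{1-r})$. Trace-norm non-expansiveness of the evolution then gives $\norm{x-\tilde x}_1\le\frac{1}{\pi rc}\big(4+\frac{9}{c(r-2)}\big)$, to be compared with $\norm{\tilde x}_1\ge 2$. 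That is exactly where the constants $4$ and $9$ come from.
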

\begin{proof} \label{proof:app:prop smooth counterex}
We now show that $\mathcal{L}_t$ cannot be asymptotically contractive for sufficiently large $c,r$. To do this, consider initial quantum states 
\begin{align} \label{eq:app:def of rho and sigma}
\rho \coloneqq \ket{0} \bra{0} \otimes \ket{0} \bra{0} \quad \text{ and } \quad \sigma \coloneqq \ket{0} \bra{0} \otimes \ket{1} \bra{1}  , 
\end{align}
which are orthogonal and has $\left\| \rho - \sigma \right\|_1 = 2$ (for the purposes of this counterexample we could also have chosen $\rho = \ket{0} \bra{0} \otimes \ket{v} \bra{v}$ and $\sigma = \ket{0} \bra{0} \otimes \ket{u} \bra{u}$, for any orthogonal unit vectors $\ket{v} , \ket{u}$). We shall now show that for sufficiently large $r,c$, the time-evolved difference $\left\| \mathcal{E}_{t,0}(\rho) - \mathcal{E}_{t,0}(\sigma) \right\|_1$ will be bounded below away from $0$ for all $t \geq 0$, which shows that $\mathcal{L}_t$ cannot be asymptotically contractive. 
\\
\\
Denote $\mathcal{E}_{t,0}(\rho) - \mathcal{E}_{t,0}(\sigma) \coloneqq x(t)$, and note that by linearity, $x(t)$ evolves as $ d x(t) / dt = \mathcal{L}_t (x(t))$. Consider now the following time-evolving Hermitian operator $\tilde{x}(t)$, which is meant to approximate the time-evolution of $x(t)$
\begin{align} \label{eq:app:cont counterex:def xtilde}
\tilde{x}(t) \coloneqq \left( \ket{0} \bra{0} - \frac{1}{2 \pi r c (1+ ct)^{r-1}} \left( \sin (\phi(t)) \sigma^y + \cos(\phi(t)) \sigma^x \right) \right) \otimes \sigma^z  ,
\end{align}
and note that it follows from the definitions \cref{eq:app:def of rho and sigma} and \cref{eq:app:cont counterex:def xtilde} of $\rho , \sigma$ and $\tilde{x}(t)$, that we have $\tilde{x}(0) = \rho - \sigma - \sigma^x \otimes \sigma^z / (2 \pi  r c) = x(0) - \sigma^x \otimes \sigma^z / (2 \pi  r c)$. Recall from \cref{counterex:const_H} that $\ket{0} \bra{0} \otimes M $ is a fixed point of the constant Lindbladian $\mathcal{L} = - i [H(t) , \ \cdot \ ] + \sum_{j=1}^3 \mathcal{D}_{L_j}$, i.e. $\mathcal{L}(\ket{0} \bra{0} \otimes M) = 0$. Using this, and the fact that $\mathcal{L}_t = \mathcal{L}  - i \left[ (\cos \phi(t)\ \sigma^x + \sin \phi(t) \ \sigma^y) \otimes I , \ \cdot \ \right]$, we can rewrite the time-derivative of $\tilde{x}(t)$ from \cref{eq:app:cont counterex:def xtilde} as follows 
\begin{equation} \label{eq:app:conti 1st long:l1}
\begin{aligned}
\frac{d}{dt} \tilde{x}(t) & = \left( -\cos(\phi(t)) \sigma^y+ \sin(\varphi(t)) \sigma^x + \frac{r-1}{2 \pi r  (1+ ct)^{r}} \left( \sin (\phi(t)) \sigma^y + \cos(\phi(t)) \sigma^x \right) \right) \otimes \sigma^z \\
& = \mathcal{L}_t (\ket{0} \bra{0} \otimes \sigma^z) +  \frac{r-1}{2 \pi r  (1+ ct)^{r}} \left( \sin (\phi(t)) \sigma^y + \cos(\phi(t)) \sigma^x \right) \otimes \sigma^z  \\
& = \mathcal{L}_t \left(\tilde{x}(t) \right) + \frac{1}{2 \pi r c (1+ ct)^{r-1}} \mathcal{L}_t \left( \left( \vphantom{1^2} \sin (\phi(t)) \sigma^y + \cos(\phi(t)) \sigma^x \right) \otimes \sigma^z \right) + \\
& \qquad \qquad \qquad \qquad   \frac{r-1}{2 \pi r  (1+ ct)^{r}} \left( \sin (\phi(t)) \sigma^y + \cos(\phi(t)) \sigma^x \right) \otimes \sigma^z  \\
& = \mathcal{L}_t \left(\tilde{x}(t) \right) + \eta (t)  , 
\end{aligned}
\end{equation}
where we have for ease of notation introduced $\eta(t)$ as notation for the following Hermitian matrix
\begin{align} \label{eq:app:conti 1st long:l2 eta}
\eta(t) \coloneqq \frac{1}{2 \pi r c (1+ct)^{r-1}} \left( \mathcal{L}_t + \frac{c(r-1)}{1+ct} \text{id} \right) \bigg( \left[ \sin (\phi(t)) \sigma^y + \cos(\phi(t)) \sigma^x \vphantom{2^2} \right] \otimes \sigma^z \bigg) .
\end{align}
We now want to upper bound $\left\| x(t) - \tilde{x}(t) \right\|_1$ at all times $t$. From now on, we shall assume $r>2$ and $c > 0$ such that $\int_0^\infty dt (1+ct)^{- (r-1)}$ is guaranteed to converge. By linearity of $\mathcal{L}_t$, the fact that $\tilde{x}(0) = x(0) -  \sigma^x \otimes \sigma^z / (2 \pi  r c )$ as mentioned above, and the equation for $ d \tilde{x}(t) / dt $ from \cref{eq:app:conti 1st long:l1}, we can for all $t \geq 0$ upper bound $\left\| x(t) - \tilde{x}(t) \right\|_1$ by the fundamental theorem of calculus. Technically, since $\left\| x(t) - \tilde{x}(t) \right\|_1$ might not be differentiable in $t$, we work with the right time-derivative $\partial_{t,+}$, see \cref{eq:def:right derivative}. Since $\left\| x(t) - \tilde{x}(t) \right\|_1$ is continuous and its right time-derivative $\partial_{t,+} \left\| x(t) - \tilde{x}(t) \right\|_1$ is upper bounded by a continuous function, \cref{lem:fund thm calculus right diff functions} from \cref{app:subsec:analysis result long} entails
\begin{equation} \label{eq:app:cont counterex vlong step:l9}
\begin{aligned}
\left\| x(t) - \tilde{x}(t) \right\|_1 &\overset{(1.)}{\leq}  \left\| x(0) - \tilde{x}(0) \right\|_1 +  \int_0^t dt' \ \partial_{t',+} \left\| x(t') - \tilde{x}(t') \right\|_1 \\
& \overset{(2.)}{=} \left\| \frac{1}{2 \pi r c} \sigma^x \otimes \sigma^z \right\|_1  + \int_0^t dt' \lim_{\epsilon \rightarrow 0^+} \frac{\left\| x(t') - \tilde{x}(t') + \epsilon \mathcal{L}_{t'} [ x(t') - \tilde{x}(t')]+\epsilon \eta (t') \right\|_1 - \left\| x(t') - \tilde{x}(t') \right\|_1}{\epsilon} \\ 
& \overset{(3.)}{\leq} \  \frac{2}{\pi r c}+ \int_0^t dt' \lim_{\epsilon \rightarrow 0^+} \frac{\left\| x(t') - \tilde{x}(t') + \epsilon \mathcal{L}_{t'} [ x(t') - \tilde{x}(t') ] \right\|_1 + \epsilon \left\| \eta(t') \right\|_1 - \left\| x(t') - \tilde{x}(t') \right\|_1}{\epsilon} \\ & \overset{(4.)}{\leq} \ \frac{2}{\pi r c} + \int_0^t dt' \left\| \eta (t') \right\|_1 \leq \frac{2}{\pi r c} + \int_0^\infty dt' \left\| \eta (t') \right\|_1 \\ & \overset{(5.)}{\leq} \ \frac{2}{\pi  r c} + \int_0^\infty dt' \frac{1}{2 \pi  r c (1+ct')^{r-1}} \left( \left\| \mathcal{L}_{t'} \left[ \left( \sin (\phi(t)) \sigma^y + \cos(\phi(t)) \sigma^x \vphantom{2^2} \right) \otimes \sigma^z \right] \right\|_1 \vphantom{\frac{1}{1}} \right. + \\ & \qquad \qquad \qquad \qquad
\left.  \frac{c(r-1)}{1+ct'} \left\| \left( \vphantom{2^2} \sin (\phi(t)) \sigma^y + \cos(\phi(t)) \sigma^x \right) \otimes \sigma^z \right\|_1 \right) \\
& \overset{(6.)}{\leq} \  \frac{1}{2 \pi  r c} \left( 4+\frac{\sup_t \left\| \mathcal{L}_{t} \left[ \left( \sin (\phi(t)) \sigma^y + \cos(\phi(t)) \sigma^x \vphantom{2^2} \right) \otimes \sigma^z \right] \right\|_1}{c(r-2)}+\frac{c(r-1)}{c(r-1)} 4 \right) \\ 
& \overset{(7.)}{=} \ \frac{1}{ \pi r c } \left( \frac{9}{c(r-2)} + 4  \right) \qquad \Rightarrow \ \ \left\| x(t) - \tilde{x}(t) \right\|_1 \leq \frac{1}{ \pi r c } \left( \frac{9}{c(r-2)} + 4  \right) \ \ \text{for all $t \geq 0$}  ,
\end{aligned}
\end{equation}
where in the derivation \cref{eq:app:cont counterex vlong step:l9} above: (1.) follows from \cref{lem:fund thm calculus right diff functions} from \cref{app:subsec:analysis result long} and the fact that $\partial_{t',+} \left\| x(t') - \tilde{x}(t') \right\|_1$ is upper bounded by a continuous function, (2.) follows from $\tilde{x}(0) = x(0) - \sigma^x \otimes \sigma^z / (2 \pi  r c)$, the definition of the right derivative \cref{eq:def:right derivative} and the time-evolution equations \eqref{eq:app:conti 1st long:l1} for $\tilde{x}(t)$ and $ d x(t) / dt = \mathcal{L}_t(x(t))$ for $x(t)$, (3.) follows from the triangle inequality, (4.) follows from the fact that any Lindbladian time-evolution cannot increase the $1$-norm of a Hermitian matrix, i.e. for any Hermitian matrix $X$ and any Lindbladian $\mathcal{L}_t$, we have $\lim_{\epsilon \rightarrow 0^+} ( \left\| X + \epsilon \mathcal{L}_t ( X ) \right\|_1 - \left\| X \right\|_1 ) / \epsilon \leq 0$, (5.) follows from the definition \eqref{eq:app:conti 1st long:l2 eta} of $\eta(t)$ and the triangle inequality, (6.) follows from definition of the supremum, simple integration, and the fact that for any $\theta$, we have $\left\| \left( \sin (\theta ) \sigma^y + \cos(\theta)  \sigma^x \right) \otimes \sigma^z \right\|_1 = \left\| \sin (\theta ) \sigma^y + \cos(\theta)  \sigma^x \right\|_1 2 =  4$. In the final step (7.) above, we have used the following result  
\begin{align*}
\sup_t \left\| \mathcal{L}_{t} \left( \left( \sin (\phi(t)) \sigma^y + \cos(\phi(t)) \sigma^x \vphantom{2^2} \right) \otimes \sigma^z \right) \right\|_1 = 18 ,
\end{align*}
which can be seen from the following expression as can be checked by direct calculation
\begin{align*}
\mathcal{L}_{t} & \left( \left( \sin (\phi(t)) \sigma^y + \cos(\phi(t)) \sigma^x \vphantom{2^2} \right) \otimes \sigma^z \right) 
= - \frac{9}{2} \left[ \sin(\phi(t)) \sigma^y + \cos(\phi(t)) \sigma^x \right] \otimes \sigma^z ,
\end{align*}
and the fact that $\left\| \left( \sin (\theta ) \sigma^y + \cos(\theta)  \sigma^x \right) \otimes \sigma^z \right\|_1 =  4$ for any $\theta$, as was already mentioned above. Now, using the reverse triangle inequality and \cref{eq:app:cont counterex vlong step:l9}, we get that for any $t \geq 0$:
\begin{align} \label{eq:app:conti counterexample:final s}
\left\| \mathcal{E}_{t,0}(\rho) - \mathcal{E}_{t,0}(\sigma) \right\|_1 = \left\| x(t) \right\|_1 \geq \left\| \tilde{x}(t) \right\|_1 - \left\| x(t)-\tilde{x}(t) \right\|_1 \geq 2 - \frac{1}{\pi r c } \left( \frac{9}{c(r-2)}+4 \right)  ,
\end{align}
where we have used that we must have $\left\| \tilde{x}(t) \right\|_1 \geq 2$ for all times $t \geq 0$, since it can be checked straight from the definition \eqref{eq:app:cont counterex:def xtilde} of $\tilde{x}(t)$, that $\tilde{x}(t)$ has the $4$ eigenvalues $\pm \left( 1 \pm \sqrt{1+4 a^2} \right) / 2 $ where $a= 1 / (2 \pi r c (1+ ct)^{r-1})$, which entails $\left\| \tilde{x}(t) \right\|_1 \geq 2$. 
\cref{eq:app:conti counterexample:final s} above shows that if $\left( 4 + 9 / (c (r-2)) \right) / (2 \pi r c) < 1$, we will have $\left\| \mathcal{E}_{t,0}(\rho) - \mathcal{E}_{t,0}(\sigma) \right\|_1 = \left\| x(t) \right\|_1 \geq 2 -   \left( 4 + 9 / (c (r-2) ) \right) / (\pi r c) > 0$ for all times $t$, which shows that $\mathcal{L}_t$ cannot be asymptotically contractive in this case, since the initial states $\rho$ and $\sigma$ will never time-evolve into states arbitrarily close to each other. 
\end{proof}

\section{Stability of contractive Lindbladian time-evolution under perturbations} \label{app:sec:sensitivity results for lindbladians}

We here prove results about the stability of contractive behavior of Lindbladian dynamics under small perturbations. These results are used in the proofs of \cref{thm:perturbation:small H} and \cref{thm:perturbation:slow H}, but the results proven here are much more general and deal with general questions about how sensitive Lindbladian time-evolution is to changes in the Lindbladian. We here establish, among other things, that any time-dependent Lindbladian that remains sufficiently close to a contractive Lindbladian must necessarily be contractive itself, and any Lindbladian that varies sufficiently slowly while staying instantaneously contractive is also itself contractive. This is the content of \cref{thm:gen perturbation thm} and \cref{cor:small time derivative} respectively, which were already presented in \cref{subsec:proofs of perturbation sensitivity}, but are stated here again for the sake of the reader before being proven immediately afterwards. Lastly, in \cref{prop:app:more general contractivity}, we extend the results from \cref{thm:perturbation:small H} and \cref{thm:perturbation:slow H} in the way mentioned at the end of \cref{subsec:small perturbations}. Specifically, we provide explicit lower bounds on the contraction rates of Lindbladians satisfying the conditions of the theorems, and we also provide more general but also more complicated conditions implying contractivity, which were excluded from \cref{thm:perturbation:small H} and \cref{thm:perturbation:slow H} for fear of obscuring the main points of \cref{subsec:small perturbations}. To measure how close superoperators are to each other, we shall work with the \emph{$1 \rightarrow 1$ super-operator norm} $\superopnorm{ \ \cdot \ }$, defined in \cref{def:super 1 1 norm}. 
\\
\\
\textbf{
\begin{NoHyper}
\cref{thm:gen perturbation thm}
\end{NoHyper}
(restated).}
\emph{
Let $\mathcal{L}_t$ be any (time-dependent) exponentially contractive Lindbladian with universal constants $K , \gamma > 0$ from \cref{def:exp contraction}, i.e. $\norm{\mathcal{E}_{t, s} (\rho - \sigma)}_1\leq K e^{-\gamma \abs{t - s}} \norm{\rho - \sigma}_1$ holds for all $t \geq s \geq 0$ and $\rho , \sigma$, where $\mathcal{E}_{t,s} = \mathcal{T}\exp ( \int_s^t \mathcal{L}_\tau d\tau )$ is the time-evolution channel generated by $\mathcal{L}_t$. Let now $\tilde{\mathcal{L}}_t$ be any other time-dependent Lindbladian satisfying 
\begin{align*}
\sup_{t \geq 0} \superopnorm{\tilde{\mathcal{L}}_t - \mathcal{L}_t} < \frac{\gamma}{1 + \ln(K)}.
\end{align*}
Then $\tilde{\mathcal{L}}_t$ will also generate exponentially contractive dynamics with universal constants $\tilde{K} , \tilde{\gamma}$ that can be chosen as 
\begin{align*} 
    \tilde{\gamma} = \gamma \ \max_{x \geq 0} \frac{-1}{x} \ln \left( \frac{  1+\ln(K)  }{\gamma} \Delta \mathcal{L} + \left( 1 - \frac{\Delta \mathcal{L}}{\gamma} \right) e^{-x}  \right), \quad \tilde{K} = e^{  x \tilde{\gamma} / \gamma},
\end{align*}
where we have defined the number $\Delta \mathcal{L} \coloneqq \sup_{t \geq 0} ||| \tilde{\mathcal{L}}_t - \mathcal{L}_t |||$, and $x$ in the expression for $\tilde{K} = e^{x \tilde{\gamma} / \gamma }$ is the value of $x \geq 0$ maximizing the expression for $\tilde{\gamma}$ above. In particular, if $K=1$, we get $\tilde{\gamma} = \gamma - \Delta \mathcal{L}$ and $\tilde{K} = 1$. 
}
\\
\\
\cref{thm:gen perturbation thm} is proven below (page \pageref{proof:thm:gen perturbation thm}).
The choice for $x$ in the expression for $\tilde{\gamma}$ above in \cref{thm:gen perturbation thm} optimizes $\tilde{\gamma}$, however the result also holds for any other choice of $x$. If we want a more general condition guaranteeing exponential contractivity of the perturbed Lindbladian than the one given by \cref{thm:gen perturbation thm} above, it is sufficient to just require bounds on certain time-averages of the perturbation. Specifically, if $\mathcal{L}_t$ is an exponentially contractive Lindbladian with constants $K, \gamma$ as in \cref{thm:gen perturbation thm}, and there exists a time $T > \ln(K)/\gamma$ such that
\begin{align*}
\sup_{t \geq 0} \frac{1}{T} \int_{t}^{t+T} d \tau \superopnorm{\tilde{\mathcal{L}}_\tau - \mathcal{L}_\tau} < \frac{1- K e^{- \gamma T}}{T}  ,
\end{align*}
then $\tilde{\mathcal{L}}_t$ will also be exponentially contractive. For example, if all time-averages over intervals of length $T = \ln(2 K) / \gamma$ are bounded as $\sup_{t \geq 0} \frac{1}{T} \int_{t}^{t+T} d \tau ||| \tilde{\mathcal{L}}_\tau - \mathcal{L}_\tau|||  <  \gamma / ( 2 \ln(2 K))$, then this is sufficient to guarantee that $\tilde{\mathcal{L}}_t$ will again be contractive. This is discussed at the end of the proof of \cref{thm:gen perturbation thm} below.

Note finally that we here required our original Lindbladian $\mathcal{L}_t$ to be contractive in the sense of \cref{def:exp contraction}, i.e. exponentially contractive, since this is the main concern of this paper. However, as is also mentioned in the proof of \cref{thm:gen perturbation thm}, we can get yet more general but less precise results by simply requiring the original Lindbladian to be asymptotically contractive. Specifically, if $\mathcal{L}_t$ generates asymptotically contractive dynamics in the sense that $\lim_{t \rightarrow \infty} \norm{\mathcal{E}_{t,s}(\rho - \sigma)}_1 = 0$ holds for all quantum states $\rho , \sigma$, then for any other Lindbladian $\tilde{\mathcal{L}}_t$ for which $ ||| \tilde{\mathcal{L}}_t - \mathcal{L}_t |||$ (or just certain time-averages of this difference) remains sufficiently bounded, $\tilde{\mathcal{L}}_t$ will again be asymptotically contractive.
\\
\\
The techniques used in the proof of \cref{thm:gen perturbation thm} can also be used to establish contractivity in the case where we have a time-dependent Lindbladian $\mathcal{L}_t$ for which the instantaneous Lindbladians $\mathcal{L}_{t_0}$ are contractive and the time-derivatives $ d \mathcal{L}_t / dt$ are sufficiently small.
\\
\\
\textbf{
\begin{NoHyper}
\cref{cor:small time derivative}
\end{NoHyper}
 (restated).}
\emph{
Let $\mathcal{L}_t$ be a time-dependent Lindbladian that is always instantaneously contractive in the sense of \cref{def:exp contraction} with universal constants $K_0 , \gamma_0$, i.e. for all times $t_0 \geq 0$, the time-evolution generated by the constant Lindbladian $\mathcal{L}_{t_0}$ satisfies for all $\rho, \sigma$ and $t \geq s \geq 0$ that $\norm{\mathcal{E}_{s,t}^{(0)} (\rho - \sigma)}_1 \leq K_0 e^{-\gamma_0 |t-s|} \norm{\rho - \sigma}_1$, where $\mathcal{E}_{t,s}^{(0)} = \mathcal{T}\exp ( \int_s^t \mathcal{L}_{t_0} d\tau ) = e^{|t-s| \mathcal{L}_{t_0}}$. Suppose now that the time-derivative $ d \mathcal{L}_t / dt$ is upper bounded as follows
\begin{align} \label{eq:thm:cond for time der contraction}
\sup_{t \geq 0} \superopnorm{\frac{d}{dt} \mathcal{L}_t } < \frac{4}{3} \gamma_0^2 \frac{1}{1 + \frac{2}{3} \ln(K_0) + \frac{1}{3} \ln(K_0)^2}  .
\end{align}
Then, the time-dependent Lindbladian $\mathcal{L}_t$ will generate exponentially contractive dynamics with constants $K , \gamma$, where
\begin{align} \label{eq:thm:time der:A closed form}
\gamma = \gamma_0 \ \max_{x \geq 0} \frac{-1}{x} \ln \left( A + B e^{-x} - C x e^{-x} \right)  , \quad K = e^{ x \gamma / \gamma_0 }  ,
\end{align}
where $x$ is the value maximizing the expression for $\gamma$ above, and the constants $A , B , C$ are given by
\begin{align*} 
A \coloneqq \frac{l}{\gamma_0^2} \left( \frac{3}{4}+\frac{1}{2}\ln(K_0)+\frac{1}{4}\ln(K_0)^2 \right)  , \quad B\coloneqq K_0 \left( 1 - l \frac{1- \ln(K_0)}{2 \gamma_0^2} \right)  , \quad C\coloneqq \frac{l K_0}{\gamma_0^2}  ,
\end{align*}
with the number $l$ defined as $l\coloneqq \sup_{t \geq 0} \superopnorm{ d \mathcal{L}_t / dt  }$.
}
\\
\\
\cref{cor:small time derivative} is proven below (page \pageref{proof:cor:small time derivative}). We can get a more general, albeit more complicated, sufficient condition for exponential contractivity of $\mathcal{L}_t$ than the one from \cref{eq:thm:cond for time der contraction} by just requiring the rate $\gamma$ defined in \cref{eq:thm:time der:A closed form} to be strictly positive. For large values of $K_0$, this will not give a much more general condition, but if we e.g. have $K_0 = 1$ and $\gamma_0 > 0$, we would now get that $\mathcal{L}_t$ must be contractive with constants $K = 1$ and $\gamma = \gamma_0$, with no requirement on $l$.

Similarly to the ways of generalizing \cref{thm:gen perturbation thm} discussed after that lemma, one can show that it is also sufficient to just bound certain time-averages of $\superopnorm{ d \mathcal{L}_t / dt }$ to establish contractivity, instead of bounding its maximum value at all times. It is similarly not even necessary to assume the instantaneous Lindbladians $\mathcal{L}_{t_0}$ to be exponentially contractive. If one just assumes all instantaneous Lindbladians $\mathcal{L}_{t_0}$ to be asymptotically contractive, one can establish asymptotic contractivity of the time-dependent Lindbladian $\mathcal{L}_t$ by requiring $\superopnorm{ d \mathcal{L}_t / dt }$ to be sufficiently bounded.
\\
\\
By collecting all the results discussed in this appendix so far, and applying them to the Lindbladians from \cref{thm:perturbation:small H} and \cref{thm:perturbation:slow H}, we get the following extension of the theorems.

\begin{cor} \label{prop:app:more general contractivity}
Consider the Lindbladian $\mathcal{L}_t = - i[H(t), \ \cdot \ ] + \mathcal{D}$ with $H(t) = H_0 + V(t)$. Suppose that the constant Lindbladian $- i[H_0, \ \cdot \ ] + \mathcal{D}$ is contractive with exponential constant $K$ and contraction rate $\gamma$ as defined in \cref{def:exp contraction}, and define the quantity $V_{max} \coloneqq \sup_{t \geq 0} \norm{V(t)}_\infty$.
Then, the driven Lindbladian $\mathcal{L}_t$ will be contractive if $V_{max} < \gamma / (2 +2 \ln(K))$, in which case the exponential contraction constant $K_D$ and rate $\gamma_D$ can be chosen as
\begin{align*} 
\gamma_D = \gamma \frac{-1}{x} \ln \left( 1 - A + B e^{-x} \vphantom{\frac{1}{1}} \right)   , \quad K_D = e^{x \gamma_D / \gamma },
\end{align*}
where
\begin{equation}
    A\coloneqq 1 - \frac{1 + \ln (K)}{\gamma} 2 V_{max},\qquad
    B\coloneqq K \left( 1 - \frac{2 V_{max}}{\gamma} \right),
\end{equation}
and $x \geq 0$ is any non-negative number such that $\gamma_D > 0$. For example, if $V_{max} \leq \gamma / ( 4 + 4 \ln(K))$, then $\mathcal{L}_t$ will be contractive with new constants $K_D = 4 / 3 $, $\gamma_D = \gamma \ln( 4 / 3) / \ln(4 K)$. 

Suppose now that we instead want a condition on time-averages of $V(t)$. Assume then that there exists some number $T \geq \ln(4 K) / \gamma$, such that we have $\sup_{t \geq 0}  \int_t^{t+T} dt' \norm{ V (t')}_\infty / T \leq 1 / (2 T)$. Then, the driven Lindbladian $\mathcal{L}_t$ will again be contractive, with constant $K_D$ and rate $\gamma_D$ that can be chosen as $\gamma_D = \ln (4 / 3) / T$ and $K_D = 4 / 3$. 
\\
\\
Consider now the case where $\mathcal{L}_t$ is instantaneously contractive, i.e. the constant Lindbladians $\mathcal{L}_{t_0}$ are contractive for all $t_0 \geq 0$ with universal exponential constant $K_0$ and rate $\gamma_0$ as defined in \cref{def:exp contraction}. Define now the quantity $r_H \coloneqq \sup_{t \geq 0} \left\| d H(t) / dt \right\|_\infty$. Then, the driven Lindbladian $\mathcal{L}_t$ will be contractive, provided that $r_H <  2\gamma_0^2 / ( 3 + 2 \ln(K_0) +  \ln(K_0)^2 ) $ (or just if $\gamma_D > 0$ in the following), in which case the exponential constant $K_D$ and rate $\gamma_D$ can be chosen as 
\begin{align*}
\gamma_D = \gamma_0 \frac{-1}{x} \ln \left( A + B e^{-x} - C x e^{-x} \right)  , \quad K_D = e^{ x \gamma_D / \gamma_0 } ,
\end{align*}
where $x \geq 0$ is any non-negative number such that $\gamma_D > 0$, and the constants $A , B , C$ are given by
\begin{align*}
A \coloneqq \frac{r_H}{\gamma_0^2} \left( \frac{3}{2}+\ln(K_0)+\frac{1}{2}\ln(K_0)^2 \right) , \quad B\coloneqq K_0 \left( 1 - r_H \frac{1- \ln(K_0)}{ \gamma_0^2} \right) , \quad C\coloneqq \frac{2 K_0 r_H}{\gamma_0^2} .
\end{align*}
For example, if $r_H \leq  \gamma_0^2 / 2 \ln(2 K_0)^2$, we then get that the driven Lindbladian $\mathcal{L}_t$ is contractive with exponential constant $K_D = 6 / 5 $ and rate $\gamma_D = \gamma_0 \ln(6 / 5) / \ln(2 K_0)$.

Suppose that we instead of bounding the maximum size of $ d H(t) / dt $ just want to bound certain time-averages. Assume then that there exists some number $T \geq \ln(4 K_0) / \gamma_0$ such that we have $\sup_{t \geq 0} \int_{t}^{t+T} dt' \left\| d H(t') / dt' \right\|_\infty / T \leq 1 / T^2 $. Then, the driven Lindbladian $\mathcal{L}_t$ will again be contractive with constant $K_D$ and rate $\gamma_D$ that can be chosen as $K_D = 4 / 3$ and $\gamma_D = \ln(4/3) / T$. 
\end{cor}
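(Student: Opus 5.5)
We obtain \cref{prop:app:more general contractivity} by specializing \cref{thm:gen perturbation thm}, its time-averaged variant and \cref{cor:small time derivative} to driven Lindbladians. Throughout we use the bound $\superopnorm{[M,\,\cdot\,]}\le 2\norm{M}_\infty$.

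\textbf{First part (small drive).} We take the unperturbed Lindbladian to be $-i[H_0,\cdot]+\mathcal{D}$ and the perturbed one to be $\mathcal{L}_t$. Then $\Delta\mathcal{L}\le 2V_{max}$, so the hypothesis $V_{max}<\gamma/(2+2\ln K)$ gives $\Delta\mathcal{L}<\gamma/(1+\ln K)$, and \cref{thm:gen perturbation thm} yields contractivity. Because the argument of the logarithm is increasing in $\Delta\mathcal{L}$, replacing $\Delta\mathcal{L}$ by $2V_{max}$ keeps the rate a valid lower bound. This produces exactly the stated expression
\[
1-A+Be^{-x}=\tfrac{1+\ln K}{\gamma}2V_{max}+K\bigl(1-\tfrac{2V_{max}}{\gamma}\bigr)e^{-x}.
\]
The paper's remark that the result holds for any admissible $x$ justifies allowing any $x\ge0$ with $\gamma_D>0$. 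I also expect the $K$ factor in $B$ to reflect the underlying proof, which runs over blocks of length $T=x/\gamma$.

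For the worked example we set $V_{max}\le\gamma/(4+4\ln K)$ and choose $x=\ln(4K)$. Then $1-A\le 1/2$ and $Be^{-x}\le 1/4$, so the argument is at most $3/4$. This gives $\gamma_D=\gamma\ln(4/3)/\ln(4K)$ and $K_D=e^{\ln(4/3)}=4/3$.

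\textbf{Time-averaged version.} Here I would reprove the blocking argument directly. Using a Duhamel expansion over a block $[t,t+T]$ gives
\[
\norm{\tilde{\mathcal E}_{t+T,t}(\rho-\sigma)}_1\le\Bigl(Ke^{-\gamma T}+\int_t^{t+T}\superopnorm{\tilde{\mathcal L}_\tau-\mathcal L_\tau}d\tau\Bigr)\norm{\rho-\sigma}_1,
\]
where $\tilde{\mathcal{E}}$ is the evolution generated by $\mathcal{L}_t$. This uses the fact that $\tilde{\mathcal{E}}$ is trace-norm contractive on traceless inputs, and that $\mathcal{E}$ contracts traceless inputs with constant $K$. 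The perturbation term maps into traceless operators, so the difference lies in the right space and the bound applies.

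With $T\ge\ln(4K)/\gamma$ we get $Ke^{-\gamma T}\le1/4$. The averaged assumption gives $\int 2\norm{V}_\infty\le 1$, but we need this integral to be at most $1/2$ to reach a total block factor of $3/4$. I will check the factor of $2$ here: either the stated hypothesis must be read as bounding $\int\superopnorm{\cdot}$, or the constant must be adjusted. Once the block factor is $3/4$, iterating over blocks gives a factor $(3/4)^{\lfloor (t-s)/T\rfloor}$. Interpolating between blocks then gives $K_D=4/3$ and $\gamma_D=\ln(4/3)/T$.

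\textbf{Slow drive.} Since $\superopnorm{d\mathcal{L}_t/dt}\le 2r_H$, I substitute $l=2r_H$ into \cref{cor:small time derivative}. This reproduces the stated $A$, $B$, $C$: for instance $\tfrac{2r_H}{\gamma_0^2}\cdot\tfrac34=\tfrac{3r_H}{2\gamma_0^2}$, and similarly for the other terms. It also reproduces the threshold $2\gamma_0^2/(3+2\ln K_0+\ln^2K_0)$. For the example I choose $x=\ln(2K_0)$ and bound $A+Be^{-x}-Cxe^{-x}\le 5/6$ under $r_H\le\gamma_0^2/(2\ln(2K_0)^2)$, which needs some case checking in $K_0$.

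For the averaged slow-drive statement I repeat the block argument, comparing $\mathcal{L}_t$ on each block with the frozen Lindbladian $\mathcal{L}_{t_0}$ at the block start. On $[t_0,t_0+T]$ we have
\[
\superopnorm{\mathcal L_\tau-\mathcal L_{t_0}}\le 2\int_{t_0}^{\tau}\norm{\dot H}_\infty,
\]
and integrating this over the block is at most $2T\int\norm{\dot H}_\infty\le 2T\cdot T\cdot(1/T^2)$. Reconciling these constants so that the block factor comes out to $3/4$ is the step I expect to be the main obstacle. The remaining bookkeeping for the example constants is routine.
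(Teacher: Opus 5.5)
Your treatment of the two sup-norm statements is correct and follows the paper's route. The paper's entire justification is to insert $\Delta\mathcal{L}\le 2V_{max}$ and $l\le 2r_H$, via $\superopnorm{[M,\cdot]}\le 2\norm{M}_\infty$, into \cref{thm:gen perturbation thm} and \cref{cor:small time derivative}. You should make two details explicit.

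First, the factor $K$ in $B$ comes from \eqref{eq:proof:gen perturb proof exp decay step} in the lemma's proof; the lemma's displayed formula omits it. That estimate, like the one behind \cref{cor:small time derivative}, is only valid for $x\ge\ln K$ (resp.\ $x\ge\ln K_0$). Under the stated thresholds this holds automatically for any $x$ with $\gamma_D>0$: for $x<\ln K$ the argument of the logarithm exceeds $1+\tfrac{2V_{max}}{\gamma}\ln K$. Under the parenthetical ``or just if $\gamma_D>0$'', however, you must impose $x\ge\ln K_0$ by hand.

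Second, the $6/5$ example needs no case analysis. At $x=\ln(2K_0)$ the argument is exactly $\tfrac12+\tfrac{r_H}{\gamma_0^2}\bigl(1-\ln 2+\tfrac{a}{2}+\tfrac{a^2}{2}\bigr)$ with $a=\ln K_0$. The bound $\le\tfrac56$ then reduces to $\tfrac12 a^2+(4\ln 2-\tfrac32)a+(2\ln^2 2+3\ln 2-3)\ge 0$, and all three coefficients are positive. Since $\ln(2K_0)\ge\ln K_0$, this is legitimate even where the example's hypothesis exceeds the main threshold, e.g.\ near $K_0=1$.

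The genuine gap is in the two time-averaged claims, and your suspicion is justified. As stated, they follow neither from your block argument nor from the paper's appeal to ``the previous discussions''.

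For the drive $V$: with $\int_t^{t+T}\norm{V}_\infty\le\tfrac12$, your Duhamel block bound gives only $Ke^{-\gamma T}+2\int_t^{t+T}\norm{V}_\infty\le\tfrac14+1=\tfrac54$.

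For the slow drive: freezing at the block start gives $\tfrac14+2$. The better choice $t_0=t+T/2$ gives, by Fubini, $\int_t^{t+T}\superopnorm{\mathcal{L}_\tau-\mathcal{L}_{t_0}}\,d\tau\le 2\int_t^{t+T}\min\{u-t,\,t+T-u\}\norm{\dot H(u)}_\infty\,du\le T\int_t^{t+T}\norm{\dot H}_\infty\le 1$. So again you get only $\tfrac54$, which does not even give contraction.

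The inequality $\superopnorm{[M,\cdot]}\le 2\norm{M}_\infty$ is sharp. Equality holds for $M$ with $\lambda_{\min}=-\lambda_{\max}$, tested on $\ket{e_1}\!\bra{e_2}$ where $e_1,e_2$ are eigenvectors for $\lambda_{\max},\lambda_{\min}$. So the factor $2$ cannot be removed by a better commutator estimate. The stated thresholds are precisely what one obtains by dropping it, together with midpoint freezing for $\dot H$.

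What your argument does establish, with $K_D=4/3$ and $\gamma_D=\ln(4/3)/T$, is a corrected version:
\begin{itemize}
\item for the drive: $\sup_t\tfrac1T\int_t^{t+T}\norm{V}_\infty\le\tfrac{1}{4T}$;
\item for the slow drive: $\sup_t\tfrac1T\int_t^{t+T}\norm{\dot H}_\infty\le\tfrac{1}{2T^2}$, with $t_0$ at each block's midpoint.
\end{itemize}
So rather than trying to reconcile the constants, you should state and prove these strengthened hypotheses. You should also record that the two time-averaged sub-claims, as written, are not established.
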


\cref{prop:app:more general contractivity} follows straightforwardly from \cref{thm:gen perturbation thm}, \cref{cor:small time derivative} and the previous discussions following these results, and the bounds on the commutator superoperator-norm used in the proof of \cref{thm:perturbation:small H} in \cref{subsec:proofs of perturbation sensitivity}. The contractivity conditions on the time-averages of $H(t)$ or $ d H(t) / dt $ from \cref{prop:app:more general contractivity} are not optimized, i.e. we could in principle prove more general conditions, but we have for simplicity presented results that give simple conditions and expressions for the contraction rate and constant. We shall now prove \cref{thm:gen perturbation thm} and \cref{cor:small time derivative}.

\begin{proof}[Proof of \cref{thm:gen perturbation thm}] \label{proof:thm:gen perturbation thm}

Let $\mathcal{L}_t$ and $\tilde{\mathcal{L}}_t$ be two time-dependent Lindbladians generating time-evolution channels $\mathcal{E}_{t,s}$ and $\tilde{\mathcal{E}}_{t,s}$ respectively, that is $\mathcal{E}_{t, s} = \mathcal{T}\exp(\int_s^t \mathcal{L}_\tau d\tau)$ and $\tilde{\mathcal{E}}_{t, s} = \mathcal{T}\exp(\int_s^t \tilde{\mathcal{L}}_\tau d\tau)$ where $\mathcal{T}$ is the time-ordering operator. Consider now two arbitrary quantum states $\rho, \sigma$. We want to compare the time-evolution of the difference $\rho - \sigma $ generated by $\mathcal{L}_t$ with the time-evolution generated by $\tilde{\mathcal{L}}_t$. For easy of notation, define for $t \geq s \geq 0$
\begin{align*}
x(t) \coloneqq \mathcal{E}_{t,s} (\rho - \sigma) = \mathcal{T}\exp \left( \int_s^t \mathcal{L}_\tau d\tau \right) (\rho - \sigma)  , \quad  \tilde{x}(t) \coloneqq \tilde{\mathcal{E}}_{t,s} (\rho - \sigma)  = \mathcal{T}\exp \left( \int_s^t \tilde{\mathcal{L}}_\tau d\tau \right) (\rho - \sigma) .
\end{align*}
Even though $x(t) - \tilde{x}(t)$ is continuous and differentiable on the space of Hermitian matrices, as it is a difference of solutions to Lindbladian master equations, $\norm{x(t) - \tilde{x}(t)}_1$ might not be differentiable for all $t$. Thus, to be rigorous, we technically have to work with the right time-derivative, see \cref{eq:def:right derivative}. Due to analysis results established in \cref{app:subsec:analysis result long}, we would here not get different results if we had naively worked with the ordinary derivative. Using various results elaborated on below and the fact that $\norm{x(t) - \tilde{x}(t)}_1$ must be at-least right time-differentiable, we get
\begin{equation} \label{eq:proof:gen perturb proof long eq l7}
\begin{aligned} 
\partial_{t,+} \norm{x(t) - \tilde{x}(t)}_1 \overset{(1.)}{=} &  \lim_{\epsilon \rightarrow 0^+} \frac{\norm{x(t+\epsilon) - \tilde{x}(t+\epsilon)}_1 - \norm{x(t) - \tilde{x}(t)}_1}{\epsilon}  \\ 
\overset{(2.)}{=} & \lim_{\epsilon \rightarrow 0^+ } \frac{\norm{x(t) + \epsilon \mathcal{L}_t \left( x(t) \right) - \tilde{x}(t) - \epsilon \tilde{\mathcal{L}}_t \left(\tilde{x}(t) \right)}_1 - \norm{x(t) - \tilde{x}(t)}_1 }{\epsilon}  \\ 
\overset{(3.)}{=} & \lim_{\epsilon \rightarrow 0^+ } \frac{\norm{x(t) - \tilde{x}(t) + \epsilon \tilde{\mathcal{L}}_t \left( \vphantom{1^2} x(t)  - \tilde{x}(t) \right) - \epsilon \left( \tilde{\mathcal{L}}_t -\mathcal{L}_t \right) \left( x (t) \right)}_1 - \norm{x(t) - \tilde{x}(t)}_1 }{\epsilon} \\ 
\overset{(4.)}{\leq} & \lim_{\epsilon \rightarrow 0^+ } \frac{\norm{x(t) - \tilde{x}(t) + \epsilon \tilde{\mathcal{L}}_t \left( \vphantom{1^2} x(t) - \tilde{x}(t) \right) }_1 + \epsilon \norm{\left( \tilde{\mathcal{L}}_t -\mathcal{L}_t \right) \left( x (t) \right)}_1 - \norm{x(t) - \tilde{x}(t)}_1 }{\epsilon} \\
 \overset{(5.)}{=} &\norm{\left( \tilde{\mathcal{L}}_t -\mathcal{L}_t \right) \left( x (t) \right)}_1 + \lim_{\epsilon \rightarrow 0^+} \frac{\norm{x(t) - \tilde{x}(t) + \epsilon \tilde{\mathcal{L}}_t \left( \vphantom{1^2} x(t) - \tilde{x}(t) \right) }_1 - \norm{x(t) - \tilde{x}(t)}_1 }{\epsilon} \\ 
 \overset{(6.)}{\leq}& \norm{\left( \tilde{\mathcal{L}}_t -\mathcal{L}_t \right) \left( x(t) \right)}_1 \overset{(7.)}{\leq} \superopnorm{ \tilde{\mathcal{L}}_t -\mathcal{L}_t } \norm{ x (t) }_1 .
\end{aligned}
\end{equation}
In the derivation \eqref{eq:proof:gen perturb proof long eq l7} above: (1.) follows from definition of the right time-derivative in \cref{eq:def:right derivative} and the fact that $\norm{x(t) - \tilde{x}(t)}_1$ must be right time-differentiable, (2.) follows from definition of $x(t)$, $\tilde{x}(t)$ and the time-evolution channels which generate solutions to the Lindbladian master equations, (3.) follows from linearity of the Lindbladians, (4.) follows from the triangle inequality for the $1$-norm, (5.) follows from basic manipulations, (6.) follows from the fact that the time-evolution channel cannot increase the $1$-norm of any Hermitian operator such as $x(t) - \tilde{x(t)}$, i.e. for any Hermitian $H$ and any (possibly time-dependent) Lindbladian $\mathcal{L}$, we must have $\lim_{\epsilon \rightarrow 0^+} ( \norm{H + \epsilon \mathcal{L}(H)}_1 - \norm{H}_1 ) / \epsilon \leq 0$, (7.) follows from the definition of the ($1 \rightarrow 1$) superoperator-norm $\superopnorm{ \ \cdot \ }$ in \cref{def:super 1 1 norm}.
\\
\\
Now, by definition of $x(t)$ and $\tilde{x}(t)$, we have $x(s) = \tilde{x}(s) = \rho - \sigma$. Combining this with the triangle inequality, and using that \cref{eq:proof:gen perturb proof long eq l7} entails that $\partial_{t,+} \norm{x(t) - \tilde{x}(t)}_1$ is upper bounded by the continuous function $ ||| \tilde{\mathcal{L}}_t -\mathcal{L}_t |||  \norm{ x (t) }_1$, and finally using that $\norm{x(t) - \tilde{x}(t)}_1$ is right time-differentiable and continuous, we get the following result by applying \cref{lem:fund thm calculus right diff functions} from \cref{app:subsec:analysis result long} (which essentially generalizes the fundamental theorem of calculus)
\begin{equation} \label{eq:proof:gen perturb proof int form}
\begin{aligned}
\norm{\tilde{x}(t)}_1 & = \norm{\tilde{x}(t)-x(t)+x(t)}_1 \leq \norm{x(t) - \tilde{x}(t)}_1 + \norm{x(t)}_1 \\ 
& \quad \leq \norm{x(t)}_1 + \norm{x(s) - \tilde{x}(s)}_1 + \int_s^t d t' \superopnorm{\tilde{\mathcal{L}}_{t' } - \mathcal{L}_{t'}} \norm{x(t' )}_1 = \norm{x(t)}_1 + \int_s^t d t' \superopnorm{\tilde{\mathcal{L}}_{t' } - \mathcal{L}_{t'}} \norm{x(t' )}_1 .
\end{aligned}
\end{equation}
We now have an upper bound on the distance between states evolved by $\tilde{\mathcal{L}}_t$ i.e. $ \tilde{x} (t) = \tilde{\mathcal{E}}_{t,s} (\rho - \sigma)$, which depends only on the distance between states evolved by the contractive Lindbladian $\mathcal{L}_t$, and the difference $\tilde{\mathcal{L}}_t - \mathcal{L}_t$. 
\\
\\
We thus see from \cref{eq:proof:gen perturb proof int form} above that if our original Lindbladian $\mathcal{L}_t$ generates asymptotically contractive dynamics, i.e. if for all $s \geq 0$ and quantum states $\rho , \sigma$ have $\lim_{t \rightarrow \infty} \norm{\mathcal{E}_{t,s}(\rho - \sigma)}_1 = 0$, then by requiring $ ||| \tilde{\mathcal{L}}_{t} - \mathcal{L}_t ||| $ to be sufficiently bounded (where the bound might here depend on time), there will for any $s \geq 0$ exist some time $T_s$ such that for all $t \geq T_s$, the distance between the states that are time-evolved by the new Lindbladian $\tilde{\mathcal{L}}_t$ will contract be some factor -- say $\frac{1}{2}$. Since the previous argument holds for arbitrary $s$, we can repeat the argument with $T_s$ being our new $s$, and so on. Thus, under the assumption that $ ||| \tilde{\mathcal{L}}_{t } - \mathcal{L}_t ||| $ remains sufficiently bounded, the new Lindbladian $\tilde{\mathcal{L}}_t$ must also be asymptotically contractive. 
\\
\\
We now focus on establishing exponential contractivity, so assume that $\mathcal{L}_t$ is exponentially contractive with universal constants $K$ and $\gamma$ as in \cref{def:exp contraction}, i.e. for all $t \geq s \geq 0$, we have for arbitrary quantum states $\rho, \sigma$
\begin{align} \label{eq:proof:gen perturb proof exp contract def}
\norm{\mathcal{E}_{t,s}(\rho - \sigma)}_1 \leq K e^{- \gamma |t-s|} \norm{\rho - \sigma}_1  .
\end{align}
Note that since any Lindbladian time-evolution cannot increase the $1$-norm between states, as mentioned in \cref{eq:contractivity of gen L}, the inequality \eqref{eq:proof:gen perturb proof exp contract def} above can trivially be improved to $\norm{\mathcal{E}_{t,s}(\rho - \sigma)}_1 \leq \min \{ K e^{- \gamma |t-s|} , 1 \} \norm{\rho - \sigma}_1$. Plugging this improvement of \cref{eq:proof:gen perturb proof exp contract def} into \cref{eq:proof:gen perturb proof int form} above as well as defining the number $\Delta \mathcal{L} \coloneqq \sup_{t \geq 0} ||| \tilde{\mathcal{L}}_{t } - \mathcal{L}_t ||| $ and using the definitions of $x(t)$ and $\tilde{x}(t)$, gives us that for any $t \geq s + \ln(K) / \gamma  \geq 0$ the following upper bound on the distance between time-evolved arbitrary quantum states $\rho , \sigma$ according to the new Lindbladian $\tilde{\mathcal{L}}_t$
\begin{equation} \label{eq:proof:gen perturb proof exp decay step}
\begin{aligned}
\norm{\tilde{\mathcal{E}}_{t,s} (\rho - \sigma)}_1 
&\leq  \norm{x(t)}_1 + \int_s^t d t' \superopnorm{\tilde{\mathcal{L}}_{t' } - \mathcal{L}_{t'}} \norm{x(t' )}_1 \\
&\leq K e^{- \gamma |t-s|} \norm{\rho - \sigma}_1 + \Delta \mathcal{L} \int_s^t dt' \min \left\{ K e^{- \gamma |t'-s|} , 1 \right\} \norm{\rho - \sigma}_1 \\
& = \left[ \Delta \mathcal{L} \frac{1 + \ln(K)}{\gamma} + K \left( 1 - \frac{\Delta \mathcal{L}}{\gamma} \right) e^{- \gamma |t-s|} \right] \norm{\rho - \sigma}_1  .
\end{aligned}
\end{equation}
We thus see from \cref{eq:proof:gen perturb proof exp decay step} above that for $|t-s| \gg 1 / \gamma$, $\norm{\tilde{\mathcal{E}}_{t,s} (\rho - \sigma)}_1$ will approximately equal $\Delta \mathcal{L} ( 1 + \ln(K) ) / \gamma \norm{\rho - \sigma}_1$. Thus, if $\Delta \mathcal{L} < \gamma / ( 1+ \ln(K) )$, there will exist a time $T$ such that for all $|t-s| \geq T$, we have $\norm{\tilde{\mathcal{E}}_{t,s} (\rho - \sigma)}_1 \leq C \norm{\rho - \sigma}_1$, where $C$ is some constant strictly less than $1$. Since $s$ was arbitrary in the argument above, we can keep repeating the argument just mentioned and get $\norm{\tilde{\mathcal{E}}_{t,s} (\rho - \sigma)}_1 \leq e^{  \ln (C) \lfloor t / T \rfloor } \norm{\rho - \sigma}_1 \leq e^{  \ln \left(  1 / C \right)} e^{- \ln \left( 1 / C \right) t / T } \norm{\rho - \sigma}_1 $. This thus proves that the new Lindbladian $\tilde{\mathcal{L}}_t$ also generates exponentially contractive dynamics provided $ \Delta \mathcal{L} \coloneqq \sup_{t \geq 0} ||| \tilde{\mathcal{L}}_{t } - \mathcal{L}_t ||| < \gamma / ( 1 + \ln (K) )$. In this case, the time-evolution $\tilde{\mathcal{E}}_{t,s}$ generated by $\tilde{\mathcal{L}}_t$ will satisfy
\begin{align*}
\norm{\tilde{\mathcal{E}}_{t,s} (\rho - \sigma)}_1 \leq \tilde{K} e^{- \tilde{\gamma} |t-s|} \norm{\rho - \sigma}_1  ,
\end{align*}
where $\tilde{K}, \tilde{\gamma}$ can be chosen as
\begin{align*}
\tilde{\gamma} \coloneqq \gamma \ \max_{0 \leq x} \frac{-1}{x} \ln \left( 1 - A + B e^{-x} \vphantom{\frac{1}{1}} \right) > 0  , \quad \tilde{K} = e^{x \tilde{\gamma} / \gamma},
\end{align*}
where
\begin{equation}
    A  \coloneqq 1 - \frac{1 + \ln (K)}{\gamma} \Delta \mathcal{L}   \qquad
    B  \coloneqq K \left( 1 - \frac{\Delta \mathcal{L}}{\gamma} \right).
\end{equation}
This proves the main statement of the theorem. Note that it follows from the assumption $ \Delta \mathcal{L} \coloneqq \sup_{t \geq 0} ||| \tilde{\mathcal{L}}_{t } - \mathcal{L}_t ||| < \gamma / ( 1 + \ln (K) )$ that we must have $0 < A < 1$ and $B > 0$, which makes $\tilde{\gamma}$ above well-defined. 

Note finally that instead of requiring an upper bound on $\sup_{t \geq 0} |||  \tilde{\mathcal{L}}_{t } - \mathcal{L}_t ||| $, we could from \cref{eq:proof:gen perturb proof int form} have obtained a more general result by simply requiring an upper bound on arbitrary long time-averages of $ ||| \tilde{\mathcal{L}}_{t } - \mathcal{L}_t ||| $. Since we always have $\norm{x(t)}_1 \leq \norm{x(s)}_1$ for $t \geq s$ in \cref{eq:proof:gen perturb proof int form}, we see that assuming there exists some time $T > \ln(K) / \gamma$ such that the average Lindbladian perturbation in the interval is sufficiently upper bounded, the new Lindbladian will still be contractive. Specifically, assuming the following holds
\begin{align*}
\sup_{t \geq 0} \frac{1}{T} \int_t^{t+T} d t' \superopnorm{\tilde{\mathcal{L}}_{t' } - \mathcal{L}_{t'}} < \frac{1- K e^{ - \gamma T}}{T} \qquad \text{for some} \ \ T > \frac{\ln (K)}{\gamma} .
\end{align*}
Then every time a time interval of $T$ passes, the $1$-norm of the difference between two states will contract by at least a factor of $ \overline{\Delta \mathcal{L}} T + K e^{- \gamma T } < 1$ under the time-evolution of $\tilde{\mathcal{L}}_t$, where $\overline{\Delta \mathcal{L}} \coloneqq \sup_{t \geq 0} \int_t^{t+T} d t' ||| \tilde{\mathcal{L}}_{t' } - \mathcal{L}_{t'} ||| / T$, showing that $\tilde{\mathcal{L}}_t$ will generate exponentially contractive dynamics. For example, if the average perturbation in every time-interval of length $T = \frac{\ln( 2 K)}{\gamma}$ is bounded as follows
\begin{align*}
\sup_{t \geq 0} \frac{1}{T} \int_t^{t+T} d t' \superopnorm{\tilde{\mathcal{L}}_{t' } - \mathcal{L}_{t'}} < \frac{\gamma}{ 2 \ln (2 K)}  \quad \text{where} \ \  T=\frac{\ln(2 K)}{\gamma} ,
\end{align*}
then $\tilde{\mathcal{L}}_t$ will generate exponentially contractive dynamics. 
\end{proof}

\begin{proof}[Proof of \cref{cor:small time derivative}] \label{proof:cor:small time derivative}

Let $\mathcal{L}_t$ be a time-dependent Lindbladian that is instantaneously contractive in the sense of \cref{def:exp contraction} with universal constants $K_0 , \gamma_0$, i.e. for all times $t_0 \geq 0$, the time-evolution generated by the constant Lindbladian $\mathcal{L}_{t_0}$ satisfies that for any quantum states $\rho, \sigma$ and times $t \geq s \geq 0$, we have $\norm{\mathcal{E}_{s,t}^{(0)} (\rho - \sigma)}_1 \leq K_0 e^{-\gamma_0 |t-s|} \norm{\rho - \sigma}_1$ where $\mathcal{E}_{t,s}^{(0)} \coloneqq \mathcal{T}\exp ( \int_s^t \mathcal{L}_{t_0} d\tau ) = e^{|t-s| \mathcal{L}_{t_0}}$. Note now that for any fixed time $t_0$, the difference between the original time-dependent Lindbladian $\mathcal{L}_{t}$ and the instantaneous Lindbladian $\mathcal{L}_{t_0}$ is upper bounded by
\begin{align} \label{eq:proof:perturbation derivative bound:s1}
\superopnorm{\mathcal{L}_t - \mathcal{L}_{t_0}} = \superopnorm{\int_{t_0}^t d t' \frac{d}{dt'} \mathcal{\mathcal{L}}_{t'} } \leq  \int_{ \min \{ t , t_0 \} }^{ \max \{ t , t_0 \} } dt' \superopnorm{\frac{d}{dt'} \mathcal{\mathcal{L}}_{t'}}  \leq  |t - t_0| \sup_{t' \geq 0} \superopnorm{\frac{d}{dt'} \mathcal{L}_{t'} } ,
\end{align}
where we have used that the superoperator-norm satisfies the triangle inequality. Let $\mathcal{E}_{t,s} \coloneqq \mathcal{T} \exp ( {\int_s^t d \tau \mathcal{L}_{\tau}} )$ denote the time-evolution governed by the original time-dependent Lindbladian $\mathcal{L}_t$. We now use the assumption that $\mathcal{L}_{t_0}$ for any $t_0$ generates exponentially contractive dynamics with constants $K_0 , \gamma_0$ in the bound \eqref{eq:proof:gen perturb proof int form} from the proof of \cref{thm:gen perturbation thm}, along with \cref{eq:proof:perturbation derivative bound:s1} above to get that for any quantum states $\rho , \sigma$ and times $t_0$ and $t \geq s \geq 0$ that
\begin{equation} \label{eq:proof:perturbation derivative int step}
\begin{aligned}
\norm{\mathcal{E}_{t,s}(\rho - \sigma)}_1 & \leq \norm{\mathcal{E}_{t,s}^{(0)}(\rho - \sigma)}_1 + \int_s^t d t' \norm{\mathcal{E}_{ t' ,s}^{(0)}(\rho - \sigma)}_1 \superopnorm{\mathcal{L}_{t'} - \mathcal{L}_{t_0} } \\
& \leq \left( K_0 e^{- \gamma_0 |t-s|} + l \int_s^t dt' \min \left\{ 1 , K_0 e^{- \gamma_0 |t'-s|} \right\} |t - t_0 |\right) \norm{\rho - \sigma}_1 ,
\end{aligned}
\end{equation}
where we have defined $l \coloneqq \sup_{t \geq 0} \superopnorm{ d \mathcal{L}_t / dt }$ for ease of notation. Now, if we set $t \geq s + \ln(K_0) / \gamma_0 $ and we choose $t_0$ to lie in the interval $t_0 \in [ s , s + \ln(K_0) / \gamma_0 ]$, while defining $z\coloneqq t_0 - s$, in \cref{eq:proof:perturbation derivative int step} above, we get 
\begin{equation} \label{eq:choice for z}
\begin{aligned}
\norm{\mathcal{E}_{t,s}(\rho - \sigma)}_1 \leq \left[ K_0 e^{- \gamma_0 |t-s|} + l  \left( \frac{1}{2} z^2 + \frac{1}{2} \left( \frac{\ln(K_0)}{\gamma_0} - z \right)^2 + \int_{s+\frac{\ln(K_0)}{\gamma_0}}^t dt' |t'-t_0| K_0 e^{- \gamma_0 |t'-s|} \right)  \right] \norm{\rho - \sigma}_1  \\
= \left[ l \left( z^2 + \frac{\ln(K_0)^2}{2 \gamma_0^2} - \frac{\ln(K_0)}{\gamma_0} z + \frac{1 + \ln(K_0)}{\gamma_0^2}- \frac{z}{\gamma_0} \right)  + K_0 \left( 1 - l \frac{1 + \gamma_0 |t-s| - \gamma_0 z}{\gamma_0^2} \right) e^{- \gamma_0 |t-s|} \right] \norm{\rho - \sigma}_1 .
\end{aligned}
\end{equation}
We now make the following choice for $z$ above, which will be optimal if $ \ln(K_0) \geq 1$, namely $z = (\ln(K_0)+1 ) / ( 2 \gamma_0 )$. If on the other hand we have $\ln(K_0) \leq 1$, and we make the choice $z = (\ln(K_0)+1 ) / (2 \gamma_0 )$, then the integral evaluation in \cref{eq:choice for z} will be wrong, since in the interval $ [ s+ \ln(K_0) / \gamma_0 , s + z ]$, we will have integrated over an integrand $|t-t_0|(2 - e^{-\gamma_0 (t-t_c) })$ instead of the correct integrand of $|t-t_0| e^{-\gamma_0 (t-t_c) }$. Here, $t_c = \frac{\ln(K_0)}{\gamma_0} \leq z$. However, this discrepancy will still result in an upper bound holding, since we overestimate the integral. In conclusion, we make the choice $z = ( \ln(K_0)+1 ) / (2 \gamma_0 )$ in \cref{eq:choice for z} above, which will in all cases result in the following upper bound
\begin{align*}
\norm{\mathcal{E}_{t,s}(\rho - \sigma)}_1 \leq \left( \frac{l}{\gamma_0^2} \left( \frac{3}{4}+\frac{1}{2}\ln(K_0)+\frac{1}{4}\ln(K_0)^2 \right) + K_0 \left( 1 - l \frac{1-\ln(K_0)+ 2 \gamma_0 |t-s|}{2 \gamma_0^2} \right) e^{-\gamma_0 |t-s|} \right) \norm{\rho - \sigma}_1  .
\end{align*}
We thus see that if $l \coloneqq \sup_{t \geq 0} \superopnorm{ d \mathcal{L}_t / dt } <   4\gamma_0^2 / ( 3 + 2 \ln(K_0) + \ln(K_0)^2 )$, then for sufficiently large $|t-s|$, we will eventually have $\norm{\mathcal{E}_{t,s}(\rho - \sigma)}_1 \leq C \norm{\rho - \sigma}_1$, where $C$ is some constant $C < 1$. Since the times $t \geq s \geq 0$ and quantum states $\rho , \sigma$ were all completely arbitrary, this shows that $\mathcal{L}_t$ must indeed be exponentially contractive in the case $l < 4\gamma_0^2 / ( 3 + 2 \ln(K_0) + \ln(K_0)^2 )$. But we can get a more general contractivity condition by simply requiring $\gamma > 0$ below. Like in the proof of \cref{thm:gen perturbation thm}, we can in this case compute exponential contraction constants $K , \gamma$ for the time-evolution generated by $\mathcal{L}_t$. Specifically, $\mathcal{L}_t$ will be contractive with constants $K , \gamma$, i.e. $\norm{\mathcal{E}_{t,s}(\rho - \sigma) }_1 \leq K e^{- \gamma |t-s|} \norm{\rho - \sigma}_1$, given by
\begin{align*} 
\gamma = \gamma_0 \max_{x \geq 0} \frac{-1}{x} \ln \left( A + B e^{-x} - C x e^{-x} \right)  , \quad K = e^{ x \gamma / \gamma_0 } ,
\end{align*}
for any $x \geq 0$ for which $\gamma > 0$, and the constants $A , B , C$ are given by
\begin{align*}
A \coloneqq \frac{l}{\gamma_0^2} \left( \frac{3}{4}+\frac{1}{2}\ln(K_0)+\frac{1}{4}\ln(K_0)^2 \right)  , \quad B\coloneqq K_0 \left( 1 - l \frac{1- \ln(K_0)}{2 \gamma_0^2} \right)  , \quad C\coloneqq \frac{l K_0}{\gamma_0^2}  .
\end{align*}
Note in particular that in case $K_0=1$ -- i.e. if the instantaneous Lindbladians are always strictly $1$-norm contractive, the simple contraction condition \eqref{eq:thm:cond for time der contraction} guarantees exponential contractivity of $\mathcal{L}_t$ if $l < 4 \gamma_0^2 / 3$. However, it can be shown from \cref{eq:proof:perturbation derivative int step} that if $K_0 = 1$, then $\mathcal{L}_t$ will be exponentially contractive with rate $\gamma = \gamma_0$, for arbitrary $\gamma_0$, displaying some of the overestimation errors made in deriving the simple contraction condition. 
\end{proof}

\section{Calculations involved in specific applications of the new contraction results} \label{app:misc calculations}

We here provide some tedious calculations related to the application of \cref{thm:thm eigen condition} and \cref{thm:thm contr frm orthog vects} to various cases of dissipators, which were omitted from the main text for clarity and brevity.

\subsection{Hamiltonian-independent contraction-rate for depolarizing Pauli dissipator} \label{app:subsec:improved contraction rate}

We here prove the following proposition, showing that the dissipator $\mathcal{D} = \sum_{j=1}^3 \mathcal{D}_{L_j}$ with $L_1 = \sqrt{\gamma}\sigma^x, L_2 = \sqrt{\gamma} \sigma^y, L_3 = \sqrt{\gamma}\sigma^z$, where $\sigma^x, \sigma^y, \sigma^z$ are the $x,y,z$ Pauli matrices, is Hamiltonian-independently contractive with contraction rate of at least $4 \gamma$. 

\begin{prp}
Consider the dissipator $\mathcal{D} = \sum_{j=1}^3 \mathcal{D}_{L_j}$ with $L_1 = \sqrt{\gamma}\sigma^x, L_2 = \sqrt{\gamma} \sigma^y, L_3 = \sqrt{\gamma}\sigma^z$, where $\sigma^x, \sigma^y, \sigma^z$ are the $x,y,z$ Pauli matrices. Then, any driven Lindbladian $\mathcal{L}_t = - i [H(t) , \ \cdot \ ] + \mathcal{D}$, with $H(t)$ time-dependent and arbitrary, generates exponentially contractive dynamics with a contraction rate of at least $4 \gamma$ and contraction constant $K=1$, i.e. the following inequality is satisfied for any quantum states $\rho , \sigma \in D_1 (\mathbb{C}^2)$ and all times $t \geq s \geq 0$
\begin{align} \label{eq:app:improved rate}
\norm{\mathcal{E}_{t,s}(\rho - \sigma)}_1 \leq e^{ - 4 \gamma |t-s|} \norm{\rho - \sigma}_1 .
\end{align}
\end{prp}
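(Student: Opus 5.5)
The plan is to apply the improved-rate part of \cref{thm:thm contr frm orthog vects}, i.e.\ \cref{prop:improved contraction rate}, which gives a Hamiltonian-independent contraction rate of at least $d\, r(\mathcal{D})$ with $K=1$. Here $d=2$, so it suffices to show $r(\mathcal{D}) = 2\gamma$. Recall from \cref{eq:r def thm} that
\begin{align*}
r(\mathcal{D}) = \min_{\|u\|=\|v\|=1,\ \braket{u|v}=0} \gamma \sum_{P\in\{\sigma^x,\sigma^y,\sigma^z\}} |\bra{v} P \ket{u}|^2 .
\end{align*}

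The first step is to evaluate the sum. For any unit vectors $u,v$ in $\mathbb{C}^2$, the Pauli matrices together with $I$ form an orthogonal basis of $M_2(\mathbb{C})$ with $\tr(P^\dagger Q)=2\delta_{PQ}$. Expanding the operator $\ket{u}\!\bra{v}$ in this basis gives the completeness relation
\begin{align*}
\sum_{P\in\{I,\sigma^x,\sigma^y,\sigma^z\}} |\bra{v}P\ket{u}|^2 = 2\,\tr\big(\ket{u}\!\bra{v}\,\ket{v}\!\bra{u}\big) = 2\|u\|^2\|v\|^2 = 2 .
\end{align*}
Since $\braket{v|u}=0$, the identity term vanishes, so $\sum_{P\in\{\sigma^x,\sigma^y,\sigma^z\}}|\bra{v}P\ket{u}|^2 = 2$ for every admissible pair. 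Hence $r(\mathcal{D}) = 2\gamma$ exactly, independent of the choice of $u,v$.

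The second step is to plug this into \cref{prop:improved contraction rate} with $d=2$. This gives $\partial_{t,+}\norm{\mathcal{E}_{t,s}(\rho-\sigma)}_1 \le -4\gamma \norm{\mathcal{E}_{t,s}(\rho-\sigma)}_1$. \Cref{lem:full analysis result} then yields \cref{eq:app:improved rate} with $K=1$.

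There is no real obstacle. The only point to check is the completeness identity, including the normalization factor $2$ from $\tr(P^2)=2$. One should also note that the final step of the proof of \cref{prop:improved contraction rate} cites \cref{lem:full analysis result} with constant $d\,r(\mathcal{D})$ rather than $r(\mathcal{D})$; the rate $d\,r(\mathcal{D})=4\gamma$ is the one to use.

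As a cross-check, one can use \cref{prop:right time derivative form} directly. For traceless Hermitian $x$ on a qubit with eigenvalues $\pm\lambda$, we have $f=2$ on the off-diagonal pairs, so the right derivative is $-2\cdot 2\lambda\cdot 2\gamma = -4\gamma\norm{x}_1$, which agrees.
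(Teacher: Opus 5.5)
Your proof is correct and follows the paper's route: the paper also computes $r(\mathcal{D})=2\gamma$, via the Pauli twirl identity $M+\sigma^x M\sigma^x+\sigma^y M\sigma^y+\sigma^z M\sigma^z=2I\tr(M)$ (your Pauli completeness relation in another form), and then invokes \cref{thm:thm contr frm orthog vects}, noting that the rate $4\gamma$ follows either from $d\,r(\mathcal{D})=4\gamma$ or from $R(\mathcal{D})=2r(\mathcal{D})=4\gamma$ for Hermitian jumps. Your cross-check via \cref{prop:right time derivative form} is exactly the $R(\mathcal{D})$ route, and you are right that the final appeal to \cref{lem:full analysis result} in \cref{prop:improved contraction rate} should use the constant $d\,r(\mathcal{D})$; the paper's text writes $K=r(\mathcal{D})$ there, which is a typo.
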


\begin{proof}
The inequality \eqref{eq:app:improved rate} follows straight from \cref{thm:thm contr frm orthog vects}, if we can prove that $R(\mathcal{D}) \geq 4 \gamma$, or $r(\mathcal{D}) \geq 2 \gamma$ since we must have $R(\mathcal{D}) = 2 r(\mathcal{D})$ due to all jump operators being Hermitian, where $R(\mathcal{D})$ and $r(\mathcal{D})$ are defined in \cref{eq:r1 def} and \cref{eq:r2 def}. We now show that $r(\mathcal{D}) = 2 \gamma$. Using the definition \eqref{eq:r2 def} of $r(\mathcal{D})$ and the \emph{Pauli twirl} identity for qubits, i.e. for any $2 \times 2$ complex matrix $M$, it holds that $M + \sigma^x M \sigma^x + \sigma^y M \sigma^y + \sigma^z M \sigma^z = 2 I \tr (M)$ (\cite[Eq. 4.109]{Watrous2018}), we get
\begin{align*}
r(\mathcal{D}) & =  \min_{\begin{matrix}
\| u \| , \| v \| = 1 \\
\braket{u | v} = 0 \\
\end{matrix}} \sum_{j=1}^3 |\bra{v} L_j \ket{u}|^2 = \gamma \min_{\begin{matrix}
\| u \| , \| v \| = 1 \\
\braket{u | v} = 0 \\
\end{matrix}} \left( \braket{u | \sigma^x | v} \braket{v | \sigma^x | u} + \braket{u | \sigma^x | v} \braket{v | \sigma^x | u} + \braket{u | \sigma^x | v} \braket{v | \sigma^x | u} \vphantom{\frac{1}{1}} \right) \\
& = \gamma \min_{\begin{matrix}
\| u \| , \| v \| = 1 \\
\braket{u | v} = 0 \\
\end{matrix}} \braket{u | \left( \vphantom{2^{2^2}} 2 I \tr (\ket{v} \bra{v}) - \ket{v} \bra{v} \right) | u} = \gamma \min_{\begin{matrix}
\| u \| , \| v \| = 1 \\
\braket{u | v} = 0 \\
\end{matrix}} (2-0) = 2 \gamma  ,
\end{align*}
which shows that we have $r = 2 \gamma$ and hence proves the proposition by use of \cref{thm:thm contr frm orthog vects}. 
\end{proof}

\subsection{The contractivity conditions from \texorpdfstring{\cref{thm:thm contr frm orthog vects}}{orthogonal thm} and \texorpdfstring{\cref{thm:thm eigen condition}}{eigen thm} are independent of each other} \label{app:subsec:dissipators showing independent conditions}

Recall from \cref{thm:thm contr frm orthog vects} that a dissipator $\mathcal{D} = \sum_j \mathcal{D}_{L_j}$ is Hamiltonian-independently contractive if $R (\mathcal{D}) > 0$, with $R (\mathcal{D})$ defined in \cref{eq:r1 def}, and \cref{thm:thm eigen condition} states that a dissipator $\mathcal{D}$ is Hamiltonian-independently contractive if $\mu_2 < 0$, where $\mu_2$ is the second largest eigenvalue of $\tilde{\mathcal{D}} \coloneqq \Delta \circ ( \mathcal{D} + \mathcal{D}^\dagger ) / 2 \circ \Delta$. We here show that these two conditions for Hamiltonian-independent contractivity are independent, in the sense that there exists dissipators for which $\mu_2 < 0$ but $R \not> 0$, and there exist other dissipators for which $R > 0$ but $\mu_2 \not< 0$. In fact, we shall show that even the less general condition of having $r > 0$ from \cref{thm:thm contr frm orthog vects} is not implied by $\mu_2 < 0$. 
\begin{prp} \label{prop:independence of contractivity conditions}
Define the two dissipators $\mathcal{D}_1$ and $\mathcal{D}_2$, acting on $3$-dimensional systems, as
\begin{align*}
\mathcal{D}_1 & \coloneqq \mathcal{D}_{L_{\alpha = 1}} \qquad \text{where} \ \ L_{\alpha = 1} \coloneqq \ket{0} \bra{1} + \ket{1} \bra{2} , \\
\mathcal{D}_2 & \coloneqq \mathcal{D}_{L_{\alpha = 4}} + \epsilon \sum_{i,j = 1}^3 \mathcal{D}_{L_{ij}^e} \qquad \text{where} \ \ L_{\alpha = 4} \coloneqq \ket{0} \bra{1} + 4 \ket{1} \bra{2}  \ \ \text{and} \ \ L_{ij}^e \coloneqq \ket{i} \bra{j} ,
\end{align*}
and $\eps$ is an arbitrary constant in the range $0 < \epsilon < 0.3$. Then, we have $\mu_2 ( \mathcal{D}_1 ) < 0$ while $R(\mathcal{D}_1) = r (\mathcal{D}_1)  = 0$. And for $\mathcal{D}_2$, we have have $\mu_2 (\mathcal{D}_2) > 0$ while $R (\mathcal{D}_2) , r (\mathcal{D}_2) > 0$. Hence, $\mathcal{D}_1$ satisfies the contractivity condition from \cref{thm:thm eigen condition}, but not the general one from \cref{thm:thm contr frm orthog vects}, while $\mathcal{D}_2$ satisfies both contractivity conditions from \cref{thm:thm contr frm orthog vects} but not the condition from \cref{thm:thm eigen condition}.
\end{prp}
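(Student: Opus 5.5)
The statement splits into four claims: $\mu_2(\mathcal{D}_1)<0$, $R(\mathcal{D}_1)=r(\mathcal{D}_1)=0$, $R(\mathcal{D}_2),r(\mathcal{D}_2)>0$, and $\mu_2(\mathcal{D}_2)>0$. We treat them in order of increasing difficulty.

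\textbf{Vanishing of $R(\mathcal{D}_1)$ and $r(\mathcal{D}_1)$.} Exhibit orthogonal unit vectors $\ket{u},\ket{v}$ with $\bra{v}L\ket{u}=\bra{v}L^\dagger\ket{u}=0$ for $L=L_{\alpha=1}$. Take $\ket{u}=\ket{0}$ and $\ket{v}=\ket{2}$. Since $L\ket{0}=0$, the first term vanishes. Since $L^\dagger\ket{0}=\ket{1}\perp\ket{2}$, the second vanishes as well. This gives $R(\mathcal{D}_1)=0$, and hence $r(\mathcal{D}_1)\le R(\mathcal{D}_1)=0$.

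\textbf{Positivity of $R(\mathcal{D}_2)$ and $r(\mathcal{D}_2)$.} The added terms make this immediate. For any orthogonal unit vectors,
\[
\sum_{i,j}\epsilon\,|\bra{v}\,\ket{i}\bra{j}\,\ket{u}|^2=\epsilon\sum_{i,j}|v_i|^2|u_j|^2=\epsilon.
\]
Hence $r(\mathcal{D}_2)\ge\epsilon>0$, and therefore $R(\mathcal{D}_2)\ge r(\mathcal{D}_2)>0$.

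\textbf{The eigenvalue $\mu_2$.} Here we use the block reduction from \cref{subsec:ladder lindblad calculations}.
\begin{itemize}
\item For $\mathcal{D}_1$ we have $\alpha=1$, which lies in $(\sqrt3-\sqrt2,\sqrt3+\sqrt2)$. So \cref{sec:continuous family of 3 dim ladders}, whose eigenvalue computation is exactly $\mu_2=-c_\alpha\eta$, gives $\mu_2<0$. Explicitly, $c_1=\min\{(3-\sqrt5)/4,\ 1-\sqrt{1/3}\}>0$. If one prefers not to rely on that proposition's proof, we instead compute the eigenvalues of $PB_0P$, $B_1$ and $B_2$ directly with $\alpha_1=1,\alpha_2=1$ and check that all are negative apart from the trivial zero.
\item For $\mathcal{D}_2$ the ladder part alone has $\alpha=4>\sqrt3+\sqrt2$. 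We first verify that $\tilde{\mathcal{D}}_{L_{\alpha=4}}$ has a strictly positive eigenvalue $\mu_+$ on traceless Hermitian operators. The natural candidate is the top eigenvalue of the $1\times1$ block $B_2$ or of the $2\times2$ block $B_1$, for which we write the explicit formulas and check positivity at $\alpha=4$. Since $\alpha=4$ lies outside the interval of \cref{sec:continuous family of 3 dim ladders}, the relevant expression, e.g. $\tfrac{1+\alpha^2}{2}-\sqrt{(1-\alpha^2+\alpha^4)/3}$ with sign flipped in the appropriate block, should become positive there.
\item The full-rank perturbation $\epsilon\sum_{i,j}\mathcal{D}_{\ket{i}\bra{j}}$ is the (scaled) completely depolarizing dissipator. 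It acts as $\epsilon(d\,\tr(x)I/d\cdot d - d x)$; more precisely $x\mapsto\epsilon(\tr(x)I-3x)$. So on traceless inputs it equals $-3\epsilon\,\mathrm{id}$ and is self-adjoint. Consequently $\tilde{\mathcal{D}}_2=\tilde{\mathcal{D}}_{L_{\alpha=4}}-3\epsilon\Delta$. This shifts every nontrivial eigenvalue by exactly $-3\epsilon$, so $\mu_2(\mathcal{D}_2)=\mu_+-3\epsilon$.
\end{itemize}
It therefore suffices to show $\mu_+>0.9$, which with $\epsilon<0.3$ gives $\mu_2(\mathcal{D}_2)>0$.

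The main obstacle is the concrete evaluation of $\mu_+$ at $\alpha=4$ with the normalization conventions of $B_l$ (the factor $-\tfrac12$ and the convention $\alpha_0=0$) done correctly. We would compute $B_1$ explicitly from $\alpha_1=1,\alpha_2=4$. If it has only negative eigenvalues, we would turn to $PB_0P$ and find its largest nonzero eigenvalue in closed form, confirming that it exceeds $0.9$, which is what the threshold $0.3$ suggests.
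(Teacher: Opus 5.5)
\textbf{There is a genuine gap: the single step you defer, $\mu_+>0.9$, is false.} The rest of your proposal matches the paper and is correct: $R(\mathcal{D}_1)=r(\mathcal{D}_1)=0$, $r(\mathcal{D}_2)\ge\epsilon$, and $\mu_2(\mathcal{D}_1)=-(3-\sqrt5)/4<0$. Your observation that the depolarizing term acts as $x\mapsto\epsilon(\tr(x)I-3x)$ is also correct. It gives $\tilde{\mathcal{D}}_2=\tilde{\mathcal{D}}_{L_{\alpha=4}}-3\epsilon\Delta$, hence the exact identity $\mu_2(\mathcal{D}_2)=\mu_+-3\epsilon$, which is sharper than the paper's max/min lower bound.

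\textbf{Computing $\mu_+$.} The blocks $B_1$ and $B_2$ are negative definite. For $B_1$, the matrix inside the $-\tfrac12$ has trace $18$ and determinant $1$; and $B_2=-8$. So $\mu_+$ must come from $PB_0P$. For traceless $x=\mathrm{diag}(a,b,c)$ one has $\tr(x\mathcal{D}_{L_\alpha}x)=ab+\alpha^2bc-b^2-\alpha^2c^2$. In the orthonormal basis $(1,-1,0)/\sqrt2$, $(1,1,-2)/\sqrt6$ of the traceless diagonal plane this quadratic form has matrix
\[
\begin{pmatrix} -1 & (1+\alpha^2)/\sqrt{12}\\ (1+\alpha^2)/\sqrt{12} & -\alpha^2\end{pmatrix},
\]
whose top eigenvalue is $-\tfrac{1+\alpha^2}{2}+\sqrt{(1-\alpha^2+\alpha^4)/3}$. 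At $\alpha=4$ this gives
\[
\mu_+=\sqrt{241/3}-\tfrac{17}{2}=\tfrac12\bigl(\sqrt{964/3}-17\bigr)\approx 0.463,
\]
not a value exceeding $0.9$. You can confirm this with an explicit eigenvector; for instance $x\approx\mathrm{diag}(0.794,-0.561,-0.233)$ gives $\tr(x\mathcal{D}x)\approx0.463$.

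\textbf{Consequence for the statement.} Because your shift identity is exact, $\mu_2(\mathcal{D}_2)\approx 0.463-3\epsilon$, which is $\le 0$ once $\epsilon\ge\mu_+/3\approx0.154$. So no argument can prove $\mu_2(\mathcal{D}_2)>0$ on the whole range $0<\epsilon<0.3$. The statement is true only for $0<\epsilon<(\sqrt{964/3}-17)/6\approx0.154$. That restricted range still suffices for the intended conclusion that the two contractivity conditions are independent.

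The paper's own proof has the same defect. It quotes $\mu_2(\mathcal{D}_{L_{\alpha=4}})=\sqrt{964/3}-17\approx0.926$, which drops the prefactor $\eta/2$ from its own listed eigenvalue $-\tfrac{\eta}{2}(17-\sqrt{964/3})$ at $\alpha=4$. Your heuristic that ``the threshold $0.3$ suggests $\mu_+>0.9$'' inherited that error; carrying out the computation you postponed would have exposed it.
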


\begin{proof}
We first show $\mu_2 (\mathcal{D}_1) < 0$ and $R (\mathcal{D}_1) = r (\mathcal{D}_1) = 0$. As shown in \cref{subsec:case study:low dim} by \cref{sec:continuous family of 3 dim ladders}, the dissipator $\mathcal{D}_1$ satisfies the contractivity condition of \cref{thm:thm eigen condition}, since the second largest eigenvalue $\mu_2 (\mathcal{D}_1)$ of $\Delta \circ (\mathcal{D}_1 + \mathcal{D}_1^\dagger)/2 \circ \Delta$ equals $\mu_2 (\mathcal{D}_1) = - 1 / (3+\sqrt{5} ) < 0$. However, the conditions from \cref{thm:thm contr frm orthog vects} are not satisfied, which can e.g. be seen by $|\bra{0} L_{\alpha=1} \ket{2}|^2 + |\bra{0} L_{\alpha=1}^\dagger \ket{2}|^2 = 0$, hence $R (\mathcal{D}_1) = 0$ and $r (\mathcal{D}_1) \leq R(\mathcal{D}_1)= 0$. 
\\
\\
We now show that we have $\mu_2 (\mathcal{D}_2) > 0$ while $R (\mathcal{D}_2) , r (\mathcal{D}_2) > 0$. Let $\{ L_k \}_k$ be the jump operators appearing in $\mathcal{D}_2$. Since all terms in the following sum are positive, we have for any orthogonal unit-vectors $\ket{v} , \ket{u}$
\begin{align*}
\sum_k  |\bra{v} L_k \ket{u}|^2 \geq \sum_{i,j=1}^3 |\bra{v} \sqrt{\epsilon} L_{ij}^e \ket{u}|^2 = \sum_{i,j=1}^3 |\bra{v} \sqrt{\epsilon} \ket{i} \bra{j} \ket{u}|^2 = \epsilon \sum_{i=1}^3 \braket{v | i} \braket{i | v} \sum_{j=1}^3 \braket{u | j} \braket{j | u} = \epsilon \braket{v|v} \braket{u|u} = \epsilon ,
\end{align*}
Hence, $R (\mathcal{D}_2) \geq r (\mathcal{D}_2) \geq \epsilon > 0$. However, as is proven in \cref{sec:continuous family of 3 dim ladders}, the dissipator $\mathcal{D}_{L_{\alpha = 4}}$ has largest non-trivial eigenvalue equal to $\mu_2 (\mathcal{D}_{L_{\alpha = 4}}) = \sqrt{964 / 3} -17  \approx 0.926$, and since $\mu_2 (\mathcal{D})$ can be calculated as $\mu_2 (\mathcal{D}) = \max_{\tr(x)=0 , \tr(x^2) = 1} \tr \left( x \mathcal{D}x \right)$ where $x$ ranges over Hermitian matrices, we can lower bound $\mu_2 (\mathcal{D}_2)$ by
\begin{align*}
\mu_2 (\mathcal{D}_2) & = \max_{\tr(x)=0 , \tr(x^2) = 1} \tr \left( x \mathcal{D}_2 x \right) \geq \max_{\tr(x)=0 , \tr(x^2) = 1} \tr \left( x \mathcal{D}_{L_{\alpha = 4 }} x \right) + \min_{\tr(x)=0 , \tr(x^2) = 1} \epsilon \sum_{i,j=1}^3 \tr \left( x \mathcal{D}_{L_{ij}^e} x \right) \\
& > 0.9 + \epsilon \min_{\tr(y)=0 , \tr(y^2) = 1} \left( \sum_{i,j=1}^3 \braket{i|y|i} \braket{j|y|j} - 3 \sum_{j=1}^3 \braket{j|y^2|j} \right) 
= 0.9 - 3 \epsilon > 0 ,
\end{align*}
where we have in the last equation used the assumption $\epsilon < 0.3$. This shows that we have $\mu_2 (\mathcal{D}_2) > 0$. 
\end{proof}

\subsection{Computing the eigenvalues of \texorpdfstring{$\tilde{\mathcal{D}}$}{tilde D} for the general \texorpdfstring{$3$}{3}-dimensional ladder dissipator} \label{app:subsec:3 dim ladder dissipator}

Consider the general ladder dissipator $\mathcal{D}_{\eta ,  \alpha  }$ given in \cref{def:general d=3 diss}, i.e. $\mathcal{D}_{\eta ,  \alpha  }=\eta \mathcal{D}_{L_{  \alpha }}(\cdot) = \eta ( L_{\alpha}^{ \ } \ \cdot \ L_{\alpha}^\dagger - \frac{1}{2} \{ \ \cdot \ , L_{\alpha}^\dagger L_{\alpha}^{ \ } \} )$ acting on $3$-level systems, with $L_{\alpha} \coloneqq  \ket{0} \bra{1} + \alpha \ket{1} \bra{2} $ and $\eta , \alpha > 0$ positive. We here prove \cref{sec:continuous family of 3 dim ladders} by computing explicitly the eigenvalues of the operator $\tilde{\mathcal{D}} = \Delta \circ ( \mathcal{D}_{\eta , \alpha}^{ \ } + \mathcal{D}_{\eta , \alpha}^\dagger )/2 \circ \Delta $ and using \cref{thm:thm eigen condition}. 

\begin{prp} \label{prop:eigen calculation}
The second largest eigenvalue $\mu_2$ of the operator $\tilde{\mathcal{D}} = \Delta \circ ( \mathcal{D}_{\eta , \alpha}^{ \ } + \mathcal{D}_{\eta , \alpha}^\dagger ) / 2 \circ \Delta$ from \cref{thm:thm eigen condition} is given by $\mu_2 = - c_\alpha \eta$, where $c_\alpha$ is in turn is given by
\begin{align} \label{eq:c alpha def app}
c_\alpha \coloneqq
    \min \left\{\frac{2+\alpha^2}{4}-\frac{\alpha}{4}\sqrt{4+\alpha^2}, \ \ \frac{1+\alpha^2}2-\sqrt{\frac{1-\alpha^2+\alpha^4}3} \right\},
\end{align}
thus aligning with its definition in \cref{eq:d=3 ladder alpha def}.
\end{prp}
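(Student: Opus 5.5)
The plan is to use the block reduction of $\tilde{\mathcal{D}}$ for ladder dissipators stated above \cref{sec:continuous family of 3 dim ladders}. It says that the spectrum of $\tilde{\mathcal{D}}$ consists of the eigenvalues of $PB_0P$, together with each eigenvalue of $B_l$, $1\le l\le d-1$, counted twice. I will take this reduction as given and specialize it to $d=3$ with $\alpha_0=0$, $\alpha_1=1$, $\alpha_2=\alpha$. The overall factor $\eta$ scales $\tilde{\mathcal{D}}$ linearly, so I set $\eta=1$ and multiply the answer by $\eta$ at the end. The target is then to show that the largest eigenvalue other than the trivial one belonging to $\tau$ equals $-c_\alpha$.

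The first step is to write down the three blocks. $B_2$ is $1\times1$ with entry $-\alpha^2/2$. $B_1$ is $-\tfrac12$ times
\begin{align*}
M_1=\begin{pmatrix}1&-\alpha\\-\alpha&1+\alpha^2\end{pmatrix},
\end{align*}
which has $\tr M_1=2+\alpha^2$ and $\det M_1=1$. The eigenvalues of $B_1$ are therefore $-\tfrac{2+\alpha^2}{4}\pm\tfrac{\alpha}{4}\sqrt{4+\alpha^2}$, and the larger one is minus the first entry in the definition of $c_\alpha$. $B_0$ is $-\tfrac12$ times
\begin{align*}
M_0=\begin{pmatrix}0&-1&0\\-1&2&-\alpha^2\\0&-\alpha^2&2\alpha^2\end{pmatrix}.
\end{align*}

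The second step is to compute the spectrum of $PB_0P$. The vector $(1,1,1)$ lies in the kernel of $P$, so it gives the eigenvalue $0$; this corresponds to $\tau$, the trivial eigenvector of $\tilde{\mathcal{D}}$. The remaining two eigenvalues come from the $2\times2$ compression of $B_0$ onto $(1,1,1)^\perp$. I will choose an orthonormal basis of this plane, for instance $(1,-1,0)/\sqrt2$ and $(1,1,-2)/\sqrt6$. From the compressed matrix I obtain the trace $-(1+\alpha^2)$ and the determinant, and then solve the quadratic. I expect the eigenvalues to be $-\tfrac{1+\alpha^2}{2}\pm\sqrt{(1-\alpha^2+\alpha^4)/3}$, and the larger one is minus the second entry of $c_\alpha$. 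Checking this determinant, and in particular that the discriminant collapses to exactly $\tfrac43(1-\alpha^2+\alpha^4)$, is the main computational step. It is routine but the constants are easy to get wrong, so I would verify it at $\alpha=1$, where the text elsewhere gives $\mu_2=-1/(3+\sqrt5)$.

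The final step is to compare all the nontrivial eigenvalues. $\mu_2$ is the maximum of the larger eigenvalue of $B_1$, the larger nontrivial eigenvalue of $PB_0P$, and $-\alpha^2/2$. I need to check that $-\alpha^2/2$ is never the maximum, i.e. $\alpha^2/2\ge\tfrac{2+\alpha^2}{4}-\tfrac{\alpha}{4}\sqrt{4+\alpha^2}$. This is equivalent to $\alpha\sqrt{4+\alpha^2}\ge 2-\alpha^2$, which holds because squaring the case $\alpha^2<2$ gives $4\alpha^2+\alpha^4\ge 4-4\alpha^2+\alpha^4$ exactly when $\alpha^2\ge 1/2$. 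When $\alpha^2<1/2$, I instead compare $-\alpha^2/2$ with the $PB_0P$ eigenvalue, and I expect $\tfrac{1+\alpha^2}{2}-\sqrt{(1-\alpha^2+\alpha^4)/3}\le\alpha^2/2$ there by a similar squaring argument. This case split is the only delicate point. Taking the maximum of the two surviving eigenvalues is the same as taking the minimum in $c_\alpha$, which gives $\mu_2=-c_\alpha\eta$.
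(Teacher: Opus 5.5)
Your proposal is correct and is essentially the paper's argument: the paper's $9\times 9$ matrix of $\tilde{\mathcal{D}}$ in the Hermitian basis is precisely the block form $PB_0P\oplus B_1\oplus B_1\oplus B_2\oplus B_2$ you use, and your computations (trace $-(1+\alpha^2)$ with discriminant $\tfrac43(1-\alpha^2+\alpha^4)$ for the compressed $B_0$, and $\det M_1=1$) reproduce its eigenvalue list. Your final comparison, which the paper leaves implicit, also goes through and needs no case split, since $\tfrac{1+\alpha^2}{2}-\sqrt{(1-\alpha^2+\alpha^4)/3}\le \alpha^2/2$ is equivalent to $(2\alpha^2-1)^2\ge 0$ for every $\alpha$.
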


\begin{proof}
We use the following ortho-normal basis for the space of Hermitian operators, which is ortho-normal w.r.t. the Hilbert-Schmidt inner product $\langle \ \cdot \ , \ \cdot \ \rangle_{ \text{HS} }$: 
\begin{align*}
\boldsymbol{e}_1 \coloneqq 
\begin{pmatrix}
1 & 0 & 0 \\
0 & 0 & 0 \\
0 & 0 & 0 \\
\end{pmatrix}  , \ \ 
\boldsymbol{e}_2 \coloneqq 
\begin{pmatrix}
0 & 0 & 0 \\
0 & 1 & 0 \\
0 & 0 & 0 \\
\end{pmatrix}  , \ \ 
\boldsymbol{e}_3 \coloneqq 
\begin{pmatrix}
0 & 0 & 0 \\
0 & 0 & 0 \\
0 & 0 & 1 \\
\end{pmatrix} , \ \ 
\boldsymbol{e}_4 \coloneqq \frac{1}{\sqrt{2}}
\begin{pmatrix}
0 & 1 & 0 \\
1 & 0 & 0 \\
0 & 0 & 0 \\
\end{pmatrix}  , \ \
\boldsymbol{e}_5 \coloneqq \frac{1}{\sqrt{2}} 
\begin{pmatrix}
0 & 0 & 0 \\
0 & 0 & 1 \\
0 & 1 & 0 \\
\end{pmatrix}  , \\ 
\boldsymbol{e}_6 \coloneqq \frac{1}{\sqrt{2}}
\begin{pmatrix}
0 & i & 0 \\
-i & 0 & 0 \\
0 & 0 & 0 \\
\end{pmatrix}  , \ \ 
\boldsymbol{e}_7 \coloneqq \frac{1}{\sqrt{2}}
\begin{pmatrix}
0 & 0 & 0 \\
0 & 0 & i \\
0 & -i & 0\\
\end{pmatrix}  , \ \ 
\boldsymbol{e}_8 \coloneqq \frac{1}{\sqrt{2}}
\begin{pmatrix}
0 & 0 & 1 \\
0 & 0 & 0 \\
1 & 0 & 0 \\
\end{pmatrix}   , \ \ 
\boldsymbol{e}_9 \coloneqq \frac{1}{\sqrt{2}}
\begin{pmatrix}
0 & 0 & i \\
0 & 0 & 0 \\
-i & 0 & 0 \\
\end{pmatrix} .
\end{align*}
In this basis, it can be checked by direct calculation that the operator $\tilde{\mathcal{D}}$ has the following matrix representation
\begin{align} \label{eq:d=3 ladder matrix rep}
\left( \langle \boldsymbol{e}_i , \tilde{\mathcal{D}} \boldsymbol{e}_j \rangle_{ \text{HS} } \right)_{1 \leq i,j \leq 9}  = - \frac{1}{2} \eta \begin{pmatrix}
\frac{2}{3} & \frac{\alpha^2 - 3}{3} & \frac{1 - \alpha^2}{3} & 0 & 0 & 0 & 0 & 0 & 0 \\
\frac{ \alpha^2 - 3}{3} & \frac{2 \alpha^2 + 4}{3} & - \frac{3 \alpha^2 + 1}{3} & 0 & 0 & 0 & 0 & 0 & 0 \\
\frac{1 - \alpha^2 }{3} & - \frac{3 \alpha^2 + 1 }{3} &  \frac{4 \alpha^2 }{3} & 0 & 0 & 0 & 0 & 0 & 0 \\
0 & 0 & 0 & 1 &  -\alpha & 0 & 0 & 0 & 0 \\
0 & 0 & 0 &  -\alpha &  1+\alpha^2  & 0 & 0 & 0 & 0 \\
0 & 0 & 0 & 0 & 0 & 1 & -\alpha & 0 & 0 \\
0 & 0 & 0 & 0 & 0 & -\alpha & 1+\alpha^2 & 0 & 0 \\
0 & 0 & 0 & 0 & 0 & 0 & 0 &  \alpha^2 & 0 \\
0 & 0 & 0 & 0 & 0 & 0 & 0 & 0 &  \alpha^2 \\
\end{pmatrix} .
\end{align}
We can now determine the eigenvalues of the matrix \cref{eq:d=3 ladder matrix rep} above by direct calculation. We are helped slightly along the way by the fact that $\boldsymbol{e}_1 + \boldsymbol{e}_2 + \boldsymbol{e}_3 $, i.e. the vector-representation of the identity $I$, must be an eigenvector with eigenvalue $0$, by construction of $\tilde{\mathcal{D}}$. The eigenvalues of the matrix \cref{eq:d=3 ladder matrix rep} can be computed explicitly, and they are (counting multiplicities):
\begin{itemize}
\item $1$ eigenvalue equaling $0$,

\item $1$ eigenvalue equaling $- \frac{\eta}{2} \left( (1+\alpha^2) +  \sqrt{(1+\alpha^2)^2 - \frac{10 \alpha^2 - 1 - \alpha^4 }{3} } \right)$,

\item $1$ eigenvalue equaling $- \frac{\eta}{2} \left( (1+\alpha^2) -  \sqrt{(1+\alpha^2)^2 - \frac{10 \alpha^2 - 1 - \alpha^4 }{3} } \right)$,

\item $2$ eigenvalues equaling $ - \frac{1}{2} \eta \alpha ^2$,

\item $2$ eigenvalues equaling $ - \frac{\eta}{2} \left( 1 + \frac{ \alpha ^2}{2} + \sqrt{\left( 1 + \frac{ \alpha ^2}{2} \right)^2 - 1} \right)$,

\item $2$ eigenvalues equaling $ - \frac{\eta}{2} \left( 1 + \frac{ \alpha ^2}{2} - \sqrt{\left( 1 + \frac{ \alpha ^2}{2} \right)^2 - 1} \right)$.
\end{itemize}

Since $\mu_2$ is the second largest among these eigenvalues, it can be shown by taking the minimum over the eigenvalues presented above and slightly rewriting them using basic arithmetic, that we indeed have $\mu_2 = - c_\alpha \eta$, with $c_\alpha$ given in \cref{eq:c alpha def app}. 
\end{proof}

\end{appendices}

\bibliography{refs}

\begin{thebibliography}{58}%
\makeatletter
\providecommand \@ifxundefined [1]{%
 \@ifx{#1\undefined}
}%
\providecommand \@ifnum [1]{%
 \ifnum #1\expandafter \@firstoftwo
 \else \expandafter \@secondoftwo
 \fi
}%
\providecommand \@ifx [1]{%
 \ifx #1\expandafter \@firstoftwo
 \else \expandafter \@secondoftwo
 \fi
}%
\providecommand \natexlab [1]{#1}%
\providecommand \enquote  [1]{``#1''}%
\providecommand \bibnamefont  [1]{#1}%
\providecommand \bibfnamefont [1]{#1}%
\providecommand \citenamefont [1]{#1}%
\providecommand \href@noop [0]{\@secondoftwo}%
\providecommand \href [0]{\begingroup \@sanitize@url \@href}%
\providecommand \@href[1]{\@@startlink{#1}\@@href}%
\providecommand \@@href[1]{\endgroup#1\@@endlink}%
\providecommand \@sanitize@url [0]{\catcode `\\12\catcode `\$12\catcode `\&12\catcode `\#12\catcode `\^12\catcode `\_12\catcode `\%12\relax}%
\providecommand \@@startlink[1]{}%
\providecommand \@@endlink[0]{}%
\providecommand \url  [0]{\begingroup\@sanitize@url \@url }%
\providecommand \@url [1]{\endgroup\@href {#1}{\urlprefix }}%
\providecommand \urlprefix  [0]{URL }%
\providecommand \Eprint [0]{\href }%
\providecommand \doibase [0]{https://doi.org/}%
\providecommand \selectlanguage [0]{\@gobble}%
\providecommand \bibinfo  [0]{\@secondoftwo}%
\providecommand \bibfield  [0]{\@secondoftwo}%
\providecommand \translation [1]{[#1]}%
\providecommand \BibitemOpen [0]{}%
\providecommand \bibitemStop [0]{}%
\providecommand \bibitemNoStop [0]{.\EOS\space}%
\providecommand \EOS [0]{\spacefactor3000\relax}%
\providecommand \BibitemShut  [1]{\csname bibitem#1\endcsname}%
\let\auto@bib@innerbib\@empty
\bibitem [{\citenamefont {Breuer}\ and\ \citenamefont {Petruccione}(2007)}]{Breuer2007theory}%
  \BibitemOpen
  \bibfield  {author} {\bibinfo {author} {\bibfnamefont {H.-P.}\ \bibnamefont {Breuer}}\ and\ \bibinfo {author} {\bibfnamefont {F.}~\bibnamefont {Petruccione}},\ }\href {https://doi.org/10.1093/acprof:oso/9780199213900.001.0001} {\emph {\bibinfo {title} {The Theory of Open Quantum Systems}}}\ (\bibinfo  {publisher} {Oxford University Press},\ \bibinfo {year} {2007})\BibitemShut {NoStop}%
\bibitem [{\citenamefont {Rivas}\ and\ \citenamefont {Huelga}(2012)}]{rivas2012open}%
  \BibitemOpen
  \bibfield  {author} {\bibinfo {author} {\bibfnamefont {A.}~\bibnamefont {Rivas}}\ and\ \bibinfo {author} {\bibfnamefont {S.~F.}\ \bibnamefont {Huelga}},\ }\href {https://doi.org/10.1007/978-3-642-23354-8} {\emph {\bibinfo {title} {Open quantum systems}}},\ Vol.~\bibinfo {volume} {10}\ (\bibinfo  {publisher} {Springer},\ \bibinfo {year} {2012})\BibitemShut {NoStop}%
\bibitem [{\citenamefont {Carmichael}(1993)}]{carmichael1993open}%
  \BibitemOpen
  \bibfield  {author} {\bibinfo {author} {\bibfnamefont {H.}~\bibnamefont {Carmichael}},\ }\href {https://doi.org/10.1007/978-3-540-47620-7} {\emph {\bibinfo {title} {An open systems approach to quantum optics: lectures presented at the Universit{\'e} Libre de Bruxelles October 28 to November 4, 1991}}}\ (\bibinfo  {publisher} {Springer},\ \bibinfo {year} {1993})\BibitemShut {NoStop}%
\bibitem [{\citenamefont {Lindblad}(1976)}]{lindblad1976generators}%
  \BibitemOpen
  \bibfield  {author} {\bibinfo {author} {\bibfnamefont {G.}~\bibnamefont {Lindblad}},\ }\bibfield  {title} {\bibinfo {title} {On the generators of quantum dynamical semigroups},\ }\href {https://doi.org/10.1007/BF01608499} {\bibfield  {journal} {\bibinfo  {journal} {Communications in mathematical physics}\ }\textbf {\bibinfo {volume} {48}},\ \bibinfo {pages} {119} (\bibinfo {year} {1976})}\BibitemShut {NoStop}%
\bibitem [{\citenamefont {Gorini}\ \emph {et~al.}(1976)\citenamefont {Gorini}, \citenamefont {Kossakowski},\ and\ \citenamefont {Sudarshan}}]{gorini1976completely}%
  \BibitemOpen
  \bibfield  {author} {\bibinfo {author} {\bibfnamefont {V.}~\bibnamefont {Gorini}}, \bibinfo {author} {\bibfnamefont {A.}~\bibnamefont {Kossakowski}},\ and\ \bibinfo {author} {\bibfnamefont {E.~C.~G.}\ \bibnamefont {Sudarshan}},\ }\bibfield  {title} {\bibinfo {title} {Completely positive dynamical semigroups of n-level systems},\ }\href {https://doi.org/10.1063/1.522979} {\bibfield  {journal} {\bibinfo  {journal} {Journal of Mathematical Physics}\ }\textbf {\bibinfo {volume} {17}},\ \bibinfo {pages} {821} (\bibinfo {year} {1976})}\BibitemShut {NoStop}%
\bibitem [{\citenamefont {Wolf}\ and\ \citenamefont {Cirac}(2008)}]{wolf2008dividing}%
  \BibitemOpen
  \bibfield  {author} {\bibinfo {author} {\bibfnamefont {M.~M.}\ \bibnamefont {Wolf}}\ and\ \bibinfo {author} {\bibfnamefont {J.~I.}\ \bibnamefont {Cirac}},\ }\bibfield  {title} {\bibinfo {title} {Dividing quantum channels},\ }\href {https://doi.org/10.1007/s00220-008-0411-y} {\bibfield  {journal} {\bibinfo  {journal} {Communications in Mathematical Physics}\ }\textbf {\bibinfo {volume} {279}},\ \bibinfo {pages} {147} (\bibinfo {year} {2008})}\BibitemShut {NoStop}%
\bibitem [{\citenamefont {Kessler}\ \emph {et~al.}(2012)\citenamefont {Kessler}, \citenamefont {Giedke}, \citenamefont {Imamoglu}, \citenamefont {Yelin}, \citenamefont {Lukin},\ and\ \citenamefont {Cirac}}]{kessler2012dissipative}%
  \BibitemOpen
  \bibfield  {author} {\bibinfo {author} {\bibfnamefont {E.~M.}\ \bibnamefont {Kessler}}, \bibinfo {author} {\bibfnamefont {G.}~\bibnamefont {Giedke}}, \bibinfo {author} {\bibfnamefont {A.}~\bibnamefont {Imamoglu}}, \bibinfo {author} {\bibfnamefont {S.~F.}\ \bibnamefont {Yelin}}, \bibinfo {author} {\bibfnamefont {M.~D.}\ \bibnamefont {Lukin}},\ and\ \bibinfo {author} {\bibfnamefont {J.~I.}\ \bibnamefont {Cirac}},\ }\bibfield  {title} {\bibinfo {title} {Dissipative phase transition in a central spin system},\ }\href {https://doi.org/10.1103/PhysRevA.86.012116} {\bibfield  {journal} {\bibinfo  {journal} {Physical Review A}\ }\textbf {\bibinfo {volume} {86}},\ \bibinfo {pages} {012116} (\bibinfo {year} {2012})}\BibitemShut {NoStop}%
\bibitem [{\citenamefont {Horstmann}\ \emph {et~al.}(2013)\citenamefont {Horstmann}, \citenamefont {Cirac},\ and\ \citenamefont {Giedke}}]{horstmann2013noise}%
  \BibitemOpen
  \bibfield  {author} {\bibinfo {author} {\bibfnamefont {B.}~\bibnamefont {Horstmann}}, \bibinfo {author} {\bibfnamefont {J.~I.}\ \bibnamefont {Cirac}},\ and\ \bibinfo {author} {\bibfnamefont {G.}~\bibnamefont {Giedke}},\ }\bibfield  {title} {\bibinfo {title} {Noise-driven dynamics and phase transitions in fermionic systems},\ }\href {https://doi.org/10.1103/PhysRevA.87.012108} {\bibfield  {journal} {\bibinfo  {journal} {Physical Review A}\ }\textbf {\bibinfo {volume} {87}},\ \bibinfo {pages} {012108} (\bibinfo {year} {2013})}\BibitemShut {NoStop}%
\bibitem [{\citenamefont {Benedict}(2018)}]{benedict2018super}%
  \BibitemOpen
  \bibfield  {author} {\bibinfo {author} {\bibfnamefont {M.~G.}\ \bibnamefont {Benedict}},\ }\href {https://doi.org/10.1201/9780203737880} {\emph {\bibinfo {title} {Super-radiance: Multiatomic coherent emission}}}\ (\bibinfo  {publisher} {CRC Press},\ \bibinfo {year} {2018})\BibitemShut {NoStop}%
\bibitem [{\citenamefont {Masson}\ and\ \citenamefont {Asenjo-Garcia}(2022)}]{masson2022universality}%
  \BibitemOpen
  \bibfield  {author} {\bibinfo {author} {\bibfnamefont {S.~J.}\ \bibnamefont {Masson}}\ and\ \bibinfo {author} {\bibfnamefont {A.}~\bibnamefont {Asenjo-Garcia}},\ }\bibfield  {title} {\bibinfo {title} {Universality of dicke superradiance in arrays of quantum emitters},\ }\href {https://doi.org/10.1038/s41467-022-29805-4} {\bibfield  {journal} {\bibinfo  {journal} {Nature Communications}\ }\textbf {\bibinfo {volume} {13}},\ \bibinfo {pages} {2285} (\bibinfo {year} {2022})}\BibitemShut {NoStop}%
\bibitem [{\citenamefont {Asenjo-Garcia}\ \emph {et~al.}(2017)\citenamefont {Asenjo-Garcia}, \citenamefont {Moreno-Cardoner}, \citenamefont {Albrecht}, \citenamefont {Kimble},\ and\ \citenamefont {Chang}}]{asenjo2017exponential}%
  \BibitemOpen
  \bibfield  {author} {\bibinfo {author} {\bibfnamefont {A.}~\bibnamefont {Asenjo-Garcia}}, \bibinfo {author} {\bibfnamefont {M.}~\bibnamefont {Moreno-Cardoner}}, \bibinfo {author} {\bibfnamefont {A.}~\bibnamefont {Albrecht}}, \bibinfo {author} {\bibfnamefont {H.}~\bibnamefont {Kimble}},\ and\ \bibinfo {author} {\bibfnamefont {D.~E.}\ \bibnamefont {Chang}},\ }\bibfield  {title} {\bibinfo {title} {Exponential improvement in photon storage fidelities using subradiance and “selective radiance” in atomic arrays},\ }\href {https://doi.org/10.1103/PhysRevX.7.031024} {\bibfield  {journal} {\bibinfo  {journal} {Physical Review X}\ }\textbf {\bibinfo {volume} {7}},\ \bibinfo {pages} {031024} (\bibinfo {year} {2017})}\BibitemShut {NoStop}%
\bibitem [{\citenamefont {Ruskai}(1994)}]{ruskai1994beyond}%
  \BibitemOpen
  \bibfield  {author} {\bibinfo {author} {\bibfnamefont {M.~B.}\ \bibnamefont {Ruskai}},\ }\bibfield  {title} {\bibinfo {title} {Beyond strong subadditivity? improved bounds on the contraction of generalized relative entropy},\ }\href {https://doi.org/10.1142/S0129055X94000407} {\bibfield  {journal} {\bibinfo  {journal} {Reviews in Mathematical Physics}\ }\textbf {\bibinfo {volume} {6}},\ \bibinfo {pages} {1147} (\bibinfo {year} {1994})}\BibitemShut {NoStop}%
\bibitem [{\citenamefont {Wolf}(2012)}]{wolf2012quantum}%
  \BibitemOpen
  \bibfield  {author} {\bibinfo {author} {\bibfnamefont {M.~M.}\ \bibnamefont {Wolf}},\ }\href {https://mediatum.ub.tum.de/node?id=1701036} {\bibinfo {title} {Quantum channels and operations-guided tour}} (\bibinfo {year} {2012}),\ \bibinfo {note} {graue Literatur}\BibitemShut {NoStop}%
\bibitem [{\citenamefont {Aharonov}\ \emph {et~al.}(1996)\citenamefont {Aharonov}, \citenamefont {Ben-Or}, \citenamefont {Impagliazzo},\ and\ \citenamefont {Nisan}}]{aharonov1996limitations}%
  \BibitemOpen
  \bibfield  {author} {\bibinfo {author} {\bibfnamefont {D.}~\bibnamefont {Aharonov}}, \bibinfo {author} {\bibfnamefont {M.}~\bibnamefont {Ben-Or}}, \bibinfo {author} {\bibfnamefont {R.}~\bibnamefont {Impagliazzo}},\ and\ \bibinfo {author} {\bibfnamefont {N.}~\bibnamefont {Nisan}},\ }\bibfield  {title} {\bibinfo {title} {Limitations of noisy reversible computation},\ }\href {https://doi.org/10.48550/arXiv.quant-ph/9611028} {\bibfield  {journal} {\bibinfo  {journal} {arXiv preprint quant-ph/9611028}\ } (\bibinfo {year} {1996})}\BibitemShut {NoStop}%
\bibitem [{\citenamefont {Alicki}\ \emph {et~al.}(2009)\citenamefont {Alicki}, \citenamefont {Fannes},\ and\ \citenamefont {Horodecki}}]{alicki2009thermalization}%
  \BibitemOpen
  \bibfield  {author} {\bibinfo {author} {\bibfnamefont {R.}~\bibnamefont {Alicki}}, \bibinfo {author} {\bibfnamefont {M.}~\bibnamefont {Fannes}},\ and\ \bibinfo {author} {\bibfnamefont {M.}~\bibnamefont {Horodecki}},\ }\bibfield  {title} {\bibinfo {title} {On thermalization in kitaev's 2d model},\ }\href {https://doi.org/10.1088/1751-8113/42/6/065303} {\bibfield  {journal} {\bibinfo  {journal} {Journal of Physics A: Mathematical and Theoretical}\ }\textbf {\bibinfo {volume} {42}},\ \bibinfo {pages} {065303} (\bibinfo {year} {2009})}\BibitemShut {NoStop}%
\bibitem [{\citenamefont {Temme}\ and\ \citenamefont {Kastoryano}(2015)}]{temme2015fast}%
  \BibitemOpen
  \bibfield  {author} {\bibinfo {author} {\bibfnamefont {K.}~\bibnamefont {Temme}}\ and\ \bibinfo {author} {\bibfnamefont {M.~J.}\ \bibnamefont {Kastoryano}},\ }\bibfield  {title} {\bibinfo {title} {How fast do stabilizer hamiltonians thermalize?},\ }\href {https://doi.org/10.48550/arXiv.1505.07811} {\bibfield  {journal} {\bibinfo  {journal} {arXiv preprint arXiv:1505.07811}\ } (\bibinfo {year} {2015})}\BibitemShut {NoStop}%
\bibitem [{\citenamefont {Temme}(2017)}]{temme2017thermalization}%
  \BibitemOpen
  \bibfield  {author} {\bibinfo {author} {\bibfnamefont {K.}~\bibnamefont {Temme}},\ }\bibfield  {title} {\bibinfo {title} {Thermalization time bounds for pauli stabilizer hamiltonians},\ }\href {https://doi.org/10.1007/s00220-016-2746-0} {\bibfield  {journal} {\bibinfo  {journal} {Communications in Mathematical Physics}\ }\textbf {\bibinfo {volume} {350}},\ \bibinfo {pages} {603} (\bibinfo {year} {2017})}\BibitemShut {NoStop}%
\bibitem [{\citenamefont {Abbasgholinejad}\ \emph {et~al.}(2025)\citenamefont {Abbasgholinejad}, \citenamefont {Malz}, \citenamefont {Asenjo-Garcia},\ and\ \citenamefont {Trivedi}}]{abbasgholinejad2025theory}%
  \BibitemOpen
  \bibfield  {author} {\bibinfo {author} {\bibfnamefont {E.}~\bibnamefont {Abbasgholinejad}}, \bibinfo {author} {\bibfnamefont {D.}~\bibnamefont {Malz}}, \bibinfo {author} {\bibfnamefont {A.}~\bibnamefont {Asenjo-Garcia}},\ and\ \bibinfo {author} {\bibfnamefont {R.}~\bibnamefont {Trivedi}},\ }\bibfield  {title} {\bibinfo {title} {Theory of quantum-enhanced interferometry with general markovian light sources},\ }\href {https://doi.org/10.48550/arXiv.2504.05111} {\bibfield  {journal} {\bibinfo  {journal} {arXiv preprint arXiv:2504.05111}\ } (\bibinfo {year} {2025})}\BibitemShut {NoStop}%
\bibitem [{\citenamefont {Yang}\ \emph {et~al.}(2023)\citenamefont {Yang}, \citenamefont {Huelga},\ and\ \citenamefont {Plenio}}]{yang2023efficient}%
  \BibitemOpen
  \bibfield  {author} {\bibinfo {author} {\bibfnamefont {D.}~\bibnamefont {Yang}}, \bibinfo {author} {\bibfnamefont {S.~F.}\ \bibnamefont {Huelga}},\ and\ \bibinfo {author} {\bibfnamefont {M.~B.}\ \bibnamefont {Plenio}},\ }\bibfield  {title} {\bibinfo {title} {Efficient information retrieval for sensing via continuous measurement},\ }\href {https://doi.org/10.1103/PhysRevX.13.031012} {\bibfield  {journal} {\bibinfo  {journal} {Physical Review X}\ }\textbf {\bibinfo {volume} {13}},\ \bibinfo {pages} {031012} (\bibinfo {year} {2023})}\BibitemShut {NoStop}%
\bibitem [{\citenamefont {Gammelmark}\ and\ \citenamefont {M{\o}lmer}(2014)}]{gammelmark2014fisher}%
  \BibitemOpen
  \bibfield  {author} {\bibinfo {author} {\bibfnamefont {S.}~\bibnamefont {Gammelmark}}\ and\ \bibinfo {author} {\bibfnamefont {K.}~\bibnamefont {M{\o}lmer}},\ }\bibfield  {title} {\bibinfo {title} {Fisher information and the quantum cram{\'e}r-rao sensitivity limit of continuous measurements},\ }\href {https://doi.org/10.1103/PhysRevLett.112.170401} {\bibfield  {journal} {\bibinfo  {journal} {Physical review letters}\ }\textbf {\bibinfo {volume} {112}},\ \bibinfo {pages} {170401} (\bibinfo {year} {2014})}\BibitemShut {NoStop}%
\bibitem [{\citenamefont {Wu}\ \emph {et~al.}(2007)\citenamefont {Wu}, \citenamefont {Pechen}, \citenamefont {Brif},\ and\ \citenamefont {Rabitz}}]{wu2007controllability}%
  \BibitemOpen
  \bibfield  {author} {\bibinfo {author} {\bibfnamefont {R.}~\bibnamefont {Wu}}, \bibinfo {author} {\bibfnamefont {A.}~\bibnamefont {Pechen}}, \bibinfo {author} {\bibfnamefont {C.}~\bibnamefont {Brif}},\ and\ \bibinfo {author} {\bibfnamefont {H.}~\bibnamefont {Rabitz}},\ }\bibfield  {title} {\bibinfo {title} {Controllability of open quantum systems with kraus-map dynamics},\ }\href {https://doi.org/10.1088/1751-8113/40/21/015} {\bibfield  {journal} {\bibinfo  {journal} {Journal of Physics A: Mathematical and Theoretical}\ }\textbf {\bibinfo {volume} {40}},\ \bibinfo {pages} {5681} (\bibinfo {year} {2007})}\BibitemShut {NoStop}%
\bibitem [{\citenamefont {Kurniawan}\ \emph {et~al.}(2012)\citenamefont {Kurniawan}, \citenamefont {Dirr},\ and\ \citenamefont {Helmke}}]{kurniawan2012controllability}%
  \BibitemOpen
  \bibfield  {author} {\bibinfo {author} {\bibfnamefont {I.}~\bibnamefont {Kurniawan}}, \bibinfo {author} {\bibfnamefont {G.}~\bibnamefont {Dirr}},\ and\ \bibinfo {author} {\bibfnamefont {U.}~\bibnamefont {Helmke}},\ }\bibfield  {title} {\bibinfo {title} {Controllability aspects of quantum dynamics: a unified approach for closed and open systems},\ }\href {https://doi.org/10.1109/TAC.2012.2195870} {\bibfield  {journal} {\bibinfo  {journal} {IEEE transactions on automatic control}\ }\textbf {\bibinfo {volume} {57}},\ \bibinfo {pages} {1984} (\bibinfo {year} {2012})}\BibitemShut {NoStop}%
\bibitem [{\citenamefont {Koch}(2016)}]{koch2016controlling}%
  \BibitemOpen
  \bibfield  {author} {\bibinfo {author} {\bibfnamefont {C.~P.}\ \bibnamefont {Koch}},\ }\bibfield  {title} {\bibinfo {title} {Controlling open quantum systems: tools, achievements, and limitations},\ }\href {https://doi.org/10.1088/0953-8984/28/21/213001} {\bibfield  {journal} {\bibinfo  {journal} {Journal of Physics: Condensed Matter}\ }\textbf {\bibinfo {volume} {28}},\ \bibinfo {pages} {213001} (\bibinfo {year} {2016})}\BibitemShut {NoStop}%
\bibitem [{\citenamefont {Kallush}\ \emph {et~al.}(2022)\citenamefont {Kallush}, \citenamefont {Dann},\ and\ \citenamefont {Kosloff}}]{kallush2022controlling}%
  \BibitemOpen
  \bibfield  {author} {\bibinfo {author} {\bibfnamefont {S.}~\bibnamefont {Kallush}}, \bibinfo {author} {\bibfnamefont {R.}~\bibnamefont {Dann}},\ and\ \bibinfo {author} {\bibfnamefont {R.}~\bibnamefont {Kosloff}},\ }\bibfield  {title} {\bibinfo {title} {Controlling the uncontrollable: Quantum control of open-system dynamics},\ }\href {https://doi.org/10.1126/sciadv.add0828} {\bibfield  {journal} {\bibinfo  {journal} {Science Advances}\ }\textbf {\bibinfo {volume} {8}},\ \bibinfo {pages} {eadd0828} (\bibinfo {year} {2022})}\BibitemShut {NoStop}%
\bibitem [{\citenamefont {Yoshida}(2024)}]{yoshida2024uniqueness}%
  \BibitemOpen
  \bibfield  {author} {\bibinfo {author} {\bibfnamefont {H.}~\bibnamefont {Yoshida}},\ }\bibfield  {title} {\bibinfo {title} {Uniqueness of steady states of gorini-kossakowski-sudarshan-lindblad equations: A simple proof},\ }\href {https://doi.org/10.1103/PhysRevA.109.022218} {\bibfield  {journal} {\bibinfo  {journal} {Physical Review A}\ }\textbf {\bibinfo {volume} {109}},\ \bibinfo {pages} {022218} (\bibinfo {year} {2024})}\BibitemShut {NoStop}%
\bibitem [{\citenamefont {Frigerio}(1977)}]{frigerio1977quantum}%
  \BibitemOpen
  \bibfield  {author} {\bibinfo {author} {\bibfnamefont {A.}~\bibnamefont {Frigerio}},\ }\bibfield  {title} {\bibinfo {title} {Quantum dynamical semigroups and approach to equilibrium},\ }\href {https://doi.org/10.1007/BF00398571} {\bibfield  {journal} {\bibinfo  {journal} {Letters in Mathematical Physics}\ }\textbf {\bibinfo {volume} {2}},\ \bibinfo {pages} {79} (\bibinfo {year} {1977})}\BibitemShut {NoStop}%
\bibitem [{\citenamefont {Frigerio}(1978)}]{frigerio1978stationary}%
  \BibitemOpen
  \bibfield  {author} {\bibinfo {author} {\bibfnamefont {A.}~\bibnamefont {Frigerio}},\ }\bibfield  {title} {\bibinfo {title} {Stationary states of quantum dynamical semigroups},\ }\href {https://doi.org/10.1007/BF01196936} {\bibfield  {journal} {\bibinfo  {journal} {Communications in Mathematical Physics}\ }\textbf {\bibinfo {volume} {63}},\ \bibinfo {pages} {269} (\bibinfo {year} {1978})}\BibitemShut {NoStop}%
\bibitem [{\citenamefont {Spohn}(1976)}]{spohn1976approach}%
  \BibitemOpen
  \bibfield  {author} {\bibinfo {author} {\bibfnamefont {H.}~\bibnamefont {Spohn}},\ }\bibfield  {title} {\bibinfo {title} {Approach to equilibrium for completely positive dynamical semigroups of n-level systems},\ }\href {https://doi.org/10.1016/0034-4877(76)90040-9} {\bibfield  {journal} {\bibinfo  {journal} {Reports on Mathematical Physics}\ }\textbf {\bibinfo {volume} {10}},\ \bibinfo {pages} {189} (\bibinfo {year} {1976})}\BibitemShut {NoStop}%
\bibitem [{\citenamefont {Spohn}(1977)}]{Spohn1977}%
  \BibitemOpen
  \bibfield  {author} {\bibinfo {author} {\bibfnamefont {H.}~\bibnamefont {Spohn}},\ }\bibfield  {title} {{\selectlanguage {english}\bibinfo {title} {An algebraic condition for the approach to equilibrium of an open n-level system}},\ }\href {https://doi.org/10.1007/BF00420668} {\bibfield  {journal} {\bibinfo  {journal} {Lett. Math. Phys.}\ }\textbf {\bibinfo {volume} {2}},\ \bibinfo {pages} {33} (\bibinfo {year} {1977})}\BibitemShut {NoStop}%
\bibitem [{\citenamefont {Evans}(1977)}]{evans1977irreducible}%
  \BibitemOpen
  \bibfield  {author} {\bibinfo {author} {\bibfnamefont {D.~E.}\ \bibnamefont {Evans}},\ }\bibfield  {title} {\bibinfo {title} {Irreducible quantum dynamical semigroups},\ }\href {https://doi.org/10.1007/BF01614091} {\bibfield  {journal} {\bibinfo  {journal} {Communications in Mathematical Physics}\ }\textbf {\bibinfo {volume} {54}},\ \bibinfo {pages} {293} (\bibinfo {year} {1977})}\BibitemShut {NoStop}%
\bibitem [{\citenamefont {Temme}\ \emph {et~al.}(2010)\citenamefont {Temme}, \citenamefont {Kastoryano}, \citenamefont {Ruskai}, \citenamefont {Wolf},\ and\ \citenamefont {Verstraete}}]{Temme2010}%
  \BibitemOpen
  \bibfield  {author} {\bibinfo {author} {\bibfnamefont {K.}~\bibnamefont {Temme}}, \bibinfo {author} {\bibfnamefont {M.~J.}\ \bibnamefont {Kastoryano}}, \bibinfo {author} {\bibfnamefont {M.~B.}\ \bibnamefont {Ruskai}}, \bibinfo {author} {\bibfnamefont {M.~M.}\ \bibnamefont {Wolf}},\ and\ \bibinfo {author} {\bibfnamefont {F.}~\bibnamefont {Verstraete}},\ }\bibfield  {title} {\bibinfo {title} {The $\chi$2-divergence and mixing times of quantum markov processes},\ }\href {https://doi.org/10.1063/1.3511335} {\bibfield  {journal} {\bibinfo  {journal} {J. Math. Phys.}\ }\textbf {\bibinfo {volume} {51}},\ \bibinfo {pages} {122201} (\bibinfo {year} {2010})}\BibitemShut {NoStop}%
\bibitem [{\citenamefont {Kastoryano}\ and\ \citenamefont {Temme}(2013)}]{Kastoryano2013-ot}%
  \BibitemOpen
  \bibfield  {author} {\bibinfo {author} {\bibfnamefont {M.~J.}\ \bibnamefont {Kastoryano}}\ and\ \bibinfo {author} {\bibfnamefont {K.}~\bibnamefont {Temme}},\ }\bibfield  {title} {{\selectlanguage {english}\bibinfo {title} {Quantum logarithmic sobolev inequalities and rapid mixing}},\ }\href {https://doi.org/10.1063/1.4804995} {\bibfield  {journal} {\bibinfo  {journal} {J. Math. Phys.}\ }\textbf {\bibinfo {volume} {54}},\ \bibinfo {pages} {052202} (\bibinfo {year} {2013})}\BibitemShut {NoStop}%
\bibitem [{\citenamefont {Capel}\ \emph {et~al.}(2018)\citenamefont {Capel}, \citenamefont {Lucia},\ and\ \citenamefont {P{\'e}rez-Garc{\'\i}a}}]{capel2018quantum}%
  \BibitemOpen
  \bibfield  {author} {\bibinfo {author} {\bibfnamefont {{\'A}.}~\bibnamefont {Capel}}, \bibinfo {author} {\bibfnamefont {A.}~\bibnamefont {Lucia}},\ and\ \bibinfo {author} {\bibfnamefont {D.}~\bibnamefont {P{\'e}rez-Garc{\'\i}a}},\ }\bibfield  {title} {\bibinfo {title} {Quantum conditional relative entropy and quasi-factorization of the relative entropy},\ }\href {https://doi.org/10.1088/1751-8121/aae4cf} {\bibfield  {journal} {\bibinfo  {journal} {Journal of Physics A: Mathematical and Theoretical}\ }\textbf {\bibinfo {volume} {51}},\ \bibinfo {pages} {484001} (\bibinfo {year} {2018})}\BibitemShut {NoStop}%
\bibitem [{\citenamefont {Capel}\ \emph {et~al.}(2020)\citenamefont {Capel}, \citenamefont {Rouz{\'e}},\ and\ \citenamefont {Fran{\c{c}}a}}]{capel2020modified}%
  \BibitemOpen
  \bibfield  {author} {\bibinfo {author} {\bibfnamefont {{\'A}.}~\bibnamefont {Capel}}, \bibinfo {author} {\bibfnamefont {C.}~\bibnamefont {Rouz{\'e}}},\ and\ \bibinfo {author} {\bibfnamefont {D.~S.}\ \bibnamefont {Fran{\c{c}}a}},\ }\bibfield  {title} {\bibinfo {title} {The modified logarithmic sobolev inequality for quantum spin systems: classical and commuting nearest neighbour interactions},\ }\href {https://doi.org/10.48550/arXiv.2009.11817} {\bibfield  {journal} {\bibinfo  {journal} {arXiv preprint arXiv:2009.11817}\ } (\bibinfo {year} {2020})}\BibitemShut {NoStop}%
\bibitem [{\citenamefont {M{\"u}ller-Hermes}\ and\ \citenamefont {Franca}(2018)}]{muller2018sandwiched}%
  \BibitemOpen
  \bibfield  {author} {\bibinfo {author} {\bibfnamefont {A.}~\bibnamefont {M{\"u}ller-Hermes}}\ and\ \bibinfo {author} {\bibfnamefont {D.~S.}\ \bibnamefont {Franca}},\ }\bibfield  {title} {\bibinfo {title} {Sandwiched r{\'e}nyi convergence for quantum evolutions},\ }\href {https://doi.org/10.22331/q-2018-02-27-55} {\bibfield  {journal} {\bibinfo  {journal} {Quantum}\ }\textbf {\bibinfo {volume} {2}},\ \bibinfo {pages} {55} (\bibinfo {year} {2018})}\BibitemShut {NoStop}%
\bibitem [{\citenamefont {Temme}\ \emph {et~al.}(2014)\citenamefont {Temme}, \citenamefont {Pastawski},\ and\ \citenamefont {Kastoryano}}]{temme2014hypercontractivity}%
  \BibitemOpen
  \bibfield  {author} {\bibinfo {author} {\bibfnamefont {K.}~\bibnamefont {Temme}}, \bibinfo {author} {\bibfnamefont {F.}~\bibnamefont {Pastawski}},\ and\ \bibinfo {author} {\bibfnamefont {M.~J.}\ \bibnamefont {Kastoryano}},\ }\bibfield  {title} {\bibinfo {title} {Hypercontractivity of quasi-free quantum semigroups},\ }\href {https://doi.org/10.1088/1751-8113/47/40/405303} {\bibfield  {journal} {\bibinfo  {journal} {Journal of Physics A: Mathematical and Theoretical}\ }\textbf {\bibinfo {volume} {47}},\ \bibinfo {pages} {405303} (\bibinfo {year} {2014})}\BibitemShut {NoStop}%
\bibitem [{\citenamefont {Montanaro}(2012)}]{montanaro2012some}%
  \BibitemOpen
  \bibfield  {author} {\bibinfo {author} {\bibfnamefont {A.}~\bibnamefont {Montanaro}},\ }\bibfield  {title} {\bibinfo {title} {Some applications of hypercontractive inequalities in quantum information theory},\ }\bibfield  {journal} {\bibinfo  {journal} {Journal of Mathematical Physics}\ }\textbf {\bibinfo {volume} {53}},\ \href {https://doi.org/10.1063/1.4769269} {10.1063/1.4769269} (\bibinfo {year} {2012})\BibitemShut {NoStop}%
\bibitem [{\citenamefont {Beigi}\ \emph {et~al.}(2020)\citenamefont {Beigi}, \citenamefont {Datta},\ and\ \citenamefont {Rouz{\'e}}}]{beigi2020quantum}%
  \BibitemOpen
  \bibfield  {author} {\bibinfo {author} {\bibfnamefont {S.}~\bibnamefont {Beigi}}, \bibinfo {author} {\bibfnamefont {N.}~\bibnamefont {Datta}},\ and\ \bibinfo {author} {\bibfnamefont {C.}~\bibnamefont {Rouz{\'e}}},\ }\bibfield  {title} {\bibinfo {title} {Quantum reverse hypercontractivity: its tensorization and application to strong converses},\ }\href {https://doi.org/10.1007/s00220-020-03750-z} {\bibfield  {journal} {\bibinfo  {journal} {Communications in Mathematical Physics}\ }\textbf {\bibinfo {volume} {376}},\ \bibinfo {pages} {753} (\bibinfo {year} {2020})}\BibitemShut {NoStop}%
\bibitem [{\citenamefont {Cubitt}\ \emph {et~al.}(2015)\citenamefont {Cubitt}, \citenamefont {Kastoryano}, \citenamefont {Montanaro},\ and\ \citenamefont {Temme}}]{cubitt2015quantum}%
  \BibitemOpen
  \bibfield  {author} {\bibinfo {author} {\bibfnamefont {T.}~\bibnamefont {Cubitt}}, \bibinfo {author} {\bibfnamefont {M.}~\bibnamefont {Kastoryano}}, \bibinfo {author} {\bibfnamefont {A.}~\bibnamefont {Montanaro}},\ and\ \bibinfo {author} {\bibfnamefont {K.}~\bibnamefont {Temme}},\ }\bibfield  {title} {\bibinfo {title} {Quantum reverse hypercontractivity},\ }\bibfield  {journal} {\bibinfo  {journal} {Journal of Mathematical Physics}\ }\textbf {\bibinfo {volume} {56}},\ \href {https://doi.org/10.1063/1.4933219} {10.1063/1.4933219} (\bibinfo {year} {2015})\BibitemShut {NoStop}%
\bibitem [{\citenamefont {Fang}\ \emph {et~al.}(2025)\citenamefont {Fang}, \citenamefont {Lu},\ and\ \citenamefont {Tong}}]{fang2025mixing}%
  \BibitemOpen
  \bibfield  {author} {\bibinfo {author} {\bibfnamefont {D.}~\bibnamefont {Fang}}, \bibinfo {author} {\bibfnamefont {J.}~\bibnamefont {Lu}},\ and\ \bibinfo {author} {\bibfnamefont {Y.}~\bibnamefont {Tong}},\ }\bibfield  {title} {\bibinfo {title} {Mixing time of open quantum systems via hypocoercivity},\ }\href {https://doi.org/10.1103/PhysRevLett.134.140405} {\bibfield  {journal} {\bibinfo  {journal} {Physical Review Letters}\ }\textbf {\bibinfo {volume} {134}},\ \bibinfo {pages} {140405} (\bibinfo {year} {2025})}\BibitemShut {NoStop}%
\bibitem [{\citenamefont {Mishra}\ \emph {et~al.}(2024)\citenamefont {Mishra}, \citenamefont {Fr{\'\i}as-P{\'e}rez},\ and\ \citenamefont {Trivedi}}]{mishra2024classically}%
  \BibitemOpen
  \bibfield  {author} {\bibinfo {author} {\bibfnamefont {S.~D.}\ \bibnamefont {Mishra}}, \bibinfo {author} {\bibfnamefont {M.}~\bibnamefont {Fr{\'\i}as-P{\'e}rez}},\ and\ \bibinfo {author} {\bibfnamefont {R.}~\bibnamefont {Trivedi}},\ }\bibfield  {title} {\bibinfo {title} {Classically computing performance bounds on depolarized quantum circuits},\ }\href {https://doi.org/10.1103/PRXQuantum.5.020317} {\bibfield  {journal} {\bibinfo  {journal} {PRX Quantum}\ }\textbf {\bibinfo {volume} {5}},\ \bibinfo {pages} {020317} (\bibinfo {year} {2024})}\BibitemShut {NoStop}%
\bibitem [{\citenamefont {De~Palma}\ \emph {et~al.}(2023)\citenamefont {De~Palma}, \citenamefont {Marvian}, \citenamefont {Rouz{\'e}},\ and\ \citenamefont {Fran{\c{c}}a}}]{de2023limitations}%
  \BibitemOpen
  \bibfield  {author} {\bibinfo {author} {\bibfnamefont {G.}~\bibnamefont {De~Palma}}, \bibinfo {author} {\bibfnamefont {M.}~\bibnamefont {Marvian}}, \bibinfo {author} {\bibfnamefont {C.}~\bibnamefont {Rouz{\'e}}},\ and\ \bibinfo {author} {\bibfnamefont {D.~S.}\ \bibnamefont {Fran{\c{c}}a}},\ }\bibfield  {title} {\bibinfo {title} {Limitations of variational quantum algorithms: a quantum optimal transport approach},\ }\href {https://doi.org/10.1103/PRXQuantum.4.010309} {\bibfield  {journal} {\bibinfo  {journal} {PRX Quantum}\ }\textbf {\bibinfo {volume} {4}},\ \bibinfo {pages} {010309} (\bibinfo {year} {2023})}\BibitemShut {NoStop}%
\bibitem [{\citenamefont {Stilck~Fran{\c{c}}a}\ and\ \citenamefont {Garcia-Patron}(2021)}]{stilck2021limitations}%
  \BibitemOpen
  \bibfield  {author} {\bibinfo {author} {\bibfnamefont {D.}~\bibnamefont {Stilck~Fran{\c{c}}a}}\ and\ \bibinfo {author} {\bibfnamefont {R.}~\bibnamefont {Garcia-Patron}},\ }\bibfield  {title} {\bibinfo {title} {Limitations of optimization algorithms on noisy quantum devices},\ }\href {https://doi.org/10.1038/s41567-021-01356-3} {\bibfield  {journal} {\bibinfo  {journal} {Nature Physics}\ }\textbf {\bibinfo {volume} {17}},\ \bibinfo {pages} {1221} (\bibinfo {year} {2021})}\BibitemShut {NoStop}%
\bibitem [{\citenamefont {Burnside}(1905)}]{Burnside1905}%
  \BibitemOpen
  \bibfield  {author} {\bibinfo {author} {\bibfnamefont {W.}~\bibnamefont {Burnside}},\ }\bibfield  {title} {{\selectlanguage {english}\bibinfo {title} {On the condition of reducibility of any group of linear substitutions}},\ }\href {https://doi.org/10.1112/plms/s2-3.1.430} {\bibfield  {journal} {\bibinfo  {journal} {Proc. Lond. Math. Soc. (3)}\ }\textbf {\bibinfo {volume} {s2-3}},\ \bibinfo {pages} {430} (\bibinfo {year} {1905})}\BibitemShut {NoStop}%
\bibitem [{\citenamefont {Halperin}\ and\ \citenamefont {Rosenthal}(1980)}]{Halperin1980}%
  \BibitemOpen
  \bibfield  {author} {\bibinfo {author} {\bibfnamefont {I.}~\bibnamefont {Halperin}}\ and\ \bibinfo {author} {\bibfnamefont {P.}~\bibnamefont {Rosenthal}},\ }\bibfield  {title} {{\selectlanguage {english}\bibinfo {title} {Burnside's theorem on algebras of matrices}},\ }\href {https://doi.org/10.1080/00029890.1980.11995157} {\bibfield  {journal} {\bibinfo  {journal} {Am. Math. Mon.}\ }\textbf {\bibinfo {volume} {87}},\ \bibinfo {pages} {810} (\bibinfo {year} {1980})}\BibitemShut {NoStop}%
\bibitem [{\citenamefont {Lomonosov}\ and\ \citenamefont {Rosenthal}(2004)}]{Lomonosov2004}%
  \BibitemOpen
  \bibfield  {author} {\bibinfo {author} {\bibfnamefont {V.}~\bibnamefont {Lomonosov}}\ and\ \bibinfo {author} {\bibfnamefont {P.}~\bibnamefont {Rosenthal}},\ }\bibfield  {title} {{\selectlanguage {english}\bibinfo {title} {The simplest proof of burnside's theorem on matrix algebras}},\ }\href {https://doi.org/10.1016/j.laa.2003.08.012} {\bibfield  {journal} {\bibinfo  {journal} {Linear Algebra Appl.}\ }\textbf {\bibinfo {volume} {383}},\ \bibinfo {pages} {45} (\bibinfo {year} {2004})}\BibitemShut {NoStop}%
\bibitem [{\citenamefont {Drake}(2007)}]{drake2007springer}%
  \BibitemOpen
  \bibfield  {author} {\bibinfo {author} {\bibfnamefont {G.~W.}\ \bibnamefont {Drake}},\ }\href {https://doi.org/10.1007/978-0-387-26308-3} {\emph {\bibinfo {title} {Springer handbook of atomic, molecular, and optical physics}}}\ (\bibinfo  {publisher} {Springer Science \& Business Media},\ \bibinfo {year} {2007})\BibitemShut {NoStop}%
\bibitem [{\citenamefont {Andreev}\ \emph {et~al.}(1980)\citenamefont {Andreev}, \citenamefont {Emelyanov},\ and\ \citenamefont {Ilinski}}]{andreev1980collective}%
  \BibitemOpen
  \bibfield  {author} {\bibinfo {author} {\bibfnamefont {A.~V.}\ \bibnamefont {Andreev}}, \bibinfo {author} {\bibfnamefont {V.~I.}\ \bibnamefont {Emelyanov}},\ and\ \bibinfo {author} {\bibfnamefont {Y.~A.}\ \bibnamefont {Ilinski}},\ }\bibfield  {title} {\bibinfo {title} {Collective spontaneous emission (dicke superradiance)},\ }\href {https://doi.org/10.1070/PU1980v023n08ABEH005024} {\bibfield  {journal} {\bibinfo  {journal} {Soviet Physics Uspekhi}\ }\textbf {\bibinfo {volume} {23}},\ \bibinfo {pages} {493} (\bibinfo {year} {1980})}\BibitemShut {NoStop}%
\bibitem [{\citenamefont {Dicke}(1954)}]{dicke1954coherence}%
  \BibitemOpen
  \bibfield  {author} {\bibinfo {author} {\bibfnamefont {R.~H.}\ \bibnamefont {Dicke}},\ }\bibfield  {title} {\bibinfo {title} {Coherence in spontaneous radiation processes},\ }\href {https://doi.org/10.1103/PhysRev.93.99} {\bibfield  {journal} {\bibinfo  {journal} {Physical review}\ }\textbf {\bibinfo {volume} {93}},\ \bibinfo {pages} {99} (\bibinfo {year} {1954})}\BibitemShut {NoStop}%
\bibitem [{\citenamefont {Watrous}(2018)}]{Watrous2018}%
  \BibitemOpen
  \bibfield  {author} {\bibinfo {author} {\bibfnamefont {J.}~\bibnamefont {Watrous}},\ }\href {https://doi.org/10.1017/9781316848142} {{\selectlanguage {english}\emph {\bibinfo {title} {The theory of quantum information}}}}\ (\bibinfo  {publisher} {Cambridge University Press},\ \bibinfo {address} {Cambridge, England},\ \bibinfo {year} {2018})\BibitemShut {NoStop}%
\bibitem [{\citenamefont {Schur}\ \emph {et~al.}(1973)\citenamefont {Schur}, \citenamefont {Brauer},\ and\ \citenamefont {Rohrbach}}]{Schur1905_in1973}%
  \BibitemOpen
  \bibfield  {author} {\bibinfo {author} {\bibfnamefont {I.}~\bibnamefont {Schur}}, \bibinfo {author} {\bibfnamefont {A.}~\bibnamefont {Brauer}},\ and\ \bibinfo {author} {\bibfnamefont {H.}~\bibnamefont {Rohrbach}},\ }\bibfield  {title} {\bibinfo {title} {Neue begr{\"u}ndung der theorie der gruppencharaktere},\ }in\ \href@noop {} {\emph {\bibinfo {booktitle} {Gesammelte Abhandlungen}}}\ (\bibinfo  {publisher} {Springer Berlin Heidelberg},\ \bibinfo {address} {Berlin, Heidelberg},\ \bibinfo {year} {1973})\ pp.\ \bibinfo {pages} {143--169}\BibitemShut {NoStop}%
\bibitem [{\citenamefont {Hall}(2015)}]{Hall2015}%
  \BibitemOpen
  \bibfield  {author} {\bibinfo {author} {\bibfnamefont {B.~C.}\ \bibnamefont {Hall}},\ }\href {https://doi.org/10.1007/978-3-319-13467-3} {{\selectlanguage {english}\emph {\bibinfo {title} {Lie groups, lie algebras, and representations}}}},\ \bibinfo {edition} {2nd}\ ed.,\ Graduate texts in mathematics\ (\bibinfo  {publisher} {Springer International Publishing},\ \bibinfo {address} {Cham, Switzerland},\ \bibinfo {year} {2015})\BibitemShut {NoStop}%
\bibitem [{\citenamefont {Arnold}(1973)}]{Arnold1973}%
  \BibitemOpen
  \bibfield  {author} {\bibinfo {author} {\bibfnamefont {V.~I.}\ \bibnamefont {Arnold}},\ }\href@noop {} {\emph {\bibinfo {title} {Ordinary Differential Equations}}},\ The MIT Press\ (\bibinfo  {publisher} {MIT Press},\ \bibinfo {address} {London, England},\ \bibinfo {year} {1973})\BibitemShut {NoStop}%
\bibitem [{\citenamefont {Teschl}(2012)}]{Teschl2012}%
  \BibitemOpen
  \bibfield  {author} {\bibinfo {author} {\bibfnamefont {G.}~\bibnamefont {Teschl}},\ }\href@noop {} {{\selectlanguage {english}\emph {\bibinfo {title} {Ordinary differential equations and dynamical systems}}}},\ Graduate Studies in Mathematics\ (\bibinfo  {publisher} {American Mathematical Society},\ \bibinfo {address} {Providence, RI},\ \bibinfo {year} {2012})\BibitemShut {NoStop}%
\bibitem [{\citenamefont {Strang}(2005)}]{Strang2005}%
  \BibitemOpen
  \bibfield  {author} {\bibinfo {author} {\bibfnamefont {G.}~\bibnamefont {Strang}},\ }\href@noop {} {\emph {\bibinfo {title} {Linear algebra and its applications}}},\ \bibinfo {edition} {4th}\ ed.\ (\bibinfo  {publisher} {Brooks/Cole},\ \bibinfo {address} {Florence, KY},\ \bibinfo {year} {2005})\BibitemShut {NoStop}%
\bibitem [{\citenamefont {Anton}(1998)}]{Anton1998}%
  \BibitemOpen
  \bibfield  {author} {\bibinfo {author} {\bibfnamefont {H.}~\bibnamefont {Anton}},\ }\href@noop {} {{\selectlanguage {english}\emph {\bibinfo {title} {Calculus: Manual}}}},\ \bibinfo {edition} {6th}\ ed.\ (\bibinfo  {publisher} {John Wiley and Sons (WIE)},\ \bibinfo {address} {Brisbane, QLD, Australia},\ \bibinfo {year} {1998})\BibitemShut {NoStop}%
\bibitem [{\citenamefont {Apostol}(1967)}]{Apostol1967}%
  \BibitemOpen
  \bibfield  {author} {\bibinfo {author} {\bibfnamefont {T.~M.}\ \bibnamefont {Apostol}},\ }\href@noop {} {\emph {\bibinfo {title} {Calculus}}},\ \bibinfo {edition} {2nd}\ ed.\ (\bibinfo  {publisher} {John Wiley \& Sons},\ \bibinfo {address} {Nashville, TN},\ \bibinfo {year} {1967})\BibitemShut {NoStop}%
\bibitem [{\citenamefont {Kato}(1995)}]{Kato1995}%
  \BibitemOpen
  \bibfield  {author} {\bibinfo {author} {\bibfnamefont {T.}~\bibnamefont {Kato}},\ }\href {https://doi.org/10.1007/978-3-642-66282-9} {{\selectlanguage {english}\emph {\bibinfo {title} {Perturbation theory for linear operators}}}},\ \bibinfo {edition} {2nd}\ ed.,\ Classics in Mathematics\ (\bibinfo  {publisher} {Springer},\ \bibinfo {address} {Berlin, Germany},\ \bibinfo {year} {1995})\BibitemShut {NoStop}%
\end{thebibliography}%

\end{document}